\tiny\color[gray]{0.3},
\newcommand{\cfrozen}[1]{\red{#1}}
\newcommand{\red}[1]{{\color{red} #1}}
\newcommand{\naturals}{\mathbb{N}}
\def\defemb#1#2{\expandafter\def\csname #1\endcsname
{\relax\ifmmode #2\else\hbox{$#2$}\fi}}
\newcommand{\ul}[1]{\underline{#1}}
\newcommand{\SInterpretation}{\cI}
\newcommand{\SemDomain}{\cA}
\newcommand{\Maude}{{\sf Maude}}
\newcommand{\CONFident}{\textsf{CONFident\/}}
\newcommand{\infChecker}{\mbox{\sf infChecker}}
\newcommand{\AGES}{\mbox{\sf AGES}}
\newcommand{\MaceFour}{\mbox{\sf Mace4}}
\newcommand{\TTTTwo}{\textsf{T\kern-0.15em\raisebox{-0.55ex}T\kern-0.15emT\kern-0.15em\raisebox{-0.55ex}2}}
\newcommand{\Var}{{\cV}ar} 
\newcommand{\bigfrac}[2]{
\begin{array}[b]{c}
\displaystyle #1\\\hline\displaystyle #2
\end{array}}
\newcommand{\bigfracn}[3]{
\begin{array}[b]{c}
\displaystyle #1 \\\hline\displaystyle #2
\end{array}
\hbox to 0pt{\raisebox{0.7em}{{\tiny (#3)}}}
}
\newcommand{\toppos}{{\Lambda}}
\newcommand{\ol}[1]{\overline{#1}}  
\newcommand{\Pos}{{\mathcal{P}os}}
\newcommand{\arityOf}[2]{{#1_{/#2}}}
\newcommand{\NF}{{\sf NF}}
\newcommand{\cpos}{{\cdot}} 
\newcommand{\rootTerm}{\mathit{root}}
\newcommand{\Symbols}{{\cF}}
\newcommand{\DSymbols}{{\cD}}
\newcommand{\Variables}{{\cX}}
\newcommand{\TermsOn}[2]{{\cT(#1,#2)}}
\newcommand{\Terms}{{\TermsOn{\Symbols}{\Variables}}}
\newcommand{\trs}{\text{TRS}}
\newcommand{\etrs}{\text{ETRS}}
\newcommand{\cstrs}{\text{CS-TRS}}
\newcommand{\csctrs}{\text{CS-CTRS}}
\newcommand{\ctrs}{\text{CTRS}}
\newcommand{\crms}{\text{CRMS}}
\newcommand{\egtrs}{\text{EGTRS}}
\newcommand{\gtrs}{\text{GTRS}}
\newcommand{\mel}{\text{MEL}}
\newcommand{\Rmaps}[1]{{M_{#1}}}
\newcommand{\muTop}{{\mu_\top}}
\newcommand{\muBot}{{\mu_\bot}}
\newcommand{\NVar}[1]{\mathcal{V}ar^{\cancel{#1}}} 
\newcommand{\SSorts}{{\cS}}
\newcommand{\SKinds}{{\cK}}
\newcommand{\SSymbols}{{\Sigma}}
\newcommand{\SPredicates}{{\Pi}}
\newcommand{\SPSignature}{{\Omega}}
\newcommand{\RuleSymmetry}{\text{Sy}}
\newcommand{\RuleHornClause}{\text{HC}}
\newcommand{\RuleRlEq}{\text{R,E}} 
\newcommand{\RuleRewMEq}{\text{R/E}} 
\newcommand{\RuleReflexivity}{\text{Rf}}
\newcommand{\RuleCompatibility}{\text{Co}} 
\newcommand{\RuleTransitivity}{\text{Tr}} 
\newcommand{\RulePropagation}{\text{Pr}} 
\newcommand{\RuleInnerRed}{\text{InR}} 
\newcommand{\pUN}{\fS{UN}} 
\newcommand{\pUNred}{\fS{UN}^\to} 
\newcommand{\GLatom}{A}
\newcommand{\GLinference}{\cI}
\newcommand{\GLinferenceOf}[1]{\GLinference(#1)}
\newcommand{\ElementaryIS}{\text{EIS}}
\newcommand{\eis}{\text{EIS}}
\newcommand{\proofInISof}[2]{{\vdash_{#1}#2}}
\newcommand{\SigParameter}{\mathbb{S}}
\newcommand{\RMaParameter}{\mathbb{M}}
\newcommand{\EqParameter}{\mathbb{E}}
\newcommand{\TRSParameter}{\mathbb{R}}
\newcommand{\GLtheory}{\mathsf{Th}}
\newcommand{\rewtheoryOf}[1]{{\ol{#1}}}
\newcommand{\crtheoryOf}[1]{{\ol{#1^\CRabbr}}}
\newcommand{\gbop}{{\mathrel{\bowtie}}}
\newcommand{\deductionInThOf}[2]{{#1\vdash #2}}
\newcommand{\rootTree}{\mathit{root}}
\newcommand{\eqoriented}[1]{\stackrel{\diamond}{#1}}
\newcommand{\lhsr}{\ell}
\newcommand{\rhsr}{r}
\newcommand{\gencond}{c}
\newcommand{\gencondbis}{d}
\newcommand{\elhsr}{\lambda} 
\newcommand{\erhsr}{\rho}
\newcommand{\IF}{\Leftarrow}
\newcommand{\ceq}{=}
\newcommand{\cto}{\approx}
\newcommand{\CP}{{\sf CP}}
\newcommand{\ECP}{{\sf ECP}}
\newcommand{\SetOf}[1]{{\textit{#1}}}
\newcommand{\CCP}{{\sf CCP}}
\newcommand{\CVP}{{\sf CVP}}
\newcommand{\CVPof}[1]{{\sf CVP}^{#1}}
\newcommand{\CVPofR}{\CVPof{\rew{}}}
\newcommand{\CVPofPStickel}{\CVPof{\rewpstickel{}}}
\newcommand{\CVPofEqOne}{\CVPof{\equone{}}}
\newcommand{\LCCP}{{\sf LCCP}}
\newcommand{\GLCCP}{\text{GLCCP}}
\newcommand{\DCP}{{\sf DCP}}
\newcommand{\ccpOf}[1]{\pi_{#1}}
\newcommand{\ecpOf}[1]{\pi^{\textsc{ecp}}_{#1}}
\newcommand{\eccpOf}[1]{\pi^{\textsc{eccp}}_{#1}}
\newcommand{\iccpOf}[1]{\pi^{\textsc{i}}_{#1}}
\newcommand{\lccpOf}[1]{\pi^{\textsc{lccp}}_{#1}}
\newcommand{\ilccpOf}[1]{\pi^{i\textsc{lccp}}_{#1}}
\newcommand{\cvpOf}[2]{\pi^{#1}_{#2}}
\newcommand{\cvpROf}[1]{\cvpOf{\rew{}}{#1}}
\newcommand{\cvpEqOneOf}[1]{\cvpOf{\equone{}}{#1}}
\newcommand{\cvpPStickelOf}[1]{\cvpOf{\rewpstickel{}}{#1}}
\newcommand{\dcpOf}[1]{\pi^{\textsc{dcp}}_{#1}}
\newcommand{\genrelation}{\mathsf{R}}
\newcommand{\genrelationbis}{\mathsf{S}}
\newcommand{\invgenrelation}{\mathsf{R}^{-1}}
\newcommand{\invgenrelationbis}{\mathsf{S}^{-1}}
\newcommand{\genequivalence}{\mathsf{E}}
\newcommand{\genrelationUpE}{\genrelation^{\genequivalence}}
\newcommand{\composeRel}[2]{#1\circ #2}
\newcommand{\spcrel}[1]{\:#1\:}
\newcommand{\genrewmoduloOf}[2]{#1/#2}
\newcommand{\genequality}{\sim}
\newcommand{\rew}[1]{\to_{#1}}
\newcommand{\rewAtPos}[2]{\stackrel{#1}{\longrightarrow}_{#2}}
\newcommand{\rewp}[1]{\to^+_{#1}}
\newcommand{\rews}[1]{\to^*_{#1}}
\newcommand{\rewnorm}[1]{\to^!_{#1}}
\newcommand{\tos}[1]{\to^*_{#1}}
\newcommand{\equ}[1]{\genequality_{#1}}
\newcommand{\equone}[1]{\vdash\!\!\dashv_{#1}}
\newcommand{\equoneStar}[1]{{\vdash\!\!\stackrel{{}^*}{~}\!\!\dashv_{#1}}}
\newcommand{\oneCR}{\mid\!=\!\mid}
\newcommand{\oneCRs}{\mid\!\stackrel{*}{=}\!\mid}
\newcommand{\equequ}[1]{=_{#1}}
\newcommand{\innerrew}[1]{\stackrel{>\toppos}{\longrightarrow}_{#1}}
\newcommand{\leftrew}[1]{{{\:}_{#1}\!\!\leftarrow\:}}
\newcommand{\leftrewAtPos}[2]{{{\:}_{#2}\!\!\stackrel{#1}{\longleftarrow}\:}}
\newcommand{\leftrews}[1]{{{\:}^{\:*}_{#1}\!\!\leftarrow\:}}
\newcommand{\nfOf}[1]{{#1\!\!\downarrow}}
\newcommand{\theNfOf}[1]{{\widehat{#1}}}
\newcommand{\leftarrown}[1]{{{\:}^n\!\!\leftarrow}}
\newcommand{\leftarrowp}[1]{{{\:}^+\!\!\leftarrow}}
\newcommand{\leftarrows}[1]{{{\:}^*\!\!\leftarrow}}
\newcommand{\leftarrowWith}[1]{{{\:}_{#1}\!\!\leftarrow}}
\newcommand{\oneconversion}[1]{\leftrightarrow_{#1}}
\newcommand{\conversion}[1]{\leftrightarrow^*_{#1}}
\newcommand{\peakOf}[2]{{{}_{#1}\!\!\Uparrow_{#2}}}
\newcommand{\joinability}[1]{\downarrow_{#1}}
\newcommand{\rsjoinability}[1]{\downarrow^{\!\text{rs}}_{#1}}
\newcommand{\lsjoinability}[1]{\downarrow^{\!\text{ls}}_{#1}}
\newcommand{\sjoinability}[1]{\downarrow^{\!\text{s}}_{#1}}
\newcommand{\ejoinability}[1]{\:\widetilde{\downarrow}_{#1}\:}
\newcommand{\rsejoinability}[1]{\widetilde{\downarrow}^{\:\text{rs}}_{#1}}
\newcommand{\lsejoinability}[1]{\widetilde{\downarrow}^{\:\text{ls}}_{#1}}
\newcommand{\sejoinability}[1]{\widetilde{\downarrow}^{\:\text{s}}_{#1}}
\newcommand{\ejoin}[1]{\widetilde{\downarrow}_{#1}}
\newcommand{\joinOf}[2]{{{}_{#1}\!\!\Downarrow_{#2}}}
\newcommand{\ejoinOf}[2]{{{}_{#1}\widetilde{\Downarrow}{}_{#2}}}
\newcommand{\RmoduloEof}[2]{{#1/#2}}
\newcommand{\CRabbr}{\text{\sc cr}}
\newcommand{\RMabbr}{\mathit{rm}}
\newcommand{\showRMabbr}{\mathit{rm}}
\newcommand{\rmodulo}{\stackrel{\RMabbr}{\to}}
\newcommand{\rmStar}{{\;\mbox{$\stackrel{\RMabbr}{\longrightarrow}\hspace{.1cm}\hspace{-.2cm}^*\,$}}}
\newcommand{\rewmodulo}[1]{\stackrel{\RMabbr}{\to}_{#1}}
\newcommand{\rewsmodulo}[1]{\rmStar_{#1}}
\newcommand{\rewmodulos}[1]{\rmStar_{#1}}
\newcommand{\PSabbr}{\mathit{ps}}
\newcommand{\rpstickel}{\stackrel{\PSabbr}{\to}}
\newcommand{\rewpstickel}[1]{\stackrel{\PSabbr}{\to}_{#1}}
\newcommand{\rewspstickel}[1]{\;\mbox{$\stackrel{\PSabbr}{\longrightarrow}\hspace{.1cm}\hspace{-.2cm}^*_{#1}\,$}}
\newcommand{\ignore}[1]{}
\newcommand{\ignoreproofs}[1]{}
\newcommand{\seqnat}{\:{+\!+}\:}
\newcommand{\fS}[1]{\mathsf{#1}}
\tikzstyle{decision} = [diamond, draw, fill=yellow!20, text width=5em, text badly centered, minimum height=4em, inner sep=0pt, aspect=2]
\tikzstyle{block} = [rectangle, draw,fill=blue!20, text width=5em, text centered, minimum height=4em, rounded corners]
\tikzstyle{cloud} = [ellipse, draw,fill=red!20, text width=5em, text centered, minimum height=4em]
\tikzstyle{line} = [draw, -latex']
\tikzstyle{blockR} = [rectangle, draw, fill=red!20, text centered, minimum height=4em, rounded corners, minimum height=0.75cm]
\tikzstyle{blockB} = [rectangle, draw, fill=blue!20, text centered, minimum height=4em, rounded corners, minimum height=0.75cm]
\tikzstyle{blockG} = [rectangle, draw, fill=green!20, text centered, minimum height=4em, rounded corners, minimum height=0.75cm]
\tikzstyle{blockY} = [rectangle, draw, fill=yellow!20, text centered, minimum height=4em, rounded corners, minimum height=0.75cm]
\tikzstyle{blockW} = [rectangle, draw, fill=white!20, text centered, minimum height=4em, rounded corners, minimum height=0.75cm]
\tikzstyle{blockK} = [rectangle, draw, fill=gray!20, text centered, minimum height=4em, rounded corners, minimum height=0.75cm]
\tikzstyle{mblockB} = [rectangle, draw, fill=blue!20, text centered, minimum height=4em, double,rounded corners, minimum height=0.75cm]
\tikzstyle{mblockW} = [rectangle, draw, fill=white!20, text centered, minimum height=4em, double,rounded corners, minimum height=0.75cm]
\tikzstyle{mblockR} = [rectangle, draw, fill=red!20, text centered, minimum height=4em, double,rounded corners, minimum height=0.75cm]
\tikzstyle{mblockG} = [rectangle, draw, fill=green!20, text centered, minimum height=4em, double,rounded corners, minimum height=0.75cm]
\tikzstyle{circleY} = [circle, draw, fill=yellow!20, text centered, minimum height=4em, rounded corners, minimum height=0.75cm]
\tikzstyle{circleR} = [circle, draw, fill=red!20, text centered, minimum height=4em, rounded corners, minimum height=0.75cm]
\tikzstyle{circleG} = [circle, draw, fill=green!20, text centered, minimum height=4em, rounded corners, minimum height=0.75cm]
\tikzstyle{circleW} = [circle, draw, fill=white!20, text centered, minimum height=4em, rounded corners, minimum height=0.75cm]
\tikzstyle{triangleW} = [isosceles triangle, draw, fill=white!20, text centered, shape border rotate = 90, isosceles triangle stretches]
\tikzstyle{triangleG} = [isosceles triangle, draw, fill=green!20, text centered, shape border rotate = 90, isosceles triangle stretches]
\tikzstyle{triangleR} = [isosceles triangle, draw, fill=red!20, text centered, shape border rotate = 90, isosceles triangle stretches]
\newcommand{\pholder}{\_\!\_}
\theoremstyle{plain}\newtheorem{theorem}[thm]{Theorem}
\theoremstyle{plain}\newtheorem{proposition}[prop]{Proposition}
\theoremstyle{plain}\newtheorem{corollary}[cor]{Corollary}
\theoremstyle{plain}
\theoremstyle{plain}\newtheorem{notation}[nota]{Notation}
\theoremstyle{plain}\newtheorem{definition}[defi]{Definition}
\theoremstyle{plain}\newtheorem{remark}[rem]{Remark}
\theoremstyle{plain}\newtheorem{example}[exa]{Example}
\begin{document}

\title{Confluence of Conditional Rewriting Modulo
}%
\thanks{Partially supported 
by grants PID2024-162030OB-100 funded by MCIN/AEI/10.13039/501100011033 and ERDF A way of making Europe,
and CIPROM/2022/6 funded by Generalitat Valenciana 
}

\author[S.~Lucas]{Salvador Lucas\lmcsorcid{0000-0001-9923-2108}}[]

\address{DSIC \& VRAIN, Universitat Polit\`ecnica de Val\`encia, Spain} 
\email{slucas@dsic.upv.es}

\keywords{Conditional rewriting,
Confluence,
Program analysis
}

\date{\today}

\maketitle

\begin{abstract}
Sets of equations $E$ play an important computational role in rewriting-based systems $\cR$. The equivalence relation $\equequ{E}$ induced by $E$ 
introduces a partition of terms into $E$-equivalence classes on which 
rewriting computations, denoted $\rew{\cR/E}$ and called \emph{rewriting modulo $E$}, are issued.
This paper investigates \emph{confluence of $\rew{\cR/E}$}, usually called 
\emph{$E$-confluence}, 
for \emph{conditional} rewriting-based systems, where rewriting steps are determined by \emph{conditional} rules.
We rely on Jouannaud and Kirchner's framework to investigate 
confluence of an abstract relation $\genrelation$ modulo an abstract equivalence relation $\genequivalence$ on a set $A$.
We show how to particularize such a framework to be used with conditional systems.
Then, we show how to define appropriate finite 
sets of \emph{conditional pairs} to prove and disprove $E$-confluence. 
We introduce (i) \emph{Logic-based Conditional Critical Pairs}, which do not require the use of (often infinitely many) $E$-unifiers to provide a finite representation of the \emph{local peaks} considered in the abstract framework.
We also introduce (ii) \emph{parametric Conditional Variable Pairs} which are essential to deal with conditional rules in the analysis of $E$-confluence.
Finally, we introduce (iii) \emph{Down Conditional Pairs} which are often necessary to \emph{disprove} $E$-confluence.
Our results apply to well-known classes of rewriting-based systems, improving on previous results.
As for unconditional systems, our results apply to \emph{Equational Term Rewriting Systems}, first investigated by Huet and then by Jouannaud, and Jouannaud and Kirchner, among others.
As for conditional systems, our results also apply to conditional rewrite theories and \Maude.
\end{abstract}

\section{Introduction}
\label{SecIntroduction}

A sequence $\fS{0}$, $\fS{s}(\fS{0})$, $\fS{s}(\fS{s}(\fS{0}))$
of numbers in Peano's notation is usually written as a \emph{term} 
by using a `pairing' (binary) operator $\fS{\seqnat}$ as in
$t_1=(\fS{0}\fS{\seqnat}\fS{s}(\fS{0}))\fS{\seqnat} \fS{s}(\fS{s}(\fS{0}))$ or 
$t_2=\fS{0}\fS{\seqnat}(\fS{s}(\fS{0})\fS{\seqnat}\fS{s}(\fS{s}(\fS{0})))$.
This is necessary when computing with (variants of) 
Term Rewriting Systems (TRSs \cite{BaaNip_TermRewritingAndAllThat_1998}), as function symbols have a fixed arity.
However, \emph{multiple} presentations of the sequence are possible, as in $t_1$ and $t_2$ above.
We can overcome this if $\fS{\seqnat}$ is made \emph{associative}, i.e., the equation $xs\fS{\seqnat}(ys\fS{\seqnat}zs)  =  (xs\fS{\seqnat}ys)\fS{\seqnat}zs$ is satisfied for all terms $xs$, $ys$, and $zs$.
Then, $t_1$ and $t_2$ are made \emph{equivalent modulo associativity} and become 
members of an \emph{equivalence class} $[t]$, consisting of all terms which are equivalent to $t$ modulo associativity. 
Here, $t$ can be $t_1$ or $t_2$, the specific choice being immaterial.

In general, if $\Terms$ is the set of terms built from a signature $\Symbols$
and variables in $\Variables$,
a set of equations $E$ on terms defines an equivalence $=_E$ and a partition $\genrewmoduloOf{\Terms}{\!\!=_E}$ of $\Terms$ into 
equivalence classes. When additionally considering a set of rules $\cR$, it is natural
\begin{figure}[t]
\begin{center}
\begin{tabular}{c}
\xymatrix{
[s]_E  \ar@{->}[dd]_{*}^{\RmoduloEof{\cR}{E}}\ar@{->}[rr]^{*}_{\cR/E} & & [t]_E\ar@{.>}[dd]^{*}_{\cR/E}\\
\\
[t']_E \ar@{.>}[rr]^{*}_{\cR/E} & & [u]_E
}
\end{tabular}
\end{center}
\caption{$E$-confluence}
\label{FigEConfluence}
\end{figure}
 to view rewriting computations as transformations 
$[s]_E\rew{\cR/E}[t]_E$ of 
\emph{equivalence classes}. 
Here, $[s]_E\rew{\cR/E}[t]_E$ (i.e., \emph{rewriting modulo}) 
means that $s'\rew{\cR}t'$ for some $s'\in[s]_E$ and $t'\in[t]_E$.
We often just write $s\rew{\cR/E}t$ instead of $s\equequ{E}s'\rew{\cR}t'\equequ{E}t$.

In this paper we are interested in (proving and disproving) \emph{$E$-confluence} of rewriting modulo,
i.e., 
the commutation of the diagram in Figure \ref{FigEConfluence}, where for all  
equivalence classes $[s]_E$, $[t]_E$, and $[t']_E$ of $E$
such that $[s]_E\rews{\cR/E}[t]_E$ and $[s]_E\rews{\cR/E}[t']_E$ holds, there is an equivalence class $[u]_E$ such that 
both $[t]_E\rews{\cR/E}[u]_E$ and $[t']_E\rews{\cR/E}[u]_E$ hold. As usual, 
solid arrows in the diagram represent given reductions and dotted arrows represent \emph{possible} alternatives whose existence must be proved, see, e.g., \cite[page 473]{ChuRos_SomePropertiesOfConversion_TAMS36}.

\subsection{Rewriting with conditional systems}

Huet investigated confluence of rewriting with \emph{Term Rewriting Systems} $\cR$ (\trs{s} \cite{Huet_ConfluentReductionsAbstractPropertiesAndApplicationsToTermRewritingSystems_FOCS77,DerJou_RewriteSystems_HTCS90,BaaNip_TermRewritingAndAllThat_1998,Ohlebusch_AdvancedTopicsInTermRewriting_2002,Terese_TermRewritingSystems_2003}) 
modulo an equivalence defined by means of a set of equations $E$
\cite[Section 3.4]{Huet_ConfluentReductionsAbstractPropertiesAndApplicationsToTermRewritingSystems_JACM80}.
Jouannaud used the term \emph{Equational Term Rewriting System} (\etrs{})
for the combination of a \trs{} $\cR$ and a set of equations $E$
\cite{Jouannaud_ConfluentAndCoherentEquationalTermRewritingSystemsApplicationToProofsInAbstractDataTypes_CAAP83}.
Term Rewriting Systems $\cR$ in \etrs{s} consist of unconditional rules $\lhsr\to\rhsr$ only; 
and the considered sets of equations $E$ consist of unconditional equations $s\equequ{}t$.

In this paper, as in \emph{Generalized Term Rewriting System} (\gtrs{} \cite{Lucas_LocalConfluenceOfConditionalAndGeneralizedTermRewritingSystems_JLAMP24}), we consider \emph{conditional rules}  $\lhsr\to\rhsr\IF\gencond$ 
where $\gencond$ 
is a sequence of \emph{atoms} (that is, expressions $P(t_1,\ldots,t_n)$ where $P$ is a \emph{predicate symbol} and $t_1,\ldots,t_n$ are terms), possibly defined by Horn clauses.
A \gtrs{}  
is a tuple 
$\cR=(\Symbols,\SPredicates,\mu,H,R)$, where 
$\Symbols$ (resp.\ $\SPredicates$) is a signature of \emph{function (resp.\ predicate) symbols},
with 
$\rew{},\rews{}\spcrel{\in}\SPredicates$;
$\mu$ is a \emph{replacement map} 
selecting \emph{active} arguments of function symbols (i.e., they can be rewritten \cite{Lucas_ContextSensitiveRewriting_CSUR20});\footnote{In the following, we mark in \cfrozen{red} the non-active, or \emph{frozen} subterms in rules. Although only the pdf version of the document display them, 
the distinction can still be helpful, avoiding more `invasive' notations such as $\ul{underlining}$ or $\ol{overlining}$.}
 $H$ is a set of definite Horn clauses $A\IF\gencond$, 
 where the predicate symbol of $A$ is not $\rew{}$ or $\rews{}$; and 
$R$ is a set of rewrite rules $\lhsr\to\rhsr\IF\gencond$ such that $\lhsr\notin\Variables$ 
\cite[Definition 51]{Lucas_LocalConfluenceOfConditionalAndGeneralizedTermRewritingSystems_JLAMP24}.
In both cases, $\gencond$ is a sequence of atoms.

Since \emph{conditional equations} $s\ceq t\IF\gencond$
in a set of equations $E$ 
can often be seen as Horn clauses, 
we could consider \gtrs{s} $(\Symbols,\SPredicates,\mu,E\cup H,R)$ including such equational components as Horn clauses.
However,
the first-order theory associated with a \gtrs{} pays no attention to equational components included as Horn clauses.
For instance, equality is a symmetric and transitive relation, but the theory associated with a \gtrs{} does not give support to this fact. Thus, in order to properly treat equalities it is convenient to consider $E$ as a \emph{special} subset of Horn clauses to appropriately model equality in the associated first-order theory. Thus, we introduce 
\emph{Equational Generalized Term Rewriting Systems} (\egtrs{s}) 
$\cR=(\Symbols,\SPredicates,\mu,E,H,R)$ where such modeling is explicit in the theory associated with $\cR$. 

\begin{example}\label{ExSumListsA}
The following \egtrs{} $\cR$ can be used to compute the sum of a nonempty 
sequence 
$n_1\fS{\seqnat}n_2\fS{\seqnat}\cdots\fS{\seqnat}n_p$ of $p\geq 1$
natural numbers $n_i$ expressed in Peano's notation.
A single number is considered a sequence as well.
A signature 
$\Symbols=\{\fS{0},\fS{s},\fS{\seqnat}\}$ of function symbols  
is used to represent them.
\begin{IEEEeqnarray}{r'C'l}
xs\fS{\seqnat}\cfrozen{(ys\fS{\seqnat}zs)} & = & (xs\fS{\seqnat}\cfrozen{ys})\fS{\seqnat}\cfrozen{zs}\label{ExSumListsA_eq1}\\
\fS{Nat}(\fS{0})\label{ExSumListsA_clause1}\\
\fS{Nat}(\fS{s}(n)) & \IF & \fS{Nat}(n)\label{ExSumListsA_clause2}\\
x\cto y & \IF & x\rews{}y\label{ExSumListsA_clause3}\\
\fS{0}+n & \to & n\label{ExSumListsA_rule1}\\
\fS{s}(m)+n & \to & \fS{s}(m+n)\label{ExSumListsA_rule2}\\
\fS{sum}(\cfrozen{m}) & \to & n\IF m\cto n,\fS{Nat}(n)\label{ExSumListsA_rule3}\\
\fS{sum}(\cfrozen{ms}) & \to & m+n \IF  ms\cto m \fS{\seqnat} \cfrozen{ns},  \fS{Nat}(m), \fS{sum}(\cfrozen{ns})\cto n\label{ExSumListsA_rule4}
\end{IEEEeqnarray}

\noindent
With $\mu(\fS{\seqnat})=\{1\}$, we specify that only the first argument of $\fS{\seqnat}$ can be rewritten.
With $\mu(\fS{sum})=\emptyset$, we forbid rewritings on the argument of $\fS{sum}$.
Predicate $\fS{Nat}$ defined by clauses (\ref{ExSumListsA_clause1}) and (\ref{ExSumListsA_clause2}) identifies an expression 
as representing a \emph{natural number} $\fS{s}^n(\fS{0})$ for some $n\geq 0$;
clause (\ref{ExSumListsA_clause3}) describes the interpretation of conditions $s\cto t$ as 
\emph{reachability}
in conditional rules like (\ref{ExSumListsA_rule4}).
The application of a rule like (\ref{ExSumListsA_rule4}) 
to a term $\fS{sum}(t)$  is as follows: for each substitution $\sigma$, 
if 
(i) $t$ rewrites (modulo $E$) to $\sigma(m \fS{\seqnat}\cfrozen{ns})$,
(ii) $\fS{Nat}(\sigma(m))$ holds, and 
(iii) 
$\fS{sum}(\cfrozen{\sigma(ns)})$ rewrites to $\sigma(n)$,
then we obtain $\sigma(m)+\sigma(n)$.
Note that associativity of $\fS{\seqnat}$ is essential to obtain the expected functionality of $\fS{sum}$ as it permits the `reorganization' of $t$ into $t'$, i.e., $\sigma(m) \fS{\seqnat} \cfrozen{\sigma(ns)}$, 
so that, for the first member $\sigma(m)$ of $t'$, $\fS{Nat}(\sigma(m))$ holds.
\end{example}
A number of rewriting-based systems like  
\trs{s},
\cstrs{s} \cite{Lucas_ContextSensitiveRewriting_CSUR20}, 
\ctrs{s}  \cite{Kaplan_ConditionalRewriteRules_TCS84,BerKlo_ConditionalRewriteRulesConfluenceAndTermination_JCSS86},
and \csctrs{s} \cite[Section 8.1]{Lucas_ApplicationsAndExtensionsOfContextSensitiveRewriting_JLAMP21}
can be seen as particular \gtrs{s} \cite[Section 7.3]{Lucas_LocalConfluenceOfConditionalAndGeneralizedTermRewritingSystems_JLAMP24}.
Similarly, \egtrs{s} capture the corresponding equational versions of these systems, like \etrs{s}, 
\emph{Membership Equational Specifications} \cite{Meseguer_MembershipAlgebraAsALogicalFrameworkForEquationalSpecification_WADT97}, 
and \emph{Generalized Rewrite Theories} \cite{BruMes_SemanticFoundationsForGeneralizedRewriteTheories_TCS06}.
As for \gtrs{s}, see, e.g., \cite[Section 8]{Lucas_TerminationOfGeneralizedTermRewritingSystems_FSCD24},
\egtrs{s} are intended to be useful to model and analyze properties of computations with sophisticated languages like \Maude{}
\cite{ClavelEtAl_MaudeBook_2007}. In particular, to investigate confluence of \Maude{} programs.

\subsection{Critical pairs in the analysis of rewriting modulo}

Huet's starting point was a set 
of \emph{equational axioms} (i.e., a set of pairs of terms),
which is partitioned into
a \trs{} $\cR$  
and a set $E$  of \emph{variable preserving} equations (i.e., $\Var(s)=\Var(t)$ for all $s\equequ{}t\in E$, c.f.\ \cite[Definition 3.2.4(4)]{Ohlebusch_AdvancedTopicsInTermRewriting_2002}).
The pair $(\cR,E)$ is called an \emph{equational theory}.
For the analysis of $E$-confluence of 
$\cR$ (although he did not use such a denomination, see Section \ref{SecHuet80} for details), 
Huet considered 
critical pairs  
(see \cite[page 809]{Huet_ConfluentReductionsAbstractPropertiesAndApplicationsToTermRewritingSystems_JACM80}),
as in his characterization of local confluence of rewriting with \trs{s}.
They are obtained not only from rules in $\cR$, but also from
rules $s\rew{}t$ and $t\rew{}s$ obtained by the two possible \emph{orientations} of equations $s\equequ{}t$ in $E$ \cite[page 817]{Huet_ConfluentReductionsAbstractPropertiesAndApplicationsToTermRewritingSystems_JACM80}.
He proved that equational theories $(\cR,E)$ are confluent modulo $E$ if 
$\cR$ is left-linear, 
all aforementioned critical pairs $\langle u,v\rangle$, are joinable modulo $E$, i.e., $u\rews{\cR}u'\equequ{E}v'\leftrews{\cR}v$ holds,
and  
$\rew{\cR}\circ\equequ{E}$ is terminating
\cite[Theorem 3.3]{Huet_ConfluentReductionsAbstractPropertiesAndApplicationsToTermRewritingSystems_JACM80}.

A pair $(\cR,E)$ where $\cR$ is a \trs{}  
and $E$ a set of equations (no restriction is imposed) 
is called an \emph{Equational \trs} (\etrs) in \cite{Jouannaud_ConfluentAndCoherentEquationalTermRewritingSystemsApplicationToProofsInAbstractDataTypes_CAAP83}.
Jouannaud's result of confluence of $\cR$ modulo $E$ does \emph{not} require 
left-linearity of $\cR$ \cite[Theorem 4]{Jouannaud_ConfluentAndCoherentEquationalTermRewritingSystemsApplicationToProofsInAbstractDataTypes_CAAP83}.
As a counterpart, \emph{E-critical pairs} are used \cite[Definition 10]{Jouannaud_ConfluentAndCoherentEquationalTermRewritingSystemsApplicationToProofsInAbstractDataTypes_CAAP83}.
In contrast to critical pairs, , 
$E$-unification, i.e., the existence of 
$\theta$ such that  $\theta(\lhsr|_p)=_E\theta(\lhsr')$ (then $\theta$ is called an \emph{$E$-unifier}), is used
to obtain  
an $E$-critical pair $\langle\theta(\lhsr)[\theta(\rhsr')]_p,\theta(\rhsr)\rangle$.
In sharp contrast to critical pairs, 
(i) there is no general \emph{$E$-unification algorithm} and 
\emph{``for each equational theory one must invent a special algorithm''} \cite[page 74]{Plotkin_BuildingInEquationalTheories_MI72}, which may fail to exist.
Furthermore, even for 
$E$-unifying terms, 
(ii) there can be several, even \emph{infinitely many} $E$-unifiers $\theta$
which must be considered to obtain a \emph{complete set of $E$-critical pairs} which can be
used to check $E$-confluence of $\cR$ 
\cite{BaaSny_UnificationTheory_HAR01,PetSti_CompleteSetsOfReductionsForSomeEquationalTheories_JACM81,Siekmann_UnificationTheory_JSC89}.

\subsection{Plan and contributions of the paper}
The paper is organized as follows.

\subsubsection{Preliminaries}

Some preliminary notions and notations are given in Section \ref{SecPreliminaries}.
Then, 
Section \ref{SecPeaksAndJoinsInAbstractReduction} introduces some specific notations and definitions useful to deal with the abstract framework by Jouannaud and Kirchner.

\subsubsection{Jouannaud and Kirchner's abstract framework}

Section \ref{SecAbstractFrameworkJK86} 
introduces Jouannaud and Kirchner's framework \cite[Section 2]{JouKir_CompletionOfASetOfRulesModuloASetOfEquations_SIAMJC86} (extending Huet's one \cite[Section 2]{Huet_ConfluentReductionsAbstractPropertiesAndApplicationsToTermRewritingSystems_JACM80}) as it provides an analysis of confluence of an abstract relation 
$\genrelation$  on a set $A$
modulo an equivalence relation $\genequivalence$ (i.e., $\genequivalence$-confluence of $\genrelation$)
 as the joinability (modulo) of 
\begin{IEEEeqnarray}{r'C'l}
\text{\emph{local confluence} peaks } t\leftrew{\genrelationUpE}s\rew{\genrelation}t'
&\text{and} &
\text{\emph{local coherence} peaks } t\leftrew{\genrelationUpE}s\equone{\genequivalence}t',
\label{LblAbstractLocalConfluenceAndCoherencePeaks}
\end{IEEEeqnarray}
where 
$\genrelationUpE$ is an auxiliary relation including $\genrelation$ and included in 
$\genrelation/\genequivalence$, 
and 
$\equone{\genequivalence}$ is a symmetric relation whose reflexive and transitive closure coincides with 
the equivalence $\genequivalence$.
In this section, we also discuss the use of the abstract framework to \emph{disprove} $\genequivalence$-confluence.

\subsubsection{Equational Generalized Term Rewriting Systems}

Section \ref{SecEGTRSs} introduces \emph{Equational Generalized Term Rewriting Systems (\egtrs{s})} $\cR=(\Symbols,\SPredicates,\mu,E,H,R)$ 
and shows how to use the aforementioned abstract results to prove confluence modulo of \egtrs{s}.
In particular, we show that a modification of the treatment of the conditional part of clauses in $H$ and $R$ (with respect to what is described in \cite[Section 7.3]{Lucas_LocalConfluenceOfConditionalAndGeneralizedTermRewritingSystems_JLAMP24} for \gtrs{s})
can be used for \egtrs{s} to fit Jouannaud and Kirchner's framework.

\subsubsection{Local confluence and local coherence peaks}

Section \ref{SecReductionAndCoherencePeaks} shows how local confluence and local coherence peaks (\ref{LblAbstractLocalConfluenceAndCoherencePeaks}) 
look like when the parameter $\genrelationUpE$ in Jouannaud and Kirchner's framework is particularized to either 
\begin{enumerate}
\item\label{LblHuetApproach}
$\rew{\cR}$, the rewrite relation of $\cR$, which fits Huet's approach \cite{Huet_ConfluentReductionsAbstractPropertiesAndApplicationsToTermRewritingSystems_FOCS77,Huet_ConfluentReductionsAbstractPropertiesAndApplicationsToTermRewritingSystems_JACM80};
or to 
\item\label{LblJKApproach} $\rew{\cR,E}$, 
where $E$-matching substitutions, rather than matching substitutions, are used in rewriting steps, i.e., $t|_p\equequ{E}\sigma(\lhsr)$, instead of $t|_p=\sigma(\lhsr)$, is required to apply a rule $\lhsr\to\rhsr$.
This idea was considered by Peterson \& Stickel's \cite{PetSti_CompleteSetsOfReductionsForSomeEquationalTheories_JACM81} to 
obtain their results on $E$-confluence (or $E$-completeness in their terminology).
\end{enumerate}
The different joinability conditions for each kind of peak are also discussed.
If (\ref{LblHuetApproach}) is used, we just call them local confluence and coherence peaks.
If (\ref{LblJKApproach}) is used, we call them local $E$-confluence and $E$-coherence peaks.

\subsubsection{$E$-confluence as joinability of local confluence and coherence peaks}

Section \ref{SecConditionalPairsAlphaAndGamma} describes the 
\emph{conditional pairs}
we use to obtain \emph{finite representations} of the aforementioned peaks
when $\genrelation^\genequivalence$ is $\rew{\cR}$ (as in Huet's approach).
For this purpose, only \emph{Conditional Critical Pairs} (CCPs) (obtained from the rules in $R$ and also from the combination of rules and equations in $E$) and (different classes of) 
\emph{Conditional Variable Pairs} (CVPs) are necessary.
$E$-unification is \emph{not} used.
Although CCPs are already well-known for their use in proofs of confluence of \ctrs{s} \cite{Kaplan_SimplifyingConditionalTermRewriteSystems_JSC87}
and \gtrs{s};
and CVPs are also known from their use to prove confluence of \gtrs{s} \cite[Section 7.5]{Lucas_LocalConfluenceOfConditionalAndGeneralizedTermRewritingSystems_JLAMP24}, we use them here in novel ways.

Section \ref{SecEConfluenceWithRandCpeaks} shows that joinability (using $\rews{\cR}$) 
modulo $E$ of the aforementioned pairs
characterizes 
properties $\alpha$ and $\gamma$ of $\rew{\cR}$ for \egtrs{s} $\cR$. This provides a sufficient condition for $E$-confluence of \egtrs{s}.
For instance, $E$-confluence of $\cR$ in Example \ref{ExSumListsA} is proved using this result (see Example \ref{ExSumListsA_EConfluent}).
We also provide a sufficient criterion for \emph{disproving} $E$-confluence of 
\egtrs{s} by using CCPs and CVPs.
Table \ref{TableConditionalPairsForPeaksHuet} 
summarizes the use of the considered conditional pairs to prove joinability of local confluence and coherence peaks.

\subsubsection{$E$-confluence as joinability of local $E$-confluence and $E$-coherence peaks}

Section \ref{SecConditionalPairsForNonDisjointPeaksJK} investigates the 
\emph{conditional pairs}
we use to obtain \emph{finite representations} of the aforementioned peaks
when $\genrelation^\genequivalence$ is $\rew{\cR,E}$.
We introduce \emph{Logic-based Conditional Critical Pairs} (LCCPs) which are intended to provide a \emph{logic-based} definition of (conditional) critical pair which \emph{avoids} the explicit computation of $E$-unifiers.
For this purpose, the $E$-unification condition $\lhsr|_p=^?\lhsr'$ for rules $\lhsr\to\rhsr\IF\gencond$ and $\lhsr'\to\rhsr'\IF\gencond'$ 
from which $E$-unifiers $\sigma$ satisfying $\sigma(\lhsr|_p)\equequ{E}\sigma(\lhsr')$ 
would be obtained is placed in the conditional part of the conditional pair associated with them as an equational condition $\lhsr|_p=\lhsr'$.
For this purpose, only LCCPs (also obtained from the rules in $R$ and 
from the equations in $E$), different classes of CVPs, 
and a new kind of conditional pairs specifically introduced here (we call \emph{Down Conditional Pairs, DCPs}) are used.
Again, $E$-unifiers are \emph{not} necessary.
We prove that their joinability (using $\rews{\cR,E}$)  
modulo $E$ characterizes Jouannaud and Kirchner's  
properties of 
\emph{local confluence of $\rew{\cR,E}$ modulo $E$ with $\rew{\cR}$}
and \emph{local coherence of $\rew{\cR,E}$ modulo $E$}. 
This provides another sufficient condition for $E$-confluence of \egtrs{s}, which is given in Section \ref{SecEConfluenceWithPSRandCpeaks}.
As we show by means of examples, this new criterion is complementary to the aforementioned one.
We also provide a sufficient criterion for \emph{disproving} $E$-confluence of 
\egtrs{s} by using LCCPs, CVPs, and DCPs, which are proved necessary for this purpose by means of an example (Example \ref{ExPeakNoCPs_notEconfluent}).
Table \ref{TableConditionalPairsForEcriticalAndCoherencePeaks} 
summarizes the use of the considered conditional pairs  to prove joinability of local $E$-confluence and $E$-coherence peaks.

\subsubsection{Application to rewriting-based systems}
Section \ref{SecApplicationToETRSs} discusses the application of our results to \etrs{s}.
When applied to \emph{left-linear \etrs{s}}, as a particular case of \egtrs{s},
our  results strictly subsume Huet's result for proving $E$-confluence
(\cite[Theorem 3.3]{Huet_ConfluentReductionsAbstractPropertiesAndApplicationsToTermRewritingSystems_JACM80}).
Regarding Jouannaud and Kirchner's main result for $E$-confluence of \etrs{s}
(\cite[Theorem 16]{JouKir_CompletionOfASetOfRulesModuloASetOfEquations_SIAMJC86}), our results for proving $E$-confluence of \etrs{s} correspond to \cite[Theorem 16]{JouKir_CompletionOfASetOfRulesModuloASetOfEquations_SIAMJC86}, although we do not use the partition of rules into (a subset of) left-linear rules and the remaining ones.
However, our results for \emph{disproving} $E$-confluence of \etrs{s} are new.

Section \ref{SecApplicationToConditionalETRSs} discusses the application of our results to \ctrs{s} endowed with a set of equations.
In particular, we show that for \egtrs{s} with an empty set of equations, 
our $E$-confluence results boil down into the results  for confluence of \gtrs{s} in \cite{Lucas_LocalConfluenceOfConditionalAndGeneralizedTermRewritingSystems_JLAMP24}.
We compare our results with Bouhoula, Jouannaud, and Meseguer's results on confluence of computations with Membership Equational Specifications \cite{BouJouMes_SpecificationAndProofInMembershipEquationalLogic_TAPSOFT97,BouJouMes_SpecificationAndProofInMembershipEquationalLogic_TCS00} on one side, and also with Dur\'an and Meseguer's results on $E$-confluence of conditional rewrite theories \cite{DurMes_OnTheChurchRosserAndCoherencePropertiesOfConditionalOrderSortedRewriteTheories_JLAP12}, on the other side.
 
 \subsubsection{Related work and conclusions}
Section \ref{SecRelatedWork} discusses related work.
Section \ref{SecConclusionAndFutureWork} concludes and points to some future work.

\subsubsection{Improvements}
This paper is an extended and revised version of \cite{Lucas_ConfluenceOfConditionalRewritingModulo_CSL24}.
Main improvements are:
\begin{enumerate}
\item Section \ref{SecPeaksAndJoinsInAbstractReduction} is new.
\item In Section \ref{SecAbstractFrameworkJK86}, new results about the role of Jouannaud and Kirchner's local confluence and coherence properties modulo $\genequivalence$ to \emph{disprove} $\genequivalence$-confluence of $\genrelation$ are provided. 
The abstract treatment of confluence of rewriting modulo introduced by Huet
is developed in detail  and compared to Jouannaud and Kirchner's.

\item The definition of \egtrs{} in Section \ref{SecEGTRSs} differs from \cite{Lucas_ConfluenceOfConditionalRewritingModulo_CSL24} in the 
inclusion of a \emph{replacement map} $\mu$.
From a syntactic point of view, 
\egtrs{s} can be seen now as particular \gtrs{s} where $E$ is treated as a distinguished subset of Horn clauses.
All technical results stemming from \cite{Lucas_ConfluenceOfConditionalRewritingModulo_CSL24} have been revised to accomodate the use of $\mu$.
\item The new approach in Sections \ref{SecConditionalPairsAlphaAndGamma} and \ref{SecEConfluenceWithRandCpeaks} that considers the case when 
$\genrelationUpE$ above is $\rew{\cR}$ is missing in \cite{Lucas_ConfluenceOfConditionalRewritingModulo_CSL24}.
This leads to new proofs of $E$-confluence that only use ``ordinary'' CCPs and different kinds of CVPs.
\item The use of $\rew{\cR,E}$ as $\genrelation^\genequivalence$ is developed in Sections \ref{SecConditionalPairsForNonDisjointPeaksJK} and \ref{SecEConfluenceWithPSRandCpeaks}.
The new notion of LCCP is compared with (conditional) $E$-critical pairs (ECPs).
\item A new Section \ref{SecApplicationToETRSs} discusses the application of our results to \etrs{s}, showing that we strictly improve the results about $E$-confluence in \cite{Huet_ConfluentReductionsAbstractPropertiesAndApplicationsToTermRewritingSystems_JACM80} and also improve on \cite{JouKir_CompletionOfASetOfRulesModuloASetOfEquations_SIAMJC86}.
\item A new Section \ref{SecApplicationToConditionalETRSs} discusses the application of our results to \ctrs{s} endowed with a set of equations, as in Membership Equational Specifications or Conditional Rewrite Theories.
\item The related work Section \ref{SecRelatedWork} has been revised and extended to include additional work by researchers not considered in \cite{Lucas_ConfluenceOfConditionalRewritingModulo_CSL24}.
\end{enumerate}
Besides,
\begin{enumerate}
\item The terminology has been revised to closely follow the existing one in \cite{JouKir_CompletionOfASetOfRulesModuloASetOfEquations_SIAMJC86,DerOkaSiv_ConfluenceOfConditionalRewriteSystems_CTRS87}.
\item New diagrams and tables illustrating the structure and use of different kinds of confluence and coherence peaks are provided.
\item Several examples have been introduced to illustrate the notions and results of the paper.
\end{enumerate}

\section{Preliminaries}\label{SecPreliminaries}

In the following, \emph{s.t.} means \emph{such that} 
and \emph{iff} means \emph{if and only if}.
As in \cite[Section 10.1.2]{DraSin_IntermediateSetTheory_1996}, we use $A-B$, rather than $A\backslash B$ to denote the difference of sets $A$ and $B$, i.e., the set of elements of $A$ except those which also are in $B$ (if any).
We assume some familiarity with the basic notions of term rewriting \cite{BaaNip_TermRewritingAndAllThat_1998,Ohlebusch_AdvancedTopicsInTermRewriting_2002}
and first-order logic \cite{Fitting_FirstOrderLogicAndAutomatedTheoremProving_1997,Mendelson_IntroductionToMathematicalLogicFourtEd_1997}. 
For the sake of readability, though, here we summarize the main notions and notations we use.

\paragraph{Abstract Reduction Relations}
Given a binary r= $\genrelation\:\subseteq A\times A$ on a set $A$, 
we often write $a\spcrel{\genrelation} b$ or $b\spcrel{\invgenrelation}a$
instead of $(a,b)\in\genrelation$ (and call $\invgenrelation$ the \emph{inverse} relation of $\genrelation$).
The \emph{composition} 
of two relations $\genrelation$ and $\genrelation'$ is written $\composeRel{\genrelation}{\genrelation'}$ and defined as follows: for all $a,b\in A$, $a\composeRel{\genrelation}{\genrelation'}b$ iff there is $c\in A$ such that $a\genrelation c$ and
$c\genrelation' b$.
The \emph{reflexive} closure of $\genrelation$ is denoted by $\genrelation^=$;
the \emph{transitive} closure of $\genrelation$ is denoted by
$\genrelation^+$; 
and the \emph{reflexive and transitive} closure by $\genrelation^*$. 
An element $a\in A$ is $\genrelation$-\emph{irreducible} 
or an \emph{$\genrelation$-normal form}, 
if there is no $b$ such that $a\spcrel{\genrelation} b$;
we often drop ``$\genrelation$-'' if no confusion arises. 
We say that $b$ is an $\genrelation$-normal form of $a$ (written $a\spcrel{\genrelation^!}b$), 
if  $a\genrelation^*b$ and $b$ is an $\genrelation$-normal form; $b\in\NF_R$; 
we also say that $a$ is $\genrelation$-normalizing, i.e., $a$ has an $\genrelation$-normal form.
Furthermore, $\genrelation$ is \emph{normalizing} if every $a\in A$ is $\genrelation$-normalizing.
We say that $b\in A$ is $\genrelation$-reachable from $a\in A$ if $a\genrelation^*b$.
We say that $a,b\in\:\genrelation$ are \emph{$\genrelation$-joinable}
if there is $c\in A$ such that  $a~\genrelation^*c$ and $b~\genrelation^*c$.
Also, $a,b\in\genrelation$ are  \emph{$\genrelation$-convertible}
if $a\:(\genrelation\cup\genrelation^{-1})^*\:b$.
Given $a\in A$, if there is no infinite sequence $a=a_1~\genrelation~a_2~\genrelation~\cdots~\genrelation~a_n~\genrelation\cdots$, then $a$ is $\genrelation$-\emph{terminating}; 
$\genrelation$ is \emph{terminating} if $a$ is $\genrelation$-terminating for all $a\in A$.
We say that $\genrelation$ is (locally) {\em confluent} if, for all $a,b,c\in A$, if $a~\genrelation^*b$ and $a~\genrelation^*c$ (resp.\ $a~\genrelation\: b$ and
$a~\genrelation\: c$), then $b$ and $c$ are $\genrelation$-joinable.%

\paragraph{Multiset ordering}
A multiset $M$ is a collection in which elements (taken from a set $A$) 
are allowed to occur \emph{more than once} and the specific sequence of such occurrences does not matter, see \cite{DerMan_ProvingTerminationWithMultisetOrderings_ICALP79} and also \cite[Definition 2.3.8]{Ohlebusch_AdvancedTopicsInTermRewriting_2002} for a formal definition. 
Given multisets $M$ and $M'$ of elements of $A$, we write $M>M'$ if $M'$ is obtained from $M$ by
\emph{removing} elements from $M$ or by 
\emph{replacing} elements $a\in M$ by finitely many other elements
$a_1,\ldots\in A$ such that $a>a_i$ for all $i\geq 1$
\cite[Section II]{DerMan_ProvingTerminationWithMultisetOrderings_ICALP79}
(see also 
\cite[Definition 2.3.10]{Ohlebusch_AdvancedTopicsInTermRewriting_2002}).

\paragraph{Signatures, Terms, Positions}
In this paper, $\Variables$ denotes a
countable set of \emph{variables}.
A \emph{signature of symbols} is a set of \emph{symbols}
each with a fixed \emph{arity}. 
We often write $\arityOf{f}{k}$ to make explicit that $f$ is a $k$-ary symbol as often done in logic programming.
When dealing with symbols not using the standard prefix notation $f(\pholder,\ldots,\pholder)$, we often use placeholders to make explicit where the arguments occur, as in $\pholder+\pholder$ for the \emph{infix} addition operator or $\pholder!$ for the factorial \emph{postfix} operator.
We use $\Symbols$ to denote
a \emph{signature of function symbols}, i.e., $\{f, g, \ldots \}$
whose arity is given by a
mapping $ar:\Symbols\rightarrow \mathbb{N}$.
 The set of
terms built from $\Symbols$ and $\Variables$ is $\Terms$.
The set of variables occurring in $t$ is $\Var(t)$.
Terms are viewed as labeled trees in the usual way.
\emph{Positions} $p$
are represented by chains of positive natural numbers used to address subterms $t|_p$
of $t$.
The \emph{set of positions} of a term $t$ is $\Pos(t)$.
The set of positions of a subterm $s$ in $t$ is denoted $\Pos_s(t)$.
The set of positions of non-variable symbols in $t$ are denoted as $\Pos_\Symbols(t)$.
Positions are ordered by the \emph{prefix ordering} $\leq$ on sequences: 
given positions $p,q$, we write $p\leq q$ iff $p$ is a prefix of $q$. 
If $p\not\leq q$ and $q\not\leq p$, we say that $p$ and $q$ are \emph{disjoint}
(written $p\parallel q$).
We write $p\cpos{}q$ to denote the \emph{concatenation} of positions.

\paragraph{Replacement maps}
Given a signature $\Symbols$, 
a \emph{replacement map} is a mapping 
$\mu$ 
satisfying that, for all  symbols $f$ in $\Symbols$, $\mu(f)\subseteq \{1,\ldots,ar(f)\}$
 \cite{Lucas_ContextSensitiveRewriting_CSUR20}.
The set of  replacement maps for the signature $\Symbols$ is $\Rmaps{\Symbols}$. 
Extreme cases 
are $\muBot$, 
 disallowing replacements in all arguments of function symbols: 
$\muBot(f)=\emptyset$ for all $f\in\Symbols$,
and $\muTop$, restricting no replacement: $\muTop(f)=\{1,\ldots,k\}$ for all $k$-ary $f\in\Symbols$.
The 
set $\Pos^\mu(t)$ of {\em $\mu$-replacing (or \emph{active}) positions}  of $t$ is
$\Pos^\mu(t)=\{\toppos\}$, if $t\in\Variables$, and
$\Pos^\mu(t)=\{\toppos\}\cup\{i.p\mid i\in\mu(f), p\in\Pos^\mu(t_i)\}$, if 
$t=f(t_1,\ldots,t_k)$.
Positions of \emph{active} non-variable symbols in $t$ are denoted as $\Pos^\mu_\Symbols(t)$.

\paragraph{First-Order Logic}
Here, $\SPredicates$ denotes a signature of \emph{predicate symbols}. 
First-order formulas  
are built using 
function symbols from $\Symbols$, 
predicate symbols  from $\SPredicates$,
and variables from $\Variables$ in the usual way.
In particular, \emph{atomic formulas} $A$ (often called \emph{atoms} 
in the realm of automated theorem proving 
\cite[page 2]{Robinson_AReviewOfAutomaticTheoremProving_MACS67}),
are expressions 
$P(t_1,\ldots,t_n)$ where $P\in\SPredicates$ and $t_1,\ldots,t_n$ are terms;
we often refer to $P$ as $\rootTerm(A)$.
A first-order theory (FO-theory for short) $\GLtheory$ is a set of \emph{sentences}
(formulas whose variables are all \emph{quantified}).
In the following, given an FO-theory 
$\GLtheory$ and a formula $\varphi$, 
$\GLtheory\vdash\varphi$ means that $\varphi$ is \emph{deducible} from 
(or a \emph{logical consequence} of) $\GLtheory$ by using 
a correct and complete deduction procedure
\cite{Fitting_FirstOrderLogicAndAutomatedTheoremProving_1997,Mendelson_IntroductionToMathematicalLogicFourtEd_1997}.

\paragraph{Horn clauses}

A \emph{literal} is an atom or the negation of an atom.
A \emph{clause} is a disjunction of literals. 
A \emph{set of clauses} $\cC$ is regarded as a conjunction of all clauses in $\cC$, where every variable in
$\cC$ is \emph{universally quantified} 
\cite{ChaLee_SymbolicLogicAndMechanicalTheoremProving_1973}. 
For every sentence $\varphi$  
there is a sentence $\varphi'$ in \emph{clausal form}
which is inconsistent iff
$\varphi$ is \cite[Section 4.2]{ChaLee_SymbolicLogicAndMechanicalTheoremProving_1973}.
A \emph{definite} Horn clause  (with label $\alpha$) is written $\alpha:A\IF A_1,\ldots,A_n$, 
for atoms $A,A_1,\ldots,A_n$;
if $n=0$, then $\alpha$ is written $A$ rather than $A\IF$;
we often say that $\rootTerm(A)$ is the predicate \emph{defined} by $\alpha$.
A theory all whose sentences are definite Horn clauses is often called a \emph{Horn theory}.

\paragraph{Feasibility Sequences.}

A sequence $A_1,\ldots,A_n$ of atoms $A_i$, $1\leq i\leq n$ is
$\GLtheory$-\emph{feasible} 
with respect to a
theory $\GLtheory$ (or just \emph{feasible} if no confusion arises), if there is a substitution $\sigma$ such that 
$\GLtheory\vdash\sigma(A_i)$
holds for all $1\leq i\leq n$;
otherwise, it is \emph{infeasible} \cite{GutLuc_AutomaticallyProvingAndDisprovingFeasibilityConditions_IJCAR20}.\footnote{Note 
the difference with 
\emph{satisfiability} of a 
formula 
$\varphi$, 
i.e., the \emph{existence} of 
(i) an \emph{interpretation} $\SInterpretation$ and 
(ii) a valuation of variables $\alpha$ satisfying $\varphi$, i.e., the interpretation $[\varphi]^\SInterpretation_\alpha$ of $\varphi$ using $\SInterpretation$ and $\alpha$ 
is true \cite[p.\ 65]{Mendelson_IntroductionToMathematicalLogicFourtEd_1997}.}

\paragraph{Conditional pairs}
\label{SecConditionalPairs}

As in \cite[Section 5]{Lucas_LocalConfluenceOfConditionalAndGeneralizedTermRewritingSystems_JLAMP24}, in the following we use 
\emph{conditional pairs} 
$
\langle s,t\rangle
\IF  
A_1,\ldots,A_n$, 
where $s$ and $t$ are terms and $A_1,\ldots,A_n$ are atoms.
A conditional pair  
is \emph{trivial} if $s=t$.
Conditional pairs $\langle s,t\rangle\IF\gencond$ and $\langle t,s\rangle\IF\gencond$, whose pairs are symmetric but share the same conditional part (possibly after reordering the atoms) are called \emph{symmetric}.

\section{Peaks and joins in abstract reductions}
\label{SecPeaksAndJoinsInAbstractReduction}

This section introduces some new notations to represent and then compare different kinds of local peaks and joins used in the following sections of the paper.

\begin{notation}[Local peaks and joins]
\label{NotationPeaksAndJoins}
Given relations $\genrelation$ and $\genrelationbis$ on $A$,
\begin{itemize}
\item $\peakOf{\genrelation}{\genrelationbis}$ denotes
the family of \emph{local peaks} 
$t\spcrel{\invgenrelation{}}s\spcrel{\genrelationbis{}}t'$ for some $s,t,t'\in A$, which we call a local $\peakOf{\genrelation}{\genrelationbis}$-peak.
\item  
$\joinOf{\genrelation}{\genrelationbis}$ denotes the family of \emph{joins} 
$t\spcrel{\genrelation{}}u\spcrel{\invgenrelationbis{}}t'$  for some $t,t',u\in A$ (an $\joinOf{\genrelation}{\genrelationbis}$-join).
\end{itemize}
The use of \emph{double} up and down  arrows ($\Uparrow$ and $\Downarrow$, respectively) reinforces the idea that $\genrelation$ and $\genrelationbis$ can be \emph{different}. Besides, we borrow Huet's notation  \cite[pages 802 \& 817]{Huet_ConfluentReductionsAbstractPropertiesAndApplicationsToTermRewritingSystems_JACM80} for joinability modulo\footnote{In 
\cite[Definition 2.5.1]{Ohlebusch_AdvancedTopicsInTermRewriting_2002}, the relation $\downarrow_{\sim}$ is defined as $\rews{}\circ\sim\circ\leftrews{}$ and called \emph{joinability modulo $\sim$} (another notation for the same relation is
$\heartsuit$ 
\cite[Definition 14.3.3(i)]{Terese_TermRewritingSystems_2003}). Although the use of $\sim$ below the arrow is closer to its use in the definition, the Huet-based notation is more symmetric and compact. 
}
\begin{itemize}
\item $\ejoinOf{\genrelation}{\genrelationbis}$ denotes the family of \emph{joins} 
$t\spcrel{\genrelation{}}u\equ{}u'\spcrel{\invgenrelationbis{}}t'$ 
\emph{modulo an (implicit) equivalence $\equ{}$}, 
for some $t,t',u,u'\in A$ (an $\ejoinOf{\genrelation}{\genrelationbis}$-join).
\end{itemize}
\end{notation}
\begin{remark}[Local peaks]
In this paper we are  essentially concerned with \emph{local peaks}. In general, arbitrary peaks 
$t\spcrel{(\genrelation^{-1})^*}s\spcrel{\genrelation^*}t'$ of relations $\genrelation$ are not explicitly considered.
Thus, we often talk of ``peaks'' referring to \emph{local peaks} if no confusion arises.
\end{remark}
Due to their frequent use, Table \ref{TableCompactJoinabilityNotations} collects a number of simpler, more compact notations (some of them already used in the literature). 
\begin{table}
\caption{Compact notations for joinability}
\begin{center}
\begin{tabular}{ccccccccc}
Detailed notation: 
&
$\joinOf{\genrelation^*}{\genrelation^*}$
&
$\joinOf{\genrelation^+}{\genrelation^*}$
&
$\joinOf{\genrelation^*}{\genrelation^+}$
&
$\joinOf{\genrelation^+}{\genrelation^+}$
&
$\ejoinOf{\genrelation^*}{\genrelation^*}$
&
$\ejoinOf{\genrelation^+}{\genrelation^*}$
&
$\ejoinOf{\genrelation^*}{\genrelation^+}$
&
$\ejoinOf{\genrelation^+}{\genrelation^+}$
\\[0.2cm]
Specific notation:
& 
$\hspace{0.18cm}\joinability{\genrelation}$
& 
$\hspace{0.15cm}\lsjoinability{\genrelation}$
& 
$\hspace{0.15cm}\rsjoinability{\genrelation}$
& 
$\hspace{0.15cm}\sjoinability{\genrelation}$
& 
$\hspace{0.18cm}\ejoinability{\genrelation}$
& 
$\hspace{0.25cm}\lsejoinability{\genrelation}$
& 
$\hspace{0.25cm}\rsejoinability{\genrelation}$
& 
$\hspace{0.25cm}\sejoinability{\genrelation}$
\end{tabular}
\end{center}
where `\emph{ls}', `\emph{rs}', and `\emph{s}' stand 
for \emph{left-strict} \emph{right-strict}, and \emph{strict} (joinability).
\label{TableCompactJoinabilityNotations}
\end{table}

\begin{definition}
A pair $\langle s,t\rangle$ is 
\begin{itemize}
\item $\joinOf{\genrelation}{\genrelationbis}$-joinable (or just
$\joinability{\genrelation}$-joinable if $\genrelation=\genrelationbis$) if 
$s\spcrel{\genrelation{}}u\spcrel{\invgenrelationbis{}}t$ is in 
$\joinOf{\genrelation}{\genrelationbis}$ for some $u\in A$.
\smallskip
\item $\ejoinOf{\genrelation}{\genrelationbis}$-joinable (or just
$\ejoinability{\genrelation}$-joinable if $\genrelation=\genrelationbis$)
if 
$s\spcrel{\genrelation{}}u\equ{}u'\spcrel{\invgenrelationbis{}}t$
is in $\ejoinOf{\genrelation}{\genrelationbis}$
for some $u,u'\in A$.
\end{itemize}
\end{definition}

\section{Jouannaud and Kirchner's abstract approach to confluence of $\genrewmoduloOf{\genrelation}{\genequivalence}$}
\label{SecAbstractFrameworkJK86}

Sethi and his collaborators pioneered the abstract analysis of reduction modulo an equivalence \cite{AhoSetUll_CodeOptimizationAndFiniteChurchRosserTheorems_DOC72,Sethi_TestingForTheChurchRosserProperty_JACM74}.
Huet improved their work in \cite{Huet_ConfluentReductionsAbstractPropertiesAndApplicationsToTermRewritingSystems_JACM80} and also showed how to apply the abstract notions and results to term rewriting systems.
Jouannaud \cite{Jouannaud_ConfluentAndCoherentEquationalTermRewritingSystemsApplicationToProofsInAbstractDataTypes_CAAP83} further developed Huet's abstract analysis of confluence modulo, by providing an abstract treatment of Peterson \& Stickel's work  on practical approaches to rewriting modulo in the realm of rewriting systems \cite{PetSti_CompleteSetsOfReductionsForSomeEquationalTheories_JACM81}.
Jouannaud and Kirchner provided the technical presentation in \cite{JouKir_CompletionOfASetOfRulesModuloASetOfEquations_SIAMJC86} which we follow here 
(see also \cite{Jouannaud_ConfluenceOfTerminatingRewritingComputations_TFSP24}).

\subsection{Basic notions}
\label{SecBasicNotionsJK86}

Let $\equone{\genequivalence}$ be a \emph{symmetric} 
relation on a set $A$ and 
$\equ{\genequivalence}\spcrel{=}\equoneStar{\genequivalence}$ be its reflexive and transitive closure:
an \emph{equivalence} relation   
often called \emph{$\genequivalence$-equality}.
Let $\rew{\genrelation}$  ($\genrelation$ for short) be a binary relation on $A$.
Given $\genrelation$ and $\genequivalence$, the relation $\rew{\genrewmoduloOf{\genrelation}{\genequivalence}}$ 
($\genrewmoduloOf{\genrelation}{\genequivalence}$ for short), is called \emph{reduction (with $\rew{\genrelation}$) modulo $\equ{\genequivalence}$} and 
\emph{defined} as
\begin{IEEEeqnarray}{r'C'l}
\rew{\genrewmoduloOf{\genrelation}{\genequivalence}} & = & \composeRel{\equ{\genequivalence}}{\composeRel{\rew{\genrelation}}{\equ{\genequivalence}}}
\label{LblReductionModuloFromReductionAndEquivalence}
\end{IEEEeqnarray}
Let $\oneCR$ be the \emph{symmetric} relation
\begin{IEEEeqnarray}{r'C'l}
\oneCR & = & \equone{\genequivalence}\cup\rew{\genrelation}\cup\leftrew{\genrelation}
\end{IEEEeqnarray}
and $\oneCRs$ be the reflexive and transitive closure of $\oneCR$.
Note that $\oneCRs$ is an equivalence relation.

Computing with $\genrewmoduloOf{\genrelation}{\genequivalence}$ is difficult as it may
involve searching inside an infinite $\genequivalence$-equivalence class 
$[t]_{\genequivalence}$ for some $t'$ on which a $\genrelation$-reduction step can be performed.
Peterson and Stickel investigated this problem for 
TRSs $\cR$ and equational theories $E$. 
Thus, an important aspect of Jouannaud and Kirchner's framework is the use of $\rew{\genrelationUpE}$ ($\genrelationUpE$ for short) as a \emph{parameter} to define and further investigate confluence modulo.
Such a relation, though, must satisfy the following \emph{fundamental assumption} \cite[page 1158]{JouKir_CompletionOfASetOfRulesModuloASetOfEquations_SIAMJC86}:
\begin{IEEEeqnarray}{r'C'l}
\genrelation\:\subseteq\:\genrelationUpE\:\subseteq\:\genrewmoduloOf{\genrelation}{\genequivalence}\label{LblFundamentalAssumptionJK86}
\end{IEEEeqnarray}
Thus, $\genrelationUpE$ can be ``instantiated'' in different ways.
Due to the extensive use of Peterson and Stickel's reduction for \etrs{s}
in Jouannaud and Kirchner's framework and its application, 
we briefly introduce it in the next section.

\subsection{Peterson \& Stickel reduction}

Given an \etrs{} $\cR=(\Symbols,E,R)$ over a signature $\Symbols$, with set of equations $E$ and set of rules $R$, we consider
\begin{enumerate}
\item rewritings $s\rew{\cR}t$, where a term $s$ is rewritten into $t$
if  $s|_p=\sigma(\lhsr)$ for some position $p$, 
rule $\lhsr\to\rhsr$,
and substitution $\sigma$ such that $t=s[\sigma(\rhsr)]_p$;
\item rewritings modulo $E$ $s\rew{\cR/E}t$, where
$s\equequ{E}s'$ for some term $s'$ such that $s'\rew{\cR}t'$ and $t'\equequ{E}t$;
and 
\item\label{LblPetersonAndStickelReduction} rewritings \emph{\`a la Peterson \& Stickel} $s\rew{\cR,E}t$,
where $s|_p\equequ{E}\sigma(\lhsr)$ for some position $p$, 
rule $\lhsr\to\rhsr$ ,
and substitution $\sigma$ such that $t=s[\sigma(\rhsr)]_p$.
\end{enumerate}
\begin{remark}
The use of $s\rew{\cR,E}t$ to denote a reduction step as in (\ref{LblPetersonAndStickelReduction}) above
can be found in \cite[Definition 9]{Jouannaud_ConfluentAndCoherentEquationalTermRewritingSystemsApplicationToProofsInAbstractDataTypes_CAAP83}.
Although the attribution to 
\cite{PetSti_CompleteSetsOfReductionsForSomeEquationalTheories_JACM81} is usual, 
no formal definition of $\rew{\cR,E}$ --as fixed by Jouannaud-- is found there,
see Sections \ref{SecRelatedWork_PS81} and \ref{SecRelatedWork_JouannaudAndKirchner} for a discussion. 
Still, in this paper we use ``\emph{Peterson \& Stickel reduction}'' to refer to $\rew{\cR,E}$.
\end{remark}
It is clear that these reduction relations satisfy the fundamental assumption
(\ref{LblFundamentalAssumptionJK86}), i.e.,
\begin{IEEEeqnarray*}{r'C'l}
\rew{\cR}\:\subseteq\:\rew{\cR,E}\:\subseteq\:\rew{\cR/E}
\end{IEEEeqnarray*}
In general, these inclusions are \emph{strict}.
\begin{example}
\label{ExPeakNoCPs}
\label{ExPeakNoCPs_ETRS}
Consider  
$E=\{(\ref{ExPeakNoCPs_eq1}),(\ref{ExPeakNoCPs_eq2})\}$
and 
$R=\{(\ref{ExPeakNoCPs_rule1}),
(\ref{ExPeakNoCPs_rule2})\}$, where
\begin{center}
\vspace{-0.5cm}
\noindent
\begin{tabular}{cc}
\begin{minipage}{0.48\textwidth}
\begin{IEEEeqnarray}{+r:C:l+}
\fS{b} & =  & \fS{f}(\fS{a})\label{ExPeakNoCPs_eq1}\\
\fS{a} & =  & \fS{c}\label{ExPeakNoCPs_eq2}
\end{IEEEeqnarray}
\end{minipage}
&
\begin{minipage}{0.48\textwidth}
\begin{IEEEeqnarray}{+r:C:l+}
\fS{c} & \to & \fS{d}\label{ExPeakNoCPs_rule1}\\
\fS{b} & \to & \fS{d}\label{ExPeakNoCPs_rule2}
\end{IEEEeqnarray}
\end{minipage}
\end{tabular}
\end{center}
Then, 
\begin{itemize}
\item $\fS{f}(\fS{a})$ is $\rew{\cR}$-irreducible. However, 
$\fS{f}(\fS{a})\rew{\cR,E}\fS{f}(\fS{d})$ 
because 
$\fS{a}\equequ{E}\ul{\fS{c}}\rew{\cR}\fS{d}$, i.e., $\rew{\cR}\:\subset\:\rew{\cR,E}$.
\item Also, $\fS{b}\rew{\cR/E}\fS{f}(\fS{d})$, as $\fS{b}\equequ{E}\fS{f}(\fS{a})\equequ{E}\fS{f}(\ul{\fS{c}})\rew{\cR}\fS{f}(\fS{d})$, but $\fS{b}\not\rew{\cR,E}\fS{f}(\fS{d})$, i.e., $\rew{\cR,E}\:\subset\:\rew{\cR/E}$.
\end{itemize}
\end{example}
The use of $\rew{\cR,E}$ in rewriting modulo a set of equations is usually motivated as follows \cite[page 819, paragraph 5]{DurMes_OnTheChurchRosserAndCoherencePropertiesOfConditionalOrderSortedRewriteTheories_JLAP12}: checking whether
\begin{quote}
$s\rew{\cR/E}t$ involves searching through the possibly infinite equivalence class $[s]_E$ to see whether a match\footnote{The original version speaks of ``an E-match''. In our setting, we prefer ``match'' as it fits better the usual definition of rewriting modulo.}
[\emph{against the left-hand side of a rule}] 
is
found for a subterm of some $s'\in[s]_E$ and the result of rewriting $s'$ belongs to the equivalence class $[t]_E$. For this reason,
a much simpler relation $\rew{\cR,E}$ is defined, which becomes decidable if an $E$-matching algorithm exists.
\end{quote}
However, as suggested by the sentence ``if an $E$-matching algorithm exists'', $E$-matching is known to be difficult, see, e.g.,
\cite{FagHue_CompleteSetsOfUnifiersAndMatchersInEquationalTheories_CAAP83,BenKapNar_ComplexityOfMatchingProblems_RTA85,FagHue_CompleteSetsOfUnifiersAndMatchersInEquationalTheories_TCS86,KapNar_NPcompletenessOfTheSetUnificationAndMatchingProblems_CADE86,KapNar_MatchingUnificationAndComplexity_SIGSAMB87,BenKapNar_ComplexityOfMatchingProblems_JSC87,Siekmann_UnificationTheory_JSC89}.

\subsection{Abstract Church-Rosser and confluence properties}
 
The main abstract properties at stake in the analysis of confluence of reduction modulo are introduced in the following.
 
\begin{definition}
\label{DefConfluenceAndTerminationModulo}
Let $\genrelation$ and $\genequivalence$ be as above. Then, following
\cite[Definition 1]{JouKir_CompletionOfASetOfRulesModuloASetOfEquations_SIAMJC86},
\begin{figure}[t]
\begin{center}
\begin{tabular}{c@{\hspace{2cm}}c@{\hspace{1cm}}c}
\begin{minipage}{5cm}
\vspace{0.2cm}
\xymatrix{
t  \ar@{.>}[dd]^{\genrelation/\genequivalence}_{*}\ar@{|=|}[rr]^{*} & & t'\ar@{.>}[dd]^{*}_{\genrelation/\genequivalence}
\\
\\
u\ar@{|.|}[rr]^{*}_{\genequivalence} & &  u'
}
\end{minipage}
&
\begin{minipage}{5cm}
\xymatrix{
& & s \ar@{->}[lld]^{\genrelation/\genequivalence}_{*}  \ar@{->}[rrd]^{*}_{\genrelation/\genequivalence}
\\
t \ar@{.>}[d]_{*}^{\genrelation/\genequivalence} & & & & t'\ar@{.>}[d]^{*}_{\genrelation/\genequivalence}
\\ 
u  \ar@{|.|}[rrrr]^{*}_{\genequivalence} & & && u'
}
\end{minipage}
\end{tabular}
\end{center}
\caption{$\genrelation$ Church-Rosser modulo $\genequivalence$ and $\genrelation$ confluent modulo $\genequivalence$ 
in \cite[Definition 1]{JouKir_CompletionOfASetOfRulesModuloASetOfEquations_SIAMJC86}}
\label{FigChurchRosserAndEConfluenceJK86}
\end{figure}
\begin{itemize}
\item $\genrelation$ is \emph{Church-Rosser modulo $\genequivalence$}
iff for all 
$t,t'\in A$, if 
$t\oneCRs t'$ 
then  there are $u$ and $u'$ such that
$t\rews{\genrewmoduloOf{\genrelation}{\genequivalence}}u$,
$t'\rews{\genrewmoduloOf{\genrelation}{\genequivalence}}u'$ 
and $u\equ{\genequivalence} u'$,\footnote{\label{FootNoteConfluenceOfAbstractObjectsVsConfluenceInEquivalenceClasses} Definition 1
in \cite{JouKir_CompletionOfASetOfRulesModuloASetOfEquations_SIAMJC86} does \emph{not} use
the last requirement $u\equ{\genequivalence} u'$ 
as the authors assume $u$ and $u'$ to be 
\emph{$\genequivalence$-equivalence classes}
on $A$ (i.e., $u,u'\in A/\!\!\equ{\genequivalence}$) rather than $u,u'\in A$.
In order to make the difference explicit, consider 
$A=\{\fS{a},\fS{b},\fS{c}\}$, 
$\genequivalence$ be given by 
(the reflexive, transitive, and symmetric closure of) $\fS{b}\equ{\genequivalence}\fS{c}$, and $\genrelation$ be given by
$\fS{a}\genrelation \fS{b}$ and $\fS{a}\genrelation \fS{c}$.
Then, $\fS{b}\oneCRs\fS{c}$,
but $\fS{b}$ and $\fS{c}$ are $\rew{\genrelation/\genequivalence}$-irreducible.
And $\rews{\cR/E}=\{(\fS{a},\fS{a})(\fS{b},\fS{b}),(\fS{c},\fS{c}),(\fS{a},\fS{b}),(\fS{a},\fS{c})\}$. Thus, neither $\fS{b}\rews{\cR/E}\fS{c}$ nor $\fS{c}\rews{\cR/E}\fS{b}$, i.e., as a relation on $A$, $\rew{\genrelation/\genequivalence}$ is \emph{not} Church-Rosser.
However, $\fS{b}\equ{\genequivalence}\fS{c}$. 
As a relation on $A/\!\!\equ{\genequivalence}$, 
$\rew{\genrelation/\genequivalence}$ is confluent.}
see Figure \ref{FigChurchRosserAndEConfluenceJK86} (left).
\item $\genrelation$ is \emph{confluent modulo $\genequivalence$}
(or  \emph{$\genequivalence$-confluent}) iff for all 
$s,t,t'\in A$, if 
$s\rews{\genrewmoduloOf{\genrelation}{\genequivalence}}t$ 
and
$s\rews{\genrewmoduloOf{\genrelation}{\genequivalence}}t'$,
then  there are $u$ and $u'$ such that
$t\rews{\genrewmoduloOf{\genrelation}{\genequivalence}}u$,
$t'\rews{\genrewmoduloOf{\genrelation}{\genequivalence}}u'$ 
and $u\equ{\genequivalence} u'$, see footnote \ref{FootNoteConfluenceOfAbstractObjectsVsConfluenceInEquivalenceClasses}
and  Figure \ref{FigChurchRosserAndEConfluenceJK86} (right).
\item $\genrelation$ is \emph{terminating modulo $\genequivalence$} (or \emph{$\genequivalence$-terminating}, $\fS{SN}_\genequivalence(\genrelation)$\footnote{With minor variations, we borrow these acronyms from \cite[Section 2.5]{Ohlebusch_AdvancedTopicsInTermRewriting_2002} and often extend them to capture new properties not considered there.}) if
$\rew{\genrewmoduloOf{\genrelation}{\genequivalence}}$ is terminating \cite[
p.\ 1158]{JouKir_CompletionOfASetOfRulesModuloASetOfEquations_SIAMJC86}.
\end{itemize}
\end{definition}
\begin{figure}[t]
\begin{center}
\begin{tabular}{c}
\xymatrix{
t  \ar@{.>}[dd]^{\genrelationUpE}_{*}\ar@{|=|}[rr]^{*} & & t'\ar@{.>}[dd]^{*}_{\genrelationUpE}
\\
\\
u\ar@{|.|}[rr]^{*}_{\genequivalence} & &  u'
}
\end{tabular}
\end{center}
\caption{$\genrelation$ is $\genrelationUpE$-Church-Rosser modulo $\genequivalence$ ($1^{st}$ diagram in \cite[Figure 2.1]{JouKir_CompletionOfASetOfRulesModuloASetOfEquations_SIAMJC86})}
\label{FigChurchRosserProperty}
\end{figure}
The following  is a \emph{parametric} definition of \emph{joinability} modulo using  
$\genrelationUpE$.

\begin{definition}[Joinability modulo]
\label{DefJoinabilityModulo_Def2_JK86}
A pair $\langle t,t'\rangle$ with $t,t'\in A$ is 
$\genrelationUpE$-joinable modulo $\genequivalence$, 
(denoted as $t\ejoinability{\genrelationUpE}t'$), if $\exists\:u,u'\in A$ s.t.\ $t\rews{\genrelationUpE}u$, 
$t'\rews{\genrelationUpE}u'$, and
$u\equ{\genequivalence} u'$ \cite[Definition 2]{JouKir_CompletionOfASetOfRulesModuloASetOfEquations_SIAMJC86}.
\end{definition}
Accordingly, the following definition is given in \cite{JouKir_CompletionOfASetOfRulesModuloASetOfEquations_SIAMJC86}.

\begin{definition}[$\genrelationUpE$-Church-Rosser modulo 
$\genequivalence$]
\label{DefChurchRosserRE_JK86}
According to \cite[Definition 3]{JouKir_CompletionOfASetOfRulesModuloASetOfEquations_SIAMJC86},
$\genrelation$ is \emph{$\genrelationUpE$-Church-Rosser modulo 
$\genequivalence$} ($\fS{CR}_\genequivalence(\genrelation,\genrelationUpE)$) if
for all $t$ and $t'$, 
$t\oneCRs t'$
implies $t\ejoinability{\genrelationUpE}t'$ (Figure \ref{FigChurchRosserProperty}).
\end{definition}
As expected, if $\genrelation/\genequivalence$ is used instead of $\genrelationUpE$  
in Figure \ref{FigChurchRosserProperty}, we obtain the leftmost diagram in Figure \ref{FigChurchRosserAndEConfluenceJK86}.
\begin{figure}[t]
\begin{center}
\begin{tabular}{c}
\xymatrix{
t  \ar@{.>}[dd]^{\genrelationUpE}_{!}\ar@{|=|}[rr]^{*} & & t'\ar@{.>}[dd]^{!}_{\genrelationUpE}
\\
\\
u\ar@{|.|}[rr]^{*}_{\genequivalence} & &  u'
}
\end{tabular}
\end{center}
\caption{$\genrelation$ is $\genrelationUpE$-Church-Rosser modulo $\genequivalence$ for $\genrelationUpE$ terminating}
\label{FigChurchRosserPropertyETerminatingR}
\end{figure}
Also note that, if $\genrelationUpE$ is \emph{normalizing}, then the diagram in Figure \ref{FigChurchRosserProperty} is equivalent to the diagram in Figure 
\ref{FigChurchRosserPropertyETerminatingR}, where $\rewnorm{\genrelationUpE}$ denotes the obtention of an $\genrelationUpE$-\emph{normal} form.

\begin{proposition}
\label{PropSufficientConditionForEConfluence_JK86}
\cite[p.\ 1160, bullet 1]{JouKir_CompletionOfASetOfRulesModuloASetOfEquations_SIAMJC86}
If $\genrelation$ is $\genrelation$-Church-Rosser modulo 
$\genequivalence$, then 
$\genrelation$ is $\genrelationUpE$-Church-Rosser modulo 
$\genequivalence$, and then
$\genrelation$ is $\genequivalence$-confluent.
\end{proposition}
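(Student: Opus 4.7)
The plan is to split the statement into its two implications and show that each follows directly from the fundamental assumption $\genrelation\subseteq\genrelationUpE\subseteq\genrewmoduloOf{\genrelation}{\genequivalence}$ in (\ref{LblFundamentalAssumptionJK86}), together with the definitions of joinability modulo (Definition \ref{DefJoinabilityModulo_Def2_JK86}), Church-Rosser modulo (Definition \ref{DefChurchRosserRE_JK86}), and confluence modulo (Definition \ref{DefConfluenceAndTerminationModulo}). I do not expect any step to be a real obstacle; the whole argument is basically a monotonicity exercise in the spirit of Proposition \ref{PropInclusionOfRelationsAndInclusionOfJoins}. The slightly subtle point is just being careful with what is hypothesis and what is conclusion in each definition.

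For the first implication, I would start from an arbitrary pair $t,t'\in A$ with $t\oneCRs t'$, and apply the $\genrelation$-Church-Rosser modulo $\genequivalence$ hypothesis to obtain $u,u'\in A$ with $t\rews{\genrelation}u$, $t'\rews{\genrelation}u'$, and $u\equ{\genequivalence}u'$. From $\genrelation\subseteq\genrelationUpE$ (the left inclusion in (\ref{LblFundamentalAssumptionJK86})) I get $\rews{\genrelation}\subseteq\rews{\genrelationUpE}$, hence $t\rews{\genrelationUpE}u$ and $t'\rews{\genrelationUpE}u'$, which by Definition \ref{DefJoinabilityModulo_Def2_JK86} means $t\ejoinability{\genrelationUpE}t'$. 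This is exactly the $\genrelationUpE$-Church-Rosser modulo $\genequivalence$ conclusion.

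For the second implication, assume $\genrelation$ is $\genrelationUpE$-Church-Rosser modulo $\genequivalence$, and take $s,t,t'$ with $s\rews{\genrewmoduloOf{\genrelation}{\genequivalence}}t$ and $s\rews{\genrewmoduloOf{\genrelation}{\genequivalence}}t'$. The key observation is that $\rew{\genrewmoduloOf{\genrelation}{\genequivalence}}\subseteq\oneCR^*$, since by (\ref{LblReductionModuloFromReductionAndEquivalence}) every step of $\genrewmoduloOf{\genrelation}{\genequivalence}$ is a composition of $\equ{\genequivalence}$-steps and one $\rew{\genrelation}$-step, all of which are contained in $\oneCR$; hence $\rews{\genrewmoduloOf{\genrelation}{\genequivalence}}\subseteq\oneCRs$. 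Therefore $t\oneCRs s\oneCRs t'$, and by transitivity $t\oneCRs t'$. Applying the $\genrelationUpE$-Church-Rosser hypothesis yields $u,u'$ with $t\rews{\genrelationUpE}u$, $t'\rews{\genrelationUpE}u'$, and $u\equ{\genequivalence}u'$.

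To finish, I use the right inclusion in (\ref{LblFundamentalAssumptionJK86}), namely $\genrelationUpE\subseteq\genrewmoduloOf{\genrelation}{\genequivalence}$, to conclude $\rews{\genrelationUpE}\subseteq\rews{\genrewmoduloOf{\genrelation}{\genequivalence}}$, so $t\rews{\genrewmoduloOf{\genrelation}{\genequivalence}}u$ and $t'\rews{\genrewmoduloOf{\genrelation}{\genequivalence}}u'$ with $u\equ{\genequivalence}u'$, which is exactly the $\genequivalence$-confluence diagram of Figure \ref{FigChurchRosserAndEConfluenceJK86} (right). Both implications are then established.
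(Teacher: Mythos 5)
Your proof is correct and follows essentially the same route as the paper: both arguments reduce everything to the monotonicity consequences of the fundamental assumption $\genrelation\subseteq\genrelationUpE\subseteq\genrewmoduloOf{\genrelation}{\genequivalence}$, with the paper merely compressing your two implications into one line by noting that $\ejoinability{\genrelation}$-joinability of all $\oneCRs$-related pairs immediately yields $\ejoinability{\genrelationUpE}$- and $\ejoinability{\genrelation/\genequivalence}$-joinability. Your explicit verification that $\rews{\genrewmoduloOf{\genrelation}{\genequivalence}}\subseteq\oneCRs$ (which the paper leaves implicit when passing to $\genequivalence$-confluence) is a welcome addition rather than a deviation.
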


\begin{proof}
If $\genrelation$ is $\genrelation$-Church-Rosser modulo 
$\genequivalence$, then for all $t$ and $t'$, $t\oneCRs t'$
implies $t\ejoinability{\genrelation}t'$.
Since $\genrelation\spcrel{\subseteq}\genrelationUpE\spcrel{\subseteq}\genrelation/\genequivalence$,
we also have $t\ejoinability{\genrelationUpE}t'$
and 
$t\ejoinability{\genrelation/\genequivalence}t'$.
\end{proof}
Testing the Church-Rosser or confluence properties of $\genrelation$ modulo an equivalence $\genequivalence$ is
difficult due to the need of considering unbounded sequences of steps with $\oneCR$, $\rew{\genrelationUpE}$, or $\equone{\genequivalence}$.
Thus, a \emph{local} analysis, reducing the \emph{initial} aforementioned steps to only one is convenient.

\subsection{Local confluence and local coherence properties}
 
Definition \ref{DefLocalConfluenceAndCoherenceProperties_JK86} below 
introduces the (parametric) 
\emph{local} properties which are used to investigate the ``global'' properties above.

\begin{definition}[Right-strict joinability modulo]
\label{DefJoinabilityProperties_JK86}
A pair $\langle t,t'\rangle$  with $t,t'\in A$ is 
\emph{right-strict} $\genrelationUpE$-joinable modulo $\genequivalence$, 
(denoted as $t\rsjoinability{\genrelationUpE}t'$), if $\exists\:u,u'$ s.t.\ $t\rews{\genrelationUpE}u$, 
$t'\rewp{\genrelationUpE}u'$, and
$u\equ{\genequivalence} u'$.
\end{definition}
\begin{figure}[t]
\begin{center}
\begin{tabular}{c@{\hspace{1.5cm}}c}
\xymatrix{
& s  \ar@{->}[ld]^{\genrelationUpE}\ar@{->}[rd]_{\genrelation} 
\\
t \ar@{.>}[d]_{*}^{\genrelationUpE} & & t'\ar@{.>}[d]^{*}_{\genrelationUpE}
\\
u \ar@{|.|}[rr]^{*}_{\genequivalence} & & u'
}
&
\xymatrix{
& s  \ar@{|-|}[rd]_{\genequivalence}
\ar@{->}[ld]^{\genrelationUpE} 
\\
t \ar@{.>}[d]_{*}^{\genrelationUpE}  
& & t' \ar@{.>}[d]^{+}_{\genrelationUpE} 
\\
u \ar@{|.|}[rr]^{*}_{\genequivalence} & & u'
}
\\
\emph{Local confluence modulo $\genequivalence$ of $\genrelationUpE$ with $\genrelation$}
&
\emph{Local coherence modulo $\genequivalence$ of $\genrelationUpE$}
\\[0.1cm]
$\fS{LCON}_\genequivalence(\genrelationUpE,\genrelation)$
& $\fS{LCOH}_\genequivalence(\genrelationUpE)$
\end{tabular}
\end{center}
\caption{Confluence and coherence properties: $3^{rd}$ and $5^{th}$ diagrams in \cite[Figure 2.1]{JouKir_CompletionOfASetOfRulesModuloASetOfEquations_SIAMJC86}}
\label{FigConfluenceAndCoherenceProperties}
\end{figure}

\begin{definition}[Local confluence and local coherence]
\label{DefLocalConfluenceAndCoherenceProperties_JK86}
For $\genrelation$, $\genequivalence$, $\genrelationUpE$, and $\genrelation/\genequivalence$ as above 
\begin{enumerate}
\item\label{DefLocalConfluenceAndCoherenceProperties_JK86_LocalConfluence} $\genrelationUpE$ is \emph{locally confluent modulo $\genequivalence$ with $\genrelation$} 
($\fS{LCON}_\genequivalence(\genrelationUpE,\genrelation)$) 
iff for all $t$, $t'$, and $t''$, if
$t\rew{\genrelationUpE}t'$ and 
$t\rew{\genrelation}t''$, then
$t'\ejoinability{\genrelationUpE}t''$
\cite[Definition 3(3)]{JouKir_CompletionOfASetOfRulesModuloASetOfEquations_SIAMJC86}, 
see Figure \ref{FigConfluenceAndCoherenceProperties} (left).
\item\label{DefLocalConfluenceAndCoherenceProperties_JK86_LocalCoherence}
 $\genrelationUpE$ is \emph{locally coherent modulo $\genequivalence$} ($\fS{LCOH}_\genequivalence(\genrelationUpE)$) iff for all $t$, $t'$, and $t''$, if
$t\rew{\genrelationUpE}t'$ and 
$t\equone{\genequivalence}t''$, then 
$t'\rsjoinability{\genrelationUpE}t''$, 
\cite[Definition 3(5)]{JouKir_CompletionOfASetOfRulesModuloASetOfEquations_SIAMJC86}, 
see Figure \ref{FigConfluenceAndCoherenceProperties} (right).
\end{enumerate}
\end{definition}
\begin{remark}[Local peaks and local cliffs]
Following \cite[Section 7.3]{DerJou_RewriteSystems_HTCS90} and
\cite[Section 11.3.1]{Jouannaud_ConfluenceOfTerminatingRewritingComputations_TFSP24} we often use \emph{local peak} to refer to $\peakOf{\genrelation}{\genrelation}$-peaks and
\emph{local cliff} to refer to $\peakOf{\genrelation}{\equone{}}$-peaks.
\end{remark}
Below \cite[Definition 3]{JouKir_CompletionOfASetOfRulesModuloASetOfEquations_SIAMJC86}, Jouannaud and Kirchner remark that, if $\genequivalence$-termination of $\genrelation$ is assumed, then $\rewp{\genrelationUpE}$ in the rightmost branch of the diagram in Figure \ref{FigConfluenceAndCoherenceProperties}(right)  can be replaced by $\rews{\genrelationUpE}$
(see property \emph{5')} in \cite[page 1159]{JouKir_CompletionOfASetOfRulesModuloASetOfEquations_SIAMJC86}).
We reproduce their argument.

\begin{proposition}
\label{PropEquivalenceLocalCoherenceModuloEandProperty5prime_JK86}
If $\genrelation$ is $\genequivalence$-terminating, then 
local coherence modulo $\genequivalence$ of $\genrelationUpE$ is equivalent to the commutation of the following diagram:
\[\xymatrix{
& s  \ar@{|-|}[rd]_{\genequivalence}
\ar@{->}[ld]^{\genrelationUpE} 
\\
t \ar@{.>}[d]_{*}^{\genrelationUpE}  
& & t' \ar@{.>}[d]^{*}_{\genrelationUpE} 
\\
u \ar@{|.|}[rr]^{*}_{\genequivalence} & & u'
}
\]
\end{proposition}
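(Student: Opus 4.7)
The equivalence has an easy direction and a harder one, and the harder direction is exactly where $\genequivalence$-termination enters.

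The easy direction ($\Rightarrow$) is immediate: replacing $\rewp{\genrelationUpE}$ in the right branch by $\rews{\genrelationUpE}$ weakens the conclusion, since $\rewp{\genrelationUpE}\subseteq\rews{\genrelationUpE}$. So local coherence of $\genrelationUpE$ modulo $\genequivalence$ in the sense of Definition~\ref{DefLocalConfluenceAndCoherenceProperties_JK86}(\ref{DefLocalConfluenceAndCoherenceProperties_JK86_LocalCoherence}) implies commutation of the weaker diagram, without any termination assumption.

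For the nontrivial direction ($\Leftarrow$), suppose $\genrelation$ is $\genequivalence$-terminating and the weaker diagram commutes. Given a peak $t\leftrew{\genrelationUpE}s\equone{\genequivalence}t'$, the assumption yields witnesses $u,u'$ with $t\rews{\genrelationUpE}u$, $t'\rews{\genrelationUpE}u'$, and $u\equ{\genequivalence}u'$. I plan to show by contradiction that necessarily $t'\neq u'$, so that in fact $t'\rewp{\genrelationUpE}u'$, which is exactly what local coherence requires. Assume towards contradiction that $t'=u'$. Then, from $s\equone{\genequivalence}t'=u'\equ{\genequivalence}u$, I get $s\equ{\genequivalence}u$. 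Since $s\rew{\genrelationUpE}t\rews{\genrelationUpE}u$ and $\genrelationUpE\subseteq\genrewmoduloOf{\genrelation}{\genequivalence}$ by the fundamental assumption (\ref{LblFundamentalAssumptionJK86}), this yields $s\rewp{\genrewmoduloOf{\genrelation}{\genequivalence}}u$.

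The main (but not difficult) obstacle is to turn the two ingredients $s\rewp{\genrewmoduloOf{\genrelation}{\genequivalence}}u$ and $u\equ{\genequivalence}s$ into an infinite $\rew{\genrewmoduloOf{\genrelation}{\genequivalence}}$-sequence starting at $u$. Let $x$ be such that $s\rew{\genrewmoduloOf{\genrelation}{\genequivalence}}x\rews{\genrewmoduloOf{\genrelation}{\genequivalence}}u$; using the defining decomposition (\ref{LblReductionModuloFromReductionAndEquivalence}), namely $\rew{\genrewmoduloOf{\genrelation}{\genequivalence}}=\composeRel{\equ{\genequivalence}}{\composeRel{\rew{\genrelation}}{\equ{\genequivalence}}}$, and the fact that $u\equ{\genequivalence}s$, I obtain $u\rew{\genrewmoduloOf{\genrelation}{\genequivalence}}x$ as well. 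Composing with the remaining chain $x\rews{\genrewmoduloOf{\genrelation}{\genequivalence}}u$ gives $u\rewp{\genrewmoduloOf{\genrelation}{\genequivalence}}u$, which can be iterated to produce an infinite $\rew{\genrewmoduloOf{\genrelation}{\genequivalence}}$-derivation, contradicting Definition~\ref{DefConfluenceAndTerminationModulo}. Hence $t'\neq u'$, so $t'\rewp{\genrelationUpE}u'$, completing the proof.
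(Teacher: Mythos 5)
Your proof is correct and follows essentially the same route as the paper's: the easy direction by the inclusion $\rewp{\genrelationUpE}\subseteq\rews{\genrelationUpE}$, and the hard direction by deriving a cycle $u\equ{\genequivalence}s\rew{\genrelationUpE}t\rews{\genrelationUpE}u$ that contradicts $\genequivalence$-termination of $\genrelation$ via $\genrelationUpE\subseteq\genrelation/\genequivalence$ and the absorption of $\equ{\genequivalence}$ into $\rew{\genrelation/\genequivalence}$. The only (harmless, arguably clarifying) difference is that you phrase the contradiction hypothesis as ``the witnessing reduction from $t'$ is empty, i.e.\ $u'=t'$'' where the paper phrases it as ``$t'$ is not $\genrelationUpE$-reducible''.
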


\begin{proof}
Since $\rewp{\genrelationUpE}\spcrel{\subseteq}\rews{\genrelationUpE}$,
commutation of the rightmost diagram in Figure \ref{FigConfluenceAndCoherenceProperties} 
clearly implies commutation of the diagram above.
On the other hand, assume that such a diagram commutes but $t'$ is \emph{not} $\genrelationUpE$-reducible, as required by the rightmost diagram in Figure \ref{FigConfluenceAndCoherenceProperties}.
Then, $u\equ{\genequivalence}t'$ holds and therefore,
$u\equ{\genequivalence}t'\equone{\genequivalence}s\rew{\genrelationUpE}t\rews{\genrelationUpE}u$, i.e.,
$u\equ{\genequivalence}s\rewp{{\genrelationUpE}}u$
 contradicts $\genequivalence$-termination of $\genrelation$ because 
$\genrelationUpE\subseteq\genrelation/\genequivalence$, hence $\equ{\genequivalence}\circ\:\genrelationUpE\subseteq\:\equ{\genequivalence}\circ\:\genrelation/\genequivalence=\genrelation/\genequivalence$, i.e., 
$u\rewp{\genrelation/\genequivalence}u$.
\end{proof}

For $\genequivalence$-terminating relations $\genrelation$, 
the following result characterizes
the $\genrelationUpE$-Church-Rosser property of $\genrelation$ modulo $\genequivalence$
as the local confluence and coherence properties modulo $\genequivalence$.

\begin{theorem}
\cite[Theorem 5]{JouKir_CompletionOfASetOfRulesModuloASetOfEquations_SIAMJC86}
\label{TheoTheorem5_JK86}
If $\genrelation$ is $\genequivalence$-terminating, then 
$\genrelation$ is 
$\genrelationUpE$-Church-Rosser modulo 
$\genequivalence$
iff  $\genrelationUpE$ is 
(i) locally confluent modulo $\genequivalence$ with $\genrelation$  and 
(ii) locally coherent modulo $\genequivalence$.
\end{theorem}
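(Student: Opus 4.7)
The strategy is a Newman-style argument lifted to reduction modulo an equivalence. The forward direction is essentially immediate from the definitions; the backward direction uses local confluence and local coherence to eliminate peaks in a mixed conversion chain, with $\genequivalence$-termination of $\genrelation$ providing the required well-founded ordering.

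For ($\Rightarrow$), assume $\genrelation$ is $\genrelationUpE$-Church-Rosser modulo $\genequivalence$. A single-step peak $t'\leftrew{\genrelationUpE}t\rew{\genrelation}t''$ satisfies $t'\oneCRs t''$ via the fundamental inclusion $\genrelation\subseteq\genrelationUpE$, so $t'\ejoinability{\genrelationUpE}t''$ by hypothesis; this is local confluence modulo $\genequivalence$ of $\genrelationUpE$ with $\genrelation$. Likewise, from $t\rew{\genrelationUpE}t'$ and $t\equone{\genequivalence}t''$ we obtain $t'\ejoinability{\genrelationUpE}t''$, which under $\genequivalence$-termination is equivalent to local coherence by Proposition~\ref{PropEquivalenceLocalCoherenceModuloEandProperty5prime_JK86}.

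For ($\Leftarrow$), I would perform a well-founded induction on conversions $t=t_0\oneCR\cdots\oneCR t_n=t'$, ordered by the multiset extension of $\rew{\genrewmoduloOf{\genrelation}{\genequivalence}}$ on their ``peak tops'', a measure that is well-founded by $\genequivalence$-termination. I locate the leftmost peak: an index $i$ with $t_{i-1}\leftrew{\genrelation}t_i$ followed by $t_i\rew{\genrelation}t_{i+1}$ or $t_i\equone{\genequivalence}t_{i+1}$ (the symmetric configuration $t_{i-1}\equone{\genequivalence}t_i\rew{\genrelation}t_{i+1}$ is handled identically using symmetry of $\equone{\genequivalence}$). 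If no peak exists, the chain rearranges into $\leftrews{\genrelation}\cdot\equ{\genequivalence}\cdot\rews{\genrelation}$, which immediately gives $t\ejoinability{\genrelationUpE}t'$ via $\genrelation\subseteq\genrelationUpE$. Otherwise, applying local confluence (or local coherence via Proposition~\ref{PropEquivalenceLocalCoherenceModuloEandProperty5prime_JK86}) replaces the peak at $t_i$ by a valley $t_{i-1}\rews{\genrelationUpE}u\equ{\genequivalence}u'\leftrews{\genrelationUpE}t_{i+1}$; because $\genrelationUpE\subseteq\rew{\genrewmoduloOf{\genrelation}{\genequivalence}}$, all new intermediate terms lie strictly below $t_i$ in the ambient reduction, so the new conversion is strictly smaller in the multiset order. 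The induction hypothesis then yields $\ejoin{\genrelationUpE}$-joins for the two subchains flanking the inserted valley, which compose into a single $\ejoin{\genrelationUpE}$-join from $t$ to $t'$.

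The main obstacle is choosing this well-founded measure: naive induction on the chain length $n$ fails because local confluence and coherence may introduce arbitrarily many new $\genrelationUpE$-steps. The fundamental assumption $\genrelation\subseteq\genrelationUpE\subseteq\rew{\genrewmoduloOf{\genrelation}{\genequivalence}}$ is precisely what lets $\genrelationUpE$-steps produced by local (co)confluence be measured against the $\genequivalence$-terminating ambient relation. A secondary subtlety is composing the two flanking subjoins through the shared equivalence class produced by the peak elimination, which relies on transitivity of $\equ{\genequivalence}$ and the closure of $\ejoin{\genrelationUpE}$ under prefixing by $\rews{\genrelationUpE}$.
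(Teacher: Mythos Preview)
The paper does not supply its own proof of this theorem; it is quoted from \cite[Theorem 5]{JouKir_CompletionOfASetOfRulesModuloASetOfEquations_SIAMJC86} without argument. Your overall strategy---a Newman-style induction using the multiset extension of $\rewp{\genrewmoduloOf{\genrelation}{\genequivalence}}$, well-founded by $\genequivalence$-termination---is exactly the approach of the original source (and is what the present paper reuses in its proof of Theorem~\ref{TheoEConfluenceWithoutDCPs}). Your forward direction is fine; the inclusion you actually need for $t'\oneCRs t''$ is $\genrelationUpE\subseteq\genrewmoduloOf{\genrelation}{\genequivalence}\subseteq{\oneCRs}$ rather than $\genrelation\subseteq\genrelationUpE$, and in the peak-free base case the chain shape is the valley $\rews{\genrelation}\cdot\equ{\genequivalence}\cdot\leftrews{\genrelation}$, not what you wrote, but both are evidently slips.

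There is, however, a real gap in the inductive step. After eliminating the peak at $t_i$ you propose to apply the induction hypothesis separately to the two subchains flanking the inserted valley and then ``compose'' the resulting $\ejoinability{\genrelationUpE}$-joins. But $\ejoinability{\genrelationUpE}$ is not transitive: from $t\ejoinability{\genrelationUpE}t_{i-1}$ you obtain some $a\leftrews{\genrelationUpE}t_{i-1}$, and concatenating with $t_{i-1}\rews{\genrelationUpE}u$ creates a \emph{new} peak at $t_{i-1}$ that you cannot dissolve without the very Church-Rosser property you are proving. Neither transitivity of $\equ{\genequivalence}$ nor closure of $\ejoinability{\genrelationUpE}$ under $\rews{\genrelationUpE}$-prefixing addresses this, since the offending steps go the wrong way. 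The repair is straightforward and is precisely what the paper does in the proof of Theorem~\ref{TheoEConfluenceWithoutDCPs}: apply the induction hypothesis \emph{once} to the entire new conversion $t_0\oneCR\cdots\oneCR t_{i-1}\oneCRs u\equ{\genequivalence}u'\oneCRs t_{i+1}\oneCR\cdots\oneCR t_n$ (after decomposing the inserted $\genrelationUpE$-steps into $\oneCR$-steps via $\genrelationUpE\subseteq\equ{\genequivalence}\circ\rew{\genrelation}\circ\equ{\genequivalence}$). Its multiset is strictly smaller than the original because $t_i$ has been removed and every inserted element lies strictly below $t_i$---here it matters that you use the \emph{strict} form of local coherence directly (the $\rewp{\genrelationUpE}$-step on the $t_{i+1}$-side is what forces the new elements below $[t_i]_{\genequivalence}$); invoking Proposition~\ref{PropEquivalenceLocalCoherenceModuloEandProperty5prime_JK86} to pass to the non-strict form is unnecessary and obscures this point.
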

In \cite[page 1162]{JouKir_CompletionOfASetOfRulesModuloASetOfEquations_SIAMJC86}, Jouannaud and Kirchner show that $E$-termination of $\genrelation$ \emph{cannot} be replaced by termination of $\genrelationUpE$ in Theorem \ref{TheoTheorem5_JK86}, see
\cite[Figures 2.4 \& 2.5]{JouKir_CompletionOfASetOfRulesModuloASetOfEquations_SIAMJC86}.

\subsection{Proving and disproving $\genequivalence$-confluence of $\genrelation$}
\label{SecProvingAndDisprovingEconfluenceOfR}

Theorem \ref{TheoTheorem5_JK86} and Proposition \ref{PropSufficientConditionForEConfluence_JK86}
yield a \emph{sufficient condition} for $\genequivalence$-confluence of $\genrelation$
on the basis of the local properties in Definition \ref{DefLocalConfluenceAndCoherenceProperties_JK86}.

\begin{corollary}
\label{CoroTheorem5_JK86}
If $\genrelation$ is $\genequivalence$-terminating, then 
$\genrelation$ is 
$\genequivalence$-confluent
if  $\genrelationUpE$ is 
(i) locally confluent modulo $\genequivalence$ with $\genrelation$ and 
(ii) locally coherent modulo $\genequivalence$.
\end{corollary}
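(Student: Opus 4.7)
The plan is to chain Theorem \ref{TheoTheorem5_JK86} with a short argument that $\genrelationUpE$-Church-Rosser modulo $\genequivalence$ entails $\genequivalence$-confluence of $\genrelation$. The first step is immediate: assuming $\genequivalence$-termination of $\genrelation$ together with (i) local confluence of $\genrelationUpE$ modulo $\genequivalence$ with $\genrelation$ and (ii) local coherence of $\genrelationUpE$ modulo $\genequivalence$, Theorem \ref{TheoTheorem5_JK86} delivers the $\genrelationUpE$-Church-Rosser property modulo $\genequivalence$, i.e., for all $t,t'\in A$, $t\oneCRs t'$ implies $t\ejoinability{\genrelationUpE}t'$.

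The second step converts this into $\genequivalence$-confluence. First I would observe that $\rew{\genrelation/\genequivalence}\spcrel{\subseteq}\oneCRs$: indeed, by definition (\ref{LblReductionModuloFromReductionAndEquivalence}), a step $s\rew{\genrelation/\genequivalence}t$ unfolds as $s\equ{\genequivalence}s'\rew{\genrelation}t'\equ{\genequivalence}t$, and each of $\equ{\genequivalence}$, $\rew{\genrelation}$, $\leftrew{\genrelation}$ is contained in $\oneCR$; taking reflexive-transitive closures gives $\rews{\genrelation/\genequivalence}\spcrel{\subseteq}\oneCRs$. Now, given $s,t,t'$ with $s\rews{\genrelation/\genequivalence}t$ and $s\rews{\genrelation/\genequivalence}t'$, symmetry of $\oneCRs$ yields $t\oneCRs s\oneCRs t'$, hence $t\oneCRs t'$.

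The $\genrelationUpE$-Church-Rosser property then produces $u,u'\in A$ with $t\rews{\genrelationUpE}u$, $t'\rews{\genrelationUpE}u'$, and $u\equ{\genequivalence}u'$. Invoking the fundamental assumption (\ref{LblFundamentalAssumptionJK86}) once more, $\genrelationUpE\spcrel{\subseteq}\genrelation/\genequivalence$ gives $t\rews{\genrelation/\genequivalence}u$ and $t'\rews{\genrelation/\genequivalence}u'$, which is exactly the diagram required by the $\genequivalence$-confluence clause of Definition \ref{DefConfluenceAndTerminationModulo} (rightmost diagram of Figure \ref{FigChurchRosserAndEConfluenceJK86}).

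There is essentially no obstacle: the combinatorial work was absorbed by Theorem \ref{TheoTheorem5_JK86}, and the rest is just chasing the inclusions $\genrelation\spcrel{\subseteq}\genrelationUpE\spcrel{\subseteq}\genrelation/\genequivalence$ and $\rews{\genrelation/\genequivalence}\spcrel{\subseteq}\oneCRs$. The only subtlety worth flagging in the writeup is the last point, namely that a $\rew{\genrelation/\genequivalence}$-derivation is automatically an $\oneCR$-conversion; this is what allows the Church-Rosser hypothesis (stated in terms of $\oneCRs$) to be applied to the diverging reduction modulo $\genequivalence$, and closes the argument without any extra use of $\genequivalence$-termination beyond what Theorem \ref{TheoTheorem5_JK86} already consumed.
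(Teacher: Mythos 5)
Your proof is correct and follows essentially the same route as the paper, which obtains the corollary by combining Theorem~\ref{TheoTheorem5_JK86} with Proposition~\ref{PropSufficientConditionForEConfluence_JK86}. The only difference is cosmetic: where the paper delegates the step from the $\genrelationUpE$-Church-Rosser property modulo $\genequivalence$ to $\genequivalence$-confluence to that proposition, you spell it out directly via the inclusions $\rews{\genrelation/\genequivalence}\subseteq\oneCRs$ and $\genrelationUpE\subseteq\genrelation/\genequivalence$, which is exactly the content of the proposition's ``and then'' clause.
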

\begin{proof}
By Theorem \ref{TheoTheorem5_JK86}, $\genrelation$ is $\genrelationUpE$-Church-Rosser modulo $\genrelation$.
By Proposition \ref{PropSufficientConditionForEConfluence_JK86}, $\genrelation$ is $\genequivalence$-confluent.
\end{proof}
In contrast with the non-modulo case, where 
(i) confluence of $\genrelation$ implies its local confluence, and 
(ii) confluence and local confluence coincide for terminating relations $\genrelation$ \cite[Lemma 2.4]{Huet_ConfluentReductionsAbstractPropertiesAndApplicationsToTermRewritingSystems_JACM80}, 
Corollary \ref{CoroTheorem5_JK86} \emph{cannot} be turned into a characterization of $\genequivalence$-confluence of $\genrelation$, as we have the following.

\begin{proposition}
\label{PropLocalConfluenceAndLocalCoherenceNotNecessaryForEConfluence}
There is an \etrs{} $\cR=(\Symbols,E,R)$ such that, if
$\genequivalence$ is $\equequ{E}$;
$\genrelation$ is $\rew{\cR}$; and
$\genrelationUpE$ is $\rew{\cR,E}$, then
$\genrelation$ is
$\genequivalence$-terminating and $\genequivalence$-confluent, but 
\begin{enumerate}
\item\label{PropLocalConfluenceAndLocalCoherenceNotNecessaryForEConfluence_LocalConfluence}
$\genrelationUpE$  is not locally confluent modulo $\genequivalence$ with $\genrelation$, and 
\item\label{PropLocalConfluenceAndLocalCoherenceNotNecessaryForEConfluence_LocalCoherence} 
$\genrelationUpE$  is not locally coherent modulo $\genequivalence$
\end{enumerate}
\end{proposition}
Therefore, in general $\genequivalence$-confluence of $\genequivalence$-terminating relations $\genrelation$ is \emph{not} a necessary condition of local confluence modulo $\genequivalence$ with $\genrelation$ 
or local coherence modulo $\genequivalence$  of $\genrelationUpE$.
We prove this result by using the following example.

\begin{example}
\label{ExLimitsPSjoinability}
Consider the \etrs{} $\cR$

\noindent
\begin{tabular}{cc}
\begin{minipage}{0.485\textwidth}
\begin{IEEEeqnarray}{r'C'l}
\fS{a} & = & \fS{f}(\fS{b})
\label{ExLimitsPSjoinability_eq1}
\\
\fS{b} & \to & \fS{d}
\label{ExLimitsPSjoinability_rule1}
\end{IEEEeqnarray}
\end{minipage}
&
\begin{minipage}{0.485\textwidth}
\begin{IEEEeqnarray}{r'C'l}
\fS{c} & \to & \fS{a}
\label{ExLimitsPSjoinability_rule2}
\\
\fS{c} & \to & \fS{f}(\fS{d})
\label{ExLimitsPSjoinability_rule3}
\end{IEEEeqnarray}
\end{minipage}
\end{tabular}

\medskip
\noindent
whose rewriting modulo relation $\rew{\cR/E}$ is clearly terminating for $E=\{(\ref{ExLimitsPSjoinability_eq1})\}$.\footnote{A proof can be obtained, though, by using the techniques discussed in Section \ref{SecETerminationOfEGTRSs}, i.e., by automatically computing a model of the first-order theory of the \egtrs{} which makes the interpretation of the one-step rewriting modulo relation $\rew{\cR/E}$ well-founded. By lack of space, we do not provide the complete development of the proof here, but it is analogous to the $E$-termination proof given in Example \ref{ExHuet80_RemarkPage818_ETermination} for the \etrs{} $\cR$ in Example \ref{ExHuet80_RemarkPage818}.}
\end{example}
\begin{proposition}
\label{PropExLimitsPSjoinability_EConfluent}
The \etrs{} $\cR$ in Example \ref{ExLimitsPSjoinability} is $E$-confluent.
\end{proposition}

\begin{proof}
According to the diagram in Figure \ref{FigChurchRosserAndEConfluenceJK86} (right), let
 $s$, $t$ and $t'$ be such that 
\begin{IEEEeqnarray}{r'C'l}
t\leftrews{\cR/E} & s & \rews{\cR/E}t'\label{PropExLimitsPSjoinability_EConfluent_peak}
\end{IEEEeqnarray}
We can write $s$ as $s=\fS{f}^n(s')$ for some $n\geq 0$ with $s'\in\{\fS{a},\fS{b},\fS{c}\}$ (if $s'$ is a variable, then $s$ is $\rew{\cR/E}$-irreducible).
The following cases cover all possibilities:
\begin{enumerate}
\item Either $s'=\fS{a}$ or 
$s'=\fS{b}$. In both cases there is only one possible $\rew{\cR/E}$-sequence on $s$ just involving a single $\rew{\cR/E}$-step. Thus, no real peak (with $t\neq t'$) is possible.
\item If $s'=\fS{c}$, then, we have 
\begin{enumerate}
\item just a single step $s=\fS{f}^n(\ul{\fS{c}})\rew{(\ref{ExLimitsPSjoinability_rule3})}\fS{f}^{n+1}(\fS{d})$, which is 
$\rew{\cR/E}$-irreducible, and
\item 
$s=\fS{f}^n(\ul{\fS{c}})\rew{(\ref{ExLimitsPSjoinability_rule2})}\fS{f}^{n}(\fS{a})\equequ{E}\fS{f}^{n+1}(\ul{\fS{b}})\rew{(\ref{ExLimitsPSjoinability_rule1})}\fS{f}^{n+1}(\fS{d})$. 
\end{enumerate}
Thus, all possible peaks (\ref{PropExLimitsPSjoinability_EConfluent_peak}) 
starting from $\fS{f}^n(\fS{c})$ are $\ejoinability{\cR/E}$-joinable.
\end{enumerate}
Thus, all peaks (\ref{PropExLimitsPSjoinability_EConfluent_peak}) are $\ejoinability{\cR/E}$-joinable and $\cR$ is $E$-confluent.
\end{proof}

\begin{proof}(of Proposition \ref{PropLocalConfluenceAndLocalCoherenceNotNecessaryForEConfluence})
\begin{enumerate}
\item %
The following 
local confluence peak
\begin{IEEEeqnarray}{r'C'l}
\fS{a} \leftrew{(\ref{ExLimitsPSjoinability_rule2})} & \fS{c} & \rew{(\ref{ExLimitsPSjoinability_rule3})} \fS{f}(\fS{d})\label{ExLimitsPSjoinability_criticalPeak}
\end{IEEEeqnarray}
is \emph{not} $\ejoinability{\cR,E}$-joinable, as both 
$\fS{a}$ and $\fS{f}(\fS{d})$ are 
$\rew{\cR,E}$-irreducible and not $E$-equivalent.
This shows that $\rew{\cR,E}$ is \emph{not} 
locally confluent modulo $E$ with $\rew{\cR}$.

 \item The following cliff  
 (where $\equone{E}$ is a $\rew{\eqoriented{E}}$-step using rule $\overleftarrow{(\ref{ExLimitsPSjoinability_eq1})}$, obtained from equation (\ref{ExLimitsPSjoinability_eq1}) by a right-to-left orientation):
\begin{IEEEeqnarray}{r'C'l}
\fS{f}(\fS{d})\leftrew{(\ref{ExLimitsPSjoinability_rule1}),E}
& \fS{f}(\fS{b}) & \rew{\eqoriented{E}}\fS{a}\label{ExLimitsPSjoinability_disjointCoherencePeak}
\end{IEEEeqnarray}
is not $\ejoinability{\cR,E}$-joinable, as both $\fS{f}(\fS{d})$ and 
$\fS{a}$ are $\rew{\cR,E}$-irreducible. 
Thus, $\rew{\cR,E}$ is \emph{not} 
locally coherent modulo $E$.
\end{enumerate}
\end{proof}
Thus, in general we \emph{cannot} disprove $\genequivalence$-confluence by disproving local confluence or local coherence of $\genrelationUpE$ modulo $\genequivalence$ (with $\genrelation$).
The following result is immediate from a close inspection of diagrams in Figure \ref{FigChurchRosserAndEConfluenceJK86} (right), for $\genequivalence$-confluence of $\genrelation$,
and the leftmost diagram in Figure \ref{FigConfluenceAndCoherenceProperties}, if $\genrelationUpE$ is $\genrelation/\genequivalence$ and taking into account that $\genrelation\subseteq\genrelation/\genequivalence$.

\begin{theorem}
\label{TheoEConfluenceImpliesLocalConfluenceOfRewritingModulo}
If $\genrelation$ is $\genequivalence$-confluent, then 
$\genrelation/\genequivalence$ is locally confluent modulo 
$\genequivalence$ with $\genrelation$.
\end{theorem}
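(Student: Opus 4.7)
The plan is to unfold the definitions on both sides and observe that the conclusion is actually a direct specialization of the hypothesis. The local confluence modulo $\genequivalence$ of $\genrelation/\genequivalence$ with $\genrelation$ requires, by Definition \ref{DefLocalConfluenceAndCoherenceProperties_JK86}(\ref{DefLocalConfluenceAndCoherenceProperties_JK86_LocalConfluence}) (instantiating the parameter $\genrelationUpE$ to $\genrelation/\genequivalence$), that whenever $t \rew{\genrelation/\genequivalence} t'$ and $t \rew{\genrelation} t''$ we have $t' \ejoinability{\genrelation/\genequivalence} t''$.

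First, I would lift both premises into $\rews{\genrelation/\genequivalence}$. The step $t \rew{\genrelation/\genequivalence} t'$ is trivially also a $\rews{\genrelation/\genequivalence}$-reduction. For the step $t \rew{\genrelation} t''$, I invoke the fundamental assumption (\ref{LblFundamentalAssumptionJK86}), which with $\genrelationUpE = \genrelation/\genequivalence$ gives $\genrelation \subseteq \genrelation/\genequivalence$; hence $t \rews{\genrelation/\genequivalence} t''$ as well. We thus have a span $t \rews{\genrelation/\genequivalence} t'$ and $t \rews{\genrelation/\genequivalence} t''$.

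Now I apply the $\genequivalence$-confluence hypothesis for $\genrelation$ (Definition \ref{DefConfluenceAndTerminationModulo}, right diagram of Figure \ref{FigChurchRosserAndEConfluenceJK86}) to this span. It provides $u, u' \in A$ such that $t' \rews{\genrelation/\genequivalence} u$, $t'' \rews{\genrelation/\genequivalence} u'$, and $u \equ{\genequivalence} u'$. By Definition \ref{DefJoinabilityModulo_Def2_JK86} (again instantiated with $\genrelationUpE = \genrelation/\genequivalence$) this is exactly $t' \ejoinability{\genrelation/\genequivalence} t''$, completing the proof.

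There is no real obstacle here: the statement is essentially a specialization of $\genequivalence$-confluence to single initial steps, once one uses $\genrelation \subseteq \genrelation/\genequivalence$ to upgrade the $\genrelation$-step on the right of the peak. The only point to be careful about is to instantiate the abstract parameter $\genrelationUpE$ consistently as $\genrelation/\genequivalence$ in both Definitions \ref{DefJoinabilityModulo_Def2_JK86} and \ref{DefLocalConfluenceAndCoherenceProperties_JK86}(\ref{DefLocalConfluenceAndCoherenceProperties_JK86_LocalConfluence}), so that the conclusion of $\genequivalence$-confluence matches the required joinability format.
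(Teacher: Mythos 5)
Your proof is correct and matches the paper's (unwritten) argument: the paper merely remarks that the result is immediate from comparing the right diagram of Figure \ref{FigChurchRosserAndEConfluenceJK86} with the left diagram of Figure \ref{FigConfluenceAndCoherenceProperties}, and your unfolding—promoting the single steps to $\rews{\genrelation/\genequivalence}$ via $\genrelation\subseteq\genrelation/\genequivalence$ and then applying $\genequivalence$-confluence—is exactly that inspection made explicit.
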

As a consequence of Theorem \ref{TheoEConfluenceImpliesLocalConfluenceOfRewritingModulo}, we obtain a criterion to \emph{disprove} 
$\genequivalence$-confluence of $\genrelation$.

\begin{corollary}
\label{CoroDisprovingEConcludenceByNonREJoinableLocalConfluencePeaks}
If there is a non-$\ejoinability{\genrelation/\genequivalence}$-joinable
$\peakOf{\genrelationUpE}{\genrelation}$-peak, then $\genrelation$ is not $\genequivalence$-confluent.
\end{corollary}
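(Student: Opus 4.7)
The plan is to derive the corollary as a direct contrapositive of \rTh{TheoEConfluenceImpliesLocalConfluenceOfRewritingModulo}, using the fundamental assumption (\ref{LblFundamentalAssumptionJK86}) to lift the given $\peakOf{\genrelationUpE}{\genrelation}$-peak to a local confluence peak of $\genrelation/\genequivalence$ with $\genrelation$.

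First, suppose we are given a non-$\ejoinability{\genrelation/\genequivalence}$-joinable $\peakOf{\genrelationUpE}{\genrelation}$-peak, i.e., elements $s,t,t'\in A$ with $t\leftrew{\genrelationUpE}s\rew{\genrelation}t'$ such that $\langle t,t'\rangle$ is not $\ejoinability{\genrelation/\genequivalence}$-joinable. By the fundamental assumption $\genrelationUpE\subseteq\genrelation/\genequivalence$, we immediately obtain $t\leftrew{\genrelation/\genequivalence}s$, so that $s\rew{\genrelation/\genequivalence}t$ and $s\rew{\genrelation}t'$; this is exactly the shape of the local confluence peak appearing in \rDef{DefLocalConfluenceAndCoherenceProperties_JK86}(\ref{DefLocalConfluenceAndCoherenceProperties_JK86_LocalConfluence}) (particularized by taking $\genrelationUpE$ there to be $\genrelation/\genequivalence$, as is done in \rTh{TheoEConfluenceImpliesLocalConfluenceOfRewritingModulo}).

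Second, non-$\ejoinability{\genrelation/\genequivalence}$-joinability of $\langle t,t'\rangle$ says precisely that there are no $u,u'\in A$ with $t\rews{\genrelation/\genequivalence}u$, $t'\rews{\genrelation/\genequivalence}u'$, and $u\equ{\genequivalence}u'$. Hence $\genrelation/\genequivalence$ fails to be locally confluent modulo $\genequivalence$ with $\genrelation$. By the contrapositive of \rTh{TheoEConfluenceImpliesLocalConfluenceOfRewritingModulo}, $\genrelation$ is not $\genequivalence$-confluent, which is the desired conclusion.

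I do not anticipate a real obstacle here: the corollary is essentially a one-line consequence once one makes the inclusion $\genrelationUpE\subseteq\genrelation/\genequivalence$ explicit. The only subtle point to double-check when writing this out is that the joinability notion used in \rDef{DefLocalConfluenceAndCoherenceProperties_JK86}(\ref{DefLocalConfluenceAndCoherenceProperties_JK86_LocalConfluence}), instantiated with $\genrelationUpE:=\genrelation/\genequivalence$, matches exactly the predicate $\ejoinability{\genrelation/\genequivalence}$ defined in \rDef{DefJoinabilityModulo_Def2_JK86}, so that the failure of joinability of the exhibited peak really does witness failure of local confluence (and not a weaker variant).
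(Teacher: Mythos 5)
Your proposal is correct and follows essentially the same route as the paper: both use the inclusion $\genrelationUpE\subseteq\genrelation/\genequivalence$ to view the given peak as a $\peakOf{\genrelation/\genequivalence}{\genrelation}$-peak, conclude that $\genrelation/\genequivalence$ is not locally confluent modulo $\genequivalence$ with $\genrelation$, and then apply the contrapositive of Theorem~\ref{TheoEConfluenceImpliesLocalConfluenceOfRewritingModulo}. Your extra check that the joinability notion in Definition~\ref{DefLocalConfluenceAndCoherenceProperties_JK86}, instantiated with $\genrelation/\genequivalence$, is exactly $\ejoinability{\genrelation/\genequivalence}$ is the right detail to verify and it does hold.
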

\begin{proof}
Since $\peakOf{\genrelationUpE}{\genrelation}\subseteq\peakOf{\genrelation/\genequivalence}{\genrelation}$,
the non-$\ejoinability{\genrelation/\genequivalence}$-joinability of any
$\peakOf{\genrelationUpE}{\genrelation}$-peak implies that $\genrelation/\genequivalence$ is not locally confluent modulo 
$\genequivalence$ with $\genrelation$.
By Theorem \ref{TheoEConfluenceImpliesLocalConfluenceOfRewritingModulo},
$\genrelation$ is not $\genequivalence$-confluent.
\end{proof}
In contrast, cliffs $\peakOf{\genrelationUpE}{\equone{\genequivalence}}$ are always $\ejoinability{\genrelation/\genequivalence}$-joinable and \emph{cannot} be used to witness non-$\genequivalence$-confluence of $\genrelation$.

\begin{proposition}
\label{PropREjoinabilityOfREpeaks}
Every cliff $\peakOf{\genrelationUpE}{\equone{\genequivalence}}$
is $\ejoinability{\genrelation/\genequivalence}$-joinable.
\end{proposition}

\begin{proof}
Let $t\leftrew{\genrelationUpE}s\equone{\genequivalence}t'$. 
Since 
$
\genrelationUpE\subseteq\genrelation/\genequivalence$ and $\equone{\genequivalence}\spcrel{\subseteq}\genequivalence$, we have:
$
\equone{\genequivalence}\circ\:\genrelationUpE\spcrel{\subseteq}
\equone{\genequivalence}\circ\:\genrelation/\genequivalence
\spcrel{\subseteq}
\genequivalence\circ\:\genrelation/\genequivalence
\spcrel{\subseteq}\genrelation/\genequivalence
$.
Hence, $t'\rew{\genrelation/\genequivalence}t$ and the 
cliff is $\ejoinability{\genrelation/\genequivalence}$-joinable.
\end{proof}

\subsection{Huet's approach}
\label{SecHuet80}
Huet focused on proving the following property (\emph{almost Church-Rosser modulo} in \cite[Definition 2.5.2(2)]{Ohlebusch_AdvancedTopicsInTermRewriting_2002}, denoted
 $\fS{ACR}_\genequivalence(\genrelation)$).
\begin{figure}[t]
\begin{center}
\begin{tabular}{c@{\hspace{1cm}}c@{\hspace{1.5cm}}c@{\hspace{1cm}}c}
\xymatrix{
s \ar@{->}[d]_{*}^{\genrelation}  \ar@{|-|}[rr]^{*}_{\genequivalence} & & s' \ar@{->}[d]^{*}_{\genrelation} 
\\
t \ar@{.>}[d]_{*}^{\genrelation} & &  t'\ar@{.>}[d]^{*}_{\genrelation}
\\ 
u  \ar@{|.|}[rr]^{*}_{\genequivalence} & & u' 
}
&
\xymatrix{
& s \ar@{->}[ld]^{\genrelation}  \ar@{->}[rd]_{\genrelation}
\\
t \ar@{.>}[d]_{*}^{\genrelation} & & t'\ar@{.>}[d]^{*}_{\genrelation} 
\\ 
u  \ar@{|.|}[rr]^{*}_{\genequivalence} & & u'
}
&
\xymatrix{
& s \ar@{->}[ld]^{\genrelation}  \ar@{|-|}[rd]_{\genequivalence}
\\
t \ar@{.>}[d]_{*}^{\genrelation} & & t'\ar@{.>}[d]^{*}_{\genrelation}
\\ 
u  \ar@{|.|}[rr]^{*}_{\genequivalence}& & u'
}
\\
Confluence of $\genrelation$ modulo $\genequivalence$ 
&  
Property $\alpha_\genrelation$ 
&
 Property $\gamma_\genrelation$ 
\\[0.1cm]
$\fS{ACR}_\genequivalence(\genrelation)$
&
$\fS{LCON}_\genequivalence(\genrelation,\genrelation)$
&
$\fS{LCOH}_\genequivalence(\genrelation)$ (if $\fS{SN}_\genequivalence(\genrelation)$)
&
\end{tabular}
\end{center}
\caption{Confluence of $\genrelation$ modulo $\genequivalence$ and properties $\alpha$, 
and $\gamma$ for $\genrelation$ \cite[Figs 8, 9 \& 11]{Huet_ConfluentReductionsAbstractPropertiesAndApplicationsToTermRewritingSystems_JACM80}}
\label{FigHuetConfluencePropertiesAlphaBetaGamma}
\end{figure}
\begin{definition}[{\cite[first definition in page 802 and Figure 8]{Huet_ConfluentReductionsAbstractPropertiesAndApplicationsToTermRewritingSystems_JACM80}}]
\label{DefAbstractConfluenceModuloHuet80}
The relation $\genrelation$ is \emph{confluent modulo $\genequivalence$} ($\fS{ACR}_\genequivalence(\genrelation)$) iff
for all $s,s',t,t'\in A$, 
if $s\equ{\genequivalence}s'$, $s\rews{\genrelation}t$, and $s'\rews{\genrelation}t'$,
then there are $u,u'\in A$ such that
$t\rews{\genrelation}u$,  
$t'\rews{\genrelation}u'$, and $u\equ{\genequivalence} u'$, see
Figure \ref{FigHuetConfluencePropertiesAlphaBetaGamma} (left).
\end{definition}
Regarding the relationship between this definition and $\genequivalence$-confluence of $\genrelation$, immediately below his definition, 
Huet remarks that 
\begin{quote}
\emph{this condition is \emph{different} from 
$\rew{\genrelation}\!/\!\equ{\genequivalence}$ 
being confluent in 
$A/\!\equ{\genequivalence}$, 
since we do not allow $\sim$ along the $\rew{}$-derivations.
}
\end{quote}
 The following example substantiates this claim.
 
 \begin{example}[{\cite[Remark in p.\ 818]{Huet_ConfluentReductionsAbstractPropertiesAndApplicationsToTermRewritingSystems_JACM80}}]
\label{ExHuet80_RemarkPage818}
Let $E=\{(\ref{ExHuet80_RemarkPage818_eq1})\}$ and $\cR=\{(\ref{ExHuet80_RemarkPage818_rule1})\}$,
with
\begin{IEEEeqnarray}{r'C'l+}
\fS{a} & = & \fS{b}  \label{ExHuet80_RemarkPage818_eq1}
\\
\fS{f}(x,x) & \to & \fS{g}(x) \label{ExHuet80_RemarkPage818_rule1}
\end{IEEEeqnarray}
The following \emph{cliff}
\begin{IEEEeqnarray}{r'C'l}
\fS{g}(\fS{a})\leftarrowWith{\cR} \spcrel{\fS{f}(\fS{a},\fS{a})}  \equone{E} \fS{f}(\fS{a},\fS{b})\label{LblHuetRemarkPage818}
\end{IEEEeqnarray}
is \emph{not} $\ejoinability{\cR}$-joinable: $\fS{f}(\fS{a},\fS{b})$ and
$\fS{g}(\fS{\fS{a}})$ are non-$E$-equivalent $\rew{\cR}$-normal forms.
Hence,
$\rew{}$ is \emph{not} confluent modulo $\equequ{E}$ (Definition \ref{DefAbstractConfluenceModuloHuet80}).
However, 
$\cR$ is $E$-confluent (Example \ref{ExHuet80_RemarkPage818_EConfluence}).
\end{example}
If $\genrelation$ is  \emph{normalizing}, 
then 
confluence of $\genrelation$ modulo $\genequivalence$ in Definition \ref{DefAbstractConfluenceModuloHuet80}
is \emph{equivalent} to the  commutation of the diagram in Figure \ref{FigChurchRosserPropertyETerminatingR},
 provided that \emph{$\genrelation$ and $\genrelationUpE$ are identified}:
 \begin{proposition}
 \label{PropEquivalenceOfConfluenceModuloAndChurchRosserPropertyOfR}
 \emph{\cite[Lemma 2.6]{Huet_ConfluentReductionsAbstractPropertiesAndApplicationsToTermRewritingSystems_JACM80} }
For \emph{normalizing} relations $\genrelation$,  
confluence of $\genrelation$ modulo $\genequivalence$ ($\fS{ACR}_\genequivalence(\genrelation)$)
and
$\genrelation$ being $\genrelation$-Church-Rosser modulo 
$\genequivalence$ ($\fS{CR}_\genequivalence(\genrelation,\genrelation)$) 
are \emph{equivalent}.
\end{proposition}
If $\genrelation\circ\genequivalence$ is 
terminating (which coincides with $E$-termination of $\genrelation$, i.e., that of $\genequivalence\circ\genrelation\circ\genequivalence$), then $\fS{ACR}_\genequivalence(\genrelation)$ is characterized by properties $\alpha_\genrelation$ 
(Figure \ref{FigHuetConfluencePropertiesAlphaBetaGamma}, middle) and 
$\gamma_\genrelation$ 
(Figure \ref{FigHuetConfluencePropertiesAlphaBetaGamma}, right). 

\begin{proposition}
\label{PropLemma2_8_Hue80}
\emph{\cite[Lemma  2.8]{Huet_ConfluentReductionsAbstractPropertiesAndApplicationsToTermRewritingSystems_JACM80}} 
Let $\genrelation$ be $\genequivalence$-\emph{terminating}.
Then, $\genrelation$ is confluent modulo $\genequivalence$ ($\fS{ACR}_\genequivalence(\genrelation)$)
iff 
$\alpha_\genrelation$ and $\gamma_\genrelation$ hold.
\end{proposition}
On the other hand, note that
\begin{enumerate}
\item 
$\fS{LCON}_\genequivalence(\genrelation,\genrelation)$ and $\alpha_{\genrelation}$ coincide.
\item 
By Proposition \ref{PropEquivalenceLocalCoherenceModuloEandProperty5prime_JK86}, 
$\fS{LCOH}_\genequivalence(\genrelation)$ and $\gamma_{\genrelation}$
coincide if $\genrelation$ is $\genequivalence$-terminating.
\end{enumerate}
Thus, by 
Proposition \ref{PropLemma2_8_Hue80} and
Theorem \ref{TheoTheorem5_JK86},
Huet's confluence of $\genrelation$ 
modulo $\genequivalence$ 
(Definition \ref{DefAbstractConfluenceModuloHuet80}) 
and the Church-Rosser property 
of $\genrelation$ modulo $\genequivalence$ in Figure \ref{FigChurchRosserProperty} coincide if
$\genrelation$ is used instead of $\genrelationUpE$.
Thus, we have the following.

\begin{corollary}
\label{CoroEConfluence_Huet80}
If $\genrelation$ is $\genequivalence$-terminating, then 
$\genrelation$ is 
$\genequivalence$-confluent
if  properties $\alpha_\genrelation$ and $\gamma_\genrelation$ hold. 
\end{corollary}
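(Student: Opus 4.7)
The plan is to instantiate Jouannaud and Kirchner's parametric framework by taking $\genrelationUpE := \genrelation$, which trivially satisfies the fundamental assumption (\ref{LblFundamentalAssumptionJK86}) since $\genrelation \subseteq \genrelation \subseteq \genrewmoduloOf{\genrelation}{\genequivalence}$. Then the desired conclusion follows by chaining three previously established results: the equivalence between properties $\alpha,\gamma$ and the local confluence/coherence conditions of Definition \ref{DefLocalConfluenceAndCoherenceProperties_JK86}, then Theorem \ref{TheoTheorem5_JK86}, and finally Proposition \ref{PropSufficientConditionForEConfluence_JK86}.

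First, I would match property $\alpha$ (Figure \ref{FigHuetConfluencePropertiesAlphaBetaGamma}, middle) against local confluence modulo $\genequivalence$ of $\genrelationUpE$ with $\genrelation$ (Figure \ref{FigConfluenceAndCoherenceProperties}, left). Substituting $\genrelationUpE := \genrelation$ in the latter, the two diagrams coincide literally: both begin with a peak $t \leftrew{\genrelation} s \rew{\genrelation} t'$ and require $t \ejoinability{\genrelation} t'$. So $\alpha$ holds iff $\genrelation$ is locally confluent modulo $\genequivalence$ with $\genrelation$ (in the sense of Definition \ref{DefLocalConfluenceAndCoherenceProperties_JK86}(\ref{DefLocalConfluenceAndCoherenceProperties_JK86_LocalConfluence})).

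The more delicate correspondence is that of property $\gamma$ with local coherence. Property $\gamma$ (Figure \ref{FigHuetConfluencePropertiesAlphaBetaGamma}, right) uses $\rews{\genrelation}$ in the right branch, whereas Definition \ref{DefLocalConfluenceAndCoherenceProperties_JK86}(\ref{DefLocalConfluenceAndCoherenceProperties_JK86_LocalCoherence}) (Figure \ref{FigConfluenceAndCoherenceProperties}, right) requires the \emph{strict} reduction $\rewp{\genrelationUpE}$. This is precisely the mismatch resolved by Proposition \ref{PropEquivalenceLocalCoherenceModuloEandProperty5prime_JK86}: under $\genequivalence$-termination of $\genrelation$, the strict version and the non-strict version are equivalent. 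Instantiating that proposition with $\genrelationUpE := \genrelation$ yields that property $\gamma$ is equivalent to local coherence modulo $\genequivalence$ of $\genrelation$. Here we use the hypothesis that $\genrelation$ is $\genequivalence$-terminating.

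With both equivalences in hand, assume $\alpha$ and $\gamma$ hold. Then $\genrelation$ (viewed as $\genrelationUpE$) is locally confluent modulo $\genequivalence$ with $\genrelation$ and locally coherent modulo $\genequivalence$. Since $\genrelation$ is $\genequivalence$-terminating, Theorem \ref{TheoTheorem5_JK86} (with $\genrelationUpE := \genrelation$) yields that $\genrelation$ is $\genrelation$-Church-Rosser modulo $\genequivalence$. Finally, Proposition \ref{PropSufficientConditionForEConfluence_JK86} converts this Church-Rosser property directly into $\genequivalence$-confluence of $\genrelation$, completing the proof.

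The only real subtlety is the coherence/strictness gap bridged by Proposition \ref{PropEquivalenceLocalCoherenceModuloEandProperty5prime_JK86}; everything else is a matter of recognizing that the two frameworks (Huet's and Jouannaud--Kirchner's) agree once the parameter $\genrelationUpE$ is fixed to $\genrelation$. Note that this corollary is in fact \emph{already} stated informally in the paragraph preceding it; the proof proposal simply makes the chain of citations explicit.
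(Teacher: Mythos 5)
Your proposal is correct and follows essentially the same route as the paper: the paper's justification for Corollary~\ref{CoroEConfluence_Huet80} is exactly the preceding discussion, which identifies property $\alpha$ with local confluence modulo $\genequivalence$ and (via Proposition~\ref{PropEquivalenceLocalCoherenceModuloEandProperty5prime_JK86}, using $\genequivalence$-termination to bridge the strictness gap) property $\gamma$ with local coherence modulo $\genequivalence$ when $\genrelationUpE$ is instantiated to $\genrelation$, then applies Theorem~\ref{TheoTheorem5_JK86} and Proposition~\ref{PropSufficientConditionForEConfluence_JK86}. You have correctly reconstructed this chain, including the one genuinely delicate point (the $\rewp{\genrelationUpE}$ versus $\rews{\genrelationUpE}$ mismatch in the coherence diagram).
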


Note that $\genrelation$ in Huet's approach is an abstract relation.
Although in Huet's paper $\genrelation$ is intended to represent 
term rewriting $\rew{\cR}$ for a \trs{} $\cR$,
alternative ``interpretations'' of $\genrelation$, like $\rew{\cR,E}$, or even $\rew{\cR/E}$, are possible.
Jouannaud and Kirchner are explicit about such ``indeterminacy'' by using $\genrelation^\genequivalence$ as a ``parameter'' which satisfies the fundamental assumption (\ref{LblFundamentalAssumptionJK86}), but $\genrelation$ in Huet's approach can also be used in different ways.
Taking into account that $\genequivalence$-confluence of $\genrelation$ and
$\genrelationUpE$ coincide; and also
$\genequivalence$-termination of $\genrelation$ and
$\genrelationUpE$ coincide,
Huet's Corollary \ref{CoroEConfluence_Huet80} can be rephrased 
using Jouannaud and Kirchner's setting as follows:

\begin{corollary}
\label{CoroEConfluence_Huet80bis}
If $\genrelation$ is $\genequivalence$-terminating, then 
$\genrelation$ is 
$\genequivalence$-confluent
if  properties $\alpha_{\genrelationUpE}$ and 
$\gamma_{\genrelationUpE}$ hold, for $\genrelationUpE$ such that
$\genrelation\subseteq\genrelationUpE\subseteq\genrelation/\genequivalence$. 
\end{corollary}
\subsection{Discussion}
\label{SecDiscussionJK86}

In our setting, the diagrams for properties $\alpha$ and $\gamma$ 
in Figure \ref{FigHuetConfluencePropertiesAlphaBetaGamma} are important because
they only involve 
$\genrelation$ and $\genequivalence$.
This is important in practice: as shown in \cite[Section 3.4]{Huet_ConfluentReductionsAbstractPropertiesAndApplicationsToTermRewritingSystems_JACM80}, in the realm of (equational) term rewriting systems $\cR=(\Symbols,E,R)$, if $\genrelation$ is taken to be 
$\rew{\cR}$, then properties $\alpha_\genrelation$ and $\gamma_\genrelation$ can be proved
as the joinability (modulo) 
of \emph{ordinary} critical pairs $\langle s,t\rangle$ \cite[page 809]{Huet_ConfluentReductionsAbstractPropertiesAndApplicationsToTermRewritingSystems_JACM80}
obtained from \emph{both}
the rules in $R$ and also from the rules obtained by orienting the equations $\elhsr\equequ{}\erhsr$ 
in $E$ as rules $\elhsr\to\erhsr$ and $\erhsr\to\elhsr$ \cite[Lemmata 3.4 and 3.5]{Huet_ConfluentReductionsAbstractPropertiesAndApplicationsToTermRewritingSystems_JACM80}.
Furthermore, joinability of critical pairs $\langle s,t\rangle$ is tested as $s\rews{\cR}\circ\equequ{E}\circ\leftrews{\cR}t$ (usually called \emph{joinability modulo}, written $\ejoinability{}$ in \cite{Huet_ConfluentReductionsAbstractPropertiesAndApplicationsToTermRewritingSystems_JACM80}).

In sharp contrast (see Figure \ref{FigConfluenceAndCoherenceProperties}), 
(local) peaks in 
local confluence modulo $\genequivalence$ of $\genrelationUpE$ with $\genrelation$ 
and local coherence modulo $\genequivalence$ of $\genrelationUpE$ 
involve 
$\genequivalence$, $\genrelation$, and \emph{also} $\genrelationUpE$. 
As a consequence,  if 
$\genequivalence$ is $\equequ{E}$, 
$\genrelation$ is $\rew{\cR}$, and
$\genrelationUpE$ is  
$\rew{\cR,E}$,
the analysis of the local properties requires
$E$-critical pairs $\langle s,t\rangle$ \cite[Definition 12]{JouKir_CompletionOfASetOfRulesModuloASetOfEquations_SIAMJC86}.
As remarked in the introduction, this has a number of problems: there can be infinitely many; $E$-unification algorithms may \emph{fail} to exist,\ldots.
Also note that 
\begin{enumerate}
\item ordinary critical pairs are also $E$-critical pairs but not vice versa, i.e., the use of $E$-critical pairs amounts to having more pairs to test.
On the other hand,
\item  joinability of $E$-critical pairs 
$\langle s,t\rangle$ is tested as $s\rews{\cR,E}\circ\equequ{E}\circ\leftrews{\cR,E}t$, which is more \emph{powerful} than joinability modulo, as $\rews{\cR}\spcrel{\subseteq}\rews{\cR,E}$.
\end{enumerate}
On the other hand, note that
\begin{enumerate}
\item 
$\fS{LCON}_\genequivalence(\genrelationUpE,\genrelationUpE)$, called \emph{Local $\genequivalence$-confluence} (of $\genrelationUpE$) in
\cite[Definition 5(3)]{JouMun_TerminationOfASetOfRulesModuloASetOfEquations_CADE84}, and $\alpha_{\genrelationUpE}$ coincide.
\item 
By Proposition \ref{PropEquivalenceLocalCoherenceModuloEandProperty5prime_JK86}, 
$\fS{LCOH}_\genequivalence(\genrelationUpE)$ and $\gamma_{\genrelationUpE}$
coincide if $\genrelationUpE$ is $\genequivalence$-terminating.
\end{enumerate}
However, in general $\fS{LCON}_\genequivalence(\genrelationUpE,\genrelation)$
does \emph{not} correspond to $\alpha_\genrelation$ or $\alpha_{\genrelationUpE}$.
\begin{remark}
As a consequence of Proposition \ref{PropLocalPeaksModulo}, though,
if $\genrelationUpE$ is taken to be $\rew{\cR,E}$, then 
$\fS{LCON}_E(\rew{\cR,E},\rew{\cR})$
and
$\fS{LCON}_E(\rew{\cR,E},\rew{\cR,E})$
\emph{coincide} for $E$-terminating \etrs{s} $\cR$.
However, in contrast to Jouannaud and Kirchner, 
Huet did not provide any treatment of $E$-confluence of \etrs{s} 
by 
using properties $\alpha$ and $\gamma$ for $\rew{\cR,E}$.
\end{remark}
Thus, after presenting \egtrs{s} in the next section, in
Section \ref{SecReductionAndCoherencePeaks} we  
make explicit the kind of peaks and joinability conditions 
we need to consider to 
prove and disprove confluence modulo of \egtrs{s}
in the remainder of the paper.

\section{Equational Generalized Term Rewriting Systems}
\label{SecEGTRSs}

The following definition introduces the kind of computational systems we consider here which can be viewed as a specialization of \emph{Generalized Term Rewriting Systems} (\gtrs{s}) introduced in \cite{Lucas_LocalConfluenceOfConditionalAndGeneralizedTermRewritingSystems_JLAMP24}
(see Section \ref{SecConfluenceOfGTRSsAsConfluenceOfEGTRSsWithEempty} for a more detailed comparison).
\begin{definition}[Equational Generalized Term Rewriting Systems]
\label{DefCS_ECTRS}
An \emph{Equational Generalized Term Rewriting System (\egtrs{})} is a tuple
$\cR=(\Symbols,\SPredicates,\mu,E,H,R)$ where 
$\Symbols$ is a signature of function symbols
$\SPredicates$ is a signature of predicate symbols with $\ceq,\rew{},\rews{}\in\SPredicates$,
$\mu$ is a replacement map for the signature $\Symbols$ of function symbols,
and,
\begin{enumerate}
\item $E$ is a set of conditional equations $s\ceq t\IF \gencond$, for terms $s$ and $t$;
\item\label{DefCS_ECTRS_HornClauses} $H$ is a set of definite Horn clauses $A\IF \gencond$ where $A=P(t_1,\ldots,t_n)$ for some $P\in\SPredicates$ such that $P\notin\{\ceq,\rew{},\rews{}\}$ and
terms $t_1,\ldots,t_n$, $n\geq 0$; and
\item\label{DefCS_ECTRS_Rules}
 $R$ is a set of conditional rules $\lhsr\to\rhsr\IF\gencond$ for terms $\lhsr\notin\Variables$ 
and $\rhsr$.
\end{enumerate}
where $\gencond$ is a sequence of atomic formulas.
\end{definition}
Note that $E\cup H\cup R$ is a set of (definite) Horn clauses.

\begin{example}
\label{ExSumListsA_EGTRS}
The \egtrs{} 
in Example \ref{ExSumListsA} with $\mu$ as given in the example is 
$\cR=(\Symbols,\SPredicates,\mu,E,H,R)$, where
$\Symbols=\{\arityOf{\fS{0}}{0},\arityOf{\fS{s}}{1},
\pholder\seqnat\pholder
\}$,
$\SPredicates=\{\arityOf{\fS{Nat}}{1},
\pholder\!\cto\!\pholder,
\pholder\!\ceq\!\pholder,
\pholder\!\to\!\pholder,
\pholder\!\to^*\!\pholder
\}$,
$E=\{(\ref{ExSumListsA_eq1})\}$,
$H=\{(\ref{ExSumListsA_clause1}),(\ref{ExSumListsA_clause2}),(\ref{ExSumListsA_clause3})\}$,
and 
$R=\{(\ref{ExSumListsA_rule1}),(\ref{ExSumListsA_rule2}),(\ref{ExSumListsA_rule3}),(\ref{ExSumListsA_rule4})\}$.
\end{example}

\begin{example}
\label{ExHuet80_RemarkPage818_EGTRS}
Viewed as an \egtrs{}, the \etrs{} in Example \ref{ExHuet80_RemarkPage818},
i.e.,
\begin{IEEEeqnarray*}{r'C'l+x*}
\fS{a} & = & \fS{b} & \eqref{ExHuet80_RemarkPage818_eq1}
\\
\fS{f}(x,x) & \to & \fS{g}(x)& \eqref{ExHuet80_RemarkPage818_rule1}
\end{IEEEeqnarray*}
is $(\Symbols,\SPredicates,\mu,E,H,R)$ where
$\Symbols=\{\arityOf{\fS{a}}{0},\arityOf{\fS{b}}{0},\arityOf{\fS{f}}{2},\arityOf{\fS{g}}{1}\}$,
$\SPredicates=\{\pholder\!\ceq\!\pholder,
\pholder\!\to\!\pholder,
\pholder\!\to^*\!\pholder
\}$,
$\mu=\muTop$ (no replacement restriction for any function symbol),
$E=\{(\ref{ExHuet80_RemarkPage818_eq1})\}$,
$H=\emptyset$,
and 
$R=\{(\ref{ExHuet80_RemarkPage818_rule1})\}$.
\end{example}
As for \ctrs{s} \cite[Definition 6.1]{MidHam_CompletenessResultsForBasicNarrowing_AAECC94},  
conditional rules $\lhsr\to\rhsr \IF\gencond$ are classified according to the distribution of
variables among $\lhsr$, $\rhsr$, and $\gencond$:
\begin{itemize}
\item \emph{type 1}, if  $\Var(\rhsr)\cup\Var(\gencond)\subseteq\Var(\lhsr)$;
\item \emph{type 2}, if $\Var(\rhsr)\subseteq\Var(\lhsr)$;
\item \emph{type 3}, if $\Var(\rhsr)\subseteq\Var(\lhsr)\cup\Var(\gencond)$; and
\item \emph{type 4}, otherwise.
\end{itemize}
A rule of type $n$ is often called an $n$-rule.
An n-rule $\alpha$ is \emph{proper} if for all $m<n$,
$\alpha$ is \emph{not} an $m$-rule.
An $n$-CTRS contains only $m$-rules for some $m\leq n$ and at least a proper $n$-rule.

The following definition formalizes the idea of having predicate symbols which \emph{depend} on Horn clauses or rules of an \egtrs{} (or both).
\begin{definition}
We say that $P\in\SPredicates$
\begin{itemize}
\item \emph{depends on $H$} 
if there is $A\IF A_1,\ldots,A_n\in H$ with
$\rootTerm(A)=P$
or there is $A\IF A_1,\ldots,A_n\in E\cup H\cup R$ with $\rootTerm(A)=P$
such that $\rootTerm(A_i)$ depends on $H$ for some $1\leq i\leq n$.

\item  \emph{depends on $R$}
if $P\in\{\rew{},\rews{}\}$ 
or there is $A\IF A_1,\ldots,A_n\in E\cup H$ with $\rootTerm(A)=P$
such that $\rootTerm(A_i)$ depends on $R$ for some $1\leq i\leq n$.
\end{itemize}
\end{definition}
\begin{example}
Consider the \egtrs{} $\cR=(\Symbols,\SPredicates,\mu,E,H,R)$ 
in Example \ref{ExSumListsA}. We have that
\begin{itemize}
\item The equality predicate $=$ does \emph{not} depend on $H$ or $R$.
\item Predicate $\cto$ depends on $R$.
\item Predicate $\to$ depends on $H$ due to rules (\ref{ExSumListsA_rule3}) and
(\ref{ExSumListsA_rule4}).
\end{itemize}
\end{example}

\begin{example}
\label{ExEvenEq_EGTRS}
Consider the following \egtrs{} $\cR=(\Symbols,\SPredicates,\mu,E,H,R)$, 

\noindent
\begin{tabular}{cc}
\begin{minipage}{0.485\textwidth}
\begin{IEEEeqnarray}{r'C'l}
\fS{s}(\fS{s}(x)) &\ceq  & x  \IF x\geq\fS{s}(\fS{0})\label{ExEvenEq_eq1}
\\
x & \geq & 0\label{ExEvenEq_clause1}
\\
\fS{s}(x) & \geq & \fS{s}(y) \IF x\geq y\label{ExEvenEq_clause2}
\end{IEEEeqnarray}
\end{minipage}
&
\begin{minipage}{0.485\textwidth}
\begin{IEEEeqnarray}{r'C'l}
\fS{test}(x) & \to & \fS{pev}(x)\IF x\ceq\fS{s}(\fS{s}(\fS{0}))\qquad\label{ExEvenEq_rule1}
\\
\fS{test}(x) & \to & \fS{odd}(x)\IF x\ceq\fS{s}(\fS{0})\label{ExEvenEq_rule2}
\\
\fS{test}(x) & \to & \fS{zero}(x)\IF x\ceq\fS{0}\label{ExEvenEq_rule3}
\end{IEEEeqnarray}
\end{minipage}
\end{tabular}

\medskip
\noindent
where 
$\geq$ is defined by the Horn clauses (\ref{ExEvenEq_clause1}) and (\ref{ExEvenEq_clause2}).
Rules (\ref{ExEvenEq_rule1}), (\ref{ExEvenEq_rule2}), and (\ref{ExEvenEq_rule3}) define tests to check
whether a number (in Peano notation) is either
\emph{positive and even}, \emph{odd} or \emph{zero} using the equivalence specified by (\ref{ExEvenEq_eq1}).
\egtrs{} $\cR=(\Symbols,\SPredicates,\mu,E,H,R)$ 
in Example \ref{ExEvenEq}. 
We have that
\begin{itemize}
\item The equality predicate $=$ depends on $H$ (due to (\ref{ExEvenEq_eq1})) 
but not on $R$.
\item Predicate $\to$ depends on $E$ (due to rules (\ref{ExEvenEq_rule1})--(\ref{ExEvenEq_rule3})) and also on $H$, as the rules use the equality, which depends on $H$.
\end{itemize}
\end{example}

\begin{remark}[Conditions $s\cto t$ and their interpretation]
\label{RemInterpretationOfCto}
In the (recent) literature about \emph{Conditional TRSs},
symbol $\cto$ is often used to 
specify conditions 
$s\cto t$ in rules having different \emph{interpretations}: as \emph{joinability}, \emph{reachability}, 
etc., as pioneered by Bergtstra and Klop \cite[Definition 2.1]{BerKlo_ConditionalRewriteRulesConfluenceAndTermination_JCSS86}, although they used $\Box$, instead.
In \egtrs{s}, $\cto$ would be treated as a predicate in $\SPredicates$ and
the desired interpretation is \emph{explicitly} 
obtained by including in $H$ an appropriate set 
of clauses defining $\cto$.
For instance, 
(\ref{ExSumListsA_clause3}) in Example \ref{ExSumListsA} interprets $\cto$ as \emph{reachability}.
\end{remark}

\begin{notation}
 Equations $s\ceq t\IF\gencond$ in $E$ are often \emph{transformed} 
into rules 
by choosing a \emph{left-to-right} or \emph{right-to-left} orientation:
\[\overrightarrow{E}=\{s\to t\IF\gencond\mid s\ceq t\IF\gencond\in E\} \text{ and }
\overleftarrow{E}=\{t\to s\IF\gencond\mid s\ceq t\IF\gencond\in E\}.\]
We let $\eqoriented{E}=\overrightarrow{E}\cup\overleftarrow{E}$ (we shrink $\oneconversion{}$ into $\diamond$).
Note that $\eqoriented{E}$ may contain rules $\elhsr\to\erhsr\IF\gencond$ whose left-hand side
$\elhsr$ is a \emph{variable}.
Let $\DSymbols(\eqoriented{E})=\{\rootTree(\elhsr)\in\Symbols\mid \elhsr\to\erhsr\IF\gencondbis\in\eqoriented{E}\}$.
\end{notation}
In the following, a rule $\lhsr\to\rhsr\IF\gencond$ is often said to be \emph{collapsing} if $\rhsr$ is a variable (cf. \cite[Definition 9.2.3]{BaaNip_TermRewritingAndAllThat_1998}).
A set of rules is \emph{collapsing} if it contains collapsing rules.

\begin{remark}[Equations leading to ill-defined rules]
\label{RemEquationsLeadingToImproperRules}
If $\eqoriented{E}$ is a collapsing set of rules, then $\eqoriented{E}$ contains \emph{ill-defined} 
rules $\alpha:x\to\rhsr\IF\gencond$ (for some $x\in\Variables$). 
\end{remark}

\subsection{First-order theories of an \egtrs{}}
\label{SecFirstOrderTheoryOfAnEGRS}
Computational relations (e.g., $\equequ{E}$, 
$\rew{\cR}$, $\rew{\cR,E}$, $\rew{\cR/E}$,\ldots) 
induced by an \egtrs{} $\cR=(\Symbols,\SPredicates,\mu,E,H,R)$
are defined by first-order deduction of atoms 
$s = t$ (equality in $E$),
$s\to t$ (one-step rewriting in the usual sense), 
$s\rewpstickel{}t$ (rewriting \`a la Peterson \& Stickel), 
$s\rewmodulo{}t$ (rewriting modulo), 
etc., in
some FO-theory. 
We \emph{extend} 
$\SPredicates$ with 
$\rewpstickel{}$, $\rewmodulo{}$, etc., and also
$\cto_\PSabbr$, $\cto_\RMabbr$ (as they defined in terms of the previous predicates).
Our FO-theories are obtained from the generic sentences in Table \ref{TableFOSentencesEGRSs},
\begin{table}
\caption{Generic sentences of the FO-theory of \egtrs{s}}
\begin{center}
\begin{tabular}{l@{~}l}
Label & Sentence\\
\hline
$(\RuleReflexivity)^\gbop$ 
&
$(\forall x)~x \mathrel{\gbop} x$
\\
$(\RuleTransitivity)^\gbop$ 
& $(\forall x,y,z)~x \:\gbop\: y\wedge y \:\gbop\: z\Rightarrow x  
\:\gbop\: z$ 
\\
$(\RuleSymmetry)^\gbop$ 
& $(\forall x,y)~y\:\gbop\:x\Rightarrow x \:\gbop\: y$
\\
$(\RuleCompatibility)^\gbop$ 
& 
$(\forall x,y,z)~x \:\gbop\: y\wedge y
  \:\gbop^*\: z\Rightarrow x  
\:\gbop^*\: z$
\\
$(\RulePropagation)^\gbop_{f,i}$
& 
$(\forall x_1,\ldots,x_k,y_i)~x_i\:\gbop\: y_i\Rightarrow f(x_1,\ldots,x_i,\ldots,x_k)\:\gbop\: f(x_1,\ldots,y_i,\ldots,x_k)$
\\
$(\RuleHornClause)_{A\IF A_1,\ldots,A_n}$ 
& $(\forall\vec{x})~A_1\wedge\cdots\wedge A_n\Rightarrow A$ 
\\
$(\RuleRlEq)_{\lhsr\to \rhsr\IF A_1,\ldots,A_n}$
& $(\forall x,\vec{x})~x = \lhsr\wedge A_1\wedge\cdots\wedge A_n\Rightarrow x \rpstickel \rhsr$
\\
$(\RuleRewMEq)$ 
& $(\forall x,x',y,y')~x = x'\wedge x'\to y' \wedge y'= y\Rightarrow x \rmodulo{} y$  
\\
\hline
\end{tabular}
\end{center}
\label{TableFOSentencesEGRSs}
\end{table}
where:
\begin{itemize}
\item Sentences $(\RuleReflexivity)^\gbop$, 
$(\RuleTransitivity)^\gbop$, and 
$(\RuleSymmetry)^\gbop$, which are parametric on a binary relation $\gbop$, 
 express \emph{reflexivity}, \emph{transitivity}, and \emph{symmetry} of  $\gbop$, respectively;
\item $(\RuleCompatibility)^\gbop$ expresses \emph{compatibility} of one-step 
and many-step reduction with $\gbop$;
\item for each $k$-ary function symbol $f$, $1\leq i\leq k$, and $x_1,\ldots,x_k$ and $y_i$ distinct variables,
$(\RulePropagation)^\gbop_{f,i}$ \emph{propagates} an $\gbop$-step to the $i$-th immediate subterm of an $f$-rooted term;
\item $(\RuleHornClause)_\alpha$ is the sentence associated with a clause $\alpha:A\IF A_1,\ldots,A_n$ 
with variables $\vec{x}$.
Since rewrite rules $\alpha:\lhsr\to\rhsr\IF\gencond$
are particular clauses (as $\lhsr\to\rhsr$ is just an atom of the binary predicate $\_\to\_$), we do not add any specific generic sentence for them. 
\item $(\RuleRlEq)_{\alpha}$
defines a Peterson \& Stickel rewriting step $s\rew{\cR,E}t$ (at the root) using rule $\alpha:\lhsr\to\rhsr\IF\gencond$ with variables $\vec{x}$. Here, $x\notin\vec{x}$.
\item $(\RuleRewMEq)$ defines reduction modulo $\rew{\cR/E}$ in the usual way.
\end{itemize}
\begin{definition}[Theory generators]
Consider the following  
\emph{generators of} theories with parameters
$\SigParameter$ (referring to a signature),
$\RMaParameter$ (a replacement map),
$\EqParameter$ (a set of equations), and
$\TRSParameter$ (a set of rules):

\vspace{-0.2cm}
{\footnotesize
\[\begin{array}{rcl}
\GLtheory_{\text{Eq}}[\SigParameter,\RMaParameter,\EqParameter] 
& = &
\{(\RuleReflexivity)^{\equequ{}},
(\RuleSymmetry)^{\equequ{}},
(\RuleTransitivity)^=\}
\}
\cup\{(\RulePropagation)^{\equequ{}}_{f,i}\mid f\in\SigParameter,i\in\RMaParameter(f)\}\cup\{(\RuleHornClause)_e\mid e\in\EqParameter\}
\\
\GLtheory_{\text{R}}[\SigParameter,\RMaParameter,\TRSParameter] 
& = & \{
(\RuleReflexivity)^{\to^*},
(\RuleCompatibility)^{\to}\}
\cup
\{(\RulePropagation)^\to_{f,i}\mid f\in \SigParameter,i\in\RMaParameter(f)\}\cup\{(\RuleHornClause)_\alpha\mid\alpha\in \TRSParameter\}
\}
\\  
\GLtheory_{\text{R,M}}[\SigParameter,\RMaParameter,\TRSParameter]
& = & 
\{
(\RuleReflexivity)^{\rewspstickel{}},
(\RuleCompatibility)^{\rewpstickel{}}\}
\cup
\{(\RulePropagation)^{\rewpstickel{}}_{f,i}\mid f\in \SigParameter,i\in\RMaParameter(f)\}\cup\{(\RuleRlEq)_\alpha\mid\alpha\in \TRSParameter\}
\\  
\GLtheory_{\text{R/M}}[\SigParameter,\RMaParameter,\TRSParameter] & = & \{
(\RuleReflexivity)^{\rmStar},
(\RuleCompatibility)^{\rmodulo}
\}
\cup
\{(\RulePropagation)^{\to}_{f,i}\mid f\in \SigParameter,i\in\RMaParameter(f))\}\cup\{(\RuleHornClause)_\alpha\mid\alpha\in \TRSParameter\}
\}
\cup
\{
(\RuleRewMEq)
\}
\end{array}
\]
}%
\end{definition}
Note that, in $\GLtheory_{\text{R/M}}[\SigParameter,\RMaParameter,\TRSParameter]$ 
propagation sentences are given for $\rew{}$ rather than for $\rmodulo$;
this is consistent with the usual definition of rewriting modulo implemented by 
(\RuleRewMEq).

\subsubsection{Different roles of rewriting predicates}

Since rules $\alpha:\lhsr\to\rhsr\IF\gencond$ in $R$ 
are used to specify 
\emph{different} computational relations (see Definition \ref{DefComputationalRelationsOfAnEGTRS}), 
conditions $s\cto t\in\gencond$
may have different \emph{interpretations} depending on the targeted relation:
the meaning of $\cto$ is based on predicate $\rew{}$ if $\alpha$ is
used to describe the usual (conditional) rewrite relation $\rew{\cR}$;
however, 
$\cto$ should be treated using $\rpstickel$ if $\alpha$ is used to describe Peterson \& Stickel's rewriting modulo $\rew{\cR,E}$;
and
$\cto$ should be treated using $\rmodulo$ if $\alpha$ is used to describe 
$\rew{\cR/E}$.
A simple way to deal with this situation is the following.

\begin{definition}
Given a clause $\alpha:A\IF A_1,\ldots,A_n$ 
we obtain $\alpha^\PSabbr$ (resp. $\alpha^\RMabbr$)  
by replacing 
\begin{enumerate}
\item each occurrence of $\cto$ in $A$ by $\cto_\PSabbr$ (resp.\ $\cto_\RMabbr$)
and
\item 
each occurrence of
$\cto$, $\rew{}$, and 
$\rews{}$ in 
$A_1,\ldots,A_n$ 
by $\cto_\PSabbr$, $\rewpstickel{}$, and $\rewspstickel{}$ 
(resp.\ $\cto_\RMabbr$, $\rewmodulo{}$, and $\rewmodulos{\!\!}$).
\end{enumerate}
Given an \egtrs{} $\cR=(\Symbols,\SPredicates,\mu,E,H,R)$,

\begin{itemize}
\item $H^\PSabbr$ (resp.\ $H^\RMabbr$) is obtained from $H$ by replacing 
each $\alpha\in H$ by $\alpha^\PSabbr$ (resp.\ $\alpha^\RMabbr$).
\item $R^\PSabbr$ (resp.\ $R^\RMabbr$) is obtained from $R$ by replacing 
each $\alpha\in R$ by $\alpha^\PSabbr$ (resp.\ $\alpha^\RMabbr$).
\end{itemize}
\end{definition}
In the following, if $\alpha$ coincides with $\alpha^\PSabbr$ (resp.\ $\alpha^\PSabbr$), we just use $\alpha$ rather than introducing a new label.
Note, for instance, that this always happens if $\alpha$ is an \emph{unconditional} rule.
\begin{remark}[Predicate $\equequ{}$ does not depend on $R$]
\label{RemEqDoesNotDependOnR}
In the following, we will assume that $\equequ{}$ does \emph{not} 
depend on $R$. Thus, 
the ``meaning'' of equational atoms $s\equequ{} t$ does \emph{not} depend on the \emph{rules} in 
$R$.
For this reason, we do not need versions $\equequ{ps}$ or $\equequ{rm}$
of the equality predicate.
Similarly, $E^\PSabbr$ or $E^\RMabbr$ are not necessary.
\end{remark}
\begin{example}
\label{ExSumListsA_RM}
For $\cR$ in Example \ref{ExSumListsA}, 
\[H^\RMabbr=\{(\ref{ExSumListsA_clause1}),(\ref{ExSumListsA_clause2}),(\ref{ExSumListsA_clause3})^\RMabbr\} \text{ and }
R^\RMabbr=\{(\ref{ExSumListsA_rule1}),(\ref{ExSumListsA_rule2}),(\ref{ExSumListsA_rule3})^\RMabbr,(\ref{ExSumListsA_rule4})^\RMabbr\}\] with:
\begin{IEEEeqnarray*}{+rCl+x*}
x\cto_\RMabbr y & \IF & x\rewmodulos{}y
& $(\ref{ExSumListsA_clause3})^\RMabbr$
\label{ExSumListsA_clause3_RM}
\\
\fS{sum}(\cfrozen{m}) & \to & n\IF m\cto_\RMabbr n,\fS{Nat}(n)
& $(\ref{ExSumListsA_rule3})^\RMabbr$
\label{ExSumListsA_rule3_RM}
\\
\fS{sum}(\cfrozen{ms}) & \to & m+n \IF ms \cto_\RMabbr m \fS{\seqnat} \cfrozen{ns},
 \fS{Nat}(m), \fS{sum}(\cfrozen{ns})\cto_\RMabbr n
 & $(\ref{ExSumListsA_rule4})^\RMabbr$
 \label{ExSumListsA_rule4_RM}
\end{IEEEeqnarray*}
 $H^\PSabbr$ and $R^\PSabbr$ are similarly obtained.
 \end{example}

\subsubsection{Theories of an \egtrs}

In the following, we consider a number of theories associated with an \egtrs{} to define different computational relations.

\begin{definition}[Theories of an \egtrs]
\label{DefTheoriesOfAnEGTRS}
Given an \egtrs{} $\cR=(\Symbols,\SPredicates,\mu,E,H,R)$
whose equality predicate $\equequ{}$ does \emph{not} depend on $R$,
the following associated theories are obtained:
\[\begin{array}{rcl@{\hspace{1cm}}rcl}
\GLtheory_{E} & = & \GLtheory_{\text{Eq}}[\Symbols,\mu,E]
\cup\{(\RuleHornClause)_\alpha\mid\alpha\in H\}
\\
\GLtheory_{\cR} & = & 
\GLtheory_{\text{Eq}}[\Symbols,\mu,E]
\cup\GLtheory_{\text{R}}[\Symbols,\mu,R]
\cup\{(\RuleHornClause)_\alpha\mid\alpha\in H\}
\\
& = & 
\GLtheory_{E}\cup\GLtheory_{\text{R}}[\Symbols,\mu,R]
\\
\GLtheory_{\cR,E} & = & \GLtheory_{\text{Eq}}[\Symbols,\mu,E]
\cup
\GLtheory_{\text{R,M}}[\Symbols,\mu,R^{\PSabbr}]
\cup\{(\RuleHornClause)_\alpha\mid\alpha\in H^{\PSabbr}\}
\\
\GLtheory_{\cR/E} & = & \GLtheory_{\text{Eq}}[\Symbols,\mu,E]
\cup\GLtheory_{\text{R/M}}[\Symbols,\mu,R^{\RMabbr}]
\cup\{(\RuleHornClause)_\alpha\mid\alpha\in H^{\RMabbr}\}
\end{array}\] 
\end{definition}
\begin{example}
\label{ExHuet80_RemarkPage818_TheoryRModulo}
For the \etrs{} in Example \ref{ExHuet80_RemarkPage818}, i.e.,
\begin{IEEEeqnarray*}{+rCl+x*}
\fS{a} & = & \fS{b} & \eqref{ExHuet80_RemarkPage818_eq1}
\\
\fS{f}(x,x) & \to & \fS{g}(x)& \eqref{ExHuet80_RemarkPage818_rule1}
\end{IEEEeqnarray*}
the corresponding \egtrs{} $\cR$ in Example \ref{ExHuet80_RemarkPage818_EGTRS},
has an associated theory $\GLtheory_{\cR/E}$ which, since $H=\emptyset$, 
is the union of
\[\GLtheory_{\text{Eq}}[\Symbols,\muTop,E]=\{
(\RuleReflexivity)^=,
(\RuleSymmetry)^=,
(\RuleTransitivity)^=,
(\RulePropagation)^=_{\fS{f},1},
(\RulePropagation)^=_{\fS{f},2},
(\RulePropagation)^=_{\fS{g},1},
(\RuleHornClause)_{(\ref{ExHuet80_RemarkPage818_eq1})}
\} 
\]
and, since $R^\RMabbr=R$,
\[\GLtheory_{\text{R/M}}[\Symbols,\muTop,R]=
\{
(\RulePropagation)^{\rew{}}_{\fS{f},1},
(\RulePropagation)^{\rew{}}_{\fS{f},2},
(\RulePropagation)^{\rew{}}_{\fS{g},1},
(\RuleHornClause)_{(\ref{ExHuet80_RemarkPage818_rule1})},
(\RuleRewMEq),
(\RuleReflexivity)^{\rewmodulos{}},
(\RuleCompatibility)^{\rewmodulo{}}
\} 
\]
see Figure \ref{FigExHuet80_RemarkPage818_Theory}.
\begin{figure}[t]
\begin{IEEEeqnarray*}{+r"Cl+x*}
(\forall x) && x\ceq x 
& $(\RuleReflexivity)^=$
\\
(\forall x,y) && x\ceq y\Rightarrow y\ceq x 
& $(\RuleSymmetry)^=$
\\
(\forall x,y,z) && x\ceq y\wedge y\ceq z\Rightarrow x\ceq z 
& $(\RuleTransitivity)^=$
\\
(\forall x_1,y_1,x_2) && x_1\ceq y_1\Rightarrow \fS{f}(x_1,x_2)\ceq \fS{f}(y_1,x_2) 
& $(\RulePropagation)^=_{\fS{f},1}$
\\
(\forall x_1,x_2,y_2) && x_2\ceq y_2\Rightarrow \fS{f}(x_1,x_2)\ceq \fS{f}(x_1,y_2) & $(\RulePropagation)^=_{\fS{f},2}$
\\
(\forall x_1,y_1) && x_1\ceq y_1\Rightarrow \fS{g}(x_1)\ceq \fS{g}(y_1) 
&  $(\RulePropagation)^=_{\fS{g},1}$
\\
&& \fS{a}\ceq\fS{b}
& $(\RuleHornClause)_{(\ref{ExHuet80_RemarkPage818_eq1})}$
\\[0.3cm]
(\forall x_1,y_1,x_2) && x_1\rew{} y_1\Rightarrow \fS{f}(x_1,x_2)\rew{} \fS{f}(y_1,x_2) 
&  $(\RulePropagation)^{\rew{}}_{\fS{f},1}$
\\
(\forall x_1,x_2,y_2) && x_2\rew{} y_2\Rightarrow \fS{f}(x_1,x_2)\rew{} \fS{f}(x_1,y_2)
&  $(\RulePropagation)^{\rew{}}_{\fS{f},2}$
\\
(\forall x_1,y_1) && x_1\rew{} y_1\Rightarrow \fS{g}(x_1)\rew{} \fS{g}(y_1)
&  $(\RulePropagation)^{\rew{}}_{\fS{g},1}$
\\
(\forall x) && \fS{f}(x,x)\rew{} \fS{g}(x)
& $(\RuleHornClause)_{(\ref{ExHuet80_RemarkPage818_rule1})}$
\\[0.3cm]
(\forall x,x',y,y') & & x \ceq x'\wedge x'\rew{} y' \wedge y'\ceq y\Rightarrow x \rewmodulo{} y & $(\RuleRewMEq)$
\\
(\forall x) && x\rewsmodulo{} x 
& $(\RuleReflexivity)^{\rewmodulos{}}$
\\
(\forall x,y,z) && x\rewmodulo{} y\wedge y\rewsmodulo{} z\Rightarrow x\rewsmodulo{} z 
& $(\RuleCompatibility)^{\rewmodulo{}}$ 
\end{IEEEeqnarray*}
\caption{Theory $\GLtheory_{\cR/E}$ for $\cR$ in Example \ref{ExHuet80_RemarkPage818}.}
\label{FigExHuet80_RemarkPage818_Theory}
\end{figure}
The expected sentences 
\begin{itemize}
\item $(\RuleReflexivity)^=$,
$(\RuleSymmetry)^=$,
$(\RuleTransitivity)^=$,
$(\RulePropagation)^=_{\fS{f},1}$,
$(\RulePropagation)^=_{\fS{f},2}$,
$(\RulePropagation)^=_{\fS{g},1}$, and
$(\RuleHornClause)_{(\ref{ExHuet80_RemarkPage818_eq1})}$
describing the \emph{equational theory};
\item 
$(\RulePropagation)^{\rew{}}_{\fS{f},1}$,
$(\RulePropagation)^{\rew{}}_{\fS{f},2}$,
$(\RulePropagation)^{\rew{}}_{\fS{g},1}$, and
$(\RuleHornClause)_{(\ref{ExHuet80_RemarkPage818_rule1})}$,
describing \emph{one-step term rewriting};
and 
\item $(\RuleRewMEq)$,
$(\RuleReflexivity)^{\rewmodulos{}}$,
$(\RuleCompatibility)^{\rewmodulo{}}$,
describing
one-step
and many-step rewriting modulo 
\end{itemize}
are obtained.
\end{example}

\begin{remark}[Proving infeasibility as satisfiability]
As explained in \cite[Section 4.3]{GutLuc_AutomaticallyProvingAndDisprovingFeasibilityConditions_IJCAR20}, the infeasibility of an atom $\GLatom$ with variables $\vec{x}$
with respect to a first-order theory $\GLtheory$  (i.e., the absence of a substitution $\sigma$ such that $\deductionInThOf{\GLtheory}{\sigma(\GLatom)}$ holds),
can be 
proved by finding a model $\SInterpretation$ 
of $\GLtheory\cup\neg(\exists\vec{x})\GLatom$.
Such first-order models can often 
be automatically obtained by using model generators like 
\AGES{} \cite{GutLuc_AutomaticGenerationOfLogicalModelsWithAGES_CADE19} or \MaceFour{} \cite{McCune_Prove9andMace4_Unpublished10}.
\end{remark}

\begin{example}
\label{ExPeakNoCPs_PStheory}
Consider the \etrs{} in Example \ref{ExPeakNoCPs}, i.e.,
\begin{center}
\vspace{-0.5cm}
\noindent
\begin{tabular}{cc}
\begin{minipage}{0.48\textwidth}
\begin{IEEEeqnarray*}{+r:C:l+x*}
\fS{b} & =  & \fS{f}(\fS{a}) & \eqref{ExPeakNoCPs_eq1}\\
\fS{a} & =  & \fS{c} & \eqref{ExPeakNoCPs_eq2}
\end{IEEEeqnarray*}
\end{minipage}
&
\begin{minipage}{0.48\textwidth}
\begin{IEEEeqnarray*}{+r:C:l+x*}
\fS{c} & \to & \fS{d} & \eqref{ExPeakNoCPs_rule1}\\
\fS{b} & \to & \fS{d} & \eqref{ExPeakNoCPs_rule2}
\end{IEEEeqnarray*}
\end{minipage}
\end{tabular}
\end{center}
\smallskip
Since $H=H^\PSabbr=\emptyset$ and $R^\PSabbr=R$,  the theory
$\GLtheory_{\cR,E}$ 
is the union of
\[\begin{array}{rcll}
\GLtheory_{\text{Eq}}[\Symbols,\muTop,E] & = & \{
(\RuleReflexivity)^=,
(\RuleSymmetry)^=,
(\RuleTransitivity)^=,
(\RulePropagation)^=_{\fS{f},1},
(\RuleHornClause)_{(\ref{ExPeakNoCPs_eq1})},
(\RuleHornClause)_{(\ref{ExPeakNoCPs_eq2})}
\} 
& \qquad\text{and}
\\
\GLtheory_{\text{R,M}}[\Symbols,\muTop,R] & = &
\{
(\RuleReflexivity)^{\rewspstickel{}},
(\RuleCompatibility)^{\rewpstickel{}},
(\RulePropagation)^{\rewpstickel{}}_{\fS{f},1},
(\RuleRlEq)_{(\ref{ExPeakNoCPs_rule1})},
(\RuleRlEq)_{(\ref{ExPeakNoCPs_rule2})}
\},
\end{array}
\]
see Figure \ref{FigExPeakNoCPs_PSTheory}.
\begin{figure}[t]
\begin{IEEEeqnarray*}{+r"Cl+x*}
(\forall x) && x\ceq x
& $(\RuleReflexivity)^=$
\\
(\forall x,y) && x\ceq y\Rightarrow y\ceq x 
& $(\RuleSymmetry)^=$
\\
(\forall x,y,z) && x\ceq y\wedge y\ceq z\Rightarrow x\ceq z
& $(\RuleTransitivity)^=$
\\
(\forall x_1,x_2) && x_1\ceq x_2\Rightarrow \fS{f}(x_1)\ceq \fS{f}(x_2)
& $(\RulePropagation)^=_{\fS{f},1}$
\\
&& \fS{b}\ceq\fS{f}(\fS{a})
& $(\RuleHornClause)_{(\ref{ExPeakNoCPs_eq1})}$
\\
&& \fS{a}\ceq\fS{c}
& $(\RuleHornClause)_{(\ref{ExPeakNoCPs_eq2})}$
\\
(\forall x) && x\rewspstickel{} x
& $(\RuleReflexivity)^{\rewspstickel{}}$
\\
(\forall x,y,z) && x\rewpstickel{} y\wedge y\rewspstickel{} z\Rightarrow x\rewspstickel{} z
& $(\RuleCompatibility)^{\rewpstickel{}}$ 
\\
(\forall x_1,x_2) && x_1\rewpstickel{} x_2\Rightarrow \fS{f}(x_1)\rewpstickel{} \fS{f}(x_2)
&  $(\RulePropagation)^{\rewpstickel{}}_{\fS{f},1}$
\\
(\forall x) && x\ceq\fS{c}\Rightarrow x\rewpstickel{} \fS{d}
& $(\RuleRlEq)_{(\ref{ExPeakNoCPs_rule1})}$
\\
(\forall x) && x\ceq\fS{b}\Rightarrow x\rewpstickel{} \fS{d}
& $(\RuleRlEq)_{(\ref{ExPeakNoCPs_rule2})}$
\end{IEEEeqnarray*}
\caption{Theory $\GLtheory_{\cR,E}$ for $\cR$ in Example \ref{ExPeakNoCPs}.}
\label{FigExPeakNoCPs_PSTheory}
\end{figure}
In Example \ref{ExPeakNoCPs} we claim that $\fS{b}$ does not reduce into $\fS{f}(\fS{d})$ in one step of $\rew{\cR,E}$.
The following interpretation $\SInterpretation$ over the positive integers (computed by \AGES), with 
\[\begin{array}{r@{~}c@{~}l@{\hspace{1cm}}r@{~}c@{~}l@{\hspace{1cm}}r@{~}c@{~}l@{\hspace{1cm}}r@{~}c@{~}l}
\fS{a}^\SInterpretation & = & 1
& 
\fS{b}^\SInterpretation & = & 1
&
\fS{b}^\SInterpretation & = & 1
&
\fS{b}^\SInterpretation & = & 1
\\
\fS{f}^\SInterpretation(x) & = & 2
&
x\equequ{}^\SInterpretation y & \Leftrightarrow& true
&
x(\rewpstickel{})^\SInterpretation y & \Leftrightarrow & x\geq_\naturals y
&
x(\rewspstickel{})^\SInterpretation y & \Leftrightarrow & true
\end{array}
\]
satisfies 
$\GLtheory_{\cR,E}\cup\{\neg\fS{b}\rewpstickel{}\fS{f}(\fS{d})\}$, 
thus formally proving our claim.
\end{example}
When necessary, \AGES{} and \MaceFour{} are often useful to prove infeasibility of rules, conditional pairs, joinability of terms, etc., see \cite[Figures 4--6]{Lucas_ProvingSemanticPropertiesAsFirstOrderSatisfiability_AI19} for some examples of use of first-order satisfiability when dealing with rewriting-based systems.

\subsection{Horn theories as Elementary Inference Systems}
\label{SecHornTheoriesAsEIS}
In \cite{Lucas_SemanticPropertiesOfComputationsDefinedByElementaryInferenceSystems_HCVS25} it is observed that theories $\GLtheory$ consisting of sentences
\begin{IEEEeqnarray}{r'C'l}
(\forall\vec{x})B_1\wedge\cdots\wedge B_n\Rightarrow B
\label{LblGenericHornSentence}
\end{IEEEeqnarray}
 for atoms $B,B_1,\ldots,B_n$ and $\vec{x}$ the sequence of all variables occurring in these atoms, are implicative forms of 
\emph{definite Horn clauses} and can be
viewed as \emph{Elementary Inference Systems (\ElementaryIS{s})} $\GLinferenceOf{\GLtheory}$, \`a la Smullyan \cite{Smullyan_TheoryOfFormalSystems_1961}, consisting of inference rules 
\begin{IEEEeqnarray}{r'C'l}
\bigfrac{B_1\quad\cdots\quad B_n}{B}\label{LblGenericElementaryInferenceRules}
\end{IEEEeqnarray} 
for each sentence (\ref{LblGenericHornSentence}), and vice versa.
Roughly speaking, as in \emph{Natural Deduction} \cite{Gentzen_UntersuchungenUberDasLogischeSchliessenPart1_MZ35,Prawitz_NaturalDeductionAProofTheoreticalStudy_1965},
given an \eis{} $\GLinference$ consisting of rules (\ref{LblGenericElementaryInferenceRules}), we write 
$\proofInISof{\GLinference}{A}$ if 
$A=\sigma(B)$ for some rule (\ref{LblGenericElementaryInferenceRules}) 
and substitution $\sigma$ and 
$\proofInISof{\GLinference}{\sigma(B_i)}$ holds for all $1\leq i\leq n$, thus building a \emph{proof tree} \cite[Section 3.1]{Lucas_SemanticPropertiesOfComputationsDefinedByElementaryInferenceSystems_HCVS25}.
Proving \emph{atoms} in \eis{s} is \emph{stable} under substitution application.
\begin{proposition}\label{PropProvingAtomsIsStable}
\emph{\cite[Proposition 9]{Lucas_SemanticPropertiesOfComputationsDefinedByElementaryInferenceSystems_HCVS25}}
Let $\GLinference$ be an \eis, $A$ be an atom, and $\sigma$ be a substitution. If 
$\proofInISof{\GLinference}{A}$,
then $\proofInISof{\GLinference}{\sigma(A)}$.
\end{proposition}
Atoms $A$ are deducible from a \emph{Horn theory} $\GLtheory$,
i.e., $\deductionInThOf{\GLtheory}{A}$, 
iff  $A$ is provable in 
$\GLinferenceOf{\GLtheory}$.
\begin{proposition}\label{PropDeductionAndProvabilityOfAtoms}
\emph{\cite[Proposition 14]{Lucas_SemanticPropertiesOfComputationsDefinedByElementaryInferenceSystems_HCVS25}}
Let $\GLtheory$ be a Horn theory and $A$ be an atom with variables $\vec{x}$. 
Then, $\deductionInThOf{\GLtheory}{(\forall\vec{x})A}$
if and only if
$\proofInISof{\GLinferenceOf{\GLtheory}}{A}$.
\end{proposition}

While provability (building proof trees) has a more operational flavour, the use of 
$\GLtheory_{\cR}$, $\GLtheory_{\cR,E}$, etc., is important for other purposes (e.g., infeasibility of rules and conditional pairs, proofs of $E$-termination as the generation of well-founded models, etc.; see Section \ref{SecETerminationOfEGTRSs}).

\subsection{Computation with \egtrs{s}}

The theories above are used to define the expected computational relations as follows.

\begin{definition}
\label{DefComputationalRelationsOfAnEGTRS}
Let $\cR=(\Symbols,\SPredicates,\mu,E,\mu,H,R)$ be an \egtrs{} and $s,t\in\Terms$. We write 
\begin{enumerate}
\item\label{DefComputationalRelationsOfAnEGTRS_equivalence}
$s\equequ{E}t$  
(resp.\ $s\rew{\eqoriented{E}}t$) 
if
$\deductionInThOf{\GLtheory_{E}}{s\equequ{}t}$
(resp.\ $\deductionInThOf{\left (\GLtheory_{E}\cup\GLtheory_{R}[\Symbols,\mu,\eqoriented{E}]\right )}
{s\rew{}t}$)
holds.
\item \label{DefComputationalRelationsOfAnEGTRS_rewriting}
$s\rew{\cR}t$ if $\deductionInThOf{\GLtheory_{\cR}}{s\rew{}t}$
\item\label{DefComputationalRelationsOfAnEGTRS_rewPStickel} 
$s\rew{\cR,E}t$ if $\deductionInThOf{\GLtheory_{\cR,E}}{s\rewpstickel{}t}$.
\item\label{DefComputationalRelationsOfAnEGTRS_RewritingModulo}
  $s\rew{\cR/E}t$) iff $\deductionInThOf{\GLtheory_{\cR/E}}{s\rewmodulo{}t}$%
\end{enumerate}
Similarly for $s\rews{\cR}t$ (resp.\ $s\rews{\cR,E}t$ and $s\rews{\cR/E}t$).
Moreover, since $\GLtheory_{E}$, $\GLtheory_{\cR}$,  etc.,
are Horn theories, by Proposition \ref{PropDeductionAndProvabilityOfAtoms},
provability 
can be used
instead of deduction  
above.
\end{definition}
Note that 
(i) $\rew{\eqoriented{E}}$ is \emph{symmetric} by definition of $\eqoriented{E}$;
(ii) $\equequ{E}$ is an equivalence due to 
$(\RuleReflexivity)^{\equequ{}}$ (reflexivity),
$(\RuleSymmetry)^{\equequ{}}$ (symmetry), and
$(\RuleTransitivity)^{\equequ{}}$ (transitivity),
all included in $\GLtheory_{E}$; 
and
(iii) $\equequ{E}$ is the reflexive and transitive closure of $\rew{\eqoriented{E}}$.

\begin{definition}[Confluence and termination modulo of an \egtrs{}]
\label{DefConfluenceAndTerminationModuloOfEGTRSs}
We say that an \egtrs{} $\cR=(\Symbols,\SPredicates,\mu,E,H,R)$
is 
\begin{itemize}
\item  
\emph{confluent modulo $E$}
(or  \emph{$E$-confluent}) if for all terms
$t$, $t_1$, and $t_2$, if 
$t\rews{\cR/E}t_1$ 
and
$t\rews{\cR/E}t_2$,
then  there are $t'_1$ and $t'_2$ such that
$t_1\rews{\cR/E}t'_1$,
$t_2\rews{\cR/E}t'_2$ 
and $t'_1\equequ{E} t'_2$; and
\item  
\emph{terminating modulo $E$} (or \emph{$E$-terminating}) if
$\rew{\cR/E}$ is terminating.
\end{itemize}
\end{definition}
\begin{remark}[Necessary conditions for $E$-termination]
\label{RemANecessaryConditionForETermination}
Regarding \etrs{s} $\cR=(\Symbols,E,R)$, in 
\cite[page 1169]{JouKir_CompletionOfASetOfRulesModuloASetOfEquations_SIAMJC86},
Jouannaud and Kirchner noticed that:
\begin{enumerate}
\item if $E$ contains 
equations $s=t$ such that $\Var(s)\neq\Var(t)$ (so that there is a variable $x$ occurring in $s$ but not occurring in $t$), 
then $E$-termination is not possible. 
For this reason, \emph{non-erasingness} is assumed for the equational component $E$ of an \etrs{}, i.e., for all $s\ceq t\in E$, $\Var(s)=\Var(t)$.\footnote{In the realm of unification theory, this condition was called \emph{regularity}
\cite[page 243]{Siekmann_UnificationTheory_JSC89}. The term is also used in recent papers investigating confluence modulo, e.g.,
\cite[Section 2]{DurMes_OnTheChurchRosserAndCoherencePropertiesOfConditionalOrderSortedRewriteTheories_JLAP12} and
\cite[Section 2.1]{Meseguer_StrictCoherenceOfConditionalRewritingModuloAxioms_TCS17}.
}
This 
is also assumed by Huet \cite[page 816, bottom]{Huet_ConfluentReductionsAbstractPropertiesAndApplicationsToTermRewritingSystems_JACM80}. Although the requirement is not justified, the first sentence of page 819, discussing 
proofs of $E$-termination, makes sense only if such a condition holds.
\item Equations of the form $x=t$ where $t$ contains at least two occurrences of $x$ are also incompatible with $E$-termination.
Thus, they are disallowed.
\end{enumerate}
\end{remark}

\subsection{Proving $E$-termination of \egtrs{s}}
\label{SecETerminationOfEGTRSs}
According to \cite{LucGut_AutomaticSynthesisOfLogicalModelsForOrderSortedFirstOrderTheories_JAR18,Lucas_UsingWellFoundedRelationsForProvingOperationalTermination_JAR20}, $E$-termination of \egtrs{s} $\cR$ can be proved by 
showing a first-order model $\SInterpretation$ of $\GLtheory_{\cR/E}$ 
where the interpretation $(\rmodulo)^\SInterpretation$ of 
predicate $\rmodulo$ denoting $\rew{\cR/E}$ is \emph{well-founded}.
Again, \AGES{} or \MaceFour{} can be used to obtain 
such first-order models.
As explained in \cite[page 293]{GutLuc_AutomaticGenerationOfLogicalModelsWithAGES_CADE19} it is possible to instruct \AGES{} to guarantee that the interpretation $P^\SInterpretation$ of 
a binary predicate $P$ is a \emph{well-founded} relation on the interpretation domain.
This is not possible with \MaceFour; however, as explained in \cite[Section 5.5.1]{Lucas_UsingWellFoundedRelationsForProvingOperationalTermination_JAR20}, it is possible to force the tool to do the same.
When necessary, we use \AGES{} to prove $E$-termination of the considered \egtrs{s}.
 
\begin{example}
\label{ExHuet80_RemarkPage818_ETermination}
For $\cR$ in Example \ref{ExHuet80_RemarkPage818},
with $E=\{\fS{a}=\fS{b}\}$ and $R=\{\fS{f}(x,x)\to\fS{g}(x)\}$,
the interpretation $\SInterpretation$ over the naturals, with 
\[\begin{array}{r@{~}c@{~}l@{\hspace{0.5cm}}r@{~}c@{~}l@{\hspace{0.5cm}}r@{~}c@{~}l@{\hspace{0.5cm}}r@{~}c@{~}l}
\fS{a}^\SInterpretation & = & 1
& 
\fS{b}^\SInterpretation & = & 1
&
\fS{f}^\SInterpretation(x,y) & = & x+2y+1
&
\fS{g}^\SInterpretation(x) & = & 3x
\\
x\equequ{}^\SInterpretation y & \Leftrightarrow& x=_\naturals y
&
x\rew{}^\SInterpretation y & \Leftrightarrow & x>_\naturals y
&
x(\rewmodulo){\!\!}^\SInterpretation y & \Leftrightarrow & x>_\naturals y
&
x(\rewsmodulo){\!}^\SInterpretation y & \Leftrightarrow & x\geq_\naturals y
\end{array}
\]
is a model of $\GLtheory_{\cR/E}$ in Example \ref{ExHuet80_RemarkPage818_TheoryRModulo}. 
Since $(\rmodulo)^\SInterpretation$ is well-founded, 
$\cR$ is $E$-terminating.
\end{example}

\subsection{Use of Jouannaud \& Kirchner's abstract framework with \egtrs{s}}

Unfortunately, the relationship between $\equequ{E}$, $\rew{\cR}$, and $\rew{\cR/E}$, 
is \emph{not} as required in the abstract framework presented in Section \ref{SecAbstractFrameworkJK86}.
In particular, $\rew{\cR/E}\:=\:(\composeRel{\equequ{E}}{\composeRel{\rew{\cR}}{\equequ{E}}})$ does \emph{not} hold.

\begin{example}
\label{ExRModuloAndRrelativeE}
Consider the following \egtrs{}
\begin{center}
\vspace{-0.3cm}
\noindent
\begin{tabular}{cc}
\begin{minipage}{0.48\textwidth}
\begin{IEEEeqnarray}{r'C'l}
\fS{a} & = & \fS{b}\label{ExRModuloAndRrelativeE_eq1}\\
\fS{a} & \to & \fS{c}\label{ExRModuloAndRrelativeE_rule1}
\end{IEEEeqnarray}
\end{minipage}
&
\begin{minipage}{0.48\textwidth}
\begin{IEEEeqnarray}{r'C'l}
\fS{a} & \to & \fS{d}\IF \fS{b} \rews{} \fS{c}\label{ExRModuloAndRrelativeE_rule2}
\\
\fS{c} & \to & \fS{d}\label{ExRModuloAndRrelativeE_rule3}
\end{IEEEeqnarray}
\end{minipage}
\end{tabular}
\end{center}
\medskip
\noindent
We have $\fS{a}\rew{\cR}\fS{c}$; but (\ref{ExRModuloAndRrelativeE_rule2}) is 
$\GLtheory_{\cR}$-infeasible, hence $\fS{a}\not\rew{\cR}\fS{d}$. 
However, $\fS{a}\rew{\cR/E}\fS{d}$, as $\fS{b}\equequ{E}\fS{a}\rew{\cR}\fS{c}$, i.e.,
$\fS{b}\rew{\cR/E}\fS{c}$ and (\ref{ExRModuloAndRrelativeE_rule2}) can be used. 
Thus,
\begin{eqnarray*}
\rew{\cR}  & = &  \{(\fS{a},\fS{c}),(\fS{c},\fS{d})\}
\\
{(\equequ{E}\circ\rew{\cR}\circ\equequ{E})} & = & \{(\fS{a},\fS{c}),(\fS{b},\fS{c}),(\fS{c},\fS{d})\}
\\
\rew{\cR/E} & = & \{(\fS{a},\fS{c}),(\fS{b},\fS{c}),(\fS{a},\fS{d}),(\fS{b},\fS{d}),(\fS{c},\fS{d})\}
\end{eqnarray*}
Therefore,
$\rew{\cR/E}\:\neq(\equequ{E}\circ\rew{\cR}\circ\equequ{E})$.
\end{example} 
\subsubsection{Rewriting modulo and rewriting in conditional systems}
\label{SecRewritingModuloAndRewritingInConditionalSystems}
Example \ref{ExRModuloAndRrelativeE} shows a \emph{mismatch} between the definition of
$\rew{\cR/E}$ for an \egtrs{} $\cR$ (Definition \ref{DefComputationalRelationsOfAnEGTRS}(\ref{DefComputationalRelationsOfAnEGTRS_RewritingModulo})) 
and the \emph{abstract} definition (\ref{LblReductionModuloFromReductionAndEquivalence}), usually understood for ETRSs.
For \egtrs{s} (and already for CTRSs), the connection 
$\rew{\cR/E}\:=\:(\equequ{E}\circ\rew{\cR}\circ\equequ{E})$ 
is \emph{lost}. 
This is because rewriting conditions in rules used to obtain $\rew{\cR}$-steps 
are treated using
$\rews{\cR}$ 
(see, e.g., \cite[Definition 7.1.4]{Ohlebusch_AdvancedTopicsInTermRewriting_2002}).
In contrast, if $\rew{\cR/E}$-steps are desired, such rewriting conditions are
evaluated using $\rews{\cR/E}$, see, e.g., \cite[page 819]{DurMes_OnTheChurchRosserAndCoherencePropertiesOfConditionalOrderSortedRewriteTheories_JLAP12}.\footnote{As remarked by a reviewer, alternative definitions of $\rew{\genrelation/\genequivalence}$ as
$\genequivalence\circ\genrelation$, see, e.g., \cite[page 277]{Jouannaud_ConfluenceOfTerminatingRewritingComputations_TFSP24}
under the name \emph{class rewriting} (see also Section \ref{SecJouannaud2024}) do \emph{not} solve the problem.}
We overcome this problem as follows:
given an \egtrs{} $\cR=(\Symbols,\SPredicates,\mu,E,H,R)$ such that $E$ does not depend on $R$, we let
\[
\cR^\RMabbr = (\Symbols,\SPredicates\uplus\{\rewmodulo{},\rewsmodulo{}\},\mu,E,H^\RMabbr,R^\RMabbr)
\]
\begin{enumerate}
\item\label{RemRewritingModuloAndRewritingInConditionalSystems_rewriting}
 Instead of $\rew{\cR}$, a new one step rewriting relation $\rew{\cR^\RMabbr}$ for the \egtrs{} $\cR^\RMabbr$ above is used.
 Although (by abuse) we use the same notation, we \emph{cannot} use 
 \[\GLtheory_{\cR^\RMabbr}=\GLtheory_{\text{Eq}}[\Symbols,\mu,E]
\cup\GLtheory_{\text{R}}[\Symbols,\mu,R^\RMabbr]
\cup\{(\RuleHornClause)_\alpha\mid\alpha\in H^\RMabbr\}
\] 
as given in Definition \ref{DefTheoriesOfAnEGTRS} 
to define $\rew{\cR^\RMabbr}$ as in Definition \ref{DefComputationalRelationsOfAnEGTRS}(\ref{DefComputationalRelationsOfAnEGTRS_rewriting}).
This is because $H^\RMabbr$ and $R^\RMabbr$  
refer to predicate $\rewmodulo{}$
which is \emph{not} defined in $\GLtheory_{\cR^\RMabbr}$.
We also need  
$\GLtheory_{\text{R/M}}[\Symbols,\mu,R^\RMabbr]$ to give meaning to 
$\rewmodulo{}$ and $\rewmodulos{}$.
Overall, we need $\GLtheory_{\cR^\RMabbr}\cup\GLtheory_{\text{R/M}}[\Symbols,\mu,R^\RMabbr]$, i.e.,
\begin{IEEEeqnarray}{r'C'l}
\GLtheory_{\text{Eq}}[\Symbols,\mu,E]
\cup\GLtheory_{\text{R}}[\Symbols,\mu,R^\RMabbr]
\cup\{(\RuleHornClause)_\alpha\mid\alpha\in H^\RMabbr\}
\cup\GLtheory_{\text{R/M}}[\Symbols,\mu,R^\RMabbr]
\end{IEEEeqnarray}

\item\label{RemRewritingModuloAndRewritingInConditionalSystems_PSrewriting}
Instead of $\rew{\cR,E}$, a one step rewriting relation $\rew{\cR^\RMabbr,E}$ is used.
Analogously, 
we need to use 
\begin{IEEEeqnarray}{r'C'l}
\GLtheory_{\text{Eq}}[\Symbols,\mu,E]
\cup\GLtheory_{\text{R,M}}[\Symbols,\mu,R^\RMabbr]
\cup\:\{(\RuleHornClause)_\alpha\mid\alpha\in H^\RMabbr\}
\cup\:\GLtheory_{\text{R/M}}[\Symbols,\mu,R^\RMabbr]
\label{RemRewritingModuloAndRewritingInConditionalSystems_PSrewritingTh}
\end{IEEEeqnarray}

\item \label{RemRewritingModuloAndRewritingInConditionalSystems_rewritingModulo}
We do \emph{not} need to define any $\rew{\cR^\RMabbr/E}$ because
$\rew{\cR/E}$ in Definition \ref{DefComputationalRelationsOfAnEGTRS}(\ref{DefComputationalRelationsOfAnEGTRS_RewritingModulo})
is defined as first-order deduction on $\GLtheory_{\cR/E}$ in Definition \ref{DefTheoriesOfAnEGTRS}, which already uses $H^\RMabbr$ and $R^\RMabbr$.
\end{enumerate}

\subsubsection{First-order theory for the analysis of $E$-confluence of \egtrs{s}}

According to the previous discussion,
we define the theory to be used to analyze $E$-confluence of \egtrs{s}.
\begin{definition}[CR-theory of an \egtrs{}]
\label{DefCRtheoryOfAnEGTRS}
Let $\cR=(\Symbols,\SPredicates,\mu,E,H,R)$ be an \egtrs{} such that $E$ does not depend on $R$. The CR-theory 
$\crtheoryOf{\cR}$ of $\cR$ is
\[\begin{array}{r@{\:}c@{\:}l}
\crtheoryOf{\cR} & = &
\GLtheory_{\cR^\RMabbr}
\cup\GLtheory_{\text{R,M}}[\Symbols,\mu,R^\RMabbr]
\cup\GLtheory_{\text{R/M}}[\Symbols,\mu,R^\RMabbr]
\\
& = & \GLtheory_{\text{Eq}}[\Symbols,\mu,E]
\cup\GLtheory_{\text{R}}[\Symbols,\mu,R^\RMabbr]
\cup\GLtheory_{\text{R,M}}[\Symbols,\mu,R^\RMabbr]
\cup\GLtheory_{\text{R/M}}[\Symbols,\mu,R^\RMabbr]
\\
&& 
\cup\:\{(\RuleHornClause)_\alpha\mid\alpha\in H^\RMabbr\}
\end{array}
\]
\end{definition}
\begin{remark}[\egtrs{s} not depending on $R$]
Note that $\GLtheory_{\text{Eq}}[\Symbols,\mu,E]$ is a component of $\crtheoryOf{\cR}$ and $\GLtheory_{E}$ is 
$\GLtheory_{\text{Eq}}[\Symbols,\mu,E]\cup\{(\RuleHornClause)_\alpha\mid\alpha\in H\}$.
Since $\GLtheory_{E}$ defines the equivalence $\equequ{E}$ (see Definition \ref{DefComputationalRelationsOfAnEGTRS}(\ref{DefComputationalRelationsOfAnEGTRS_equivalence})), requiring $E$ not to depend on $R$ is important, as it guarantees that every Horn clause $\alpha\in H$ used to prove $s\equequ{E}t$ using $\GLtheory_{E}$ coincides with $\alpha^\RMabbr\in H^\RMabbr$, and therefore $\GLtheory_{E}\subseteq\GLtheory_{\cR^\RMabbr}
\subseteq\crtheoryOf{\cR}$.
Thus we obtain a proof of $s\equequ{E}t$ in $\crtheoryOf{\cR}$ as well
(and vice versa).
For the sake of readability, in the statement of the next definitions and 
results, unless stated otherwise we silently assume in all \egtrs{s}
that $E$ does not depend on $R$.
\end{remark}
Note that
$\GLtheory_{\cR/E}\subseteq\crtheoryOf{\cR}$, but $\crtheoryOf{\cR}$ adds nothing to the definition of $\rewmodulo{}$ and $\rewmodulos{}$ given by $\GLtheory_{\cR/E}$.
Thus, we have:

\begin{proposition}
Let $\cR=(\Symbols,\SPredicates,\mu,E,H,R)$ be an \egtrs{}.
Let $s$ and $t$ be terms. Then,
\begin{itemize}
\item $\deductionInThOf{\GLtheory_{E}}{s\ceq{}t}$ if and only if 
$\deductionInThOf{\crtheoryOf{\cR}}{s\ceq{}t}$.
\item 
$\deductionInThOf{\GLtheory_{\cR/E}}{s\rewmodulo{}t}$ if and only if 
$\deductionInThOf{\crtheoryOf{\cR}}{s\rewmodulo{}t}$.
\end{itemize}
\end{proposition}
Now, $\rew{\cR^\RMabbr}$ and 
$\rew{\cR^\RMabbr,E}$ are defined as follows:
\begin{definition}
Let $\cR=(\Symbols,\SPredicates,\mu,E,H,R)$ be an \egtrs{}.
Let $s$ and $t$ be terms. We write
\begin{enumerate}
\item $s \rew{\cR^\RMabbr}t$ (resp.\ $s \rews{\cR^\RMabbr}t$) if 
$\deductionInThOf{\crtheoryOf{\cR}}{s\rew{}t}$ (resp.\ 
$\deductionInThOf{\crtheoryOf{\cR}}{s\rews{}t}$).
\item
$s \rew{\cR^\RMabbr,E}t$ (resp.\ $s \rews{\cR^\RMabbr,E}t$) if 
$\deductionInThOf{\crtheoryOf{\cR}}{s\rewpstickel{}t}$ (resp.\ 
$\deductionInThOf{\crtheoryOf{\cR}}{s\rewspstickel{}t}$).
\end{enumerate}
\end{definition}
\begin{remark}[Infeasibility with \egtrs{s}]
If no confusion arises,  in the following we use \emph{(in)feasible} instead of 
\emph{$\crtheoryOf{\cR}$-(in)feasible}, i.e., theory $\crtheoryOf{\cR}$ is used for any statement about (in)feasibility of sequences of atoms.
\end{remark}
The following result is essential in the analysis of peaks in Section
\ref{SecReductionAndCoherencePeaks}.

\begin{proposition}
\label{PropUseOfRulesAndPositionsInRewritingSteps_EGTRSs}
Let $\cR=(\Symbols,\SPredicates,\mu,E,H,R)$ be an \egtrs{} and $s,t\in\Terms$.
\begin{enumerate}
\item\label{PropUseOfRulesAndPositionsInRewritingSteps_EGTRSs_R}
 If $s\rew{\cR^\RMabbr} t$, then there is $p\in\Pos^\mu(s)$ and $\lhsr\to\rhsr\IF\gencond\in R^\RMabbr$ such that
(i) $s|_p=\sigma(\lhsr)$ for some substitution $\sigma$, 
(ii) for all $A\in\gencond$, $\crtheoryOf{\cR}\vdash\sigma(A)$ holds,
and 
(iii) $t=s[\sigma(\rhsr)]_p$.
\item\label{PropUseOfRulesAndPositionsInRewritingSteps_EGTRSs_PStickel}
 If $s\to_{\cR^\RMabbr,E} t$, then there is $p\in\Pos^\mu(s)$ and $\lhsr\to\rhsr\IF\gencond\in R^\RMabbr$ such that
(i) $s|_p\equequ{E}\sigma(\lhsr)$ for some substitution $\sigma$, 
(ii) for all $A\in\gencond$, $\crtheoryOf{\cR}\vdash\sigma(A)$ holds,
and 
(iii) $t=s[\sigma(\rhsr)]_p$.
\end{enumerate}
\end{proposition}

\begin{proof}
Since $s\rew{\cR^\RMabbr}t$ iff $\deductionInThOf{\crtheoryOf{\cR}}{s\rew{} t}$, we have 
two possibilities:
\begin{enumerate}
\item There is a clause $(\RuleHornClause)_\alpha$
for some rule $\alpha:\lhsr\to\rhsr\IF \gencond\in R^\RMabbr$ 
and a substitution $\sigma$ such that $\deductionInThOf{\crtheoryOf{\cR}}{\sigma(A)}$ holds
for all $A\in\gencond$, and then $s=\sigma(\lhsr)$ and $t=\sigma(\rhsr)$.
In this case, we just need to take $p=\toppos$ to obtain the desired conclusion.
\item There is a clause $(\RulePropagation)_{f,i}$ for some $f\in\Symbols$ and
$i\in\mu(f)$
so that 
(a) $s=f(s_1,\ldots,s_i,\ldots,s_k)$ for some terms $s_1,\ldots,s_k$, 
(b) $t=f(s_1,\ldots,t_i,\ldots,s_k)$,
and $\deductionInThOf{\crtheoryOf{\cR}}{s_i\rew{}t_i}$ holds.
By the induction hypothesis, $s_i\rew{\cR^\RMabbr} t_i$
holds and
there is $q\in\Pos^\mu(s_i)$, $\lhsr\to\rhsr\IF\gencond\in\cR$ and
substitution $\sigma$ such that
$s_i|_q=\sigma(\lhsr)$, $t_i=s_i[\sigma(\rhsr)]_q$, and each atom in $\sigma(\gencond)$ is deducible from $\crtheoryOf{\cR}$.
Since $s|_{i.q}=\sigma(\lhsr)$ and $t=s[\sigma(\rhsr)]_{i.q}$, the conclusion follows by letting
$p=i.q$, which belongs to $\Pos^\mu(s)$ by definition of $\Pos^\mu$.
\end{enumerate}
The case $s\rew{\cR^\RMabbr,E} t$ is similar.
\end{proof}

\subsubsection{Fitting Jouannaud and Kirchner's framework}

Consider the correspondence in Table \ref{TableARSsAndEGTRSs}.
\begin{table}
\caption{Abstract notions in Section \ref{SecBasicNotionsJK86} applied to \egtrs{s}}
\begin{center}
\begin{tabular}{rcccccc}
Abstract reduction: & $\equone{\genequivalence}$ & $\equ{\genequivalence}$ & $\rew{\genrelation}$ & $\rew{\genrelationUpE}$ &  $\rew{\genrelation/\genequivalence}$
\\[0.3cm]
Application to \egtrs{s:} & $\rew{\eqoriented{E}}$
& $=_E$ & $\rew{\cR^\RMabbr}$ & $\rew{\cR^\RMabbr,E}$ & $\rew{\cR/E}$
\end{tabular}
\end{center}
\label{TableARSsAndEGTRSs}
\end{table}
We prove that 
(\ref{LblReductionModuloFromReductionAndEquivalence}) and 
(\ref{LblFundamentalAssumptionJK86}) in Jouannaud and Kirchner's framework are fulfilled now.

\begin{proposition}
\label{PropRewritingModuloEGRSsAsComposition}
\label{PropFundamentalAssumptionForEGRSs}
Let $\cR=(\Symbols,\SPredicates,\mu,E,H,R)$ be an \egtrs{}.
Then, 
\[\rew{\cR/E}\:=\:(\composeRel{\equequ{E}}{\composeRel{\rew{\cR^\RMabbr}}{\equequ{E}}})
\text{\qquad and  \qquad}
\rew{\cR^\RMabbr}\:\subseteq\:\rew{\cR^\RMabbr,E}\:\subseteq\:\rew{\cR/E}.\]
\end{proposition}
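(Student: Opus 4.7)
The plan is to treat the two claims separately, noting up front a key compatibility observation: in both $\GLtheory_{\cR/E}$ and $\crtheoryOf{\cR}$, the predicates $\equequ{}$, $\cto_\RMabbr$, $\rewmodulo{}$, $\rewsmodulo{}$, and $\rew{}$ are axiomatized by exactly the same sentences (namely $\GLtheory_{\text{Eq}}[\Symbols,\mu,E]$, plus $\GLtheory_{\text{R/M}}[\Symbols,\mu,R^\RMabbr]$, plus the Horn clauses $(\RuleHornClause)_\alpha$ for $\alpha\in H^\RMabbr\cup R^\RMabbr$). The only difference is that $\crtheoryOf{\cR}$ additionally axiomatizes $\rewpstickel{}$ and $\rewspstickel{}$ via $\GLtheory_{\text{R,M}}[\Symbols,\mu,R^\RMabbr]$, but these predicates appear in no other axiom (by Remark \ref{RemEqDoesNotDependOnR} and the way $H^\RMabbr$ and $R^\RMabbr$ were defined). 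Thus both theories prove exactly the same atoms built without $\rewpstickel{}$ or $\rewspstickel{}$; in particular, $s\rew{\cR^\RMabbr}t \Leftrightarrow \deductionInThOf{\GLtheory_{\cR/E}}{s\rew{}t}$.

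For the first identity, the inclusion $\rew{\cR/E}\subseteq\equequ{E}\circ\rew{\cR^\RMabbr}\circ\equequ{E}$ follows from the fact that $(\RuleRewMEq)$ is the sole introduction rule for $\rewmodulo{}$ in $\GLtheory_{\cR/E}$: any derivation of $s\rewmodulo{}t$ must decompose as $s\equequ{}s'$, $s'\rew{}t'$, $t'\equequ{}t$, and the middle atom corresponds to $s'\rew{\cR^\RMabbr}t'$ by the compatibility observation. The converse inclusion applies $(\RuleRewMEq)$ constructively in the opposite direction.

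For $\rew{\cR^\RMabbr}\subseteq\rew{\cR^\RMabbr,E}$, I would proceed by induction on the derivation of $s\rew{}t$ in $\crtheoryOf{\cR}$. In the root case, a rule $\lhsr\to\rhsr\IF\gencond$ of $R^\RMabbr$ has been applied via $(\RuleHornClause)$ to give $\sigma(\lhsr)\rew{}\sigma(\rhsr)$; reflexivity of $\equequ{}$ yields $\sigma(\lhsr)\equequ{}\sigma(\lhsr)$, and then $(\RuleRlEq)_{\lhsr\to\rhsr\IF\gencond}$ in $\GLtheory_{\text{R,M}}[\Symbols,\mu,R^\RMabbr]$ produces $\sigma(\lhsr)\rewpstickel{}\sigma(\rhsr)$ using the same instantiated conditions. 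The propagation cases transfer directly through the parallel axioms $(\RulePropagation)^{\rew{}}_{f,i}$ and $(\RulePropagation)^{\rewpstickel{}}_{f,i}$ for the same $i\in\mu(f)$. For $\rew{\cR^\RMabbr,E}\subseteq\rew{\cR/E}$, another induction: at the root, $(\RuleRlEq)$ provides $s\equequ{E}\sigma(\lhsr)$ with the conditions holding, so $\sigma(\lhsr)\rew{\cR^\RMabbr}\sigma(\rhsr)=t$, and then $(\RuleRewMEq)$ (with the second $\equequ{E}$ step taken by reflexivity) yields $s\rew{\cR/E}t$; under propagation, the induction hypothesis provides $s_i\equequ{E}u_i\rew{\cR^\RMabbr}v_i\equequ{E}t_i$, which lifts to the $f$-rooted context by repeated application of $(\RulePropagation)^{\equequ{}}_{f,i}$ and $(\RulePropagation)^{\rew{}}_{f,i}$ (both available for $i\in\mu(f)$), giving a single $\rew{\cR/E}$ step on the surrounding term.

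The main obstacle is the compatibility observation in the first paragraph: it is what licenses translating between $\deductionInThOf{\GLtheory_{\cR/E}}{\cdot}$ and $\deductionInThOf{\crtheoryOf{\cR}}{\cdot}$, and hence what makes the relation $\rew{\cR^\RMabbr}$ fit inside $\rew{\cR/E}$'s definition via $(\RuleRewMEq)$. Once this is nailed down, both inclusions reduce to structural inductions whose cases are mechanical; the only subtlety in those inductions is keeping the replacement-map restriction $i\in\mu(f)$ consistent across $\rew{}$, $\rewpstickel{}$, and $\equequ{}$, which holds because all three use the same $\mu$ in $\GLtheory_{\text{Eq}}$, $\GLtheory_{\text{R}}$, and $\GLtheory_{\text{R,M}}$.
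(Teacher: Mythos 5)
Your proof is correct and follows essentially the same route as the paper's: the first identity is read off from $(\RuleRewMEq)$ being the only way to introduce $\rewmodulo{}$-atoms, and both inclusions are established by induction on derivations with a root case (via $(\RuleHornClause)_\alpha$ resp.\ $(\RuleRlEq)_\alpha$ together with reflexivity of $\equequ{}$) and a propagation case matching $(\RulePropagation)$-axioms for the same $i\in\mu(f)$. The only difference is that you make explicit the conservativity of $\crtheoryOf{\cR}$ over $\GLtheory_{\cR/E}$ on atoms not mentioning $\rewpstickel{}$, which the paper uses tacitly when it identifies $\rew{\cR/E}$ with derivability of $\rewmodulo{}$-atoms in $\crtheoryOf{\cR}$.
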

\begin{proof}
As for the first claim, we have $s\rew{\cR/E}t$ iff $\deductionInThOf{\crtheoryOf{\cR}}{s\rewmodulo{}t}$ iff
(i) $\deductionInThOf{\crtheoryOf{\cR}}{s\equequ{} s'}$ holds for some term $s'$,
(ii) $\deductionInThOf{\crtheoryOf{\cR}}{s'\rew{} t'}$ holds for some term $t'$,
and 
(iii) $\deductionInThOf{\crtheoryOf{\cR}}{t'\equequ{} t}$ holds; 
then   $(\RuleRewMEq)$ applies to conclude  $s\rewmodulo{}t$.
Thus, we have $s\equequ{E}s'$, $s'\rew{\cR^\RMabbr}t'$, and $t'\equequ{E}t$, i.e.,
$s\composeRel{\equequ{E}}{\composeRel{\rew{\cR^\RMabbr}}{\equequ{E}}}t$.

As for the second claim, we silently assume the equivalence between deduction and provability of atoms in Proposition \ref{PropDeductionAndProvabilityOfAtoms} to proceed by induction on the structure of the proof tree. 
In order to prove the first inclusion, 
since $s\rew{\cR^\RMabbr}t$ iff $\deductionInThOf{\crtheoryOf{\cR}}{s\rew{} t}$, we have 
two possibilities:
(i) there is a rule $\alpha:\lhsr\to\rhsr\IF\gencond\in R^\RMabbr$
and a substitution $\sigma$ such that $\deductionInThOf{\crtheoryOf{\cR}}{\sigma(A)}$ holds for all $A\in \gencond$, and then $s=\sigma(\lhsr)$ and $t=\sigma(\rhsr)$ due to the use of 
$(\RuleHornClause)_\alpha\in\GLtheory_{\text{R}}[\Symbols,\mu,R^\RMabbr]\subseteq\crtheoryOf{\cR}$.
In this case, we can use 
$(\RuleRlEq)_\alpha\in\GLtheory_{\text{R,E}}[\Symbols,R^\RMabbr]\subseteq\crtheoryOf{\cR}$
by just letting $\sigma(x)=\sigma(\lhsr)$ for $x$ the
fresh variable in $(\RuleRlEq)_\alpha$ and using $(\RuleReflexivity)^=\in\GLtheory_{\text{Eq}}[\Symbols,E]\subseteq\crtheoryOf{\cR}$ to conclude $\sigma(x)\equequ{E}\sigma(\lhsr)$, as 
required to conclude $s\rewpstickel{\cR^\RMabbr,E}t$.
The second case is (ii) the use of 
$(\RulePropagation)^{\rew{}}_{f,i}\in\GLtheory_{\text{R}}[\Symbols,R^\RMabbr]\subseteq\crtheoryOf{\cR}$ for some $k$-ary
symbol $f\in\Symbols$ and $i\in\mu(f)$  
such that $s=f(s_1,\ldots,s_i,\ldots,s_k)$ for some
terms $s_1,\ldots,s_k$, $t=f(s_1,\ldots,t_i,\ldots,s_k)$ for some term $t_i$ and 
$\deductionInThOf{\crtheoryOf{\cR}}{s_i\rew{}t_i}$ holds.
By an inductive argument, we can assume that
$\deductionInThOf{\crtheoryOf{\cR}}{s_i\rewpstickel{}t_i}$ holds.
Then, by using
$(\RulePropagation)^{\rewpstickel{}}_{f,i}\in\GLtheory_{\text{R,E}}[\Symbols,\mu,R^\RMabbr]\subseteq\crtheoryOf{\cR}$ we conclude
$\deductionInThOf{\crtheoryOf{\cR}}{s\rewpstickel{}t}$.

For the inclusion 
$\rew{\cR^\RMabbr,E}\spcrel{\subseteq}\rew{\cR/E}$, we proceed similarly. Since $s\rew{\cR^\RMabbr,E}t$ iff $\deductionInThOf{\crtheoryOf{\cR}}{s\rewpstickel{} t}$, we have 
two possibilities:
(i) $(\RuleRlEq)_\alpha$ is used, i.e., 
there is a rule $\alpha:\lhsr\to\rhsr\IF\gencond\in R^\RMabbr$
and a substitution $\sigma$ such that 
(i.1) $\deductionInThOf{\crtheoryOf{\cR}}{\sigma(x)=\sigma(\lhsr)}$ for $x$ the
fresh variable in $(\RuleRlEq)_\alpha$,
(i.2) $\deductionInThOf{\crtheoryOf{\cR}}{\sigma(A)}$ holds for all $A\in \gencond$, 
and then 
(i.3) $s=\sigma(x)\equequ{E}\sigma(\lhsr)$ and $t=\sigma(\rhsr)$ to conclude $s\rew{\cR^\RMabbr,E}t$.
By (i.2), we know that $\deductionInThOf{\crtheoryOf{\cR}}{\sigma(\lhsr)\rew{}\sigma(\rhsr)}$
by using $(\RuleHornClause)_\alpha$,
i.e., $\sigma(\lhsr)\rew{\cR^\RMabbr}\sigma(\rhsr)$ holds.
Thus, we have $s\equequ{E}\sigma(\lhsr)\rew{\cR^\RMabbr}\sigma(\rhsr)\equequ{E}t$ and,
 by the first (already proved) claim of this proposition, $s\rew{\cR/E}t$.
The second case is (ii) the use of 
$(\RulePropagation)^{\rewpstickel{}}_{f,i}\in\GLtheory_{\text{R}}[\Symbols,\mu,R^\RMabbr]\subseteq\crtheoryOf{\cR}$ for some $k$-ary
symbol $f\in\Symbols$ and $i\in\mu(f)$ 
such that $s=f(s_1,\ldots,s_i,\ldots,s_k)$ for some
terms $s_1,\ldots,s_k$, $t=f(s_1,\ldots,t_i,\ldots,s_k)$ for some term $t_i$ and 
$\deductionInThOf{\crtheoryOf{\cR}}{s_i\rewpstickel{}t_i}$ holds.
By an inductive argument, we can assume that $s_i\rew{\cR/E}t_i$ and by using the first 
claim of the proposition, $s_i\equequ{E}s'_i\rew{\cR^\RMabbr}t'_i\equequ{E}t_i$ for some terms
$s'_i$ and $t'_i$.
By using $(\RulePropagation)^=_{f,i}\in\GLtheory_{\text{Eq}}[\Symbols,\mu,R^\RMabbr]\subseteq\crtheoryOf{\cR}$, we conclude 
$s=f(s_1,\ldots,s_i,\ldots,s_k)\equequ{E}f(s_1,\ldots,s'_i,\ldots,s_k)$
and
$f(s_1,\ldots,t'_i,\ldots,s_k)\equequ{E}f(s_1,\ldots,t_i,\ldots,s_k)=t$.
Now, by $(\RulePropagation)^{\rew{}}_{f,i}$, we have
$f(s_1,\ldots,s'_i,\ldots,s_k)\rew{\cR^\RMabbr}f(s_1,\ldots,t'_i,\ldots,s_k)$.
Thus, 
$s\equequ{E}f(s_1,\ldots,s'_i,\ldots,s_k)\rew{\cR^\RMabbr}f(s_1,\ldots,t'_i,\ldots,s_k)\equequ{E}t$, i.e., $s\rew{\cR/E}t$.
\end{proof}
\subsubsection{Practical use of Jouannaud and Kirchner's framework with \egtrs{s}}
By  the first statement in Proposition \ref{PropRewritingModuloEGRSsAsComposition},
$E$-confluence of \egtrs{s} (Definition \ref{DefConfluenceAndTerminationModuloOfEGTRSs}) and $E$-confluence of $\rew{\cR^\RMabbr}$ (as an abstract relation on terms,
Definition \ref{DefConfluenceAndTerminationModulo}) \emph{coincide}.
This enables the use of the notions in Section \ref{SecBasicNotionsJK86} to
analyze $E$-confluence of \egtrs{s}.
Note that Corollary \ref{CoroEConfluence_Huet80} (from Huet's results) is equivalent to Corollary \ref{CoroTheorem5_JK86} if $\genrelation=\genrelationUpE\spcrel{=}\rew{\cR^\RMabbr}$.
Thus, as a consequence of 
Corollary \ref{CoroEConfluence_Huet80} we have the following
two sufficient conditions for $E$-confluence of $E$-terminating \egtrs{s}. 

\begin{corollary}
\label{CoroSufficientCriteriaForEConfluenceOfEGRSs}
Let $\cR=(\Symbols,\SPredicates,\mu,E,H,R)$ be an 
$E$-terminating \egtrs{}. Then, $\cR$ is $E$-confluent if one of the following conditions hold:
\begin{enumerate}
\item\label{CoroSufficientCriteriaForEConfluenceOfEGRSs_Huet} 
$\rew{\cR^\RMabbr}$ and $\equone{E}$ (i.e., $\rew{\eqoriented{E}}$) satisfy properties $\alpha$ and $\gamma$, or
\item\label{CoroSufficientCriteriaForEConfluenceOfEGRSs_JK86}
 $\rew{\cR^\RMabbr,E}$ is locally confluent modulo $E$ with $\rew{\cR^\RMabbr}$ and locally coherent modulo $E$.
\end{enumerate}
\end{corollary}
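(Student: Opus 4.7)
The plan is to apply the abstract results of Section \ref{SecAbstractFrameworkJK86} to \egtrs{s} via the correspondence in Table \ref{TableARSsAndEGTRSs}, namely by instantiating $\genrelation := \rew{\cR^\RMabbr}$, $\genrelationUpE := \rew{\cR^\RMabbr,E}$, $\equone{\genequivalence} := \rew{\eqoriented{E}}$, and $\equ{\genequivalence} := \equequ{E}$. First I would verify that the abstract hypotheses translate correctly. Since $\equequ{E}$ is by construction the reflexive-transitive closure of the symmetric relation $\rew{\eqoriented{E}}$, the equivalence required by the framework is in place. The fundamental assumption (\ref{LblFundamentalAssumptionJK86}), i.e., $\rew{\cR^\RMabbr}\subseteq\rew{\cR^\RMabbr,E}\subseteq\rew{\cR/E}$, is exactly the second statement of Proposition \ref{PropFundamentalAssumptionForEGRSs}; and the abstract definition (\ref{LblReductionModuloFromReductionAndEquivalence}) of the modulo relation matches our concrete one by the first statement of the same proposition, which gives $\rew{\cR/E}=\composeRel{\equequ{E}}{\composeRel{\rew{\cR^\RMabbr}}{\equequ{E}}}$.

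As a consequence, $E$-termination of $\cR$ in the sense of Definition \ref{DefConfluenceAndTerminationModuloOfEGTRSs} (termination of $\rew{\cR/E}$) coincides with $\equequ{E}$-termination of $\rew{\cR^\RMabbr}$ in the abstract sense of Definition \ref{DefConfluenceAndTerminationModulo}. Similarly, the two notions of $E$-confluence agree: abstract $\equequ{E}$-confluence of $\rew{\cR^\RMabbr}$ (joinability of $\rew{\cR^\RMabbr}/\!\equequ{E}$-divergences up to $\equequ{E}$) unfolds, using the composition identity just recalled, to the concrete condition in Definition \ref{DefConfluenceAndTerminationModuloOfEGTRSs}. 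Hence the termination hypothesis of the abstract corollaries is satisfied, and it suffices to prove abstract $\equequ{E}$-confluence of $\rew{\cR^\RMabbr}$ under either assumption.

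For item (\ref{CoroSufficientCriteriaForEConfluenceOfEGRSs_Huet}) I would invoke Corollary \ref{CoroEConfluence_Huet80} directly: assuming properties $\alpha$ and $\gamma$ hold for $\rew{\cR^\RMabbr}$ and $\rew{\eqoriented{E}}$, the corollary yields $\equequ{E}$-confluence of $\rew{\cR^\RMabbr}$, which by the correspondence above is $E$-confluence of $\cR$. For item (\ref{CoroSufficientCriteriaForEConfluenceOfEGRSs_JK86}) I would likewise invoke Corollary \ref{CoroTheorem5_JK86} with $\genrelationUpE := \rew{\cR^\RMabbr,E}$: local confluence modulo $E$ of $\rew{\cR^\RMabbr,E}$ with $\rew{\cR^\RMabbr}$ and local coherence modulo $E$ of $\rew{\cR^\RMabbr,E}$, together with $\equequ{E}$-termination of $\rew{\cR^\RMabbr}$, again give the desired $E$-confluence of $\cR$.

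I do not anticipate any real obstacle: the work is essentially bookkeeping to verify that the abstract hypotheses and conclusions match their concrete \egtrs{} counterparts. The only subtle point, already highlighted by Example \ref{ExRModuloAndRrelativeE}, is that one cannot work with $\rew{\cR}$ directly because the identity $\rew{\cR/E}=\composeRel{\equequ{E}}{\composeRel{\rew{\cR}}{\equequ{E}}}$ fails for conditional systems; substituting the CR-theoretic relation $\rew{\cR^\RMabbr}$ (whose conditions are evaluated via $\rew{\cR/E}$) restores the identity, and this is precisely why Proposition \ref{PropFundamentalAssumptionForEGRSs} is the pivot of the argument.
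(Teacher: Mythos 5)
Your proposal is correct and follows essentially the same route as the paper: the paper derives this corollary directly from Corollary \ref{CoroEConfluence_Huet80} (for item \ref{CoroSufficientCriteriaForEConfluenceOfEGRSs_Huet}) and Corollary \ref{CoroTheorem5_JK86} together with Proposition \ref{PropSufficientConditionForEConfluence_JK86} (for item \ref{CoroSufficientCriteriaForEConfluenceOfEGRSs_JK86}), using Proposition \ref{PropRewritingModuloEGRSsAsComposition} and the correspondence of Table \ref{TableARSsAndEGTRSs} to justify the instantiation, exactly as you do. Your explicit check that abstract $\equequ{E}$-confluence/termination of $\rew{\cR^\RMabbr}$ coincide with the concrete notions of Definition \ref{DefConfluenceAndTerminationModuloOfEGTRSs} is the same bookkeeping the paper records in its remark on the use of Jouannaud and Kirchner's framework.
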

For abstract relations $\genequivalence$, 
$\genrelation$, and $\genrelationUpE$ in Jouannaud and Kirchner's framework,
where $\genrelation$ is $\genequivalence$-terminating, and taking into account that, as discussed in Section \ref{SecHuet80}, $\alpha_\genrelation$ is equivalent to $\fS{LCON}_\genequivalence(\genrelation,\genrelation)$ and $\gamma_\genrelation$ is equivalent to $\fS{LCOH}_\genequivalence(\genrelation)$, we have the following:
\[
\xymatrix{
\alpha_\genrelation \wedge \gamma_\genrelation 
& \ar@{<=>}[r]^{\text{Theorem  \ref{TheoTheorem5_JK86}}} &
& \fS{CR}_\genequivalence(\genrelation,\genrelation)
\ar@{=>}[d]^{\text{Proposition  \ref{PropSufficientConditionForEConfluence_JK86}}} 
\\
\fS{LCON}_\genequivalence(\genrelationUpE,\genrelation) \wedge \fS{LCOH}_\genequivalence(\genrelationUpE) 
& \ar@{<=>}[r]^{\text{Theorem  \ref{TheoTheorem5_JK86}}} &
& \fS{CR}_\genequivalence(\genrelation,\genrelationUpE)
}
\]
Thus, from a theoretical point of view, proofs of $E$-confluence of \egtrs{s} obtained according to item  (\ref{CoroSufficientCriteriaForEConfluenceOfEGRSs_JK86}) of Corollary \ref{CoroSufficientCriteriaForEConfluenceOfEGRSs}, 
which are proofs of  $\fS{CR}_\genequivalence(\genrelation,\genrelationUpE)$,
\emph{subsume} those made according to item (\ref{CoroSufficientCriteriaForEConfluenceOfEGRSs_Huet}), which
are proofs of $\fS{CR}_\genequivalence(\genrelation,\genrelation)$.
Actually, this subsumption is \emph{strict}, as shown in Section \ref{SecLimitationsOfHuetApproach} and Example \ref{ExHuet80_RemarkPage818_EConfluence} regarding $E$-confluence of $\cR$ in Example \ref{ExHuet80_RemarkPage818}.
See also Example \ref{ExRModuloAndRrelativeE_EConfluence} regarding the proof of 
$E$-confluence of $\cR$ in Example \ref{ExRModuloAndRrelativeE}.
From a practical point of view, though, proofs obtained according to (\ref{CoroSufficientCriteriaForEConfluenceOfEGRSs_Huet}) are simpler, as they do \emph{not} involve 
$E$-unification  (for computing $E$-critical pairs) 
or $E$-matching (for checking $\ejoinability{\cR^\RMabbr,E}$-joinability), 
see Sections \ref{SecConditionalPairsAlphaAndGamma} 
to \ref{SecEConfluenceWithPSRandCpeaks}.

\subsection{\egtrs{s} with a confluent set of oriented equations}

Given  
$\cR=(\Symbols,\SPredicates,\mu,E,H,R)$, 
we consider the
\gtrs{} $\cR_E=(\Symbols,\SPredicates_E,\mu,H_E,R_E)$ where
\begin{itemize}
\item $\SPredicates_E$ contains the predicate symbols from $\SPredicates$ which depend on $H$ together with $\to$, $\to^*$, 
and a fresh predicate symbol $\_\downarrow\_$;
\item $H_E$ contains the clauses from $H$ 
defining predicates in $\SPredicates_E$ together with a clause $x\downarrow y \IF x\rews{}z,y\rews{}z$ for the new predicate $\_\downarrow\_$; and
\item $R_E$ consists of rules $s\to t\IF\gencond'$ obtained from $s\to t\IF\gencond\in\:\eqoriented{E}$ (provided that $s$ is not a variable)
by replacing every atom $u\equequ{}v$ in $\gencond$ by $u\downarrow v$.
\end{itemize}
Note that, since equations in $E$ \emph{do not depend on $R$} (see Remark \ref{RemEqDoesNotDependOnR}), rules in $R_E$ do not depend on $R$ either.
Therefore, the rules $R_E$ and Horn clauses $H_E$ in 
$\cR_E$ suffice to solve any equational goal.

If $\cR_E$ is confluent, then proofs of $s\equequ{E}t$ 
can be equivalently treated as \emph{joinability} proofs $s\joinability{\cR_E}t$.
As usual, if $\cR_E$ is terminating, then equalities $s\equequ{E}t$ can be solved by normalization of $s$ and $t$ with $\cR_E$ into $\nfOf{s}$ and $\nfOf{t}$, and then comparing $\nfOf{s}$ and $\nfOf{t}$ for syntactic equality. 

Essentially, this is what is done in, e.g., \Maude's system modules \cite[Chapter 6]{ClavelEtAl_MaudeBook_2007}, consisting of a set of equations and a set of rewrite rules.
Note, however, that in the presence of conditional rules in $\cR_E$ (coming from conditional equations in $E$), additional requirements like \emph{operational termination} \cite{LucMarMes_OperationalTerminationOfConditionalTermRewritingSystems_IPL05} can be necessary to guarantee decidability of equational goals, see \cite{LucMes_NormalFormsAndNormalTheoriesInConditionalRewriting_JLAMP16} for a discussion.
Operational termination of \gtrs{s} $\cR$ is defined in \cite[Definition 11]{Lucas_SemanticPropertiesOfComputationsDefinedByElementaryInferenceSystems_HCVS25} as the absence of infinite proof trees in $\GLinferenceOf{\rewtheoryOf{\cR}}$, the \eis{} associated with $\cR$ \cite[Definition 6]{Lucas_SemanticPropertiesOfComputationsDefinedByElementaryInferenceSystems_HCVS25}.

\subsection{\egtrs{s} without equations as \gtrs{s}}
\label{SecEGTRSsWithoutEquationsAsGTRSs}

In the ``degenerate'' case when the set of equations $E$ of an \egtrs{} $\cR=(\Symbols,\SPredicates,\mu,E,H,R)$ is empty, i.e., $E=\emptyset$, 
a number of consequences follow which are used later in this paper (see Section \ref{SecConfluenceOfGTRSsAsConfluenceOfEGTRSsWithEempty}):
\begin{enumerate}
\item The relation $\equone{E}$ is empty and its reflexive and transitive closure
$\equequ{E}$ becomes the identity of terms.
\item $\GLtheory_{\text{Eq}}[\Symbols,\mu,E]=\GLtheory_{\text{Eq}}[\Symbols,\mu,\emptyset]$ consists of reflexivity, transitivity, symmetry and propagation rules for the equality predicate.
Thus, it characterizes the equality predicate $\equequ{}$ as the \emph{syntactic identity}.
Therefore, 
\item Sentences $(R,E)_{\alpha}$ for rules $\alpha:\lhsr\to\rhsr\IF A_1,\ldots,A_n$ are 
equivalent to $(\forall\vec{x})\:A_1\wedge\cdots\wedge A_n\Rightarrow\lhsr\rewpstickel{}\rhsr$, i.e., predicate $\rewpstickel{}$ defines the same reduction relation (as deduction in $\GLtheory_{\cR,E}$) as $\rew{}$ (as deduction in $\GLtheory_{\cR}$).
Also,
\item Sentence $(R/E)$ is 
equivalent to $(\forall x,y)\:x\rew{}y\Rightarrow x\rewmodulo{}y$, i.e., predicate $\rewmodulo{}$ defines the same reduction relation (as deduction in $\GLtheory_{\cR/E}$) as $\rew{}$ (as deduction in $\GLtheory_{\cR}$).
\item Consequently, $\rew{\cR,E}$ and $\rew{\cR/E}$ collapse into $\rew{\cR}$ and there is no need to use $R^\PSabbr$, $R^\RMabbr$, etc. Hence, 
\item $\ejoinability{\cR/E}$-joinability, 
$\ejoinability{\cR,E}$-joinability, 
and 
$\ejoinability{\cR}$-joinability boil down into $\joinability{\cR}$-joinability.
\item Both 
(i) property $\alpha$ and 
(ii) local confluence of 
$\rew{\cR,E}$ modulo $E$ with $\rew{\cR}$ 
boil 
down into the usual definition of \emph{local confluence} of $\rew{\cR}$.
\item Local coherence of $\rew{\cR}$ 
and $\rew{\cR,E}$ modulo $E$ (and also Property $\gamma$) trivially hold, as the $\equone{}$-step of the cliff is not possible, being $\equone{E}=\emptyset$.
\item $E$-termination of $\cR$ becomes termination of $\rew{\cR}$ (investigated in \cite{Lucas_TerminationOfGeneralizedTermRewritingSystems_FSCD24}).
\end{enumerate}
Therefore, $E$-confluence 
can be analyzed as confluence of the
\gtrs{} $\cR=(\Symbols,\SPredicates,\mu,H,R)$.

\section{Peaks for the analysis of $E$-confluence}
\label{SecReductionAndCoherencePeaks}

According to Corollary \ref{CoroSufficientCriteriaForEConfluenceOfEGRSs},
in order to prove $E$-confluence of ($E$-terminating) \egtrs{s} 
$\cR=(\Symbols,\SPredicates,\mu,E,H,R)$, we can investigate one of the following situations.
\begin{enumerate}
\item 
\label{ItemSecReductionAndCoherencePeaks_Huet}
 \emph{Properties $\alpha$ and $\gamma$ of $\rew{\cR^\RMabbr}$} (sometimes referred to as $\alpha_{\rew{\cR^\RMabbr}}$ and $\gamma_{\rew{\cR^\RMabbr}}$, if necessary), using 
 $\rew{\eqoriented{E}}$ (Corollary \ref{CoroSufficientCriteriaForEConfluenceOfEGRSs}(\ref{CoroSufficientCriteriaForEConfluenceOfEGRSs_Huet})). 
\item 
\emph{Local confluence modulo $E$ of $\rew{\cR^\RMabbr,E}$ with $\rew{\cR^\RMabbr}$ and local coherence modulo $E$ of $\rew{\cR^\RMabbr,E}$} (Corollary \ref{CoroSufficientCriteriaForEConfluenceOfEGRSs}(\ref{CoroSufficientCriteriaForEConfluenceOfEGRSs_JK86})). 
\end{enumerate}

\subsection{Local confluence and coherence peaks for properties $\alpha$ and $\gamma$}
\label{SecLocalConfluenceAndCoherencePeaksForAlphaAndGamma}
According to Figure \ref{FigHuetConfluencePropertiesAlphaBetaGamma} and taking into account Table \ref{TableARSsAndEGTRSs}, by Proposition \ref{PropUseOfRulesAndPositionsInRewritingSteps_EGTRSs}(\ref{PropUseOfRulesAndPositionsInRewritingSteps_EGTRSs_R}), the peaks corresponding to these properties,  are of the form
\begin{IEEEeqnarray}{r'C'l}
\raisebox{0.6cm}{
\xymatrix{
&  s 
\ar@{->}[dl]_{p_1}^>>{\alpha_1}
\ar@{->}[dr]^{p_2}_>>{\alpha_2} 
\\
t_1  & &  t_2 
  \\
}}
\label{GenericRPeakTerms}
\end{IEEEeqnarray}
for positions $p_1,p_2\in\Pos^\mu(s)$,
rules 
$\alpha_1:\lhsr_1\to\rhsr_1\IF\gencond_1$ and 
$\alpha_2:\lhsr_2\to\rhsr_2\IF\gencond_2$
sharing no variable, 
and a
substitution $\sigma$,
such that 
(i) $s|_{p_1}=\sigma(\lhsr_1)$ and $\sigma(\gencond_1)$ hold;  
(ii) $s|_{p_2}=\sigma(\lhsr_2)$ and $\sigma(\gencond_2)$ hold;
(iii) $t_1 = s[\sigma(\rhsr_1)]_{p_1}$; and
(iv) $t_2  = s[\sigma(\rhsr_2)]_{p_2}$.
In particular, the following properties must be checked:
\begin{enumerate}
\item \emph{Property $\alpha$}, if $\alpha_1,\alpha_2\in R^\RMabbr$; then 
(\ref{GenericRPeakTerms}) is often called a \emph{local confluence peak};
and
\item \emph{Property $\gamma$}, if $\alpha_1\in R^\RMabbr$ and $\alpha_2\in\:\eqoriented{E}$; 
then 
(\ref{GenericRPeakTerms}) is  
a \emph{local coherence peak}.\footnote{This terminology is borrowed from \cite{JouKir_CompletionOfASetOfRulesModuloASetOfEquations_SIAMJC86,Jouannaud_ConfluenceOfTerminatingRewritingComputations_TFSP24}.}
\end{enumerate}
In both cases, $\ejoin{\cR^\RMabbr}$-joinability of (\ref{GenericRPeakTerms}) must be proved.

\subsection{Local $E$-confluence and $E$-coherence peaks for local confluence modulo $E$ of $\rew{\cR^\RMabbr,E}$ with $\rew{\cR^\RMabbr}$ and local coherence modulo $E$ of $\rew{\cR^\RMabbr,E}$}
\label{SecLocalConfluenceAndCoherencePeaksForJK86}
According to Figure \ref{FigConfluenceAndCoherenceProperties} and taking into account Table \ref{TableARSsAndEGTRSs}, by  Proposition \ref{PropUseOfRulesAndPositionsInRewritingSteps_EGTRSs}(\ref{PropUseOfRulesAndPositionsInRewritingSteps_EGTRSs_PStickel}), 
peaks corresponding to these properties are of the form
\begin{IEEEeqnarray}{r'C'l}
\raisebox{0.6cm}{
\xymatrix{
&  s 
\ar@{->}[dl]_{p_1}^>>{\alpha_1,E}
\ar@{->}[dr]^{p_2}_>>{\alpha_2} 
\\
t_1  & &  t_2 
  \\
}}
\label{GenericPSPeakTerms}
\end{IEEEeqnarray}
for positions $p_1,p_2\in\Pos^\mu(s)$,
rules 
$\alpha_1:\lhsr_1\to\rhsr_1\IF\gencond_1$ and 
$\alpha_2:\lhsr_2\to\rhsr_2\IF\gencond_2$
sharing no variable, 
and a
substitution $\sigma$,
such that 
(i) $s|_{p_1}\equequ{E}\sigma(\lhsr_1)$, i.e., \emph{matching modulo $E$ is used}, and $\sigma(\gencond_1)$ hold;  
(ii) $s|_{p_2}=\sigma(\lhsr_2)$ and $\sigma(\gencond_2)$ hold;
(iii) $t_1 = s[\sigma(\rhsr_1)]_{p_1}$; and
(iv) $t_2  = s[\sigma(\rhsr_2)]_{p_2}$.
Due to the use of \emph{Peterson \& Stickel} reduction  $\rew{\cR^\RMabbr,E}$ 
in the leftmost branch of the peaks, 
$\alpha_1$ always belongs to $R^\RMabbr$.
In particular, the following properties must be checked:
\begin{enumerate}
\item \emph{Local confluence of $\rew{\cR^\RMabbr,E}$  modulo $E$ with $\rew{\cR^\RMabbr}$}, 
if $\alpha_2\in R^\RMabbr$; then, (\ref{GenericPSPeakTerms}) is called a local $E$-confluence peak for short; and
\item 
\emph{Local coherence of $\rew{\cR^\RMabbr,E}$  modulo $E$}, 
if $\alpha_2\in\:\eqoriented{E}$; then, (\ref{GenericPSPeakTerms}) is 
a local $E$-coherence peak.
\end{enumerate}
In both cases, $\ejoinability{\cR^\RMabbr,E}$-joinability of (\ref{GenericPSPeakTerms}) must be proved.

\subsection{Disjoint and non-disjoint peaks}
\label{SecDisjointAndNonDisjointPeaks}

If $p_1$ and $p_2$ in (\ref{GenericRPeakTerms}) 
are \emph{disjoint}, i.e., $p_1\parallel p_2$, 
then the peak is $\joinability{\cR}$-joinable as the diagram in Figure \ref{FigDisjointRpeaks} shows, 
\begin{figure}[t]
\begin{center}
 \begin{tikzpicture}[node distance = 6cm, auto,scale=0.5,every node/.style={scale=0.8}]
\node (DP_TextT) {$s[\sigma(\rhsr_1)]_{p_1}[\sigma(\lhsr_2)]_{p_2}$};
\node [triangleW, below  = 0cm of DP_TextT, minimum width = 3cm, minimum height = 2.25cm] (DP_TriangleT) {};
\node [triangleG, below left = -0.42cm and -0.35cm of DP_TriangleT, minimum width = 1cm, minimum height = 1.24cm, scale=0.9] (DP_RHStriangleT) {};
\node [below  = -0.35 of DP_RHStriangleT] (DP_LHStextT) {\tiny $\sigma(\rhsr_1)$};
\node [triangleG, right = 0.41cm of DP_RHStriangleT, minimum width = 1cm, minimum height = 1.24cm, scale=0.9] (DP_LHSpTriangleT) {};
\node [below  = -0.35 cm of DP_LHSpTriangleT] (DP_LHSpTextT) {\tiny $\sigma(\lhsr_2)$}; 
%
\node [above right = 1cm and 1.8 cm of DP_TextT] (DP_TextS) {$s[\sigma(\lhsr_1)]_{p_1}[\sigma(\lhsr_2)]_{p_2}$};
\node [triangleW, below  = 0cm of DP_TextS, minimum width = 3cm, minimum height = 2.25cm] (DP_TriangleS) {};
\node [triangleG, below left = -0.42cm and -0.35cm of DP_TriangleS, minimum width = 1cm, minimum height = 1.24cm, scale=0.9] (DP_LHStriangleS) {};
\node [below  = -0.35 cm of DP_LHStriangleS] (DP_LHSt
TextS) {\tiny $\sigma(\lhsr_1)$}; 
\node [triangleG, right = 0.41cm of DP_LHStriangleS, minimum width = 1cm, minimum height = 1.24cm, scale=0.9] (DP_LHSpTriangleS) {};
\node [below  = -0.35 cm of DP_LHSpTriangleS] (DP_LHSpTextS) {\tiny $\sigma(\lhsr_2)$};
%
%
\node [below right =1cm and 1.8 cm of DP_TextS] (DP_TextTp) {$s[\sigma(\lhsr_1)]_{p_1}[\sigma(\rhsr_2)]_{p_2}$};
\node [triangleW, below  = 0cm of DP_TextTp, minimum width = 3cm, minimum height = 2.25cm] (DP_TriangleTp) {};
\node [triangleG, below left = -0.42cm and -0.35cm of DP_TriangleTp, minimum width = 1cm, minimum height = 1.24cm, scale=0.9] (DP_LHStriangleTp) {};
\node [below  = -0.35 cm of DP_LHStriangleTp] (DP_LHStextTp) {\tiny $\sigma(\lhsr_1)$}; 
\node [triangleG, right = 0.41cm of DP_LHStriangleTp, minimum width = 1cm, minimum height = 1.24cm, scale=0.9] (DP_RHSpTriangleTp) {};
\node [below  = -0.35 cm of DP_RHSpTriangleTp] (DP_RHSpTextTp) {\tiny $\sigma(\rhsr_2)$}; 
%
%
\node [below =3 cm of DP_TextS] (DP_TextU) {$s[\sigma(\rhsr_1)]_{p_1}[\sigma(\rhsr_2)]_{p_2}$};
\node [triangleW, below  = 0cm of DP_TextU, minimum width = 3cm, minimum height = 2.25cm] (DP_TriangleU) {};
\node [triangleG, below left = -0.42cm and -0.35cm of DP_TriangleU, minimum width = 1cm, minimum height = 1.24cm, scale=0.9] (DP_RHStriangleU) {};
\node [below  = -0.35 cm of DP_RHStriangleU] (DP_RHStextU) {\tiny $\sigma(\rhsr_1)$}; 
\node [triangleG, right = 0.41cm of DP_RHStriangleU, minimum width = 1cm, minimum height = 1.24cm, scale=0.9] (DP_RHSpTriangleU) {};
\node [below  = -0.35 cm of DP_RHSpTriangleU] (DP_RHSpTextU) {\tiny $\sigma(\rhsr_2)$}; 
%
%
\draw[->] ([shift={(-0.2,0.2)}]DP_TriangleS.west) -- ([shift={(0.1,0.3)}]DP_TriangleT.east);
\draw[->] ([shift={(0.2,0.2)}]DP_TriangleS.east) -- ([shift={(-0.2,0.2)}]DP_TriangleTp.west);
\draw[->] ([shift={(0.2,0)}]DP_TriangleT.east) -- ([shift={(-0.2,0.2)}]DP_TriangleU.west);
\draw[->] ([shift={(-0.2,0)}]DP_TriangleTp.west) -- ([shift={(0.2,0.2)}]DP_TriangleU.east); 
 \end{tikzpicture}
 \end{center}
 \caption{Disjoint peaks (\ref{GenericRPeakTerms})}
 \label{FigDisjointRpeaks}
 \end{figure}
as 
(i) $p_1$ remains active after replacements on $p_2$ (and vice versa, see \cite[Section 3]{Lucas_ContextSensitiveRewriting_CSUR20} for a justification if a replacement map $\mu$ is used) and (ii) the instances $\sigma(\gencond_1)$ 
and  $\sigma(\gencond_2)$ of rules $\alpha_1$ and $\alpha_2$ are also satisfied. For (\ref{GenericPSPeakTerms}) it is similar using $\joinability{\cR,E}$-joinability.

If $p_1$ and $p_2$ in (\ref{GenericRPeakTerms}) or (\ref{GenericPSPeakTerms}) are \emph{not disjoint}, i.e., $p_1\leq p_2$ or $p_2\leq p_1$, then we need to investigate their joinability.
Section \ref{SecConditionalPairsAlphaAndGamma} (resp.\ \ref{SecConditionalPairsForNonDisjointPeaksJK}) investigates how to obtain a \emph{finite representation} of possibly infinitely many peaks (\ref{GenericRPeakTerms}) (resp.\ (\ref{GenericPSPeakTerms})) as instances of finitely many \emph{conditional pairs}.
\begin{remark}[$\ejoin{\cR^\RMabbr}$-joinability and $\ejoin{\cR^\RMabbr,E}$-joinability]
Note that $\ejoin{\cR^\RMabbr}$-joinability implies $\ejoin{\cR^\RMabbr,E}$-joinability.
In practice, trying $\ejoin{\cR^\RMabbr}$-joinability first can be better as it is simpler to implement, as 
 $\rew{\cR^\RMabbr,E}$-rewriting requires \emph{$E$-matching substitutions}.
\end{remark}

\subsection{About local peaks modulo}
\label{SecAboutLocalPeaksModulo}
As in \cite[Definition 5(3)]{JouMun_TerminationOfASetOfRulesModuloASetOfEquations_CADE84} with the notion of \emph{local E-confluence} (of $\rew{\cR,E}$), in 
\cite[Figure 11.5 (left)]{Jouannaud_ConfluenceOfTerminatingRewritingComputations_TFSP24}
Jouannaud uses 
peaks of the form 
\begin{IEEEeqnarray}{r'C'l}
\raisebox{0.6cm}{
\xymatrix{
&  s 
\ar@{->}[dl]_{p_1}^>>{\alpha_1,E}
\ar@{->}[dr]^{p_2}_>>{\alpha_2,E} 
\\
t_1  & &  t_2 
  \\
}}
\label{LblLocalPeakModulo}
\end{IEEEeqnarray}
which he calls \emph{local peaks modulo}.
The following result shows that such peaks
can be seen as
peaks (\ref{GenericPSPeakTerms}).
As usual, we do not consider disjoint peaks (\ref{LblLocalPeakModulo}) 
with $p_1\parallel p_2$,
as they are trivially $\ejoinability{\cR,E}$-joinable.

\begin{proposition}
\label{PropLocalPeaksModulo}
For every non-disjoint local 
peak 
modulo 
$t_1\leftrew{\cR,E}s\rew{\cR,E}t_2$, 
there is a local peak 
$t_2\leftrew{\cR,E}s'\rew{\cR}t_1$ 
or
$t_1\leftrew{\cR,E}s''\rew{\cR}t_2$ for some terms $s'\equequ{E}s\equequ{E}s''$.
\end{proposition}

\begin{proof}
Without loss of generality, we consider a 
peak
\begin{eqnarray*}
t_1\leftrewAtPos{p}{\alpha_1,E} & s & \rewAtPos{\toppos}{\alpha_2,E}t_2\label{PropPS2peaksAsPSRpeaks_PS2peak}
\end{eqnarray*}
with $p\in\Pos^\mu(s)$,
$s|_p\equequ{E}\sigma(\lhsr_1)$, 
$t_1=s[\sigma(\rhsr_1)]_p$,
$s\equequ{E}\sigma(\lhsr_2)$, 
and 
$t_2=\sigma(\rhsr_2)$,
for $\alpha_1:\lhsr_1\to\rhsr_1\IF\gencond_1,\alpha_2:\lhsr_2\to\rhsr_2\IF\gencond_2,\in R^\RMabbr$.
Let $s'=s[\sigma(\lhsr_1)]_p$. Then, $s'\equequ{E}s=\sigma(\lhsr_2)$
and 
\begin{eqnarray*}
t_2 =\sigma(\rhsr_2)\leftrewAtPos{\toppos}{\alpha_2,E} & s' & \rewAtPos{p}{\alpha_1}t_1
\end{eqnarray*}
as required. The second possibility arises starting from 
\begin{eqnarray*}
t_1\leftrewAtPos{\toppos}{\alpha_1,E} & s & \rewAtPos{p}{\alpha_2,E}t_2\label{PropPS2peaksAsPSRpeaks_PS2peak}
\end{eqnarray*}
and proceeding in a similar way, possibly obtaining a different term $s''$ still equivalent to $s$ (and hence to $s'$).
\end{proof}
Thus, we do not
investigate 
peaks 
(\ref{LblLocalPeakModulo}).
Proposition \ref{PropLocalPeaksModulo} justifies the use of ``$E$-confluence peak'' to refer to peaks (\ref{GenericPSPeakTerms}) when $\alpha_2$ belongs to $R^\RMabbr$,
although one would naturally use ``$E$-confluence peak'' with (\ref{LblLocalPeakModulo}), as they concern local $E$-confluence of $\rew{\cR,E}$ in \cite{JouMun_TerminationOfASetOfRulesModuloASetOfEquations_CADE84}.

\subsection{Checking $\ejoinability{\cR^\RMabbr}$-joinability and $\ejoinability{\cR^\RMabbr,E}$-joinability in practice}
\label{SecCheckingEjoinabilityAndPSjoinabilityInPractice}

Proofs of $E$-confluence require to prove either $\ejoinability{\cR^\RMabbr}$-joinability 
or $\ejoinability{\cR^\RMabbr,E}$-joinability.
Due to the use of $\rew{\cR^\RMabbr}$ and $\rew{\cR^\RMabbr,E}$, rewriting goals $s\tos{} t$ in the conditional part $\gencond$ of clauses in $H$ and rules in $R$  become $s\rewmodulos{} t$ in $H^\RMabbr$ and $R^\RMabbr$, respectively.
Thus, they should be evaluated using $\rews{\cR/E}$.
Regarding the analysis of peaks (\ref{GenericRPeakTerms}) and (\ref{GenericPSPeakTerms}), we have two situations:
\begin{enumerate}
\item The components $s$, $\sigma$, 
and $t_i$, $p_i$, and $\alpha_i$, for $i=1,2$ of the peaks should be \emph{singled out}.
We address this problem by using conditional pairs representing them, see Sections \ref{SecConditionalPairsAlphaAndGamma} and
\ref{SecConditionalPairsForNonDisjointPeaksJK}.
Here, it is crucial to consider rules from $R^\RMabbr$; otherwise, some peaks could be lost, thus leading to wrong conclusions.
However, 
\item 
Regarding $\ejoinability{\cR^\RMabbr}$-joinability 
or $\ejoinability{\cR^\RMabbr,E}$-joinability tests,
since $\ejoinability{\cR}\subseteq\ejoinability{\cR^\RMabbr}$ and
$\ejoinability{\cR,E}\subseteq\ejoinability{\cR^\RMabbr,E}$,
in practice $\ejoinability{\cR}$-joinability 
or $\ejoinability{\cR,E}$-joinability tests could be used instead, 
although some joinable peaks could be missed.
\end{enumerate}
\begin{example}[$\rew{\cR,E}$ instead of $\rew{\cR/E}$ in conditions]
\label{ExPSreachVsRMreach}
Consider the following \egtrs{} $\cR$, 

\noindent
\begin{tabular}{cc}
\begin{minipage}{0.485\textwidth}
\begin{IEEEeqnarray}{r'C'l}
\fS{a}& = & \fS{f}(\fS{b})\label{ExPSreachVsRMreach_eq1}
\\
\nonumber
\end{IEEEeqnarray}
\end{minipage}
&
\begin{minipage}{0.485\textwidth}
\begin{IEEEeqnarray}{r'C'l}
\fS{b} & \to & \fS{c}\label{ExPSreachVsRMreach_rule1}
\\
\fS{b} & \to & x \IF \fS{a} \rew{} x\label{ExPSreachVsRMreach_rule2}
\end{IEEEeqnarray}
\end{minipage}
\end{tabular}

\medskip
\noindent
where $\fS{a} \rew{} x$ in the conditional part of (\ref{ExPSreachVsRMreach_rule2}) 
encodes a \emph{one-step} rewriting.
Thus,
\begin{enumerate}
\item\label{ExPSreachVsRMreach_UsePSrewriting}
 If $\fS{a} \rew{} x$ is treated as $\fS{a}\rewpstickel{}x$, i.e., interpreted as $\fS{a}\rew{\cR,E}x$ (according to Definition \ref{DefComputationalRelationsOfAnEGTRS}(\ref{DefComputationalRelationsOfAnEGTRS_rewPStickel})) 
then (\ref{ExPSreachVsRMreach_rule2})  is \emph{infeasible}, as $\fS{a}$ is $\rew{\cR^\RMabbr,E}$-irreducible.
Thus, only rule (\ref{ExPSreachVsRMreach_rule1}) can be used and  no ($E$-)critical peak would be singled out.
\item If $\fS{a} \rew{} x$ is treated as $\fS{a}\rmodulo{}x$, i.e., interpreted as $\fS{a}\rew{\cR/E}x$, then there is a critical peak
\begin{IEEEeqnarray}{r'C'l}
\fS{c} \leftrewAtPos{\toppos}{(\ref{ExPSreachVsRMreach_rule1})} 
\label{ExPSreachVsRMreach_peak}
& \fS{b} & 
\rewAtPos{\toppos}{(\ref{ExPSreachVsRMreach_rule2})} \fS{f}(\fS{c})
\end{IEEEeqnarray}
where the rightmost 
step with rule (\ref{ExPSreachVsRMreach_rule2}) is possible as 
$\fS{a}\equequ{(\ref{ExPSreachVsRMreach_eq1})}\fS{f}(\ul{\fS{b}})\rew{(\ref{ExPSreachVsRMreach_rule1})}\fS{f}(\ul{\fS{c}})$.
This peak is \emph{not} $\ejoinability{\cR^\RMabbr/E}$-joinable, as both $\fS{c}$ 
and $\fS{f}(\fS{c})$ are $\rew{\cR^\RMabbr/E}$-irreducible and not $\equequ{E}$-equivalent.
As a consequence, $\cR$ is not $E$-confluent.
\end{enumerate}
\end{example}
Using $\rew{\cR,E}$ instead of $\rew{\cR^\RMabbr,E}$ as in item (\ref{ExPSreachVsRMreach_UsePSrewriting}) of Example \ref{ExPSreachVsRMreach} 
would lead to an analysis of confluence of $\rew{\cR,E}$ modulo $E$ 
(the commutation of the diagram in Figure \ref{FigChurchRosserAndEConfluenceJK86} (right) when $\rew{\cR,E}$ is used instead of $\rew{\cR/E}$)
rather than to an analysis of $E$-confluence.
In general, confluence of $\rew{\cR,E}$ modulo $E$ does \emph{not} imply $E$-confluence.
Thus, investigating confluence of $\rew{\cR,E}$ modulo $E$ as an approximation to $E$-confluence of $\cR$ would be wrong.

\section{Conditional pairs for non-disjoint local confluence and coherence
peaks 
}
\label{SecConditionalPairsAlphaAndGamma}
As explained in Section \ref{SecLocalConfluenceAndCoherencePeaksForAlphaAndGamma}, depending on the origin of rules $\alpha_1$ and $\alpha_2$ in peaks (\ref{GenericRPeakTerms}), i.e.,
\begin{IEEEeqnarray*}{r'C'l+x*}
\raisebox{0.6cm}{
\xymatrix{
&  s 
\ar@{->}[dl]_{p_1}^>>{\alpha_1}
\ar@{->}[dr]^{p_2}_>>{\alpha_2} 
\\
t_1  & &  t_2 
  \\
}}
&&& \eqref{GenericRPeakTerms}
\end{IEEEeqnarray*}
 we obtain:
\begin{enumerate}
\item A local \emph{confluence} peak if both $\alpha_1$ and $\alpha_2$ belong to $R^\RMabbr$; 
and
\item A 
local \emph{coherence} peak, if either 
(i) $\alpha_1$ belongs to $R^\RMabbr$ and $\alpha_2$ belongs to $\eqoriented{E}$
or  
(ii) $\alpha_1$ belongs to $\eqoriented{E}$ and $\alpha_2$ belongs to $R^\RMabbr$,
\end{enumerate}
\begin{remark}[Asymmetry in coherence peaks]
\label{RemAsymmetryCoherencePeaks}
In contrast to confluence peaks,
coherence peaks are \emph{asymmetric} in the use of rules belonging to either $R^\RMabbr$ or to $\eqoriented{E}$, but \emph{not both}.
\end{remark}
If $p_1$ and $p_2$ in (\ref{GenericRPeakTerms}) are \emph{not disjoint}, then we have $p_1\leq p_2$ or $p_2\leq p_1$.
Dealing with confluence peaks, without loss of generality we can focus on \emph{one} of them (dismissing the other), provided that all rules in $R^\RMabbr$ are considered.
However, dealing with coherence peaks, due to the aforementioned \emph{asymmetry}, in order to cover all cases we need to consider \emph{both}, corresponding to the relative position (above or below) of the $R^\RMabbr$-redexes and $\eqoriented{E}$-redexes. 
An alternative approach
is 
\emph{assuming} 
e.g., $p_1\leq p_2$, and then investigating \emph{both} possibilities: 
the rule from $R^\RMabbr$ is applied at $p_1$ and the rule from $\eqoriented{E}$ at $p_2$; and vice versa.

Thus, let $p_2=p_1.p\in\Pos^\mu(s)$ for 
some position $p$.
That is, (\ref{GenericRPeakTerms}) corresponds to
\begin{IEEEeqnarray}{r'C'l}
\underbrace{\sigma(\lhsr)[\sigma(\rhsr')]_p}_{t} ~\leftrewAtPos{p}{\alpha'}   & \underbrace{\sigma(\lhsr)[\sigma(\lhsr')]_p}_s & \rewAtPos{\toppos}{\alpha}~\underbrace{\sigma(\rhsr)}_{t'}\label{NestedRMRPeakTermsRup}
\end{IEEEeqnarray}
for \emph{variable disjoint} rules 
$\alpha:\lhsr\to\rhsr\IF\gencond$ and $\alpha':\lhsr'\to\rhsr'\IF\gencond'$ 
and substitution $\sigma$ such that $\sigma(\gencond)$ and $\sigma(\gencond')$ hold, and 
where, by abuse, we still use $s$, $t$, and $t'$, now as indicated.
\begin{definition}[Confluence and coherence peaks]
\label{DefConfluenceAndCoherencePeaks}
Consider the rules $\alpha$ and $\alpha'$ in (\ref{NestedRMRPeakTermsRup}).
\begin{enumerate}
\item If both $\alpha$ and $\alpha'$ belong to $R^\RMabbr$, 
then (\ref{NestedRMRPeakTermsRup}) is a \emph{confluence peak}.
\item  If either 
(i) $\alpha$ belongs to $R^\RMabbr$ and $\alpha'$ belongs to $\eqoriented{E}$, 
or 
(ii) $\alpha$ belongs to $\eqoriented{E}$ and $\alpha'$ belongs to $R^\RMabbr$,
then (\ref{NestedRMRPeakTermsRup}) is a \emph{coherence peak}.
\end{enumerate}
\end{definition}
 Two families of peaks are distinguished:
 \begin{definition}[Critical and variable peaks]
 Consider rule $\alpha$, $\alpha'$, and $p$ in (\ref{NestedRMRPeakTermsRup}).
 \begin{enumerate}
 \item If $p\in\Pos^\mu_\Symbols(\lhsr)$, then rule $\alpha'$  \emph{overlaps} rule $\alpha$ and (\ref{NestedRMRPeakTermsRup}) is called a \emph{critical peak};\footnote{We borrow this denomination from 
\cite[Section 2.2]{DerOkaSiv_ConfluenceOfConditionalRewriteSystems_CTRS87}.
Recently, Jouannaud has used \emph{overlapping local peak} to refer to them \cite[Proof of Theorem 11.3, item 3]{Jouannaud_ConfluenceOfTerminatingRewritingComputations_TFSP24}
} and
 \item If $p\notin\Pos^\mu_\Symbols(\lhsr)$, then rule $\alpha'$  \emph{does not overlap} rule $\alpha$ and (\ref{NestedRMRPeakTermsRup}) is called a \emph{variable peak}.\footnote{see, e.g., 
\cite[Section 2.3]{DerOkaSiv_ConfluenceOfConditionalRewriteSystems_CTRS87}. Recently, Jouannaud has used \emph{ancestor local peak} to refer to them \cite[Proof of Theorem 11.3, item 2]{Jouannaud_ConfluenceOfTerminatingRewritingComputations_TFSP24}.}
 \end{enumerate}
 \end{definition}
These peaks are captured by different conditional pairs
 obtained from rules in $R^\RMabbr$ and/or $\eqoriented{E}$.

\paragraph{Use of conditional pairs}
In the following, we investigate how to represent and prove joinability of critical and variable confluence and coherence peaks by using different kinds of conditional pairs $\langle s,t\rangle\IF\gencond$.
The following notions and results are used in the remainder of the paper.
\begin{definition}[Joinability of conditional pairs]
\label{DefJoinabilityOfConditionalPairs}
Let $\cR$ 
be an \egtrs{}
and $\genrelation,\genrelationbis$ be relations on terms.
A conditional pair $\pi:\langle s,t\rangle\IF\gencond$ is 
$\joinOf{\genrelation}{\genrelationbis}$-joinable
if for all
substitutions $\sigma$, if 
$\deductionInThOf{\crtheoryOf{\cR}}{\sigma(A)}$
holds for all $A\in\gencond$, 
 then $\sigma(s)$ and $\sigma(t)$ are 
 $\joinOf{\genrelation}{\genrelationbis}$-joinable.
 \end{definition}
\begin{definition}[Feasible conditional pair]
\label{DefFeasibleConditionalPair}
Let $\cR$ be an \egtrs{}.
A general conditional pair 
$\langle s,t\rangle\IF\gencond$ is 
\emph{feasible} 
if 
$\gencond$ is
feasible.
Otherwise, it is \emph{infeasible}.
\end{definition}
\begin{proposition}\label{PropJoinabilityOfInfeasiblePairs}
Let $\cR$ be an \egtrs{} and $\genrelation,\genrelationbis$ be relations on terms.
\begin{enumerate}
\item 
Infeasible conditional pairs are 
$\joinOf{\genrelation}{\genrelationbis}$-joinable.
\item 
Trivial conditional pairs are 
$\ejoinOf{\genrelation}{\genrelation}$-joinable.
If $\genrelation$ is reflexive, then they are $\joinOf{\genrelation}{\genrelation}$-joinable.
\end{enumerate}
\end{proposition}
We consider
confluence and coherence \emph{critical}
peaks in Section \ref{SecCriticalPeaksAndCCPs}; and
confluence and coherence \emph{variable} peaks in Section \ref{SecVariablePeaksAndCVPs}.

\subsection{Critical confluence and coherence peaks and conditional critical pairs}
\label{SecCriticalPeaksAndCCPs}

If $\alpha'$ \emph{overlaps} $\alpha$ in (\ref{NestedRMRPeakTermsRup}), i.e., there is a non-variable (active) position $p\in\Pos^\mu_\Symbols(\lhsr)$ such that $\lhsr|_p$ and $\lhsr'$ \emph{unify}, 
we obtain a
\emph{critical peak} 
(see Figure \ref{FigCriticalPeak})
\begin{IEEEeqnarray}{r'C'l}
\underbrace{\sigma(\lhsr)[\sigma(\rhsr')]_p}_{t} ~\leftrewAtPos{p}{\alpha'}   & \underbrace{\sigma(\lhsr)[\sigma(\lhsr')]_p}_s & \rewAtPos{\toppos}{\alpha}~\underbrace{\sigma(\rhsr)}_{t'}\qquad\qquad\text{where }p\in\Pos^\mu_\Symbols(\lhsr)\label{LblCriticalPeak}
\end{IEEEeqnarray}
\begin{figure}[t] 
 \begin{center}
\begin{tikzpicture}[node distance = 6cm, auto,scale=1.5,every node/.style={scale=1}]
\node (CP_Ttext) {$\sigma(\lhsr)[\sigma(\rhsr')]_p$};
\node [triangleW, below  = 0cm of CP_Ttext, minimum width = 2cm, minimum height = 2.35cm, scale=0.9] (CP_Ttriangle) {};
\node [triangleG, below  = 0cm of CP_Ttext, minimum width = 1cm, minimum height = 1.2cm, scale=0.9] (CP_LHSTriangleT) {};
\node [below  = 0.2 cm of CP_Ttext] (CP_LHStextT) {$\lhsr$};
\node [triangleW, below  = 0.05cm of CP_LHStextT, minimum width = 1.3cm, minimum height = 1.48cm, scale=0.9] (CP_SigmaXtriangleT) {};
\node [triangleG, below  = 0.05cm of CP_LHStextT, minimum width = 0.5cm, minimum height = 0.5cm, scale=0.9] (CP_RHSpTriangleT) {};
\node [below  = -0.45 cm of CP_RHSpTriangleT] (CP_RHSptext) {\tiny $\rhsr'$};
%
%
\node [above right = 1cm and 1.8 cm of CP_Ttext] (CP_Stext) {$\sigma(\lhsr)[\sigma(\lhsr')]_p$};
\node [triangleW, below  = 0cm of CP_Stext, minimum width = 2cm, minimum height = 2.35cm, scale=0.9] (CP_Striangle) {};
\node [triangleG, below  = 0cm of CP_Stext, minimum width = 1cm, minimum height = 1.2cm, scale=0.9] (CP_LHSTriangleS) {};
\node [below  = 0.2 cm of CP_Stext] (CP_LHStextS) {$\lhsr$};
\node [triangleW, below  = 0.05cm of CP_LHStextS, minimum width = 1.3cm, minimum height = 1.48cm, scale=0.9] (CP_SigmaXtriangleS) {};
\node [above right = 1cm and 1 cm] at (CP_Striangle) (PinS) {\tiny $p\in \Pos^\mu_\Symbols(\lhsr)$};
\node [triangleG, below  = 0.05cm of CP_LHStextS, minimum width = 0.5cm, minimum height = 0.5cm, scale=0.9] (CP_LHSpTriangleS) {};
\node [below  = -0.45 cm of CP_LHSpTriangleS] (CP_LHSptext) {\tiny $\lhsr'$};
%
%
\node [below right =1cm and 2.5 cm of CP_Stext] (CP_TtextTp) {$\sigma(\rhsr)$};
\node [triangleW, below  = 0cm of CP_TtextTp, minimum width = 3cm, minimum height = 2.25cm] (CP_Tptriangle) {};
\node [triangleG, below  = 0cm of CP_TtextTp, minimum width = 1.67cm, minimum height = 1.28cm, scale=0.9] (CP_RHStTriangle) {};
\node [below  = 0.35 cm of CP_TtextTp] (CP_RHStextTp) {$\rhsr$};
%
\draw[-*] ([shift={(0.2,0.2)}]PinS) -- ([shift={(-0.05,0)}]CP_LHSpTriangleS.north);
\draw[->] ([shift={(-0.2,0.2)}]CP_Striangle.west) -- ([shift={(0.1,0.3)}]CP_Ttriangle.east);
\node[above right = 0.08cm and 0.2cm of CP_Ttriangle.east]{\tiny $\alpha'$};
\draw[->] ([shift={(0.2,0.2)}]CP_Striangle.east) -- ([shift={(-0.2,0.2)}]CP_Tptriangle.west);
\node[above left = 0.01cm and 0.35cm of CP_Tptriangle.west]{\tiny $\alpha$};
\end{tikzpicture}
 \end{center}
 \caption{Critical peak}
 \label{FigCriticalPeak}
 \end{figure}
  \begin{enumerate}
 \item As for \emph{confluence critical  peaks}, 
 both $\alpha$ and $\alpha'$ in (\ref{LblCriticalPeak}) belong to $R^\RMabbr$.
 \item As for \emph{coherence critical  peaks}, 
 either 
(i) $\alpha\in R^\RMabbr$ and $\alpha'\in\:\eqoriented{E}$, 
or 
(ii) $\alpha\in\:\eqoriented{E}$ and $\alpha'\in R^\RMabbr$.
\end{enumerate}
Table \ref{TableDifferentKindsOfCriticalPeaks} summarizes the different kinds of critical peaks and their use.
\begin{table}
\caption{Different kinds of critical peaks (\ref{LblCriticalPeak})}
\begin{tabular}{ccll}
$\alpha$ & $\alpha'$  & Denomination & Use
\\
\hline
$R^\RMabbr$ & $R^\RMabbr$  & Confluence critical peak & Property $\alpha_{\rew{\cR^\RMabbr}}$, Figure \ref{FigHuetConfluencePropertiesAlphaBetaGamma} (middle)
\\
$R^\RMabbr$ & $\eqoriented{E}$  & Coherence critical  peak & Property $\gamma_{\rew{\cR^\RMabbr}}$, Figure \ref{FigHuetConfluencePropertiesAlphaBetaGamma} (right)
\\
$\eqoriented{E}$ & $R^\RMabbr$  & Coherence critical  peak & Property $\gamma_{\rew{\cR^\RMabbr}}$, Figure \ref{FigHuetConfluencePropertiesAlphaBetaGamma} (right)
\end{tabular}
\label{TableDifferentKindsOfCriticalPeaks}
\end{table}

\subsubsection{Conditional Critical Pairs}
\label{SecCCPs}

  Critical peaks for
 unconditional rules 
  are characterized as instances of 
 (finitely many) \emph{critical pairs} obtained from the involved rules \cite[Proposition 3.7]{Huet_ConfluentReductionsAbstractPropertiesAndApplicationsToTermRewritingSystems_JACM80}.
As
for critical peaks (\ref{LblCriticalPeak}),
we consider \emph{Conditional Critical Pairs (CCPs)}.

\begin{definition}[Conditional Critical Pairs]
\label{DefProperAndImproperCCPs}
Let $\alpha:\lhsr\to\rhsr\IF\gencond$ and $\alpha':\lhsr'\to\rhsr'\IF\gencond'$ be
variable disjoint conditional rules and 
$p\in\Pos^\mu_\Symbols(\lhsr)$ be an
active, non-variable position of $\lhsr$, 
such that $\lhsr|_p$ and $\lhsr'$ unify with \emph{mgu} $\theta$.
Then, 
\begin{IEEEeqnarray}{r'C'l}
\langle\theta(\lhsr)[\theta(\rhsr')]_p,\theta(\rhsr)\rangle\IF\theta(\gencond),\theta(\gencond')\label{LblGenericCCP}
\end{IEEEeqnarray}
is a \emph{Conditional Critical Pair (CCP)}, denoted $\pi_{\alpha,p,\alpha'}$. 
As usual, $p$ is often called the \emph{critical position} of the CCP.
\end{definition}
Given sets of rules $U$ and $V$,
\begin{eqnarray*}
\SetOf{CCP}(U,V) & = & \{\ccpOf{\alpha,p,\alpha'} \mid \alpha:\lhsr\to\rhsr\IF\gencond \in U, p\in\Pos^\mu_\Symbols(\lhsr), \alpha'\in V\}
\label{LblGenericSetOfCCPs}
\end{eqnarray*}
is the set of CCPs obtained from rules in $V$ overlapping a rule in $U$.
\begin{remark}[Asymmetric role of rules in critical peaks and CCPs]
\label{RemAssymetricRoleOfRulesInCCPs}
The role of rules $\alpha\in U$ and $\alpha'\in V$ is \emph{asymmetric} when computing critical pairs to capture critical peaks  (\ref{LblCriticalPeak}): the \emph{critical position $p$ is taken from the left-hand side $\lhsr$ of a rule $\alpha\in U$}.
Thus,  
if $U\neq V$ we need to consider \emph{both} $\SetOf{CCP}(U,V)$ and $\SetOf{CCP}(V,U)$ to capture all critical peaks.
\end{remark}
Following Definition \ref{DefConfluenceAndCoherencePeaks}, 
$\ccpOf{\alpha,p,\alpha'}$ is a 
\begin{itemize}
\item \emph{confluence CCP}, if both $\alpha$ and $\alpha'$ belong to $R^\RMabbr$; and a
\item \emph{coherence CCP}, if either $\alpha$ or $\alpha'$ 
belong to 
$\eqoriented{E}$ (and hence $\alpha'$ or $\alpha$, respectively, to $R^\RMabbr$).
\end{itemize}
\begin{remark}[Coherence CCPs from ill-defined rules]
\label{RemCoherenceCCPsFromIllDefinedRules}
As discussed in Remark \ref{RemEquationsLeadingToImproperRules},
$\eqoriented{E}$ may contain ill-defined rules.
If $\alpha$ is one of such rules (i.e., $\lhsr$ is a variable), 
\emph{no CCP} is obtained
as $\Pos^\mu_\Symbols(\lhsr)$ is \empty{empty}.
In contrast, if it is $\alpha'$, then it overlaps every non-variable position $p$ of $\lhsr$ (as $\alpha$ and $\alpha'$ are variable disjoint) and the corresponding CCPs $\ccpOf{\alpha,p,\alpha'}$ would be obtained.
The CCPs (\ref{ExEvenEq_RECCP1})--(\ref{ExEvenEq_RECCP3}) in Example \ref{ExEvenEq} below illustrate them.
\end{remark}

\subsubsection{Taxonomy of CCPs}
In the literature, a number of specific situations for CCPs have been given particular denominations which are used in the following.

\begin{definition}[Proper, improper, and root CCPs]
\label{DefProperAndImproperCCPs}
Let $\alpha$ and $\alpha'$ be two rules and $p$ be a non-variable position of the left-hand side of $\alpha$.
\begin{itemize}
\item If $\alpha$ and $\alpha'$ in $\ccpOf{\alpha,p,\alpha'}$ 
are renamed versions of \emph{the same} rule and $p=\toppos$, then $\ccpOf{\alpha,p,\alpha'}$ is called \emph{improper} (iCCP, simply written $\iccpOf{\alpha}$); 
\item Otherwise, $\ccpOf{\alpha,p,\alpha'}$ is \emph{proper} (pCCP).\footnote{This terminology is borrowed from \cite[Definition 4.2]{AveLor_ConditionalRewriteSystemsWithExtraVariablesAndDeterministicLogicPrograms_LPAR94}.}
Furthermore, 
\item CCPs $\ccpOf{\alpha,\toppos,\alpha'}$ are called \emph{root} CCPs (rCCPs).\footnote{See \cite[Exercise 6.19]{BaaNip_TermRewritingAndAllThat_1998}.}
In particular, improper CCPs are root CCPs.
\end{itemize}
\end{definition}
Note that there is no improper \emph{coherence} CCP as the involved rules in coherence CCPs come from different sets ($R^\RMabbr$ and $\eqoriented{E}$).
The following fact is well-known.
\begin{proposition}
\label{PropJoinabilityOfImproperCriticalPeakssIn2CTRSs}
Improper CCPs $\iccpOf{\alpha}$ of 2-rules $\alpha$ 
are trivial. 
\end{proposition}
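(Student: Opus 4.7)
The plan is to unpack the definitions and observe that the two components of the pair collapse to the same term once we exploit (i) that the two rule instances are renamings of one another and (ii) that a 2-rule has $\Var(\rhsr) \subseteq \Var(\lhsr)$.

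First I would make the setup concrete. Let $\alpha : \lhsr \to \rhsr \IF \gencond$, and let $\alpha' : \lhsr' \to \rhsr' \IF \gencond'$ be a variable-disjoint renaming of $\alpha$, say via a variable renaming $\rho$ so that $\lhsr' = \rho(\lhsr)$, $\rhsr' = \rho(\rhsr)$ and $\gencond' = \rho(\gencond)$. The improper CCP $\iccpOf{\alpha}$ uses position $p=\toppos$, so $\lhsr|_{\toppos}=\lhsr$ must be unified with $\lhsr'$ by an mgu $\theta$. By definition (\ref{LblGenericConditionalCriticalPair}), the CCP is
\[
\langle\theta(\lhsr)[\theta(\rhsr')]_{\toppos},\theta(\rhsr)\rangle \IF \theta(\gencond),\theta(\gencond') \;=\; \langle\theta(\rhsr'),\theta(\rhsr)\rangle \IF \theta(\gencond),\theta(\gencond').
\]
So triviality boils down to showing $\theta(\rhsr')=\theta(\rhsr)$.

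The key step is the following observation about renamings. Since $\lhsr' = \rho(\lhsr)$ and $\theta$ is a unifier of $\lhsr$ and $\lhsr'$, for every variable $x\in\Var(\lhsr)$ the symbols $x$ and $\rho(x)$ appear at matching positions in $\lhsr$ and $\lhsr'$ respectively; comparing $\theta(\lhsr)$ and $\theta(\lhsr')=\theta(\rho(\lhsr))$ at those positions yields $\theta(x)=\theta(\rho(x))$. In other words, $\theta\circ\rho$ and $\theta$ agree on $\Var(\lhsr)$.

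Now I bring in the 2-rule hypothesis, which is precisely $\Var(\rhsr)\subseteq\Var(\lhsr)$. Because every variable in $\rhsr$ lies in $\Var(\lhsr)$, the agreement of $\theta$ and $\theta\circ\rho$ on $\Var(\lhsr)$ propagates to all of $\rhsr$, giving
\[
\theta(\rhsr') \;=\; \theta(\rho(\rhsr)) \;=\; \theta(\rhsr),
\]
so the CCP has the form $\langle\theta(\rhsr),\theta(\rhsr)\rangle\IF\theta(\gencond),\theta(\gencond')$, which is trivial in the sense of Section~\ref{SecConditionalPairs}. I do not foresee a real obstacle: the only subtle point is justifying the pointwise identity $\theta(x)=\theta(\rho(x))$ on $\Var(\lhsr)$, which is a routine consequence of unifying variable-disjoint renamings and could alternatively be phrased by noting that $\theta$ can be chosen as the inverse renaming $\rho^{-1}$ (restricted appropriately), in which case both sides of the argument become transparent.
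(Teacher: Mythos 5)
Your proposal is correct and follows essentially the same route as the paper: both exploit that the renamed left-hand sides unify with an mgu identifying each variable $x$ of $\lhsr$ with its renamed copy, and then use the 2-rule condition $\Var(\rhsr)\subseteq\Var(\lhsr)$ to propagate that identification to the right-hand sides, yielding $\theta(\rhsr')=\theta(\rhsr)$. The only (harmless) difference is that the paper directly exhibits the specific mgu $\theta(x)=x'$, whereas you derive the agreement $\theta(x)=\theta(\rho(x))$ positionally for an arbitrary unifier.
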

\begin{proof}
If $\alpha:\lhsr\to\rhsr\IF\gencond$ is a 2-rule, then $\Var(\rhsr)\subseteq\Var(\lhsr)$, and the renamed versions $\lhsr$ and $\lhsr'$ (with each variable $x$ in $\lhsr$ renamed as $x'$ in $\lhsr'$) of the left-hand side of $\alpha$ 
 unify with \emph{mgu} $\theta(x)=x'$ for all $x\in\Var(\lhsr)$. Thus, we also have 
 $\theta(\rhsr)=\theta(\rhsr')$.
 Hence $\iccpOf{\alpha}:\langle\theta(\rhsr'),\theta(\rhsr)\rangle\IF\sigma(\gencond),\theta(\gencond')$ is trivial. 
\end{proof}
In contrast, dealing with \emph{proper} 3-
rules $\alpha$, 
Proposition \ref{PropJoinabilityOfImproperCriticalPeakssIn2CTRSs} 
may \emph{fail} to hold.

\begin{example}
\label{Ex4_1a_AL94}
Consider the following (oriented) 3-\ctrs{} in \cite[Example 4.1(a)]{AveLor_ConditionalRewriteSystemsWithExtraVariablesAndDeterministicLogicPrograms_LPAR94}:
\noindent
\begin{tabular}{cc}
\begin{minipage}{0.482\textwidth}
\begin{IEEEeqnarray}{+r'C'l+}
0 + y & \to & y\label{Ex4_1a_AL94_rule1}
\\
\fS{s}(x) + y & \to & x+\fS{s}(y)\label{Ex4_1a_AL94_rule2}
\end{IEEEeqnarray}
\end{minipage}
&
\begin{minipage}{0.482\textwidth}
\begin{IEEEeqnarray}{+r'C'l+}
\fS{f}(x,y) & \to & z \IF x+y\cto z+z'\label{Ex4_1a_AL94_rule2}
\\
\nonumber
\end{IEEEeqnarray}
\end{minipage}
\end{tabular}

\bigskip
\noindent
The following critical peak
 \begin{IEEEeqnarray}{r'C'l}
\fS{s}(\fS{0}) ~\leftrewAtPos{\toppos}{(\ref{Ex4_1a_AL94_rule2})}   & \fS{f}(\fS{s}(\fS{0}),\fS{0}) & \rewAtPos{\toppos}{(\ref{Ex4_1a_AL94_rule2})}~\fS{0}\label{Ex4_1a_AL94_improperCriticalPeak}
\end{IEEEeqnarray}
is \emph{not} joinable, as both $\fS{s}(\fS{0})$ and $\fS{0}$ are irreducible.
Note that $\cR$ has no proper critical pair.
However, there is an improper CCP which is \emph{not} trivial (nor $\joinability{\cR}$-joinable):
\[\iccpOf{(\ref{Ex4_1a_AL94_rule2})}:\langle z_1,z_2\rangle\IF x+y\cto z_1+z'_1,x+y\cto z_2+z'_2
\]
\end{example}
\subsubsection{Critical peaks as CCPs}
In this section, 
we investigate the relationship between peaks and conditional pairs.

\begin{proposition}[Critical peaks as CCPs]
\label{PropCriticalPeaksAsCCPs}
Let 
$\cR=(\Symbols,\SPredicates,\mu,E,H,R)$ 
be an \egtrs{}.
Let 
$\alpha:\lhsr\to\rhsr\IF\gencond$ and 
$\alpha':\lhsr'\to\rhsr'\IF\gencond'$ be rules, 
$p\in\Pos^\mu_\Symbols(\lhsr)$, and $\sigma$ be such that 
\begin{IEEEeqnarray*}{r'C'l}
\sigma(\lhsr)[\sigma(\rhsr')]_p~\leftrewAtPos{p}{\alpha'}   & \sigma(\lhsr)[\sigma(\lhsr')]_p & \rewAtPos{\toppos}{\alpha}~\sigma(\rhsr)
\end{IEEEeqnarray*}
 is a critical peak 
and 
$\ccpOf{\alpha,p,\alpha'}
:\langle\theta(\lhsr)[\theta(\rhsr')]_p,\theta(\rhsr)\rangle\IF\theta(\gencond),\theta(\gencond')$
be a CCP. 
Then, there is a substitution $\tau$ such that
$\sigma=\tau\circ\theta$ and  $\ccpOf{\alpha,p,\alpha'}$ is feasible.
\end{proposition}

\begin{proof}
The proof is basically that of 
\cite[Proposition 3.7]{Huet_ConfluentReductionsAbstractPropertiesAndApplicationsToTermRewritingSystems_JACM80}.
The feasibility of $\ccpOf{\alpha,p,\alpha'}$ follows from the fact that 
both $\sigma(\gencond)$ and $\sigma(\gencond')$ hold due to their use in a critical peak: since $\sigma$ is a refinement of $\theta$, the condition $\theta(\gencond),\theta(\gencond')$ of $\ccpOf{\alpha,p,\alpha'}$ is feasible.
\end{proof}

\begin{proposition}[Feasible CCPs as critical peaks]
\label{PropFeasibleCCPsAsCriticalPeaks}
Let 
$\cR=(\Symbols,\SPredicates,\mu,E,H,R)$ 
be an \egtrs{}.
Let 
$\alpha:\lhsr\to\rhsr\IF\gencond$ and 
$\alpha':\lhsr'\to\rhsr'\IF\gencond'$ be rules, 
$p\in\Pos^\mu_\Symbols(\lhsr)$ be such that
$\ccpOf{\alpha,p,\alpha'}
:\langle\theta(\lhsr)[\theta(\rhsr')]_p,\theta(\rhsr)\rangle\IF\theta(\gencond),\theta(\gencond')$
 is a CCP.
For each substitution $\sigma$ satisfying the conditional part of 
$\ccpOf{\alpha,p,\alpha'}$, there is a critical peak
$\sigma(\theta(\lhsr[\rhsr']_p))\leftrew{\alpha'}\sigma(\theta(\lhsr))\rew{\alpha}\sigma(\theta(\rhsr))$.
\end{proposition}

\subsection{Variable peaks and conditional variable pairs}
 \label{SecVariablePeaksAndCVPs}
 
 If $\alpha'$ does \emph{not} overlap $\alpha:\lhsr\to\rhsr\IF\gencond$ in (\ref{NestedRMRPeakTermsRup}), then $\alpha'$ applies to a redex at an active position $p$ which is ``below'' the position of an active variable in $\lhsr$, 
 i.e.,
there is $x\in\Var^\mu(\lhsr)$ and $q\in\Pos^\mu_x(\lhsr)$ 
such that $\lhsr|_q=x$ and $q\leq p$, i.e., $p=q\cpos{}q'$ for some position $q'$.
Thus, (\ref{NestedRMRPeakTermsRup}) is a \emph{variable peak}
 \begin{IEEEeqnarray}{r'C'l}
\underbrace{\sigma(\lhsr)[\sigma(\rhsr')]_p}_{t} ~\leftrewAtPos{p}{\alpha'}   & \underbrace{\sigma(\lhsr)[\sigma(\lhsr')]_p}_s & \rewAtPos{\toppos}{\alpha}~\underbrace{\sigma(\rhsr)}_{t'}\qquad\qquad\text{where }p\notin\Pos^\mu_\Symbols(\lhsr)\label{LblVariablePeak}
\end{IEEEeqnarray}
see Figure \ref{FigVariablePeak}. 
\begin{figure}[t] 
 \begin{center}
\begin{tikzpicture}[node distance = 6cm, auto,scale=1.5,every node/.style={scale=1}]
\node (VP_Ttext) {$\sigma(\lhsr)[\sigma(\rhsr')]_p$};
\node [triangleW, below = 0cm of VP_Ttext, minimum width = 2cm, minimum height = 2.35cm,scale=0.9] (VP_TTriangle) {};
\node [triangleG, below  = 0cm of VP_Ttext, minimum width = 0.85cm, minimum height = 1cm, scale=0.9] (VP_LHSTriangleT) {};
\node [below  = 0.25 cm of VP_Ttext] (VP_TLHStext) {$\lhsr$};
\node [triangleW, below  = 0.14cm of VP_TLHStext, minimum width = 1.3cm, minimum height = 1.32cm, scale=0.9] (VP_SigmaXtriangleT) {};
\node [triangleG, below  = 1.2cm of VP_Ttext, minimum width = 1cm, minimum height = 1.015cm, scale=0.9] (VP_LHSReduced) {};
\node [below  = -0.45 cm of VP_LHSReduced] (VP_RHSptext) {\tiny $\sigma(\rhsr')$};
%
%
\node [above right = 1cm and 1.8 cm of VP_Ttext] (VP_Stext) {$\sigma(\lhsr)[\sigma(\lhsr')]_p$};
\node [triangleW, below  = 0cm of VP_Stext, minimum width = 2cm, minimum height = 2.35cm, scale=0.9] (VP_Striangle) {};
\node [triangleG, below  = 0cm of VP_Stext, minimum width = 0.85cm, minimum height = 1cm, scale=0.9] (VP_LHSTriangleS) {};
\node [below  = 0.2 cm of VP_Stext] (VP_LHStextS) {$\lhsr$};
\node [triangleW, below  = 0.16cm of VP_LHStextS, minimum width = 1.3cm, minimum height = 1.35cm, scale=0.9] (VP_SigmaXtriangleS) {};
\node [above right = 1cm and 0.5 cm] at (VP_Striangle) (QinLhsr) {\tiny $q\in \Pos^\mu_x(\lhsr)$};
\node [above right = 0.5cm and 1 cm] at (VP_Striangle) (PinS) {\tiny $p\in \Pos^\mu(s)$};
\node [triangleG, below  = 1.2cm of VP_Stext, minimum width = 1cm, minimum height = 1.01cm, scale=0.9] (VP_LHSpTriangleS) {};
\node [below  = -0.45 cm of VP_LHSpTriangleS] (VP_LHSptext) {\tiny $\sigma(\lhsr')$};
%
%
\node [below right =1cm and 2.5 cm of VP_Stext] (VP_Ttext) {$\sigma(\rhsr)$};
\node [triangleW, below  = 0cm of VP_Ttext, minimum width = 3cm, minimum height = 2.25cm] (VP_Ttriangle) {};
\node [triangleG, below  = 0cm of VP_Ttext, minimum width = 1.67cm, minimum height = 1.28cm, scale=0.9] (VP_RHStTriangle) {};
\node [below  = 0.35 cm of VP_Ttext] (VP_RHStextTp) {$\rhsr$};
\node [triangleW, below  left = 0.85cm and 0.15cm of VP_RHStextTp, minimum width = 1.4cm, minimum height = 1.22cm, scale=0.9] (VP_SigmaXtriangleTp) {};
\node [triangleG, below left = 0.61cm and 0.05cm of VP_RHStTriangle, minimum width = 1cm, minimum height = 1.01cm, scale=0.9] (VP_LHSpTriangleT) {};
\node [below  = -0.45 cm of VP_LHSpTriangleT] (VP_LHSptext) {\tiny $\sigma(\lhsr')$}; 
\node [triangleW, below  right = 0.85cm and 0.13cm of VP_RHStextTp, minimum width = 1.4cm, minimum height = 1.22cm, scale=0.9] (VP_SigmaXtriangleTp) {};
\node [triangleG, below right = 0.61cm and 0.04cm of VP_RHStTriangle, minimum width = 1cm, minimum height = 1.01cm, scale=0.9] (VP_LHSpTriangleTBis) {};
\node [below  = -0.45 cm of VP_LHSpTriangleTBis] (VP_LHSptextBis) {\tiny $\sigma(\lhsr')$}; 
%
\draw[-*] ([shift={(0.2,0.2)}]QinLhsr) -- ([shift={(-0.05,0)}]VP_LHSTriangleS.south);
\draw[-*] ([shift={(0.2,0.2)}]PinS) -- ([shift={(-0.05,0)}]VP_LHSpTriangleS.north);
\draw[->] ([shift={(-0.2,0.2)}]VP_Striangle.west) -- ([shift={(0.1,0.3)}]VP_TTriangle.east);
\node[above right = 0.15cm and 0.2cm of VP_TTriangle.east]{\tiny $\alpha'$};
\draw[->] ([shift={(0.2,0.2)}]VP_Striangle.east) -- ([shift={(-0.2,0.2)}]VP_Ttriangle.west);
\node[above left = 0.08cm and 0.35cm of VP_Ttriangle.west]{\tiny $\alpha$};
\end{tikzpicture}
 \end{center}
\caption{Variable peak} 
 \label{FigVariablePeak}
 \end{figure}
Depending on the \emph{origin} of $\alpha$ and $\alpha'$ in (\ref{LblVariablePeak}), we have different kinds of variable peaks for different uses 
 (see Table \ref{TableDifferentKindsOfVariablePeaks}).
\begin{table}
\caption{Different kinds of variable peaks (\ref{LblVariablePeak})}
\begin{tabular}{cccll}
$\alpha$ & $\alpha'$ & $\gbop$  & Denomination & Use
\\
\hline
$R^\RMabbr$ & $R^\RMabbr$ & $\rew{}$ & Coherence variable peak & Property $\alpha_{\rew{\cR^\RMabbr}}$, Figure \ref{FigHuetConfluencePropertiesAlphaBetaGamma} (middle) 
\\
$R^\RMabbr$ & $\eqoriented{E}$ & $\equone{}$ &  Coherence variable peak &Property $\gamma_{\rew{\cR^\RMabbr}}$, Figure \ref{FigHuetConfluencePropertiesAlphaBetaGamma} (right) 
\\
$\eqoriented{E}$ & $R^\RMabbr$ & $\rew{}$  &  Coherence variable peak & Property $\gamma_{\rew{\cR^\RMabbr}}$, Figure \ref{FigHuetConfluencePropertiesAlphaBetaGamma} (right)
\end{tabular}
\label{TableDifferentKindsOfVariablePeaks}
\end{table}

Although Huet did not use ``variable peak'', in the realm of \trs{s} 
they were considered in Case 2a of the proof of \cite[Lemma 3.1]{Huet_ConfluentReductionsAbstractPropertiesAndApplicationsToTermRewritingSystems_JACM80}, where he showed that, in contrast to critical peaks, 
variable peaks of \trs{s} are \emph{always joinable}.
In sharp contrast, 
Dershowitz, Okada, and Sivakumar noticed that, dealing with \ctrs{s}, 
this is not true \cite[Section 2.3]{DerOkaSiv_ConfluenceOfConditionalRewriteSystems_CTRS87}.

\subsubsection{Conditional Variable Pairs}

If $\alpha$ and $\alpha'$ are rules of a \gtrs{s} $\cR$ (in particular, if $\cR$ is a \ctrs{s}), then variable peaks are characterized as instances of \emph{Conditional Variable Pairs} (CVPs \cite[Definition 59]{Lucas_LocalConfluenceOfConditionalAndGeneralizedTermRewritingSystems_JLAMP24}) of the form $\langle\lhsr[x']_q,\rhsr\rangle\IF x\rew{}x',\gencond$, where 
$x$ is an active variable of $\lhsr$,
$q$ is an active position of $x$ in $\lhsr$,  
and $x'$ is a fresh variable \cite[Proposition 32]{Lucas_LocalConfluenceOfConditionalAndGeneralizedTermRewritingSystems_JLAMP24}.
Note that reductions using $\alpha'$ are subsumed by the condition $x\rew{}x'$, expressing one-step rewriting using an arbitrary rule in $\cR$.

In this paper, though, we consider variable peaks (\ref{LblVariablePeak}) 
for rules 
$\alpha$ and $\alpha'$ 
coming from \emph{possibly different} sets of rules ($R^\RMabbr$ and $\eqoriented{E}$).
For this reason, we use an additional parameter $\gbop$ , in CVPs, representing some kind of rewriting (e.g., $\rew{\cR^\RMabbr}$ or $\rew{\eqoriented{E}}$).
Given a rule $\alpha:\lhsr\to\rhsr\IF\gencond$, 
an active variable $x\in\Var^\mu(\lhsr)$ in $\lhsr$, 
a fresh variable $x'$ (not occurring in $\alpha$),
an active position $q\in\Pos^\mu_x(\lhsr)$ of such a variable, 
and a binary predicate $\gbop$, 
a
\emph{Conditional $\gbop$-Variable Pair} 
$\cvpOf{\gbop}{\alpha,x,q}$ is as follows:\footnote{In \cite[footnote 1, page 1166]{JouKir_CompletionOfASetOfRulesModuloASetOfEquations_SIAMJC86}, Jouannaud and Kirchner use \emph{variable critical pairs} to refer to
those coming from overlaps of ``rules'' like, e.g., $x\to x+0$ (see Remark \ref{RemCoherenceCCPsFromIllDefinedRules}). However, they do not further consider them in \cite{JouKir_CompletionOfASetOfRulesModuloASetOfEquations_SIAMJC86}.
Note 
that ``variable critical pairs'' and CVPs capture rather different situations.
}
\begin{IEEEeqnarray}{r'C'l}
\cvpOf{\gbop}{\alpha,x,q}:\langle s[x']_q,t\rangle\IF x\mathrel{\gbop} x', \gencond\label{LblGenericCVP}
\end{IEEEeqnarray}
We often talk about conditional variable pairs rather than 
conditional $\gbop$-variable pairs; furthermore,
we also use $\gbop$-CVP, or just CVP if no confusion arises as an abbreviation for ``Conditional Variable Pair''.
\begin{remark}[CVPs from ill-defined rules]
If an ill-defined rule $\alpha:x\to\rhsr\IF\gencond$ is in $\eqoriented{E}$,
a CVP $\cvpOf{\gbop}{\alpha,x,\toppos}:\langle x',\rhsr\rangle\IF x\spcrel{\gbop} x',\gencond$ is obtained.
\end{remark}
Given sets of rules $U$, we let
\begin{IEEEeqnarray}{r'C'l}
\SetOf{CVP}^{\:\gbop}(U) & = & \{\cvpOf{\gbop}{\alpha,x,q} \mid \alpha:\lhsr\to\rhsr\IF\gencond \in U, x\in\Var^\mu(\lhsr), q\in\Pos^\mu_x(\lhsr)\}
\end{IEEEeqnarray}
be the set of CVPs obtained from rules in $U$.

\begin{remark}[Asymmetric role of rules in variable peaks and CVPs]
\label{RemAssymetricRoleOfRulesInCVPs}
Rules $\alpha:\lhsr\to\rhsr\IF\gencond$ and $\alpha'$ in variable peaks  (\ref{LblVariablePeak}) play different \emph{roles} to obtain CVPs: variable $x$ and position $q$ are taken from $\lhsr$, whereas $\gbop$ represents reduction with $\alpha'$.
Thus, if $\alpha$ and $\alpha'$ are taken from different sets $U$ and $U'$, then both $\SetOf{CVP}^{\:\gbop'}(U)$ and $\SetOf{CVP}^{\:\gbop}(U')$ are necessary to capture the considered variable peaks.
In particular, for variable 
peaks (\ref{LblVariablePeak}), we use 
$\SetOf{CVP}^{\:\equone{}}(R^\RMabbr)$ (if $\alpha\in R^\RMabbr$ and $\alpha'\in\eqoriented{E}$) and 
$\SetOf{CVP}^{\:\rew{}}(\eqoriented{E})$ 
(for $\alpha\in \eqoriented{E}$ and $\alpha'\in R^\RMabbr$).
\end{remark}

\subsubsection{Variable peaks as CVPs} We show that variable peaks are captured as CVPs.

\begin{proposition}[Variable peaks as CVPs]
\label{PropVariablePeaksAsCVPs}
Let 
$\cR=(\Symbols,\SPredicates,\mu,E,H,R)$ 
be an \egtrs{}.
Let 
$\alpha:\lhsr\to\rhsr\IF\gencond$ and 
$\alpha':\lhsr'\to\rhsr'\IF\gencond'$ be rules, 
and $p\notin\Pos^\mu_\Symbols(\lhsr)$ define a variable peak 
 \begin{IEEEeqnarray*}{r'C'l}
\sigma(\lhsr)[\sigma(\rhsr')]_p ~\leftrewAtPos{p}{\alpha'}   & \sigma(\lhsr)[\sigma(\lhsr')]_p & \rewAtPos{\toppos}{\alpha}~\sigma(\rhsr)
\end{IEEEeqnarray*}
 for some
substitution $\sigma$,
$q\in\Pos^\mu_x(\lhsr)$ for some $x\in\Var^\mu(\lhsr)$, 
and $p=q.p'$ for some position $p'$.
Then, $\cvpOf{\gbop}{\alpha,x,q}:\langle\lhsr[x']_q,\rhsr\rangle\IF x\spcrel{\gbop} x',\gencond$, 
where $x'$ is a fresh variable, is 
feasible. In particular,
\begin{enumerate}
\item  
$\sigma(x)\rewAtPos{p'}{\alpha'}\sigma(x)[\sigma(\rhsr')]_{p'}=\sigma(x')$  holds if $\alpha'\in R^\RMabbr$ and $\spcrel{\gbop}$ is $\rew{}$; or
\item $\sigma(x)\rewAtPos{p'}{\alpha'}\sigma(x)[\sigma(\rhsr')]_{p'}=\sigma(x')$ holds if $\alpha'\in\:\eqoriented{E}$ and $\spcrel{\gbop}$ is $\equone{}$.
\end{enumerate}
\end{proposition}

\begin{proof}
By definition of variable peak (see (\ref{LblVariablePeak})), we have 
$s=\sigma(\lhsr)=\sigma(\lhsr[x]_q)=\sigma(\lhsr)[\sigma(\lhsr')]_p$,
$t=\sigma(\lhsr)[\sigma(\rhsr')]_p$,
$t'=\sigma(\rhsr)$,
and $\sigma(\gencond)$ holds.
Hence, $\sigma(x)=s|_q\spcrel{\gbop}t|_q=\sigma(x)[\sigma(\rhsr')]_{p'}$.
Since $x'$ is a fresh variable, we can extend $\sigma$ to
$\sigma(x')=t|_q$.
Hence, $\sigma(x)\spcrel{\gbop}\sigma(x')$.
Thus, since $\sigma(\gencond)$ holds by the definition of variable peak, the feasibility of $\cvpOf{\gbop}{\alpha,x,q}$ follows. 
\end{proof}

\begin{proposition}[Feasible CVPs as variable peaks]
\label{PropFeasibleCVPsAsVariablePeaks}
Let 
$\cR$ 
be an \egtrs{}, 
$\alpha:\lhsr\to\rhsr\IF\gencond$ be a rule,
$x\in\Var^\mu(\lhsr)$, and
$q\in\Pos^\mu_x(\lhsr)$ be such that
$\cvpOf{\rew{}}{\alpha,x,q}$  
(resp.\ $\cvpOf{\equone{}}{\alpha,x,q}$) is a CVP.
Every substitution $\sigma$ satisfying its conditional part determines a variable peak 
 \begin{IEEEeqnarray*}{r'C'l}
\sigma(\lhsr)[\sigma(\rhsr')]_p ~\leftrewAtPos{p}{\alpha'}   & \sigma(\lhsr)[\sigma(\lhsr')]_p & \rewAtPos{\toppos}{\alpha}~\sigma(\rhsr)
\end{IEEEeqnarray*}
 for some $p\geq q$ and
$\alpha'\in R^\RMabbr$ (resp.\ $\alpha'\in\: \eqoriented{E}$).
\end{proposition}

\subsection{Replacement restrictions, conditional rules, and the joinability of variable peaks}
\label{SecAboutJoinabilityOfVariablePeaks}

If only unconditional rules $\lhsr\to\rhsr$ are allowed, and 
$\mu=\muTop$, then  a variable peak (\ref{LblVariablePeak}) is 
always $\joinability{\cR}$-joinable.
Let $x\in\Var(\lhsr)$ and $q\in\Pos_x(\lhsr)$ be
such that $\lhsr|_q=x$ and $q\leq p$, i.e., $p=q\cpos{}q'$ for some position $q'$. 
In general we can write $\lhsr=\lhsr[x]_{q_1}\cdots[x]_{q_m}$ 
(or  $\lhsr=\lhsr[x]_{Q_x}$ for short) for (disjoint) positions $q_i$ of $x$ in $\lhsr$ for some $m\geq 1$ and $Q_x=\{q_1,\ldots,q_m\}$
(where $q=q_i$ for some $1\leq i\leq m$).
Hence, $s=\sigma(\lhsr)[u]_{Q_x}$, for $u=\sigma(\lhsr')$ and
$t = \sigma(\lhsr)[u]_{Q_x-\{q\}}[v]_{q}$, where $v=\sigma(\rhsr')$. 
Similarly, let $P_x$ be the (possibly empty) set of positions of variable $x$ in $\rhsr$.
Thus, $t'=\sigma(\rhsr)[u]_{P_x}$ and we have the following  sequences proving $\joinability{\cR}$-joinability of $t$ and $t'$
(for \trs{s} this is observed in Case 2a in the proof of \cite[Lemma 3.1]{Huet_ConfluentReductionsAbstractPropertiesAndApplicationsToTermRewritingSystems_JACM80}):
\begin{IEEEeqnarray}{r'C'l}
t & = & \sigma(\lhsr)[u]_{Q_x-\{q\}}[v]_{q}\rews{\alpha'}\sigma(\lhsr)[v]_{Q_x}
\label{JoinabilityOfVariablePeaks_RewritingInstancesOfXinL}
\\
& \rewAtPos{\toppos}{\alpha} & \sigma(\rhsr)[v]_{P_x} = w \qquad\qquad\qquad\qquad\qquad\text{ and }
\label{JoinabilityOfVariablePeaks_AlphaAppliesToT}
\\
t' & = & \sigma(\rhsr)[u]_{P_x}\rews{\alpha'} \sigma(\rhsr)[v]_{P_x} = w
\label{JoinabilityOfVariablePeaks_RewritingInstancesOfXinR}
\end{IEEEeqnarray}

Dealing with \emph{arbitrary} \egtrs{s}, 
possibly using replacement restrictions and conditional rules, it is well-known that
such $\joinability{\cR}$-joinability may \emph{fail} to be obtained:
\begin{enumerate}
\item\label{JoinabilityOfVariablePeaks_RewritingInstancesOfXinL_fails} 
If $x$ occurs \emph{frozen} in $\lhsr$, then sequence (\ref{JoinabilityOfVariablePeaks_RewritingInstancesOfXinL}) \emph{cannot} be completed, as rewritings on frozen positions would be necessary. For \cstrs{s}, see \cite[Section 4.1]{LucVitGut_ProvingAndDisprovingConfluenceOfContextSensitiveRewriting_JLAMP22}.
\item\label{JoinabilityOfVariablePeaks_SatisfactionRuleCondition_fails}
 If $\alpha$ is a \emph{conditional rule} $\lhsr\to\rhsr\IF\gencond$, 
then the step (\ref{JoinabilityOfVariablePeaks_AlphaAppliesToT}) may \emph{fail} if $\varsigma(\gencond)$ does not hold for the matching substitution $\varsigma$ to be used with $\alpha$, where $\varsigma(x)=w$ and $\varsigma(y)=\sigma(y)$ for all $y\neq x$.  For \ctrs{s}, this is observed 
in \cite[Section 2.3]{DerOkaSiv_ConfluenceOfConditionalRewriteSystems_CTRS87}.
Finally,
\item\label{JoinabilityOfVariablePeaks_RewritingInstancesOfXinR_fails}
If $x$ occurs \emph{frozen} in $\rhsr$, then 
(\ref{JoinabilityOfVariablePeaks_RewritingInstancesOfXinR}) \emph{cannot} be completed, see again \cite[Section 4.1]{LucVitGut_ProvingAndDisprovingConfluenceOfContextSensitiveRewriting_JLAMP22}.
\end{enumerate}
Thus, in the analysis of peaks (\ref{NestedRMRPeakTermsRup}), besides CCPs to investigate $\ejoinability{\cR}$-joinability of critical peaks (\ref{LblCriticalPeak}) we also need CVPs to investigate $\ejoinability{\cR}$-joinability of variable peaks (\ref{LblVariablePeak}).

\subsection{Proving $\ejoinability{\cR}$-joinability of conditional variable pairs}
\label{SecProvingJoinabilityOfCVPs}

The following definitions establish some restrictions on the rules of an \egtrs{} which are useful to avoid some of the problems discussed in the previous section.

\begin{definition}[$\mu$-homogeneous term; left $\mu$-homogeneous rule]
Given a replacement map $\mu$, we say that a term $t$ is 
\emph{$\mu$-homogeneous} if no variable occurs 
both active and frozen in $t$, i.e., $\Var^\mu(t)\cap\NVar{\mu}(t)=\emptyset$.
A rule $\lhsr\to\rhsr\IF\gencond$ is \emph{left $\mu$-homogeneous} if $\lhsr$ is $\mu$-homogeneous.
\end{definition}
\begin{definition}[$\mu$-compatible rule]
\label{DefMuCompatibleRule}
{\rm \cite[Definition 2]{Lucas_ConfluenceOfAlmostParallelClosedGeneralizedTermRewritingSystems_CADE25}}
A rule $\lhsr\to\rhsr\IF\gencond$ is \emph{$\mu$-compatible} if 
(i) 
no active variable in $\lhsr$ is frozen in $\rhsr$, i.e., $\Var^\mu(\lhsr)\cap\NVar{\mu}(\rhsr)=\emptyset$, and
(ii) 
no active variable in $\lhsr$ occurs in $\gencond$, i.e., $\Var^\mu(\lhsr)\cap\Var(\gencond)=\emptyset$.
\end{definition}

\begin{example}
\label{ExSumListsA_MuHomogeneousAndCompatible}
All rules in $R~\cup\eqoriented{E}$ in Example \ref{ExSumListsA} are left $\mu$-homogeneous 
and 
$\mu$-compatible.
\end{example}
Left $\mu$-homogeneous rules \emph{avoid} the problem discussed in item (\ref{JoinabilityOfVariablePeaks_RewritingInstancesOfXinL_fails}) of Section \ref{SecAboutJoinabilityOfVariablePeaks}.
For $\mu$-compatible rules, the problem discussed in item (\ref{JoinabilityOfVariablePeaks_SatisfactionRuleCondition_fails}) of Section \ref{SecAboutJoinabilityOfVariablePeaks} disappears 
due to condition (ii) in Definition \ref{DefMuCompatibleRule}; 
and the problem discussed in item (\ref{JoinabilityOfVariablePeaks_RewritingInstancesOfXinR_fails}) of Section \ref{SecAboutJoinabilityOfVariablePeaks} disappears due to condition (i).
Thus, we have the following.

\begin{proposition}[$\joinability{\cR^\RMabbr}$-joinable variable peaks I]
\label{PropJoinableVariablePeaksForCriticalVariablesNotInCondition}
Let 
$\cR=(\Symbols,\SPredicates,\mu,E,H,R)$ 
be an \egtrs{}.
Let 
$\alpha:\lhsr\to\rhsr\IF\gencond,\alpha':\lhsr'\to\rhsr'\IF\gencond'\in R^\RMabbr$ be such that 
$\alpha$ is left $\mu$-homogeneous and $\mu$-compatible.
Let $p\notin\Pos^\mu_\Symbols(\lhsr)$ be such that 
 \begin{IEEEeqnarray}{r'C'l}
\sigma(\lhsr)[\sigma(\rhsr')]_p ~\leftrewAtPos{p}{\alpha'}   & \sigma(\lhsr)[\sigma(\lhsr')]_p & \rewAtPos{\toppos}{\alpha}~\sigma(\rhsr)
\label{PropJoinableVariablePeaksForCriticalVariablesNotInCondition_LblVariablePeak}
\end{IEEEeqnarray}
is  a variable peak for some
substitution $\sigma$, 
$x\in\Var^\mu(\lhsr)$, and
$q\in\Pos^\mu_x(\lhsr)$ such that $p=q.p'$ for some position $p'$.
Then, (\ref{PropJoinableVariablePeaksForCriticalVariablesNotInCondition_LblVariablePeak}) is 
$\lsjoinability{\cR^\RMabbr}$-joinable.
\end{proposition}
\begin{proof}
Since $p\in\Pos^\mu(\sigma(\lhsr))-\Pos^\mu_\Symbols(\lhsr)$, we have 
$\lhsr|_q=x$ for some $x\in\Var^\mu(\lhsr)$ and $q\in\Pos^\mu_x(\lhsr)$
such that 
$p=q\cpos{}u$ for some position $u$.
Note that $\sigma(x)=\sigma(x)[\sigma(\lhsr')]_u\rewAtPos{u}{\alpha'}\sigma(x)[\sigma(\rhsr')]_u$.
Let $Q=\{q_1,\ldots,q_n\}$ be the set of positions of occurrences of $x$ in $\lhsr$
(i.e., $q\in Q$).
Since $\lhsr$ is $\mu$-homogeneous, $Q\subseteq\Pos^\mu(\lhsr)$.
Thus, $\sigma(\lhsr)=\sigma(\lhsr)[\sigma(x)]_Q$
and we have $t = \sigma(\lhsr)[\sigma(\rhsr')]_p\rews{\alpha'}\sigma'(\lhsr)$, where
$\sigma'(x)=\sigma(x)[\sigma(\rhsr')]_u$ and $\sigma'(y)=\sigma(y)$ for all $y\neq x$.
By $\mu$-compatibility, 
$x$ does not occur in the conditional part $\gencond$ of $\alpha$.
Since $\sigma(\gencond)$ holds, we have that $\sigma'(\gencond)$ holds as well.
Thus,
 we have
$\sigma'(\lhsr)\rewAtPos{\toppos}{\alpha}\sigma'(\rhsr)$, i.e.,
$t\rews{\cR^\RMabbr}\sigma'(\rhsr)$.
We also have $t'=\sigma(\rhsr)\rews{\alpha'}\sigma'(\rhsr)$ by just reducing any possible occurrence of $\sigma(x)$ in $t'$, i.e., $\sigma(\rhsr)$, using $\alpha'$ (which is possible by $\mu$-compatibility). 
Hence, $t\rews{\alpha'}\sigma'(\lhsr)\rewAtPos{\toppos}{\alpha}\sigma'(\rhsr)\leftrews{\alpha'}\sigma(\rhsr)=t'$ proves $\lsjoinability{\cR^\RMabbr}$-joinability of (\ref{LblVariablePeak}).
\end{proof}
\begin{corollary}
\label{CoroJoinableVariablePairsForCriticalVariablesNotInCond}
Let $\cR$ 
be an \egtrs{} and 
 $\alpha\in R^\showRMabbr$.
Then, $\pi^\to_{\alpha,x,q}$ is $\joinability{\cR^\RMabbr}$-joinable if 
$\alpha$ is left $\mu$-homogeneous and $\mu$-compatible.
\end{corollary}
The following examples show that the assumptions in Proposition \ref{PropJoinableVariablePeaksForCriticalVariablesNotInCondition} cannot be dropped without risking the result.

\begin{example}
\label{Ex7_3_3_Ohl02}
The rules of the following (oriented) CTRS $\cR$ \cite[Example 7.3.3]{Ohlebusch_AdvancedTopicsInTermRewriting_2002}:
\begin{eqnarray}
\fS{a} & \to & \fS{b}\label{Ex7_3_3_Ohl02_rule1}\\
\fS{f}(x) & \to & \fS{c}\IF x\cto\fS{a}\label{Ex7_3_3_Ohl02_rule2}
\end{eqnarray}
do \emph{not} overlap.
However,
we have a non-$\joinability{\cR}$-variable peak
\begin{eqnarray*}
\fS{f}(\fS{b})\leftrewAtPos{1}{(\ref{Ex7_3_3_Ohl02_rule1})} & \fS{f}(\fS{a}) & \rewAtPos{\toppos}{(\ref{Ex7_3_3_Ohl02_rule2})}\fS{c}\label{Ex7_3_3_Ohl02_VariablePeak}
\end{eqnarray*}
as $\fS{f}(\fS{b})$ and $\fS{c}$ 
are \emph{irreducible}.
Note that rule (\ref{Ex7_3_3_Ohl02_rule2}) is \emph{not} compatible.
\end{example}

\begin{example}
Regarding the role of replacement restrictions, consider the one-rule \trs{}
\begin{IEEEeqnarray}{+r'C'l+}
\fS{f}(x,x) & \to & \fS{g}(x)\label{ExRoleRepRestrictionsVariablePeak_rule1}
\end{IEEEeqnarray}
\end{example}
\begin{itemize}
\item If $\mu(\fS{f})=\mu(\fS{g})=\{1\}$, then the following variable peak  is \emph{not} $\joinability{\cR}$-joinable:
\begin{eqnarray*}
\fS{f}(\fS{g}(x),\cfrozen{\fS{f}(x,x)})\leftrewAtPos{1}{(\ref{ExRoleRepRestrictionsVariablePeak_rule1})} & \fS{f}(\fS{f}(x,\cfrozen{x}),\cfrozen{\fS{f}(x,x)}) & \rewAtPos{\toppos}{(\ref{ExRoleRepRestrictionsVariablePeak_rule1})}\fS{g}(\fS{f}(x,\cfrozen{x}))\label{ExRoleRepRestrictionsVariablePeak_VariablePeak}
\end{eqnarray*}
as $\fS{f}(\fS{g}(x),\cfrozen{\fS{f}(x,x)})$ is irreducible and $\fS{g}(\fS{f}(x,\cfrozen{x}))$ only rewrites to $\fS{g}(\fS{g}(x))$.
 Note that (\ref{ExRoleRepRestrictionsVariablePeak_rule1}) is $\mu$-compatible but \emph{not} left $\mu$-homogeneous.
\item If $\mu(\fS{f})=\{1,2\}$ and $\mu(\fS{g})=\emptyset$, then 
\begin{eqnarray*}
\fS{f}(\fS{g}(\cfrozen{x}),\fS{f}(x,x))\leftrewAtPos{1}{(\ref{ExRoleRepRestrictionsVariablePeak_rule1})} & \fS{f}(\fS{f}(x,x),\fS{f}(x,x)) & \rewAtPos{\toppos}{(\ref{ExRoleRepRestrictionsVariablePeak_rule1})}\fS{g}(\cfrozen{\fS{f}(x,\cfrozen{x})})\label{ExRoleRepRestrictionsVariablePeak_VariablePeak}
\end{eqnarray*}
 is \emph{not} $\joinability{\cR}$-joinable, as $\fS{g}(\cfrozen{\fS{f}(x,\cfrozen{x})})$ is irreducible and $\fS{f}(\fS{g}(\cfrozen{x}),\fS{f}(x,x))$ cannot be rewritten into $\fS{g}(\cfrozen{\fS{f}(x,\cfrozen{x})})$. Now, (\ref{ExRoleRepRestrictionsVariablePeak_rule1}) is left $\mu$-homogeneous, but not $\mu$-compatible.
\end{itemize}

\begin{proposition}[$\ejoinability{\cR}$-joinable variable peaks II]
\label{PropJoinableVariableEupCPeaksForCriticalVariablesNotInCondition}
Let 
$\cR=(\Symbols,\SPredicates,\mu,E,H,R)$ 
be an \egtrs{}.
Let 
$\alpha:\lhsr\to\rhsr\IF\gencond\in\eqoriented{E}$ be 
left $\mu$-homogeneous and $\mu$-compatible and 
$\alpha':\lhsr'\to\rhsr'\IF\gencond'\in R^\RMabbr$. 
Let $p\notin\Pos^\mu_\Symbols(\lhsr)$ be such that 
 \begin{IEEEeqnarray}{r'C'l}
\sigma(\lhsr)[\sigma(\rhsr')]_p ~\leftrewAtPos{p}{\alpha'}   & \sigma(\lhsr)[\sigma(\lhsr')]_p & \rewAtPos{\toppos}{\alpha}~\sigma(\rhsr)
\label{PropJoinableVariableEupCPeaksForCriticalVariablesNotInCondition_LblVariablePeak}
\end{IEEEeqnarray}
is  a variable peak for some
substitution $\sigma$, 
$x\in\Var^\mu(\lhsr)$, and
$q\in\Pos^\mu_x(\lhsr)$ such that $p=q.p'$ for some position $p'$.
Then, (\ref{PropJoinableVariableEupCPeaksForCriticalVariablesNotInCondition_LblVariablePeak}) is 
 $\ejoinability{\cR^\RMabbr}$-joinable.
\end{proposition}
\begin{proof}
Since $p\notin\Pos^\mu_\Symbols(\lhsr)$, there is $x\in\Var^\mu(\lhsr)$
and $q\in\Pos^\mu_x(\lhsr)$ such that $\lhsr|_q=x$ and
$p=q\cpos{}u$ for some position $u$.
Note that $\sigma(x)=\sigma(x)[\sigma(\lhsr')]_u\rewAtPos{u}{\alpha'}\sigma(x)[\sigma(\rhsr')]_u$.
Let $Q=\{q_1,\ldots,q_n\}$ be the set of positions of occurrences of $x$ in $\lhsr$
(i.e., $q\in Q$).
Since $\lhsr$ is $\mu$-homogeneous, $Q\subseteq\Pos^\mu_x(\lhsr)$.
Thus, $\sigma(\lhsr)=\sigma(\lhsr)[\sigma(x)]_Q$
and we have $t = \sigma(\lhsr)[\sigma(\rhsr')]_p\rews{\alpha'}\sigma'(\lhsr)$, where
$\sigma'(x)=\sigma(x)[\sigma(\rhsr')]_u$ and $\sigma'(y)=\sigma(y)$ for all $y\neq x$.
By $\mu$-compatibility of $\alpha$, 
$x$ does not occur in its conditional part $\gencond$.
Thus, since $\sigma(\gencond)$ holds, $\sigma'(\gencond)$ holds as well.
Hence,
$\sigma'(\lhsr)\rewAtPos{\toppos}{\alpha}\sigma'(\rhsr)$. Since $\alpha\in\:\eqoriented{E}$,
$t\rews{\cR^\RMabbr}\circ\rew{\eqoriented{E}}\sigma'(\rhsr)$.
We also have $t'=\sigma(\rhsr)\rews{\alpha'}\sigma'(\rhsr)$ by just reducing any possible occurrence of $\sigma(x)$ in $t'$ using $\alpha'$ (which is possible by $\mu$-compatibility of $\alpha$), i.e.,
$t\rews{\alpha'}\sigma'(\lhsr)\equequ{E}\sigma'(\rhsr)\leftrews{\alpha'}t'$,
 thus proving $\ejoinability{\cR^\RMabbr}$-joinability of (\ref{PropJoinableVariableEupCPeaksForCriticalVariablesNotInCondition_LblVariablePeak}).
\end{proof}
\begin{corollary}
\label{CoroJoinableVariablePairsII}
Let $\cR$ 
be an \egtrs{} and 
 $\alpha\in\: \eqoriented{E}$.
Then, $\pi^\to_{\alpha,x,q}$ is $\ejoinability{\cR^\RMabbr}$-joinable if 
$\alpha$ is left $\mu$-homogeneous and $\mu$-compatible.
\end{corollary}

\begin{definition}[$\mu$-linear term; $\mu$-left-linear rule and \egtrs]
\label{DefMuLeftLinearRule}
{\rm \cite[Definition 1]{Lucas_ConfluenceOfAlmostParallelClosedGeneralizedTermRewritingSystems_CADE25}}
A term 
is \emph{$\mu$-linear} 
if no active variable in it
occurs more than once (active or not).
A rule $\lhsr\to\rhsr\IF\gencond$ is \emph{$\mu$-left-linear} if $\lhsr$ is $\mu$-linear.
An \egtrs{} $\cR=(\Symbols,\SPredicates,\mu,E,H,R)$ 
is $\mu$-left-linear if all rules in $R$ 
are.
\end{definition}
Linear terms are, of course, $\mu$-left-linear, but not vice versa.
For instance, $t=\fS{f}(x,x)$ is $\muBot$-left-linear, as no occurrence of $x$ is active in $t$ due to $\muBot(\fS{f})=\emptyset$.

\begin{proposition}[$\rsejoinability{\cR^\RMabbr,E}$- and $\rsejoinability{\cR^\RMabbr}$-joinable variable peaks III]
\label{PropJoinableVariableEdownCPeaksForCriticalVariablesNotInCondition}
Let 
$\cR=(\Symbols,\SPredicates,\mu,E,H,R)$ 
be an \egtrs{}.
Let 
$\alpha:\lhsr\to\rhsr\IF\gencond\in R^\RMabbr$ be 
left $\mu$-homogeneous and $\mu$-compatible and 
$\alpha':\lhsr'\to\rhsr'\IF\gencond'\in\:\eqoriented{E}$.
Let $p\notin\Pos^\mu_\Symbols(\lhsr)$ be such that 
 \begin{IEEEeqnarray}{r'C'l}
\sigma(\lhsr)[\sigma(\rhsr')]_p ~\leftrewAtPos{p}{\alpha'}   & \sigma(\lhsr)[\sigma(\lhsr')]_p & \rewAtPos{\toppos}{\alpha}~\sigma(\rhsr)
\label{PropJoinableVariableEdownCPeaksForCriticalVariablesNotInCondition_LblVariablePeak}
\end{IEEEeqnarray}
is  a variable peak for some
substitution $\sigma$, 
$x\in\Var^\mu(\lhsr)$, and
$q\in\Pos^\mu_x(\lhsr)$ such that $p=q.p'$ for some position $p'$.
Then, (\ref{PropJoinableVariableEdownCPeaksForCriticalVariablesNotInCondition_LblVariablePeak}) is $\rsejoinability{\cR^\RMabbr,E}$-joinable.
Furthermore, if $\alpha$ is $\mu$-left-linear,
then (\ref{PropJoinableVariableEdownCPeaksForCriticalVariablesNotInCondition_LblVariablePeak}) is 
$\rsejoinability{\cR^\RMabbr}$-joinable.
\end{proposition}
\begin{proof}
Since $p\notin\Pos^\mu_\Symbols(\lhsr)$, there is 
$x\in\Var^\mu(\lhsr)$ and $q\in\Pos^\mu_x(\lhsr)$ such that 
$\lhsr|_q=x$ and
$p=q\cpos{}u$ for some position $u$.
Note that $\sigma(x)=\sigma(x)[\sigma(\lhsr')]_u\rewAtPos{u}{\alpha'}\sigma(x)[\sigma(\rhsr')]_u$.
Let $U=\{u_1,\ldots,u_n\}$, for some $n\geq 0$, 
be the set of positions of occurrences of $x$ in $\rhsr$.
Thus, $t=\sigma(\rhsr)=\sigma(\rhsr)[\sigma(x)]_U$.
By $\mu$-compability, no occurrence of $x$ is frozen in $\rhsr$.
Thus, we have
$t\rews{\alpha'}\sigma'(\rhsr)$ for some substitution $\sigma'$ such that
$\sigma'(x)=\sigma(x)[\sigma(\rhsr')]_u$
and $\sigma'(y)=\sigma(y)$ if $y\neq x$, i.e., for each occurrence of $x$ in $\rhsr$,
$\sigma(x)$ is reduced using $\alpha'$, i.e., for all $z\in\Variables$,
$\sigma(z)\equequ{E}\sigma'(z)$.
Since $\lhsr$ is $\mu$-homogeneous, all occurrences of $x$ in $\lhsr$ are active and we have
$t'=\sigma(\lhsr)[\sigma(\rhsr')]_p\rews{\alpha'}\sigma'(\lhsr)$, i.e.,
$t'\equequ{E}\sigma'(\lhsr)$.
By $\mu$-compability, $x$ does not occur in the conditional part $\gencond$ of $\alpha$.
Since $\sigma(\gencond)$ holds, 
we have that $\sigma'(\gencond)$ holds as well.
Thus, $t\equequ{E}\sigma'(\rhsr)\leftrewAtPos{\toppos}{\alpha}\sigma'(\lhsr)\equequ{E}t'$, i.e.,
$t\equequ{E}\sigma'(\rhsr)\leftrew{\cR^\RMabbr,E}t'$.
Hence, 
(\ref{PropJoinableVariableEdownCPeaksForCriticalVariablesNotInCondition_LblVariablePeak}) is $\rsejoinability{\cR^\RMabbr,E}$-joinable.

If $\lhsr$ is $\mu$-left-linear, then $q$ is the only position of $x$ in $\lhsr$
and 
$t\equequ{E}\sigma'(\rhsr)\leftrewAtPos{\toppos}{\alpha}\sigma'(\lhsr)=t'$,
thus proving 
$\rsejoinability{\cR^\RMabbr}$-joinability
of (\ref{PropJoinableVariableEdownCPeaksForCriticalVariablesNotInCondition_LblVariablePeak}).
\end{proof}
Note that $\mu$-left-linear rules are left $\mu$-homogeneous. Thus, we have:

\begin{corollary}
\label{CoroJoinableVariablePairsIII}
Let $\cR=(\Symbols,\SPredicates,\mu,E,H,R)$ be an \egtrs{} and 
 $\alpha\in R^\RMabbr$.
Then, $\cvpOf{\equone{}}{\alpha,x,q}$ is $\ejoinability{\cR^\RMabbr}$-joinable if 
$\alpha$ is $\mu$-left-linear and $\mu$-compatible.
\end{corollary}

\section{$E$-Confluence of $\cR$ as the Church-Rosser property modulo $E$ of $\rew{\cR^\RMabbr}$}
\label{SecEConfluenceWithRandCpeaks}

Corollary \ref{CoroSufficientCriteriaForEConfluenceOfEGRSs}(\ref{CoroSufficientCriteriaForEConfluenceOfEGRSs_Huet}) shows that
$E$-confluence of \egtrs{s} $\cR=(\Symbols,\SPredicates,\mu,E,H,R)$
follows if  the $\alpha$ and $\gamma$ properties can be proved for $\rew{\cR^\RMabbr}$, using $\rew{\eqoriented{E}}$.
Section \ref{SecConditionalPairsAlphaAndGamma} shows that such properties can be guaranteed if and only if
 all critical peaks (\ref{LblCriticalPeak}) and variable peaks (\ref{LblVariablePeak}), i.e.,
 peaks
\begin{IEEEeqnarray*}{r'C'l+}
\raisebox{0.6cm}{\resizebox{6.5cm}{!}{
\xymatrix{
&  \sigma(\lhsr)[\sigma(\lhsr')]_p 
\ar@{->}[dl]_{p}^>>{\alpha'}
\ar@{->}[dr]^{\toppos}_>>{\alpha} 
\\
\sigma(\lhsr)[\sigma(\lhsr')]_p\qquad  & &  \quad\sigma(\rhsr) 
  \\
}}}
\end{IEEEeqnarray*}
where $\alpha:\lhsr\to\rhsr\IF\gencond$ and $\alpha':\lhsr'\to\rhsr'\IF\gencond'$ are rules and 
$p\in\Pos^\mu_\Symbols(\lhsr)$ for critical peaks (\ref{LblCriticalPeak}) 
and  $p\notin\Pos^\mu_\Symbols(\lhsr)$ for variable peaks (\ref{LblVariablePeak}),  are $\ejoinability{\cR^\RMabbr}$-joinable.
Propositions \ref{PropCriticalPeaksAsCCPs} and \ref{PropVariablePeaksAsCVPs} establish that this can be done by means of appropriate (feasible) CCPs and CVPs.
Now, according to Remarks \ref{RemAssymetricRoleOfRulesInCCPs} and \ref{RemAssymetricRoleOfRulesInCVPs}, 
we consider the following sets of CCPs and CVPs.
\begin{definition}
\label{DefSetsOfCCPsAndCVPsHuet}
Given an \egtrs{} $\cR=(\Symbols,\SPredicates,\mu,E,H,R)$, we consider 
\begin{enumerate}
\item Local confluence conditional pairs:
\begin{enumerate}
\item $\CCP(\cR)$ is the subset of \emph{feasible} CCPs from $\SetOf{CCP}(R^\RMabbr,R^\RMabbr)$.
\item $\CVPofR(\cR)$ is the subset of \emph{feasible} CVPs from $\SetOf{CVP}^\rew{}(R^\RMabbr)$.
\end{enumerate}
\item Local coherence conditional pairs:
\begin{enumerate}
\item $\CCP(\cR,E)$  is the subset 
of \emph{feasible} CCPs from 
$\SetOf{CCP}(R^\RMabbr,\eqoriented{E})$.
\item $\CCP(E,\cR)$  is the subset 
of \emph{feasible} \emph{non-root} CCPs from 
$\SetOf{CCP}(\eqoriented{E},R^\RMabbr)$, being root CCPs 
$\ccpOf{\alpha,\toppos,\alpha'}$ for $\alpha\in\:\eqoriented{E}$ and $\alpha'\in R^\RMabbr$ already considered in 
$\CCP(\cR,E)$ as $\ccpOf{\alpha',\toppos,\alpha}$.
\item $\CVPofR(E)$ is the subset of \emph{feasible} CVPs from $\SetOf{CVP}^\rew{}(\eqoriented{E})$.
\item $\CVPofEqOne(\cR)$) is the subset of \emph{feasible} CVPs from $\SetOf{CVP}^{\:\equone{}}(R^\RMabbr)$.
\end{enumerate}
\end{enumerate}
\end{definition}
Table \ref{TableConditionalPairsForPeaksHuet} summarizes their relationship with previously considered local confluence and  local coherence peaks.

\begin{remark}[Refining $\CCP(\cR)$]
\label{RemRefiningCCPR}
In this paper, 
we check CCPs $\langle s,t\rangle\IF\gencond$ from $\CCP(\cR)$ for  
$\ejoinability{\cR^\RMabbr}$-joinability (to prove $E$-confluence) or non-$\ejoinability{\cR/E}$-joinability (to disprove it), in both cases treating $s$ and $t$ \emph{symmetrically}.
Thus, in $\CCP(\cR)$,
\begin{enumerate}
\item we take only one between the two root CCPs 
$\ccpOf{\alpha,\toppos,\alpha'}$ or $\ccpOf{\alpha',\toppos,\alpha}$ for rules $\alpha$ 
and $\alpha'$.
Also, 
\item we dismiss all \emph{trivial} CCPs $\langle t,t\rangle\IF\gencond$, in particular iCCPs $\iccpOf{\alpha}$ of 2-rules $\alpha\in R^\RMabbr$.
\end{enumerate}
\end{remark}
Properties $\alpha$ and $\gamma$
of $\rew{\cR^\RMabbr}$ for \egtrs{s} $\cR$ (see Figure \ref{FigHuetConfluencePropertiesAlphaBetaGamma}) are characterized by the $\ejoinability{\cR}$-joinability of CCPs and CVPs as follows.
\begin{table}
\caption{Conditional pairs for local confluence and local coherence peaks}
\begin{center}
\begin{tabular}{lclccc}
Property & Peak & Ref.\ & Rule & Rule' & Cond.\ Pairs 
\\
\hline
\\[-0.3cm]
$\fS{LCON}_E(\rew{\cR^\RMabbr},\rew{\cR^\RMabbr})$ & Confluence critical peak & (\ref{LblCriticalPeak}) 
& $R^\RMabbr$ & $R^\RMabbr$ 
& $\CCP(\cR)$ 
 \\[0.1cm]
& Confluence variable peak & (\ref{LblVariablePeak}) 
& $R^\RMabbr$ & $R^\RMabbr$ 
& $\CVPofR(\cR)$ 
\\[0.1cm]
\hline
$\fS{LCOH}_E(\rew{\cR^\RMabbr})$ & Coherence critical peak &  (\ref{LblCriticalPeak}) 
 & $\eqoriented{E}$ & $R^\RMabbr$ 
& $\CCP(E,\cR)$ 
\\[0.1cm]
& Coherence variable peak &  (\ref{LblVariablePeak}) 
 & $\eqoriented{E}$ & $R^\RMabbr$ 
& $\CVPofR(E)$ 
\\[0.1cm]
& Coherence critical peak &  (\ref{LblCriticalPeak}) 
 & $R^\RMabbr$ & $\eqoriented{E}$ 
 & $\CCP(\cR,E)$ 
\\[0.1cm]
& Coherence variable peak &  (\ref{LblVariablePeak}) 
 & $R^\RMabbr$ & $\eqoriented{E}$ 
& $\CVPofEqOne(\cR)$ 
 \\
 \hline
\end{tabular}
\end{center}
\label{TableConditionalPairsForPeaksHuet}
\end{table}
\begin{proposition}
\label{PropEGTRSsAndPropertiesAlphaAndGamma}
Let $\cR=(\Symbols,\SPredicates,\mu,E,H,R)$ be an \egtrs{}.
Then, 
\begin{enumerate}
\item\label{PropEGTRSsAndPropertiesAlphaAndGamma_Alpha} 
$\cR$ has property $\alpha_{\rew{\cR^\RMabbr}}$ iff 
all conditional pairs in $\CCP(\cR)\cup\CVPofR(\cR)$ are $\ejoinability{\cR}$-joinable. 
\item\label{PropEGTRSsAndPropertiesAlphaAndGamma_Gamma}
 $\cR$ has property $\gamma_{\rew{\cR^\RMabbr}}$ iff 
all conditional pairs in $\CCP(\cR,E)\cup\CCP(E,\cR)\cup\CVPofR(E)\cup\CVPofEqOne(\cR)$ are $\ejoinability{\cR}$-joinable. 
\end{enumerate}
\end{proposition}

\begin{proof}
\begin{enumerate}
\item Here we consider r-peaks (\ref{GenericRPeakTerms}) with $\alpha,\alpha'\in R^\RMabbr$.
(\emph{If} part) Disjoint peaks are joinable as shown above.
Regarding critical peaks (\ref{LblCriticalPeak}), by Proposition \ref{PropCriticalPeaksAsCCPs}, the joinability of all CCPs in $\CCP(\cR)$ implies joinability of critical peaks. 
Regarding variable peaks (\ref{LblVariablePeak}), by Proposition \ref{PropVariablePeaksAsCVPs}, the joinability of all feasible CVPs in 
$\CVPofR(\cR)$ implies joinability of critical peaks. 
Thus, Property $\alpha$ holds.

(\emph{Only if} part) By contradiction. Assume that property $\alpha$ holds but there is $\langle t,t'\rangle\IF\gencond\in\CCP(\cR)\cup\CVPofR(\cR)$ 
which is not $\ejoinability{\cR}$-joinable.
Then, there is a substitution $\sigma$ satisfying the conditional part of $\pi$ but such that $\sigma(t)$ and $\sigma(t')$ are not $\ejoinability{\cR}$-joinable.
Thus, there is a critical or variable peak with $s=\sigma(\lhsr)$ for some rule $\lhsr\to\rhsr\IF\gencond\in R^\RMabbr$ which is not $\ejoinability{\cR}$-joinable, contradicting Property $\alpha$.
\item
Here we consider c-peaks (\ref{GenericRPeakTerms}),
where either 
(i) $\alpha\in R^\RMabbr$ and $\alpha'\in\:\eqoriented{E}$
or
(ii) $\alpha\in\:\eqoriented{E}$ and $\alpha'\in R^\RMabbr$.
As for the \emph{if} part, regarding critical peaks, in case (i),  Proposition \ref{PropCriticalPeaksAsCCPs} guarantees that joinability of all CCPs in  $\CCP(\cR,E)$ implies joinability of the corresponding critical peaks (and joinability of CCPs in  $\CCP(E,\cR)$ for case (ii)). %
Regarding variable peaks, by Proposition \ref{PropVariablePeaksAsCVPs}, the joinability of all CVPs in 
$\CVPofEqOne(\cR)$ implies joinability of variable peaks in case (i)
and joinability of all CVPs in 
$\CVPofR(E)$ implies joinability of variable peaks in case (ii). 
Thus, Property $\gamma$ holds.
The \emph{only if} part is similar to that of $\alpha$ property.
\end{enumerate}
\end{proof}

Now we can provide our first criteria for proving and disproving 
$E$-confluence of \egtrs{s}, which uses CCPs and CVPs 
together with Huet's abstract  results.
\begin{theorem}
\label{TheoEConfluenceOfETerminatingEGTRSsHuet}
Let $\cR=(\Symbols,\SPredicates,\mu,E,H,R)$ be an \egtrs{}.
\begin{enumerate}
\item\label{TheoEConfluenceOfETerminatingEGTRSsHuet_EConfluence} 
If $\cR$ is $E$-terminating, then it is $E$-confluent if all pairs in 
\[\CCP(\cR)\cup\CVPofR(\cR)\cup\CCP(E,\cR)\cup\CCP(\cR,E)
\cup
\CVPofR(E)\cup\CVPofEqOne(\cR)
\]
are $\ejoinability{\cR^\RMabbr}$-joinable.

\item\label{TheoEConfluenceOfETerminatingEGTRSsHuet_NonEConfluent} 
If there is a \emph{non}-$\ejoinability{\cR/E}$-joinable $\pi\in\CCP(\cR)\cup\CVPofR(\cR)$, then $\cR$ is not $E$-confluent.
\end{enumerate}
\end{theorem}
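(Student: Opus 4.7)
The plan is to reduce both parts to abstract results already established in the paper, using the correspondence between conditional pairs and the various classes of nested peaks identified in Sections~\ref{SecAnalysisOfRandCpeaks} and~\ref{SecCPeaks}. The key tools are Corollary~\ref{CoroSufficientCriteriaForEConfluenceOfEGRSs}(\ref{CoroSufficientCriteriaForEConfluenceOfEGRSs_Huet}) (sufficiency of Huet's properties $\alpha$ and $\gamma$ for $E$-confluence of $E$-terminating \egtrs{s}), the characterizations of $\alpha$ and $\gamma$ in Propositions~\ref{PropEGTRSsAndPropertyAlpha} and~\ref{PropEGTRSsAndPropertyGamma}, and Corollary~\ref{CoroDisprovingEConcludenceByNonREJoinableLocalConfluencePeaks} (disproving $E$-confluence via non-joinable peaks).

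For part~(\ref{TheoEConfluenceOfETerminatingEGTRSsHuet_EConfluence}), I would argue as follows. By Proposition~\ref{PropEGTRSsAndPropertyAlpha}, the $\ejoinability{\cR}$-joinability of every pair in $\CCP(\cR)\cup\CVPofR(\cR)$ is equivalent to $\cR$ having property $\alpha$ with respect to $\rew{\cR^\RMabbr}$. Similarly, by Proposition~\ref{PropEGTRSsAndPropertyGamma}, the $\ejoinability{\cR}$-joinability of every pair in $\CCP(E,\cR)\cup\CCP(\cR,E)\cup\CVPofR(E)\cup\CVPofEqOne(\cR)$ is equivalent to $\cR$ having property $\gamma$ with respect to $\rew{\cR^\RMabbr}$ and $\equone{E}$. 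Combined with the hypothesis of $E$-termination, Corollary~\ref{CoroSufficientCriteriaForEConfluenceOfEGRSs}(\ref{CoroSufficientCriteriaForEConfluenceOfEGRSs_Huet}) yields $E$-confluence of $\cR$. This part is essentially an application of existing characterizations, so the work reduces to invoking them correctly.

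For part~(\ref{TheoEConfluenceOfETerminatingEGTRSsHuet_NonEConfluent}), the plan is to show that a non-$\ejoinability{\cR/E}$-joinable conditional pair $\pi\in\CCP(\cR)\cup\CVPofR(\cR)$ yields a concrete non-$\ejoinability{\cR/E}$-joinable $\peakOf{\cR^\RMabbr}{\cR^\RMabbr}$-peak on \emph{terms}, i.e., a non-joinable r-peak of the form~(\ref{LblRMRPeakTerms}). By Definition~\ref{DefJoinabilityOfConditionalPairs}, non-$\ejoinability{\cR/E}$-joinability of $\pi:\langle s,t\rangle\IF\gencond$ means there exists a substitution $\sigma$ satisfying $\sigma(\gencond)$ such that $\sigma(s)$ and $\sigma(t)$ are not $\ejoinability{\cR/E}$-joinable. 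The analysis of Section~\ref{SecConditionalPairsForNestedRpeaks} (critical r-peaks matching CCPs, variable r-peaks matching CVPs) shows that such a $\sigma$ produces an actual r-peak $\sigma(s)\leftarrowWith{\cR^\RMabbr}u\rew{\cR^\RMabbr}\sigma(t)$ for some $u$, which is a $\peakOf{\cR^\RMabbr}{\cR^\RMabbr}$-peak and hence also a $\peakOf{\cR^\RMabbr}{\cR^\RMabbr}$-peak in the sense of Corollary~\ref{CoroDisprovingEConcludenceByNonREJoinableLocalConfluencePeaks} (with the parametric relation $\genrelationUpE$ instantiated as $\rew{\cR^\RMabbr}$, which satisfies the fundamental assumption~(\ref{LblFundamentalAssumptionJK86}) trivially). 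Since this peak is not $\ejoinability{\cR/E}$-joinable, Corollary~\ref{CoroDisprovingEConcludenceByNonREJoinableLocalConfluencePeaks} concludes that $\cR$ is not $E$-confluent.

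The main delicate point is ensuring, in part~(\ref{TheoEConfluenceOfETerminatingEGTRSsHuet_NonEConfluent}), that the instantiation $\sigma$ of a conditional pair indeed produces a \emph{genuine} r-peak in $\cR$: for CCPs this requires checking that both rewrite steps at the overlapping positions are legitimate $\rew{\cR^\RMabbr}$-steps, which follows from feasibility of the conditional part and the structure of CCPs; for CVPs the construction is analogous but uses the positions of the critical variable. This is essentially a re-reading of the constructions in Section~\ref{SecConditionalPairsForNestedRpeaks} in the reverse direction (from conditional pairs to peaks rather than from peaks to conditional pairs), and no new technical obstacle arises beyond the book-keeping already done there.
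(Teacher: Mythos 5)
Your proposal is correct and follows essentially the same route as the paper: part~(1) via Propositions~\ref{PropEGTRSsAndPropertyAlpha} and~\ref{PropEGTRSsAndPropertyGamma} combined with the Huet-style sufficient condition for $E$-terminating systems (your Corollary~\ref{CoroSufficientCriteriaForEConfluenceOfEGRSs}(\ref{CoroSufficientCriteriaForEConfluenceOfEGRSs_Huet}) is just the \egtrs{}-specialization of the paper's Corollary~\ref{CoroEConfluence_Huet80}), and part~(2) by instantiating a non-joinable pair into a concrete non-$\ejoinability{\cR/E}$-joinable $\peakOf{\cR^\RMabbr}{\cR^\RMabbr}$-peak and invoking Corollary~\ref{CoroDisprovingEConcludenceByNonREJoinableLocalConfluencePeaks}. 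Your explicit treatment of the reverse direction (from conditional pairs back to genuine peaks) in part~(2) is a sound elaboration of what the paper compresses into the phrase that the pairs ``represent'' all such peaks.
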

\begin{proof}
(1) By Proposition \ref{PropEGTRSsAndPropertiesAlphaAndGamma}, $\rew{\cR^\RMabbr}$ 
satisfies properties $\alpha$ and $\gamma$. 
By Corollary \ref{CoroEConfluence_Huet80}, $\cR$ is $E$-confluent.
(2) By Propositions \ref{PropFeasibleCCPsAsCriticalPeaks} and
\ref{PropFeasibleCVPsAsVariablePeaks}, the existence of a non-$\ejoinability{\cR/E}$-joinable $\pi\in\CCP(\cR)\cup\CVPofR(\cR)$ implies non-$\ejoinability{\cR/E}$-joinability of a 
peak
(\ref{GenericRPeakTerms}). By Corollary \ref{CoroDisprovingEConcludenceByNonREJoinableLocalConfluencePeaks}, $\cR$ is not $E$-confluent.
\end{proof}
\begin{example}\label{ExEvenEq}
Consider the \egtrs{} $\cR=(\Symbols,\SPredicates,\mu,E,H,R)$
in Example \ref{ExEvenEq_EGTRS}, i.e.,

\noindent
\begin{tabular}{cc}
\begin{minipage}{0.485\textwidth}
\begin{IEEEeqnarray*}{r'C'l+x*}
\fS{s}(\fS{s}(x)) &\ceq  & x  \IF x\geq\fS{s}(\fS{0}) & \eqref{ExEvenEq_eq1}
\\
x & \geq & 0 & \eqref{ExEvenEq_clause1}
\\
\fS{s}(x) & \geq & \fS{s}(y) \IF x\geq y & \eqref{ExEvenEq_clause2}
\end{IEEEeqnarray*}
\end{minipage}
&
\begin{minipage}{0.485\textwidth}
\begin{IEEEeqnarray*}{r'C'l+x*}
\fS{test}(x) & \to & \fS{pev}(x)\IF x\ceq\fS{s}(\fS{s}(\fS{0}))\qquad & \eqref{ExEvenEq_rule1}
\\
\fS{test}(x) & \to & \fS{odd}(x)\IF x\ceq\fS{s}(\fS{0}) & \eqref{ExEvenEq_rule2}
\\
\fS{test}(x) & \to & \fS{zero}(x)\IF x\ceq\fS{0} & \eqref{ExEvenEq_rule3}
\end{IEEEeqnarray*}
\end{minipage}
\end{tabular}

\medskip
\noindent
Its conditional pairs are shown in Table \ref{TableExEvenEq_ConditionalPairs}. 
\begin{table}
\caption{CCPs and CVPs for $\cR$ in Example \ref{ExEvenEq_EGTRS}}
\begin{IEEEeqnarray}{r'C'l}
\ccpOf{(\ref{ExEvenEq_rule1}),\toppos,(\ref{ExEvenEq_rule2})}: 
&& \langle\fS{odd}(x),\fS{pev}(x)\rangle\IF x\ceq\fS{s}(\fS{s}(\fS{0})),x\ceq\fS{s}(\fS{0})\label{ExEvenEq_CCP1}
\\
\ccpOf{(\ref{ExEvenEq_rule1}),\toppos,(\ref{ExEvenEq_rule3})}: 
&& \langle\fS{zero}(x),\fS{pev}(x)\rangle\IF x\ceq\fS{s}(\fS{s}(\fS{0})),x\ceq\fS{0}\label{ExEvenEq_CCP2}
\\
\ccpOf{(\ref{ExEvenEq_rule2}),\toppos,(\ref{ExEvenEq_rule3})}: 
&& \langle\fS{zero}(x),\fS{odd}(x)\rangle\IF x\ceq\fS{s}(\fS{0}),x\ceq\fS{0}\label{ExEvenEq_CCP3}
\\
\cvpROf{(\ref{ExEvenEq_rule1}),x,1}: 
&& \langle\fS{test}(x'),\fS{pev}(x)\rangle\IF x\rew{}x',x\ceq\fS{s}(\fS{s}(\fS{0}))\label{ExEvenEq_CVP1}
\\
\cvpROf{(\ref{ExEvenEq_rule2}),x,1}: 
&& \langle\fS{test}(x'),\fS{odd}(x)\rangle\IF x\rew{}x',x\ceq\fS{s}(\fS{0})\label{ExEvenEq_CVP2}
\\
\cvpROf{(\ref{ExEvenEq_rule3}),x,1}: 
&& \langle\fS{test}(x'),\fS{zero}(x)\rangle\IF x\rew{}x',x\ceq\fS{0}\label{ExEvenEq_CVP3}
\\
\ccpOf{(\ref{ExEvenEq_rule1}),\toppos,\overleftarrow{(\ref{ExEvenEq_eq1})}}:
&& \langle\fS{s}(\fS{s}(x)),\fS{pev}(x)\rangle \IF x\ceq\fS{s}(\fS{s}(\fS{0})),\fS{test}(x)\geq\fS{s}(\fS{0})\label{ExEvenEq_RECCP1}
\\
\ccpOf{(\ref{ExEvenEq_rule2}),\toppos,\overleftarrow{(\ref{ExEvenEq_eq1})}}:
&& \langle\fS{s}(\fS{s}(x)),\fS{odd}(x)\rangle \IF x\ceq\fS{s}(\fS{0}),\fS{test}(x)\geq\fS{s}(\fS{0})\label{ExEvenEq_RECCP2}
\\
\ccpOf{(\ref{ExEvenEq_rule3}),\toppos,\overleftarrow{(\ref{ExEvenEq_eq1})}}:
&& \langle\fS{s}(\fS{s}(x)),\fS{zero}(x)\rangle \IF x\ceq\fS{0},\fS{test}(x)\geq\fS{s}(\fS{0})\label{ExEvenEq_RECCP3}
\\
\cvpROf{\overrightarrow{(\ref{ExEvenEq_eq1})},x,1.1}: 
&& \langle\fS{s}(\fS{s}(x')),x\rangle\IF x\rew{}x',x\geq\fS{s}(\fS{0})\label{ExEvenEq_ECVP1}
\\
\cvpROf{\overleftarrow{(\ref{ExEvenEq_eq1})},x,1.1}: 
&& \langle x',\fS{s}(\fS{s}(x))\rangle\IF x\rew{}x',x\geq\fS{s}(\fS{0})\label{ExEvenEq_ECVP2}
\\
\cvpEqOneOf{(\ref{ExEvenEq_rule1}),x,1}: 
&& \langle\fS{test}(x'),\fS{pev}(x)\rangle\IF x\equone{}x',x\ceq\fS{s}(\fS{s}(\fS{0}))\label{ExEvenEq_RECVP1}
\\
\cvpEqOneOf{(\ref{ExEvenEq_rule2}),x,1}: 
&& \langle\fS{test}(x'),\fS{odd}(x)\rangle\IF x\equone{}x',x\ceq\fS{s}(\fS{0})\label{ExEvenEq_RECVP2}
\\
\cvpEqOneOf{(\ref{ExEvenEq_rule3}),x,1}: 
&& \langle\fS{test}(x'),\fS{zero}(x)\rangle\IF x\equone{}x',x\ceq\fS{0}\label{ExEvenEq_RECVP3}
\end{IEEEeqnarray}
\label{TableExEvenEq_ConditionalPairs}
\end{table}
We have:
\begin{itemize}
\item $\CCP(\cR)=\emptyset$,
(\ref{ExEvenEq_CCP1}), (\ref{ExEvenEq_CCP2}), and (\ref{ExEvenEq_CCP3}) are \emph{infeasible} and
all rules in $R$ are 2-rules.
\item $\CVPofR(\cR)=\emptyset$, as 
(\ref{ExEvenEq_CVP1}),(\ref{ExEvenEq_CVP2}), and (\ref{ExEvenEq_CVP3}) are all \emph{infeasible}: terms $s$ which can be rewritten with $\rew{\cR^\RMabbr}$ cannot be proved equivalent to 
$\fS{s}(\fS{s}(\fS{0}))$, $\fS{s}(\fS{0})$ or $\fS{0}$, 
i.e., $s\ceq\fS{s}(\fS{s}(\fS{0}))$, $s\ceq\fS{s}(\fS{0})$, and $s\ceq\fS{0}$ cannot be proved. 
Indeed, if $s$ can be rewritten with $\rew{\cR^\RMabbr}$, at least a subterm $\fS{test}(s')$ is inside $s$ for some term $s'$. 
Since symbols $\fS{test}$ are not removed but just replaced by $\fS{pev}$, $\fS{odd}$, or $\fS{zero}$, no $\equequ{E}$-match with a term 
$\fS{s}^n(\fS{0})$ is possible for any $n\geq 0$. 
\item $\CCP(\cR,E)=\emptyset$,
as (\ref{ExEvenEq_RECCP1}),(\ref{ExEvenEq_RECCP2}), and (\ref{ExEvenEq_RECCP3})
are all infeasible: condition $\fS{test}(x)\geq\fS{s}(\fS{0})$, occurring in all of them, is infeasible.
\item $\CCP(E,\cR)=\emptyset$, as no rule in $R$ overlaps a rule in $\eqoriented{E}$.
\item $\CVPofR(E)=\emptyset$ as (\ref{ExEvenEq_ECVP1}) and (\ref{ExEvenEq_ECVP2}) are \emph{infeasible}: terms $s$ which can be rewritten with $\rew{\cR^\RMabbr}$ cannot be compared with $\fS{s}(\fS{0})$ using $\geq$, i.e., $s\geq\fS{s}(\fS{0})$ cannot be proved.
\item $\CVPofEqOne(\cR)=\{(\ref{ExEvenEq_RECVP1}),(\ref{ExEvenEq_RECVP2}),(\ref{ExEvenEq_RECVP3})\}$, consists of
$\ejoinability{\cR}$-joinable CVPs.
For instance, consider (\ref{ExEvenEq_RECVP1}): if $s\equone{E}t$ holds for some terms $s$ and $t$, then either 
\begin{enumerate}
\item $s\rew{\overrightarrow{(\ref{ExEvenEq_eq1})}}t$ (hence $s=\fS{s}(\fS{s}(t))$)
and $t\geq\fS{s}(\fS{0})$ holds; furthermore, 
$s\equequ{E}\fS{s}(\fS{s}(\fS{0}))$ also holds.
This implies that $s=\fS{s}^{2n}(\fS{0})$ for some $n> 1$
and $t=\fS{s}^{2n-2}(\fS{0})$.
Since $n>1$, we have $2n-2\geq 2$ and
$\fS{test}(t)\rew{(\ref{ExEvenEq_rule1})}\fS{pev}(t)\equone{E}\fS{pev}(s)$, i.e., (\ref{ExEvenEq_RECVP1}) is 
$\ejoinability{\cR}$-joinable.
\item Dually, if $s\rew{\overleftarrow{(\ref{ExEvenEq_eq1})}}t$ (hence $t=\fS{s}(\fS{s}(s))$)
and $s\geq\fS{s}(\fS{0})$ holds; furthermore, 
$s\equequ{E}\fS{s}(\fS{s}(\fS{0}))$ also holds.
This implies that $s=\fS{s}^{2n}(\fS{0})$ for some $n>0$
and $t=\fS{s}^{2n+2}(\fS{0})$.
Since $n\geq 0$, $2n+2\geq 2$ and
$\fS{test}(t)\rew{(\ref{ExEvenEq_rule1})}\fS{pev}(t)\equone{E}\fS{pev}(s)$, i.e., (\ref{ExEvenEq_RECVP1}) is 
$\ejoinability{\cR}$-joinable.
\end{enumerate}
For (\ref{ExEvenEq_RECVP2}) and (\ref{ExEvenEq_RECVP3}), we 
reason similarly
to conclude their $\ejoinability{\cR}$-joinability.
\end{itemize}
Overall, Theorem \ref{TheoEConfluenceOfETerminatingEGTRSsHuet}(\ref{TheoEConfluenceOfETerminatingEGTRSsHuet_EConfluence}) proves $E$-confluence of $\cR$ (which is clearly $E$-terminating).
\end{example}

\begin{corollary}
\label{CoroEConfluenceOfETerminatingLeftLinearEGTRSsHuet}
Let $\cR=(\Symbols,\SPredicates,\mu,E,H,R)$ be an \egtrs{}
such that $R^\RMabbr$ is $\mu$-left-linear and $\mu$-compatible
and $\eqoriented{E}$ is left $\mu$-homogeneous and $\mu$-compatible.
If $\cR$ is $E$-terminating and all pairs in 
$\CCP(\cR) 
\cup\CCP(E,\cR)\cup\CCP(\cR,E)
$
are $\ejoinability{\cR}$-joinable, then $\cR$ is $E$-confluent.
\end{corollary}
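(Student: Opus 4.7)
The plan is to apply Theorem~\ref{TheoEConfluenceOfETerminatingEGTRSsHuet}(\ref{TheoEConfluenceOfETerminatingEGTRSsHuet_EConfluence}) after verifying that, under the stated syntactic hypotheses, the three families of conditional variable pairs $\CVPofR(\cR)$, $\CVPofR(E)$, and $\CVPofEqOne(\cR)$ are automatically $\ejoinability{\cR}$-joinable, so that the joinability hypothesis only needs to be imposed on the CCPs in $\CCP(\cR) \cup \CCP(E,\cR) \cup \CCP(\cR,E)$.

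First, I would establish the useful observation that if $\lhsr$ is $\mu$-linear in the sense of Definition~\ref{DefMuLeftLinearRule}, then $\lhsr$ is also left-$\mu$-homogeneous: indeed, any variable $x\in\Var^\mu(\lhsr)$ occurs by $\mu$-linearity only once in $\lhsr$, so it cannot also occur at a frozen position; conversely, variables that never appear actively are not in $\Var^\mu(\lhsr)$. Hence the hypothesis on $R^\RMabbr$ provides both $\mu$-left-linearity and left-$\mu$-homogeneity, together with $\mu$-compatibility.

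Next, I discharge the three CVP families in turn. For $\CVPofR(\cR)$: each $\cvpROf{\alpha,x,p}$ with $\alpha\in R^\RMabbr$ is $\lsjoinability{\cR^\RMabbr}$-joinable by Corollary~\ref{CoroJoinableVariablePairsForCriticalVariablesNotInCond}, since $R^\RMabbr$ is (now) left-$\mu$-homogeneous and $\mu$-compatible; and $\lsjoinability{\cR^\RMabbr} \subseteq \ejoinability{\cR^\RMabbr}$ by Proposition~\ref{PropInclusionOfRelationsAndInclusionOfJoins} (taking the implicit equivalence to include equality). For $\CVPofR(E)$: each $\pi^\to_{\beta,x,p}$ with $\beta\in\eqoriented{E}$ is $\ejoinability{\cR^\RMabbr}$-joinable by Corollary~\ref{CoroJoinableVariablePairsForECoherencePeaks}(\ref{CoroJoinableVariablePairsForECoherencePeaks_CVPto}), applying the hypothesis that $\eqoriented{E}$ is left-$\mu$-homogeneous and $\mu$-compatible. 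For $\CVPofEqOne(\cR)$: each $\cvpEqOneOf{\alpha,x,p}$ with $\alpha\in R^\RMabbr$ is $\lsejoinability{\cR^\RMabbr}$-joinable by the strengthened conclusion of Corollary~\ref{CoroJoinableVariablePairsForECoherencePeaks}(\ref{CoroJoinableVariablePairsForECoherencePeaks_CVPeq}), using $\mu$-left-linearity of $\alpha$; again this implies $\ejoinability{\cR^\RMabbr}$-joinability.

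Combining these three facts with the assumed $\ejoinability{\cR}$-joinability of the CCPs in $\CCP(\cR) \cup \CCP(E,\cR) \cup \CCP(\cR,E)$, every conditional pair appearing in the hypothesis of Theorem~\ref{TheoEConfluenceOfETerminatingEGTRSsHuet}(\ref{TheoEConfluenceOfETerminatingEGTRSsHuet_EConfluence}) is $\ejoinability{\cR}$-joinable, and $E$-confluence of $\cR$ follows. The only mildly delicate step is the preliminary implication from $\mu$-left-linearity to left-$\mu$-homogeneity, which is needed to bridge the gap between the hypothesis stated on $R^\RMabbr$ and the hypothesis required by Corollary~\ref{CoroJoinableVariablePairsForCriticalVariablesNotInCond}; the rest is a direct packaging of earlier results.
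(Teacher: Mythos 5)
Your proof is correct and follows essentially the same route as the paper's own argument: observe that $\mu$-left-linearity of $R^{\mathit{rm}}$ yields left-$\mu$-homogeneity, discharge $\CVPofR(\cR)$ via Corollary~\ref{CoroJoinableVariablePairsForCriticalVariablesNotInCond}, $\CVPofR(E)$ via Corollary~\ref{CoroJoinableVariablePairsForECoherencePeaks}(\ref{CoroJoinableVariablePairsForECoherencePeaks_CVPto}), and $\CVPofEqOne(\cR)$ via Corollary~\ref{CoroJoinableVariablePairsForECoherencePeaks}(\ref{CoroJoinableVariablePairsForECoherencePeaks_CVPeq}), then invoke Theorem~\ref{TheoEConfluenceOfETerminatingEGTRSsHuet}(\ref{TheoEConfluenceOfETerminatingEGTRSsHuet_EConfluence}). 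The only difference is that you spell out the implication from $\mu$-left-linearity to left-$\mu$-homogeneity and the inclusion of strict joinability into joinability modulo, which the paper leaves implicit.
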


\begin{proof}
Due to $\mu$-left-linearity 
(which implies $\mu$-homogeneity) 
and $\mu$-compatibility 
of rules in $R^\RMabbr$, 
by Corollary \ref{CoroJoinableVariablePairsForCriticalVariablesNotInCond}
and Corollary \ref{CoroJoinableVariablePairsIII}, the CVPs in $\CVPofR(\cR)\cup\CVPofEqOne(\cR)$ are $\ejoinability{\cR^\RMabbr}$-joinable.
Due to $\mu$-left-homogeneity 
and $\mu$-compatibility 
of rules in $\eqoriented{E}$, by Corollary \ref{CoroJoinableVariablePairsII}, the CVPs in $\CVPofR(E)$ are $\ejoinability{\cR^\RMabbr}$-joinable.
By Theorem \ref{TheoEConfluenceOfETerminatingEGTRSsHuet}(\ref{TheoEConfluenceOfETerminatingEGTRSsHuet_EConfluence}), $\cR$ is $E$-confluent.
\end{proof}
Corollary \ref{CoroEConfluenceOfETerminatingLeftLinearEGTRSsHuet} does \emph{not} apply to Example \ref{ExEvenEq}: rules are \emph{not} $\mu$-compatible because variable $x$ in the left-hand sides is active (since $\mu=\muTop$) and occurs in the conditional part of rules.
\begin{example}
\label{ExSumListsA_CVPtoR}
\label{ExSumListsA_EConfluent}
For $\cR$ in Example \ref{ExSumListsA},
$\eqoriented{E}$ and
 $R$
are $\mu$-left-linear and $\mu$-compatible. The only pCCP in
$\SetOf{CCP}(R^\RMabbr,R^\RMabbr)$ is
\begin{eqnarray*}
\ccpOf{(\ref{ExSumListsA_rule3}),\toppos,(\ref{ExSumListsA_rule4})}: && 
\hspace{-0.5cm}\langle m+n,n'\rangle\IF 
ms\cto_\showRMabbr n', \fS{Nat}(n'), ms\cto_\showRMabbr m \fS{\seqnat} \cfrozen{ns}, 
\fS{Nat}(m),
\fS{sum}(\cfrozen{ns})\cto_\showRMabbr n~~~\label{ExSumListsA_CCP_R_1}
\end{eqnarray*}
which
is \emph{infeasible}: assume that a substitution 
$\sigma$ satisfies the conditional part of 
(\ref{ExSumListsA_CCP_R_1}):
if $\sigma(ms)\cto_\showRMabbr\sigma(n')$ and
$\fS{Nat}(\sigma(n'))$ hold, then $\sigma(ms)\rews{\cR/E}\sigma(n')$ for
$\sigma(n')=\fS{s}^p(\fS{0})$ for some $p\geq 0$.
Hence, $\sigma(ms)$ is \emph{not} 
$\rews{\cR/E}$-reducible to 
$\sigma(m)\fS{\seqnat} \cfrozen{\sigma(ns)}$.
 There also are two iCCPs:
\begin{IEEEeqnarray}{r'C'l}
\iccpOf{(\ref{ExSumListsA_rule3})}: &&
\hspace{-0.5cm}\langle n',n\rangle\IF 
m\cto_\showRMabbr n, \fS{Nat}(n),
m\cto_\showRMabbr n', \fS{Nat}(n'),
~~\label{ExSumListsA_iCCP_R_1}
\\
\iccpOf{(\ref{ExSumListsA_rule4})}: &&
\hspace{-0.5cm}\langle m+n,m'+n'\rangle\IF 
ms\cto_\showRMabbr m \fS{\seqnat} \cfrozen{ns}, 
ms\cto_\showRMabbr m' \fS{\seqnat} \cfrozen{ns'}, 
\fS{Nat}(m),  \nonumber
\\
&& \hspace{2.85cm}\fS{Nat}(m'),\fS{sum}(\cfrozen{ns})\cto_\showRMabbr n,
\fS{sum}(\cfrozen{ns'})\cto_\showRMabbr n'~~\label{ExSumListsA_iCCP_R_2}
\end{IEEEeqnarray}
Thus, $\CCP(\cR)=\{(\ref{ExSumListsA_iCCP_R_1}),(\ref{ExSumListsA_iCCP_R_2})\}$.
However, (\ref{ExSumListsA_iCCP_R_1}) and (\ref{ExSumListsA_iCCP_R_2}) are $\ejoinability{\cR^\RMabbr}$-joinable.
This can be proved by induction on the structure of $\sigma(m)$ (resp.\ $\sigma(ms)$) for all
all substitutions $\sigma$ satisfying the conditional part of (\ref{ExSumListsA_iCCP_R_1}) (resp.\ (\ref{ExSumListsA_iCCP_R_2})).
We have $\CCP(E,\cR)=\emptyset$, as no rule in $R^\showRMabbr$ overlaps a rule in $\eqoriented{E}$. 
Also, no rule in $\eqoriented{E}$ overlaps a rule in $R^\showRMabbr$, i.e., $\CCP(\cR,E)=\emptyset$.
The system can be seen $E$-terminating. By Corollary \ref{CoroEConfluenceOfETerminatingLeftLinearEGTRSsHuet},  $\cR$ is $E$-confluent.
\end{example}
\begin{remark}[Replacement restrictions may ease E-confluence proofs]
The use of replacement restrictions in Example \ref{ExSumListsA} is crucial to obtain the proof as done in Example \ref{ExSumListsA_EConfluent} using the techniques described in this section.
For instance, if no replacement restriction is imposed, i.e., $\mu=\muTop$, then
rules (\ref{ExSumListsA_rule3}) and (\ref{ExSumListsA_rule4}) are \emph{not} compatible, as, e.g., variable $m$ (resp.\ $ms$) occurs
in the conditional part of rule (\ref{ExSumListsA_rule3}) (resp.\ (\ref{ExSumListsA_rule4})) and $\ejoinability{\cR}$-joinability of 
$\cvpOf{\rew{}}{(\ref{ExSumListsA_rule3}),m,1}$ and
$\cvpOf{\rew{}}{(\ref{ExSumListsA_rule4}),ms,1}$
does \emph{not} follow from Corollary \ref{CoroJoinableVariablePairsForCriticalVariablesNotInCond} and should be proved.
Since Example \ref{ExSumListsA} is a slightly adapted version of 
\cite[Example 1]{Lucas_ConfluenceOfConditionalRewritingModulo_CSL24}, where replacement restrictions were \emph{not} used, the interested reader can check the proof of $E$-confluence in \cite[Example 50 \& Appendix]{Lucas_ConfluenceOfConditionalRewritingModulo_CSL24} to compare the cases with and without replacement restrictions. 
\end{remark}

\subsection{Limitations of proving $E$-confluence as $\ejoinability{\cR^\RMabbr}$-joinability of peaks (\ref{GenericRPeakTerms}).}
\label{SecLimitationsOfHuetApproach}
Theorem \ref{TheoEConfluenceOfETerminatingEGTRSsHuet}, 
has limitations which are exemplified by means of two examples as follows.
\begin{enumerate}
\item 
\emph{$E$-confluence} of the \etrs{} $\cR$ in Example \ref{ExHuet80_RemarkPage818}, i.e.,
\begin{IEEEeqnarray*}{r'C'l+x*}
\fS{a} & = & \fS{b} & \eqref{ExHuet80_RemarkPage818_eq1}
\\
\fS{f}(x,x) & \to & \fS{g}(x)& \eqref{ExHuet80_RemarkPage818_rule1}
\end{IEEEeqnarray*}
 \emph{cannot} be proved by using Theorem \ref{TheoEConfluenceOfETerminatingEGTRSsHuet}(\ref{TheoEConfluenceOfETerminatingEGTRSsHuet_EConfluence}), 
as the following coherence CVP 
\begin{IEEEeqnarray}{r'C'l}
\cvpEqOneOf{(\ref{ExHuet80_RemarkPage818_rule1}),x,1} : && \langle\fS{f}(x',x),\fS{g}(x)\rangle \IF x\equone{}x'\label{ExHuet80_RemarkPage818_ECVP1}
\end{IEEEeqnarray}
is \emph{not}  
$\ejoinability{\cR^\RMabbr}$-joinable: $\sigma=\{x\mapsto\fS{a},x'\mapsto\fS{b}\}$ 
satisfies its conditional part, but
$\sigma(\fS{f}(x',x)=\fS{f}(\fS{b},\fS{a})$
and $\sigma(\fS{g}(x))=\fS{g}(\fS{a})$ are $\rew{\cR^\RMabbr}$-irreducible 
and not $\equequ{E}$-equivalent.
\item
Non-$E$-confluence of $\cR$ in Example \ref{ExPeakNoCPs}, i.e.,
\[
\begin{array}{@{\qquad\quad}rcl@{\hspace{2.8cm}}c@{\hspace{4cm}}rcl@{\hspace{2.35cm}}c}
\fS{b} & = &\fS{f}(\fS{a}) 
&
(\ref{ExPeakNoCPs_eq1})
&
\fS{c} & \to & \fS{d}
&
(\ref{ExPeakNoCPs_rule1})
\\
\fS{a} & = & \fS{c}
&
(\ref{ExPeakNoCPs_eq2})
&
\fS{b} & \to & \fS{d}
&
(\ref{ExPeakNoCPs_rule2})
\end{array}
\]
as the non-$\ejoinability{\cR/E}$-joinable peak $\fS{d}\leftrew{\cR,E}\fS{f}(\fS{c})\rew{\cR}\fS{f}(\fS{d})$ shows,
\emph{cannot}  be proved by Theorem \ref{TheoEConfluenceOfETerminatingEGTRSsHuet}(\ref{TheoEConfluenceOfETerminatingEGTRSsHuet_NonEConfluent}):
$\CCP(\cR)=\emptyset$ (rules do not overlap), 
and 
$\CVPofR(\cR)=\emptyset$ (rules are ground).
\end{enumerate}
\medskip
\noindent
The use of Jouannaud and Kirchner's notions of 
\emph{local confluence modulo $E$ of $\rew{\cR,E}$ with $\rew{\cR}$} 
and
\emph{local coherence modulo $E$ of $\rew{\cR,E}$} 
permits to obtain $E$-confluence criteria exploiting $\ejoinability{\cR,E}$-joinability of conditional pairs.
In the following, we develop this approach,  which permits to solve the two aforementioned problems:
\begin{enumerate}
\item 
$E$-confluence of $\cR$ in Example \ref{ExHuet80_RemarkPage818} is proved 
in Example \ref{ExHuet80_RemarkPage818_EConfluence}.
\item
Non-$E$-confluence of $\cR$ in Example \ref{ExPeakNoCPs} is proved in Example \ref{ExPeakNoCPs_notEconfluent}.
\end{enumerate}

\section{Conditional pairs for non-disjoint local $E$-confluence and $E$-coherence
peaks 
}
\label{SecConditionalPairsForNonDisjointPeaksJK}

As explained in Section \ref{SecLocalConfluenceAndCoherencePeaksForJK86},
in this section we investigate $\joinability{\cR^\RMabbr,E}$-joinability 
of (disjoint) peaks (\ref{GenericPSPeakTerms}), i.e.,
\begin{IEEEeqnarray*}{r'C'l+x*}
\raisebox{0.6cm}{
\xymatrix{
&  s 
\ar@{->}[dl]_{p_1}^>>{\alpha_1,E}
\ar@{->}[dr]^{p_2}_>>{\alpha_2} 
\\
t_1  & &  t_2 
  \\
}}
& & &
\eqref{GenericPSPeakTerms}
\end{IEEEeqnarray*}
to prove 
(i) local confluence modulo $E$ of $\rew{\cR,E}$ with $\rew{\cR}$
and
(ii) local coherence modulo $E$ of $\rew{\cR,E}$, 
thus obtaining a new criterion for $E$-confluence of \egtrs{s}.
We proceed as in Section \ref{SecConditionalPairsAlphaAndGamma}, but with some important differences.
First, note that $\alpha_1$ always belongs to $R^\RMabbr$, as it is used in a Peterson \& Stickel rewriting step.
Still, as explained in Section \ref{SecLocalConfluenceAndCoherencePeaksForJK86}, 
depending on the origin of rule $\alpha_2$ in 
(\ref{GenericPSPeakTerms}), 
if $\alpha_2$ belongs to $R^\RMabbr$, then (\ref{GenericPSPeakTerms}) is a local \emph{$E$-confluence} peak;
if  $\alpha_2$ belongs to $\eqoriented{E}$,
then (\ref{GenericPSPeakTerms}) is a 
local \emph{$E$-coherence} peak.
Now,
\begin{enumerate}
\item 
If $p_1\geq p_2$, then $p_1=p_2.p$ for some position $p$ and
rewriting with $\rew{\eqoriented{E}}$ or $\rew{\cR^\RMabbr}$ occurs \emph{above or on} the $\rew{\cR^\RMabbr,E}$-step, i.e.,(\ref{GenericPSPeakTerms}) can be written as follows:
\begin{IEEEeqnarray}{r'C'l}
\underbrace{\sigma(\lhsr)[\sigma(\rhsr')]_p}_{t} ~\leftrewAtPos{p}{\alpha',E}   & \underbrace{\sigma(\lhsr)[w]_p}_s & \rewAtPos{\toppos}{\alpha}~\underbrace{\sigma(\rhsr)}_{t'}
\label{NestedPSRupPeak}
\end{IEEEeqnarray}
where $\alpha:\lhsr\to\rhsr\IF\gencond\in \eqoriented{E}\:\cup\:R^\RMabbr$ is such that $\sigma(\gencond)$ holds,
and $w=_E\sigma(\lhsr')$ and $p\geq\toppos$
for some $\alpha':\lhsr'\to\rhsr'\IF\gencond'\in R^{\RMabbr}$ such that $\sigma(\gencond')$ holds.
\item If $p_1\leq p_2$, then $p_2=p_1.p$ for some position $p$  and rewriting with $\rew{\eqoriented{E}}$ or $\rew{\cR^\RMabbr}$ occurs strictly \emph{below} the $\rew{\cR^\RMabbr,E}$-step, i.e.,
then (\ref{GenericPSPeakTerms}) can be written as follows:
\begin{IEEEeqnarray}{r'C'l}
\underbrace{\sigma(\rhsr')}_{t}~\leftrewAtPos{\toppos}{\alpha',E}   & \underbrace{w[\sigma(\lhsr)]_p}_{w} & \rewAtPos{p}{\alpha}~\underbrace{w[\sigma(\rhsr)]_p}_{t'}
\label{LblPreEDownPeak}
\end{IEEEeqnarray}
where $\alpha$, $\alpha'$ and $\sigma$ are as in the previous item and $w=w[\sigma(\lhsr)]_p=_E\sigma(\lhsr')$ for some $p>\toppos$, 
as $p=\toppos$ is considered in (\ref{NestedPSRupPeak}).
\end{enumerate}
 Notice that $\alpha'$ belongs to $R^\RMabbr$. 
  \begin{definition}[$E$-confluence and $E$-coherence peaks]
 Consider rules $\alpha'\in R^\RMabbr$ and $\alpha$ in (\ref{NestedPSRupPeak}) and (\ref{LblPreEDownPeak}).
 \begin{enumerate}
\item If $\alpha$ belongs to $R^\RMabbr$ 
then (\ref{NestedPSRupPeak}) (resp.\ (\ref{LblPreEDownPeak})) is called an \emph{$E$-confluence} peak.
\item If  $\alpha$ belongs to $\eqoriented{E}$,
then
(\ref{NestedPSRupPeak}) (resp.\ (\ref{LblPreEDownPeak})) is called an \emph{$E$-coherence} peak.
\end{enumerate}
\end{definition}
 We have the following special cases for $E$-coherence peaks:
 
 \begin{proposition}
 \label{PropJoinabilityOfEdownECohPeaks}
 Let $\cR=(\Symbols,\SPredicates,\mu,E,H,R)$ be an \egtrs{}. 
\begin{enumerate}
\item 
\label{PropJoinabilityOfEdownECohPeaks_Eroot}
Peaks (\ref{NestedPSRupPeak}) 
such that $\alpha\in\:\eqoriented{E}$ and $p=\toppos$ are 
$\rsjoinability{\cR^\RMabbr,E}$-joinable.
\item  
\label{PropJoinabilityOfEdownECohPeaks_Edown}
Peaks (\ref{LblPreEDownPeak}) 
such that $\alpha\in\:\eqoriented{E}$ 
are 
$\rsjoinability{\cR^\RMabbr,E}$-joinable.
\end{enumerate}
\end{proposition}
\begin{proof}
\begin{enumerate}
\item 
Since
$p=\toppos$ and 
(i) $s=w=\sigma(\lhsr)$ in (\ref{NestedPSRupPeak})
and
(ii) $\alpha\in\eqoriented{E}$, we have
$t'=\sigma(\rhsr)\:\equequ{E}\sigma(\lhsr)\rew{\alpha',E}t$,
as required.
\item 
For peaks (\ref{LblPreEDownPeak}), since $\alpha\in\:\eqoriented{E}$,
and $w\rewAtPos{p}{\alpha}t'$, we have $t'\equequ{E}w 
=_E \sigma(\lhsr')\rewAtPos{\toppos}{\alpha}\sigma(\rhsr')=t$, i.e., $t'\rew{\alpha',E}t$,
as required. 
\end{enumerate}
\end{proof}
Depending on 
position $p$ in (\ref{NestedPSRupPeak}), two families of peaks are distinguished.

 \begin{definition}[$E$-critical and $E$-variable peaks]
 Consider $\alpha$, $\alpha'$, and $p$ in (\ref{NestedPSRupPeak}).
 \begin{enumerate}
 \item If $p\in\Pos^\mu_\Symbols(\lhsr)$, then rule $\alpha'$  \emph{$E$-overlaps} rule $\alpha$ and (\ref{NestedPSRupPeak}) is called an \emph{$E$-critical peak}.
 \item If $p\notin\Pos^\mu_\Symbols(\lhsr)$, then $\alpha'$  \emph{does not $E$-overlap} $\alpha$ and (\ref{NestedPSRupPeak}) is called an \emph{$E$-variable peak}.
 \end{enumerate}
 \end{definition}
 Note that $E$-critical and $E$-variable peaks (\ref{NestedPSRupPeak}) 
 occur when analyzing \emph{both}
 $E$-confluence and $E$-coherence properties.
 For the sake of readability, though, in the following we will drop ``$E$-'' from ``$E$-confluence'' and ``$E$-coherence'' to just talk about, e.g., confluence $E$-critical peaks, coherence $E$-variable peaks, etc.
 
Sections \ref{SecECCPsAndLCTRS} and
\ref{SecCVPsForEvariablePeaks} show how these $E$-critical and $E$-variable peaks are captured by different conditional pairs
 obtained from rules in $R^\RMabbr$ and/or $\eqoriented{E}$.
  Peaks (\ref{LblPreEDownPeak}) do \emph{not} represent $E$-overlaps or the lack of $E$-overlaps as such and are treated in Section \ref{SecEDownPeaks}.

\subsection{E-critical confluence and coherence peaks and logic-based conditional critical pairs}
\label{SecECCPsAndLCTRS}
Note that $\sigma(\lhsr)|_p=\sigma(\lhsr|_p)=w=_E\sigma(\lhsr')$, in 
(\ref{NestedPSRupPeak}), i.e., $\lhsr|_p$ and $\lhsr'$ \emph{$E$-unify}.
If $\alpha'$ \emph{$E$-overlaps} $\alpha$, 
i.e., $p\in\Pos^\mu_\Symbols(\lhsr)$, then 
we obtain an  \emph{E-critical peak}\footnote{We follow here the terminology for $E$-critical pairs in \cite[Section 3.4, page 1167]{JouKir_CompletionOfASetOfRulesModuloASetOfEquations_SIAMJC86}.}
\begin{IEEEeqnarray}{r'C'l}
\underbrace{\sigma(\lhsr)[\sigma(\rhsr')]_p}_{t} ~\leftrewAtPos{p}{\alpha',E}   & \underbrace{\sigma(\lhsr)[w]_p}_s & \rewAtPos{\toppos}{\alpha}~\underbrace{\sigma(\rhsr)}_{t'}\qquad\qquad\text{where }p\in\Pos^\mu_\Symbols(\lhsr)\label{LblECriticalPeak}
\end{IEEEeqnarray}
see Figure \ref{FigECriticalPeak}. 
\begin{figure}[t] 
 \begin{center}
\begin{tikzpicture}[node distance = 6cm, auto,scale=1.5,every node/.style={scale=1}]
\node (CP_Ttext) {$\sigma(\lhsr)[\sigma(\rhsr')]_p$};
\node [triangleW, below  = 0cm of CP_Ttext, minimum width = 2cm, minimum height = 2.35cm, scale=0.9] (CP_Ttriangle) {};
\node [triangleG, below  = 0cm of CP_Ttext, minimum width = 1cm, minimum height = 1.2cm, scale=0.9] (CP_LHSTriangleT) {};
\node [below  = 0.2 cm of CP_Ttext] (CP_LHStextT) {$\lhsr$};
\node [triangleW, below  = 0.05cm of CP_LHStextT, minimum width = 1.3cm, minimum height = 1.48cm, scale=0.9] (CP_SigmaXtriangleT) {};
\node [triangleG, below  = 0.05cm of CP_LHStextT, minimum width = 0.5cm, minimum height = 0.5cm, scale=0.9] (CP_RHSpTriangleT) {};
\node [below  = -0.45 cm of CP_RHSpTriangleT] (CP_RHSptext) {\tiny $\rhsr'$};
%
%
\node [above right = 1cm and 1.8 cm of CP_Ttext] (CP_Stext) {$\sigma(\lhsr)[w]_p$};
\node [triangleW, below  = 0cm of CP_Stext, minimum width = 2cm, minimum height = 2.35cm, scale=0.9] (CP_Striangle) {};
\node [triangleG, below  = 0cm of CP_Stext, minimum width = 1cm, minimum height = 1.2cm, scale=0.9] (CP_LHSTriangleS) {};
\node [below  = 0.2 cm of CP_Stext] (CP_LHStextS) {$\lhsr$};
\node [triangleW, below  = 0.05cm of CP_LHStextS, minimum width = 1.3cm, minimum height = 1.48cm, scale=0.9] (CP_SigmaXtriangleS) {};
\node [above right = 1cm and 1 cm] at (CP_Striangle) (PinS) {\tiny $p\in \Pos^\mu_\Symbols(\lhsr)$};
\node [below  = -0.9 cm of CP_SigmaXtriangleS] (CP_Wtext) {\tiny $w$};
%
%
\node [below right =1cm and 3 cm of CP_Stext] (CP_TtextTp) {$\sigma(\rhsr)$};
\node [triangleW, below  = 0cm of CP_TtextTp, minimum width = 3cm, minimum height = 2.25cm] (CP_Tptriangle) {};
\node [triangleG, below  = 0cm of CP_TtextTp, minimum width = 1.67cm, minimum height = 1.28cm, scale=0.9] (CP_RHStTriangle) {};
\node [below  = 0.35 cm of CP_TtextTp] (CP_RHStextTp) {$\rhsr$};
\draw[-*] ([shift={(0.2,0.2)}]PinS) -- ([shift={(-0.05,0)}]CP_SigmaXtriangleS.north);
\draw[->] ([shift={(-0.2,0.2)}]CP_Striangle.west) -- ([shift={(0.1,0.3)}]CP_Ttriangle.east);
\node[above right = 0.08cm and 0.2cm of CP_Ttriangle.east]{\tiny $\alpha',E$};
\draw[->] ([shift={(0.2,0.2)}]CP_Striangle.east) -- ([shift={(-0.2,0.2)}]CP_Tptriangle.west);
\node[above left = 0.01cm and 0.35cm of CP_Tptriangle.west]{\tiny $\alpha$};
\end{tikzpicture}
 \end{center}
 \caption{E-critical peak (where $w\equequ{E}\sigma(\lhsr')$)}
 \label{FigECriticalPeak}
 \end{figure}
 Depending on the \emph{origin} of rule $\alpha$ in (\ref{LblECriticalPeak}), we have different kinds of critical peaks for different uses
 (see Table \ref{TableDifferentKindsOfECriticalPeaks}).

\begin{table}
\caption{Different kinds of E-critical peaks (\ref{LblECriticalPeak})}
\begin{tabular}{ccll}
$\alpha$ & $\alpha'$  & Denomination & Use
\\
\hline
$R^\RMabbr$ & $R^\RMabbr$  & Confluence $E$-Critical peak
& $\fS{LCON}_E(\rew{\cR^\RMabbr,E},\rew{\cR^\RMabbr})$
\\
$\eqoriented{E}$ & $R^\RMabbr$  & Coherence $E$-Critical peak
& 
$\fS{LCOH}_E(\rew{\cR^\RMabbr,E})$
\end{tabular}
\label{TableDifferentKindsOfECriticalPeaks}
\end{table}

\subsubsection{Conditional $E$-Critical Pairs}
\label{SecECCPs}
Peaks (\ref{LblECriticalPeak}) involve the use of \emph{$E$-unifiers} in the computation of appropriate critical pairs by replacing syntactic unifiers by $E$-unifiers $\theta$
leading to the notion of \emph{$E$-critical pairs (ECPs)} for \etrs{s}, first introduced by Peterson \& Stickel \cite{StiPet_CompleteSetsOfReductionsForEquationalTheoriesWithCompleteUnificationAlgorithms_TR77} and
\cite[Definition 9.2]{PetSti_CompleteSetsOfReductionsForSomeEquationalTheories_JACM81} and then coined as ECPs in \cite[Definition 10]{Jouannaud_ConfluentAndCoherentEquationalTermRewritingSystemsApplicationToProofsInAbstractDataTypes_CAAP83} and 
\cite[Definition 12]{JouKir_CompletionOfASetOfRulesModuloASetOfEquations_SIAMJC86}.
Given terms $s$ and $t$, $\fS{CSU}_E(s,t)$ denotes a \emph{complete} (possibly infinite) set of $E$-unifiers of $s$ and $t$, i.e., for each $E$-unifier $\sigma$ of $s$ and $t$, there is 
$\theta\in\fS{CSU}_E(s,t)$ such that $\theta(s)\equequ{E}\theta(t)$ 
and a substitution $\tau$ such that 
for all $x\in\Variables$, 
$\sigma(x)\equequ{E}\tau(\theta(x))$
\cite[Definition 10.1.4]{BaaNip_TermRewritingAndAllThat_1998}.
\begin{definition}[Conditional E-critical pairs]
\label{DefECCP}
\text{c.f.\
\cite[Definition 6]{DurMes_OnTheChurchRosserAndCoherencePropertiesOfConditionalOrderSortedRewriteTheories_JLAP12}}
Given variable disjoint  rules $\alpha:\lhsr\to\rhsr\IF\gencond$ and $\alpha':\lhsr'\to\rhsr'\IF\gencond'$, 
$p\in\Pos^\mu_\Symbols(s)$, 
and $\theta\in\fS{CSU}_E(\lhsr|_p,\lhsr')$,
the \emph{Conditional $E$-Critical Pair}  (ECCP) 
$\eccpOf{\alpha,p,\alpha',\theta}$ is
\begin{IEEEeqnarray}{r'C'l}
\langle\theta(\lhsr)[\theta(\rhsr')]_p,\theta(\rhsr)\rangle\IF\theta(\gencond),\theta(\gencond')\label{LblGenericConditionalECriticalPair}
\end{IEEEeqnarray}
\end{definition}
Of course, in order to obtain each $\theta$ above, an appropriate $E$-unification algorithm is necessary.  
However, such an algorithm does not always exist.
This problem was addressed by several authors, in particular for associative and/or commutative unification \cite{Plotkin_BuildingInEquationalTheories_MI72,%
Slagle_AutomatedTheoremProvingForTheoriesWithSimplifiersCommutativityAndAssociativity_JACM74,Stickel_ACompleteUnificationAlgorithmForAssociativeCommutativeFunctions_IJCAI75,%
Stickel_AUnificationAlgorithmForAssociativeCommutativeFunctions_JACM81},
and recently in the realm of \Maude{} for associative unification \cite{Eker_AssociativeUnificationInMaude_JLAMP22}.
Stickel \cite{Stickel_AUnificationAlgorithmForAssociativeCommutativeFunctions_JACM81} introduced a unification algorithm for associative and commutative (AC) theories which is proved complete for a particular class of expressions incorporating associative and commutative functions \cite[Theorem 3]{Stickel_AUnificationAlgorithmForAssociativeCommutativeFunctions_JACM81}.
Fages proved termination of the algorithm \cite{Fages_AssociativeCommutativeUnification_CADE84}.
Other algorithms for AC unification have been introduced since then, see \cite{BaaSny_UnificationTheory_HAR01} for a wider account. 
However, in the realm of \egtrs{s}, $E$-unification algorithms would greatly benefit from taking into account the conditional part of rules involved in the computation of ECCPs.
\begin{example}\label{ExInnermostArithmetic}
The following \egtrs{}

\noindent
\begin{tabular}{@{\hspace{-0.5cm}}cc}
\begin{minipage}{7cm}
\begin{IEEEeqnarray}{r'C'l}
x + (y+z) & = & (x+y)+z\label{ExInnermostArithmetic_eq1}
\\
\fS{Nat}(\fS{0}) && \label{ExInnermostArithmetic_horn1}
\\
\fS{Nat}(\fS{s}(n)) & \IF & \fS{Nat}(n)\label{ExInnermostArithmetic_horn2}
\end{IEEEeqnarray}
\end{minipage}
&
\begin{minipage}{8.5cm}
\begin{IEEEeqnarray}{r'C'l}
x+\fS{0} & \to &x\IF\fS{Nat}(x) \label{ExInnermostArithmetic_rule1}
\\
\fS{0}+x & \to &x\IF\fS{Nat}(x) \label{ExInnermostArithmetic_rule2}
\\
\fS{s}(x)+y & \to &\fS{s}(x+y)\IF\fS{Nat}(x),\fS{Nat}(y)\qquad\label{ExInnermostArithmetic_rule3}
\end{IEEEeqnarray}
\end{minipage}
\end{tabular}

\medskip
\noindent
encodes \emph{eager} evaluation of addition, as rules apply only on completely evaluated arguments of the form $\fS{s}^n(\fS{0})$ for some $n\geq 0$.
Mimicking Stickel's example \cite[page 71]{Stickel_ACompleteUnificationAlgorithmForAssociativeCommutativeFunctions_IJCAI75}, 
$x+\fS{0}$ and $\fS{s}(x')+y'$ have infinitely many associative unifiers
$\theta_0=\{x\mapsto\fS{s}(x'), y'\mapsto\fS{0}\}$,
$\theta_1=\{x\mapsto\fS{s}(x')+\fS{0}, y'\mapsto\fS{0}+\fS{0}\}$, \ldots
According to Definition \ref{DefECCP}
there are 
\emph{infinitely many} ECCPs:
\[
\begin{array}{rl}
\eccpOf{(\ref{ExInnermostArithmetic_rule1}),\toppos,(\ref{ExInnermostArithmetic_rule3}),\theta_0}: & \langle\fS{s}(x'+\fS{0}),\fS{s}(x')\rangle\IF\fS{Nat}(\fS{s}(x')),\fS{Nat}(x'),\fS{Nat}(\fS{0})
\\
\eccpOf{(\ref{ExInnermostArithmetic_rule1}),\toppos,(\ref{ExInnermostArithmetic_rule3}),\theta_1}: & \langle\fS{s}(x'+(\fS{0}+\fS{0})),\fS{s}(x')+\fS{0}\rangle\IF\fS{Nat}(\fS{s}(x')+\fS{0}),\fS{Nat}(x'),\fS{Nat}(\fS{0}+\fS{0})
\\
& \vdots
\end{array}
\]
However, if conditions in rules 
would be used to \emph{constrain} the computation of 
$E$-unifiers, then only $\theta_0$ above leads to a feasible (and $\ejoinability{\cR^\RMabbr}$-joinable) ECCP $\eccpOf{(\ref{ExInnermostArithmetic_rule1}),\toppos,(\ref{ExInnermostArithmetic_rule3}),\theta_0}$.
\end{example}
Algorithms for constrained unification have been investigated in, e.g., \cite{KajFujKas_SolvingAUnificationProblemUnderConstrainedSubstitutionsUsingTreeAutomata_JSC97}.
Also, research on order-sorted unification \cite{MesGogSmo_OrderSortedUnification_JSC89} is connected with this problem.

\subsubsection{Logic-Based Conditional Critical Pairs}
\label{SecLCCPs}

In order to avoid the computation of possibly infinitely many $E$-unifiers of two terms, 
we introduce the following.

\begin{definition}[Logic-based conditional critical pair]
Let $\alpha:\lhsr\to\rhsr\IF\gencond$ 
and
$\alpha':\lhsr\to\rhsr\IF\gencond'$ be rules
sharing no variables,  
together with an active non-variable position $p\in\Pos^\mu_\Symbols(\lhsr)$.
The \emph{Logic-based Conditional Critical Pair} 
(LCCP for short) 
$\pi_{\alpha,p,\alpha'}$ of $\alpha$ at position $p$ with $\alpha'$ 
is:
\begin{IEEEeqnarray}{r'C'l}
\lccpOf{\alpha,p,\alpha'}:\langle \lhsr[\rhsr']_p,\rhsr\rangle\IF \lhsr|_p\equequ{} \lhsr', \gencond,\gencond'\label{LblLogicBaseDCP}
\end{IEEEeqnarray}
\end{definition}
We avoid the \emph{explicit} use of $E$-unifiers
by including the \emph{$E$-overlapping test} as an atom
$\lhsr|_p=\lhsr'$ in the conditional part of (\ref{LblLogicBaseDCP}).
We often  remove from the conditional part of LCCPs trivial equational components $t=t$ introduced in this way.
\begin{remark}[Terminology]
In contrast to ECCPs above,  
LCCPs do not explicitly use $E$-unifiers for the $E$-unification problem $\lhsr|_p =\lhsr'$ corresponding to targeted $E$-critical peaks. Still, they include an explicit representation of the $E$-unification problem in the conditional part, hence we keep ``critical'' as part of the denomination of these conditional pairs which are intended to capture $E$-critical peaks.
However, we use ``\emph{logic-based}'' to avoid confusion with \emph{$E$-critical pairs} for \etrs{s}
\cite[Definition 12]{JouKir_CompletionOfASetOfRulesModuloASetOfEquations_SIAMJC86}
and with ECCPs above.
\end{remark}
In sharp contrast to CCPs (see Proposition \ref{PropJoinabilityOfImproperCriticalPeakssIn2CTRSs}), 
improper LCCPs $\ilccpOf{\alpha}:\langle\rhsr',\rhsr\rangle\IF\lhsr=\lhsr',\gencond,\gencond'$ of 2-rules 
$\alpha:\lhsr\to\rhsr\IF\gencond\in R^\RMabbr$, with renamed version $\alpha':\lhsr'\to\rhsr'\IF\gencond'$ are \emph{not} trivial: terms $\rhsr$ and $\rhsr'$ are identical but only \emph{up to renaming}.
Furthermore, in sharp contrast to the unequational case (i.e., \gtrs{s}), 
unconditional rules may define improper LCCPs which are \emph{not} $\ejoinability{\cR/E}$-joinable.\footnote{This observation and the example below are due to an anonymous referee.}

\begin{example}\label{ExImproperLCCPsForETRSs}
Consider the \etrs{}
\begin{eqnarray}
\fS{f}(\fS{a}) & = & \fS{f}(\fS{b})\label{ExImproperLCCPsForETRSs_eq1}
\\
\fS{f}(x) & \to & x\label{ExImproperLCCPsForETRSs_rule1}
\end{eqnarray}
The improper LCCP 
\begin{eqnarray}
\ilccpOf{(\ref{ExImproperLCCPsForETRSs_rule1})}: && \langle x,x'\rangle\IF\fS{f}(x)\equequ{}\fS{f}(x')
\end{eqnarray}
is \emph{not} $\ejoinability{\cR/E}$-joinable as $\sigma=\{x\mapsto\fS{a},x'\mapsto\fS{b}\}$ satisfies the conditional part, but $\langle\sigma(x),\sigma(x')\rangle=\langle\fS{a},\fS{b}\rangle$ is \emph{not} $\ejoinability{\cR/E}$-joinable as both $\fS{a}$ and $\fS{b}$ are $\rew{\cR/E}$-irreducible.
\end{example}
However, we have the following obvious result.

\begin{proposition}
\label{PropRightGroundRulesHaveTrivialfImproperLCCPs}
Improper LCCPs $\ilccpOf{\alpha}$ of right-ground (possibly conditional) rules $\alpha$ 
are trivial and hence $\joinability{\cR^\RMabbr}$-joinable. 
\end{proposition}

\subsubsection{ECCPs vs.\ LCCPs}
\label{SecECCPsVsLCCPs}

Regarding the relationship between ECCPs and LCCPs, we have the following.

\begin{proposition}
Let $\cR$ be an \egtrs{}. Let $\alpha$ and $\alpha'$ be rules.
\begin{enumerate}
\item If for all $\theta\in\fS{CSU}_E(\lhsr|_p,\lhsr')$,  
is 
$\joinability{\cR^\RMabbr,E}$-joinable, then $\lccpOf{\alpha,p,\alpha'}$ is $\joinability{\cR^\RMabbr,E}$-joinable.
\item If $\lccpOf{\alpha,p,\alpha'}$
is 
$\joinability{\cR^\RMabbr,E}$-joinable, 
then 
for all $\theta\in\fS{CSU}_E(\lhsr|_p,\lhsr')$, $\eccpOf{\alpha,p,\alpha',\theta}$ is $\joinability{\cR^\RMabbr,E}$-joinable.
\end{enumerate}
\end{proposition}

\begin{proof}
\begin{enumerate}
\item For $\alpha:\lhsr\to\rhsr\IF\gencond$, 
$\alpha':\lhsr'\to\rhsr'\IF\gencond'$, $p\in\Pos^\mu_\Symbols(\lhsr)$, and
$\theta\in\fS{CSU}_E(\lhsr|_p,\lhsr')$, we obtain
 $\eccpOf{\alpha,p,\alpha',\theta}:\langle\theta(\lhsr)[\theta(\rhsr')]_p,\theta(\rhsr)\rangle\IF\theta(\gencond),\theta(\gencond')$.
Let  $\sigma$ be such that both $\sigma(\theta(\gencond))$ and $\sigma(\theta(\gencond'))$ 
hold. 
Since $\theta(\lhsr|_p)\equequ{E}\theta(\lhsr')$ holds (as $\theta\in\fS{CSU}_E(\lhsr|_p,\lhsr')$), 
by stability of equality (c.f.\ Proposition \ref{PropProvingAtomsIsStable}), $\sigma(\theta(\lhsr|_p))\equequ{E}\sigma(\theta(\lhsr'))$ holds as well.
Thus, the conditional part of $\lccpOf{\alpha,p,\alpha'}:\langle \lhsr[\rhsr']_p,\rhsr\rangle\IF \lhsr|_p\equequ{} \lhsr', \gencond,\gencond'$ is satisfied by $\sigma\circ\theta$.
By $\joinability{\cR^\RMabbr,E}$-joinability of 
$\eccpOf{\alpha,p,\alpha',\theta}$, the $\joinability{\cR^\RMabbr,E}$-joinability of
$\lccpOf{\alpha,p,\alpha'}$ follows.
\item Let $\sigma$ be a substitution satisfying the conditional part of $\lccpOf{\alpha,p,\alpha'}$.
In particular, $\sigma(\lhsr|_p)\equequ{E}\sigma(\lhsr')$ holds, i.e., $\sigma$ is a $E$-unifier of $\lhsr|_p$ and $\lhsr'$.
Let $\theta\in\fS{CSU}_E(\lhsr|_p,\lhsr')$ be such that  $\eccpOf{\alpha,p,\alpha',\theta}$ is obtained and $\sigma=\tau\circ\theta$ for some substitution $\tau$.
Thus, the conditional part of $\eccpOf{\alpha,p,\alpha',\theta}$ also holds and by $\joinability{\cR^\RMabbr,E}$-joinability of $\lccpOf{\alpha,p,\alpha'}$, the
$\joinability{\cR^\RMabbr,E}$-joinability of
$\eccpOf{\alpha,p,\alpha',\theta}$ follows.
\end{enumerate}
\end{proof}
Every LCCP $\lccpOf{\alpha,p,\alpha'}:\langle s,t\rangle\IF\gencond$ 
includes an $E$-unification problem $\lhsr|_p=\lhsr'$ in $\gencond$.
Even if $\fS{CSU}_E(\lhsr|_p,\lhsr')$ is not computable or it is infinite, it is still possible to prove
$\gencond$ infeasible to finally conclude joinability of  $\lccpOf{\alpha,p,\alpha'}$.
For ECCPs $\eccpOf{\alpha,p,\alpha',\theta}$, exploiting infeasibility is also used 
to prove joinability (see, e.g.,
\cite[Theorem 3]{DurMes_OnTheChurchRosserAndCoherencePropertiesOfConditionalOrderSortedRewriteTheories_JLAP12}).
However, the generation of ECCPs is necessary before checking them for infeasibility, which makes the approach unaffordable in the aforementioned conditions for $\fS{CSU}_E(\lhsr|_p,\lhsr')$, as $\theta\in\fS{CSU}_E(\lhsr|_p,\lhsr')$.
This does not happen for LCCPs.
Of course, if $\gencond$ in $\lccpOf{\alpha,p,\alpha'}$ above  is feasible, a proof of joinability of $\lccpOf{\alpha,p,\alpha'}$ 
would require checking joinability of $\sigma(s)$ and $\sigma(t)$ for 
all subtitutions $\sigma$ satisfying $\gencond$.

\subsubsection{E-critical peaks as LCCPs}

Given a replacement map $\mu$ and
sets of rules $U$ and $V$, 
we let
\begin{IEEEeqnarray}{+r'C'l+}
\GLCCP(U,V) & = & \{\lccpOf{\alpha,p,\alpha'}\mid\alpha:\lhsr\to\rhsr\IF\gencond\in U,p\in\Pos^\mu_\Symbols(\lhsr),\alpha'\in V
\}\label{LblGLCCPs}
\end{IEEEeqnarray}
As for CCPs (Definition \ref{DefProperAndImproperCCPs}), we distinguish between proper and improper LCCPs in the same way.
We also say that $\lccpOf{\alpha,p,\alpha'}$ is a \emph{coherence} LCCP 
if either $\alpha$ or $\alpha'$ belong to $\eqoriented{E}$.

\begin{proposition}[$E$-critical peaks as LCCPs]
\label{PropCriticalPSrUpPeaksAndConditionalCriticalPairs}
Let $\cR=(\Symbols,\SPredicates,\mu,E,H,R)$ be an \egtrs{}. 
Let $\alpha:\lhsr\to\rhsr\IF\gencond\in R^\RMabbr\:\cup\eqoriented{E}$ and $\alpha':\lhsr'\to\rhsr'\IF\gencond'\in R^\RMabbr$,  $p\in\Pos^\mu_\Symbols(\lhsr)$ 
be such that (\ref{LblECriticalPeak}) is an $E$-critical peak
for some substitution $\sigma$,
and $\lccpOf{\alpha,p,\alpha'}:\langle \lhsr[\rhsr']_p,\rhsr\rangle\IF \lhsr|_p\equequ{} \lhsr', \gencond,\gencond'$.
Then, $\sigma$ satisfies the conditional part of $\lccpOf{\alpha,p,\alpha'}$; and $t=\sigma(\lhsr[\rhsr']_p)$ and $t'=\sigma(\rhsr)$
in (\ref{LblECriticalPeak}).
\end{proposition}
\begin{proof}
Let (\ref{LblECriticalPeak}) be 
\[\begin{array}{rcl}
t = \sigma(\lhsr)[\sigma(\rhsr')]_p ~\leftarrowWith{\cR^\RMabbr,E}   & \sigma(\lhsr)[w]_p & \rewAtPos{\toppos}{\cR^\RMabbr}~\sigma(\rhsr)=t'
\end{array}
\]
where  
$p\in\Pos^\mu_\Symbols(\lhsr)$ and $\sigma(\gencond)$ and $\sigma(\gencond')$ are 
satisfied.
Since $\sigma(\lhsr)|_p=\sigma(\lhsr|_p)=w=_E\sigma(\lhsr')$ in (\ref{LblECriticalPeak}),
$\sigma$ satisfies the conditional part of $\lccpOf{\alpha,p,\alpha'}$.
\end{proof}
The following result is useful to analyze joinability of LCCPs (as infeasibility).
Here, given a set of rules $S$, $\DSymbols(S)=\{\rootTerm(\lhsr)\mid \lhsr\to\rhsr\IF\gencond\in S\}$ is the set of function symbols \emph{defined} by means of some rule in the usual sense.

\begin{proposition}
\label{PropInfeasibilityOfLCCPsWithEqCondWithNoEqSymbol}
Let $\cR$ 
be an \egtrs{}, 
$\alpha:\lhsr\to\rhsr\IF\gencond,\alpha':\lhsr'\to\rhsr'\IF\gencond'\in R^\showRMabbr$ 
be such that $\rootTree(\lhsr|_p)
\notin\DSymbols(\eqoriented{E})$
and $\rootTree(\lhsr|_p)\neq\rootTree(\lhsr')$.
If 
$\eqoriented{E}$ is \emph{not} collapsing,
then $\lccpOf{\alpha,p,\alpha'}$ is 
infeasible.
\end{proposition}

\begin{proof}
Since $p\in\Pos^\mu_\Symbols(\lhsr)$ 
and 
$\alpha'\in R^\showRMabbr$, 
we have $\lhsr|_p,\lhsr'\notin\Variables$.
If $\sigma$ satisfies $\lhsr|_p=\lhsr'$,
then, 
\begin{IEEEeqnarray}{r'C'l}
\sigma(\lhsr|_p)\equequ{E}
\sigma(\lhsr')\label{PropInfeasibilityOfLCCPsWithEqCondWithNoEqSymbol_Sequence}
\end{IEEEeqnarray}
Note that $\rootTree(\lhsr|_p)\neq\rootTree(\lhsr')$ and $\rootTree(\lhsr|_p)\notin\DSymbols(\eqoriented{E})$.
Reductions with $\rew{\eqoriented{E}}$ are not applied at the root and cannot change 
$\rootTree(\sigma(\lhsr|_p))$, which is $\rootTree(\lhsr|_p)$.
Since $\lhsr'$ is not a variable and $\rootTree(\sigma(\lhsr'))=\rootTree(\lhsr')\neq\rootTree(\lhsr|_p)$, sequence (\ref{PropInfeasibilityOfLCCPsWithEqCondWithNoEqSymbol_Sequence}) is not possible.
Thus, $\lccpOf{\alpha,p,\alpha'}$ is 
infeasible.
\end{proof}
The proof of the following result 
is analogous to 
that of Proposition \ref{PropInfeasibilityOfLCCPsWithEqCondWithNoEqSymbol}.
\begin{proposition}
\label{PropInfeasibilityOfELCCPsWithEqCondWithNoEqSymbol}
Let $\cR$  
be an \egtrs{},
$\alpha:\lhsr\to\rhsr\IF\gencond
\in\:\eqoriented{E}$ and
$\alpha':\lhsr'\to\rhsr'\IF\gencond'\in R^\RMabbr$
be such that
$\rootTree(\lhsr')\notin\DSymbols(\eqoriented{E})$,
and $\rootTree(\lhsr|_p)\neq\rootTree(\lhsr')$.
If $\eqoriented{E}$ is \emph{not} collapsing, 
then $\lccpOf{\alpha,p,\alpha'}$ is 
infeasible.
\end{proposition}

Non-collapsingness of $\eqoriented{E}$ is necessary for
Proposition \ref{PropInfeasibilityOfELCCPsWithEqCondWithNoEqSymbol} to hold.

\begin{example}\label{ExNeedOfNonCollapsingness}
Consider the following \egtrs{}
\begin{IEEEeqnarray}{r'C'l}
\fS{0}+x & = & x\label{ExNeedOfNonCollapsingness_eq1}\\
\fS{f}(\fS{0}) & \to & \fS{0}\label{ExNeedOfNonCollapsingness_rule1}
\end{IEEEeqnarray}
where $\DSymbols(\eqoriented{E})=\{\arityOf{+}{2}\}$
and the root symbol $\fS{f}$ in the left-hand side of (\ref{ExNeedOfNonCollapsingness_rule1}) is not in $\DSymbols(\eqoriented{E})$.
The LCCP 
$\lccpOf{\overrightarrow{(\ref{ExNeedOfNonCollapsingness_eq1})},\toppos,(\ref{ExNeedOfNonCollapsingness_rule1})}:\langle (\fS{0},x\rangle\IF\fS{0}+x\equequ{}\fS{f}(\fS{0})$ 
is 
feasible as $\sigma=\{x\mapsto\fS{f}(\fS{0})\}$ satisfies its conditional part. 
\end{example}

\subsection{$E$-variable confluence and coherence peaks and conditional variable pairs}
\label{SecCVPsForEvariablePeaks}

If $p\notin\Pos^\mu_\Symbols(\lhsr)$ in (\ref{NestedPSRupPeak}), then there is $x\in\Var^\mu(\lhsr)$ 
and $q\in\Pos^\mu(\lhsr)$ such that $\lhsr|_q=x$ and $q\leq p$.
Then, we say
that (\ref{NestedPSRupPeak}) is an \emph{$E$-variable peak}
\begin{IEEEeqnarray}{r'C'l}
\underbrace{\sigma(\lhsr)[\sigma(\rhsr')]_p}_{t} ~\leftrewAtPos{p}{\alpha',E}   & \underbrace{\sigma(\lhsr)[w]_p}_s & \rewAtPos{\toppos}{\alpha}~\underbrace{\sigma(\rhsr)}_{t'}\qquad\qquad\text{where }p\notin\Pos^\mu_\Symbols(\lhsr)\label{LblEVariablePeak}
\end{IEEEeqnarray}
 \begin{figure}[t] 
 \begin{center}
\begin{tikzpicture}[node distance = 6cm, auto,scale=1.5,every node/.style={scale=1}]
\node (VP_Ttext) {$\sigma(\lhsr)[\sigma(\rhsr')]_p$};
\node [triangleW, below = 0cm of VP_Ttext, minimum width = 2cm, minimum height = 2.35cm,scale=0.9] (VP_TTriangle) {};
\node [triangleG, below  = 0cm of VP_Ttext, minimum width = 0.85cm, minimum height = 1cm, scale=0.9] (VP_LHSTriangleT) {};
\node [below  = 0.25 cm of VP_Ttext] (VP_TLHStext) {$\lhsr$};
\node [triangleW, below  = 0.14cm of VP_TLHStext, minimum width = 1.3cm, minimum height = 1.32cm, scale=0.9] (VP_SigmaXtriangleT) {};
\node [triangleG, below  = 1.2cm of VP_Ttext, minimum width = 1cm, minimum height = 1.015cm, scale=0.9] (VP_LHSReduced) {};
\node [below  = -0.45 cm of VP_LHSReduced] (VP_RHSptext) {\tiny $\sigma(\rhsr')$};
%
%
\node [above right = 1cm and 1.8 cm of VP_Ttext] (VP_Stext) {$\sigma(\lhsr)[w]_p$};
\node [triangleW, below  = 0cm of VP_Stext, minimum width = 2cm, minimum height = 2.35cm, scale=0.9] (VP_Striangle) {};
\node [triangleG, below  = 0cm of VP_Stext, minimum width = 0.85cm, minimum height = 1cm, scale=0.9] (VP_LHSTriangleS) {};
\node [below  = 0.2 cm of VP_Stext] (VP_LHStextS) {$\lhsr$};
\node [triangleW, below  = 0.16cm of VP_LHStextS, minimum width = 1.3cm, minimum height = 1.35cm, scale=0.9] (VP_SigmaXtriangleS) {};
\node [above right = 1cm and 0.5 cm] at (VP_Striangle) (QinLhsr) {\tiny $q\in \Pos^\mu_x(\lhsr)$};
\node [above right = 0.5cm and 1 cm] at (VP_Striangle) (PinS) {\tiny $p\in \Pos^\mu(s)$};
\node [triangleW, below  = 1.2cm of VP_Stext, minimum width = 1cm, minimum height = 1.01cm, scale=0.9] (VP_LHSpTriangleS) {};
\node [below  = -0.45 cm of VP_LHSpTriangleS] (VP_LHSptext) {\tiny $w$};
%
%
\node [below right =1cm and 3 cm of VP_Stext] (VP_Ttext) {$\sigma(\rhsr)$};
\node [triangleW, below  = 0cm of VP_Ttext, minimum width = 3cm, minimum height = 2.25cm] (VP_Ttriangle) {};
\node [triangleG, below  = 0cm of VP_Ttext, minimum width = 1.67cm, minimum height = 1.28cm, scale=0.9] (VP_RHStTriangle) {};
\node [below  = 0.35 cm of VP_Ttext] (VP_RHStextTp) {$\rhsr$};
\node [triangleW, below  left = 0.85cm and 0.15cm of VP_RHStextTp, minimum width = 1.4cm, minimum height = 1.22cm, scale=0.9] (VP_SigmaXtriangleTp) {};
\node [triangleW, below left = 0.61cm and 0.05cm of VP_RHStTriangle, minimum width = 1cm, minimum height = 1.01cm, scale=0.9] (VP_LHSpTriangleT) {};
\node [below  = -0.45 cm of VP_LHSpTriangleT] (VP_LHSptext) {\tiny $w$}; 
\node [triangleW, below  right = 0.85cm and 0.13cm of VP_RHStextTp, minimum width = 1.4cm, minimum height = 1.22cm, scale=0.9] (VP_SigmaXtriangleTp) {};
\node [triangleW, below right = 0.61cm and 0.04cm of VP_RHStTriangle, minimum width = 1cm, minimum height = 1.01cm, scale=0.9] (VP_LHSpTriangleTBis) {};
\node [below  = -0.45 cm of VP_LHSpTriangleTBis] (VP_LHSptextBis) {\tiny $w$}; 
\draw[-*] ([shift={(0.2,0.2)}]QinLhsr) -- ([shift={(-0.05,0)}]VP_LHSTriangleS.south);
\draw[-*] ([shift={(0.2,0.2)}]PinS) -- ([shift={(-0.05,0)}]VP_LHSpTriangleS.north);
\draw[->] ([shift={(-0.2,0.2)}]VP_Striangle.west) -- ([shift={(0.1,0.3)}]VP_TTriangle.east);
\node[above right = 0.15cm and 0.2cm of VP_TTriangle.east]{\tiny $\alpha',E$};
\draw[->] ([shift={(0.2,0.2)}]VP_Striangle.east) -- ([shift={(-0.2,0.2)}]VP_Ttriangle.west);
\node[above left = 0.08cm and 0.35cm of VP_Ttriangle.west]{\tiny $\alpha$};
\end{tikzpicture}
 \end{center}
\caption{E-variable peak (where $w\equequ{E}\sigma(\lhsr')$)} 
 \label{FigEVariablePeak}
 \end{figure}
Depending on the \emph{origin} of rules in $\alpha$ and $\alpha'$ in (\ref{LblEVariablePeak}), we have different kinds of $E$-variable peaks for different uses 
 (see Table \ref{TableDifferentKindsOfEVariablePeaks}).
\begin{table}
\caption{Different kinds of E-variable peaks (\ref{LblEVariablePeak})}
\begin{tabular}{cccll}
$\alpha$ & $\alpha'$ & $\gbop$  & Denomination & Use
\\
\hline
$R^\RMabbr$ & $R^\RMabbr$ & $\rewpstickel{}$ & Confluence E-Variable peak & $\fS{LCON}_E(\rew{\cR^\RMabbr,E},\rew{\cR^\RMabbr})$
\\
$\eqoriented{E}$ & $R^\RMabbr$ & $\rewpstickel{}$  & Coherence E-Variable  peak & 
$\fS{LCOH}_E(\rew{\cR^\RMabbr,E})$
\end{tabular}
\label{TableDifferentKindsOfEVariablePeaks}
\end{table}
In contrast to variable peaks (\ref{LblVariablePeak}), though, with
$E$-variable peaks (\ref{LblEVariablePeak})
we use CVPs
$\cvpOf{\rewpstickel{}}{\alpha,x,p}$
because reductions below active variables of the left-hand side of  rule $\alpha$ use $\rew{\cR^\RMabbr,E}$ instead of $\rew{\cR^\RMabbr}$
(see Figure \ref{FigEVariablePeak}).
Proposition \ref{PropVariablePeaksAsCVPs} is immediately 
extended to 
$E$-variable peaks to show them captured as $\rewpstickel{}$-CVPs.
\begin{proposition}[E-Variable peaks as $\rewpstickel{}$-CVPs]
\label{PropEVariablePeaksAsPStickelCVPs}
Let 
$\cR$ 
be an \egtrs{} and
(\ref{LblEVariablePeak}) be an  $E$-variable peak.
Then, $\cvpOf{\rewpstickel{}}{\alpha,x,q}$ as in (\ref{LblGenericCVP})
is such that 
$\sigma(x)\rew{\cR,E}\sigma(x')$ 
holds.
Furthermore, 
$s=\sigma(\lhsr[x]_p)$,
$t=\sigma(\lhsr[x']_p)$, 
$t'=\sigma(\rhsr)$, and 
$\cvpOf{\rewpstickel{}}{\alpha,x,q}$ is feasible.
\end{proposition}
Proposition \ref{PropJoinableVariablePeaksForCriticalVariablesNotInCondition} 
has the following counterpart for
E-variable peaks.

\begin{proposition}[Joinable $E$-variable peaks]
\label{PropJoinableVariablePSrUpPeaksForCriticalVariablesNotInCondition}
An $E$-variable peak (\ref{LblEVariablePeak}) is 
$\lsjoinability{\cR^\RMabbr,E}$-joinable 
if 
$\alpha$ is $\mu$-left-homogeneous and $\mu$-compatible.
Furthermore, if $x\in\Var^\mu(\rhsr)$, then (\ref{LblEVariablePeak}) is 
$\sjoinability{\cR^\RMabbr,E}$-joinable.
\end{proposition}
\begin{corollary}
\label{CoroJoinableVariablePSrUpPeaksForCriticalVariablesNotInCondition}
Let $\cR$ 
be an \egtrs{} and $\alpha$ 
be a rule.
Then, $\cvpPStickelOf{\alpha,x,q}$ is $\lsjoinability{\cR^\RMabbr,E}$-joinable if 
$\alpha$ is $\mu$-left-homogeneous and $\mu$-compatible.
Furthermore, if $x\in\Var^\mu(\rhsr)$, then $\cvpPStickelOf{\alpha,x,q}$ is 
$\sjoinability{\cR^\RMabbr,E}$-joinable
\end{corollary}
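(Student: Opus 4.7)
The plan is to derive the statement as a direct corollary of Proposition~\ref{PropJoinableVariablePSrUpPeaksForCriticalVariablesNotInCondition}. To do this, I need to show that every instance of the CVP $\cvpPStickelOf{\alpha,x,p}:\langle \lhsr[x']_q,\rhsr\rangle\IF x\rewpstickel{} x',\gencond$ (obtained by a substitution satisfying its conditional part) gives rise to a variable PS-r-up-peak of the form (\ref{NestedPSRupPeak}), whose left and right terms coincide with the two instantiated terms of the CVP. Once that correspondence is established, the joinability claims of the proposition transfer literally to the CVP.

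First, I would fix an arbitrary substitution $\tau$ with $\crtheoryOf{\cR}\vdash\tau(x)\rewpstickel{}\tau(x')$ and $\crtheoryOf{\cR}\vdash\tau(A)$ for every $A\in\gencond$. Applying Proposition~\ref{PropUseOfRulesAndPositionsInRewritingSteps_EGTRSs}(\ref{PropUseOfRulesAndPositionsInRewritingSteps_EGTRSs_PStickel}) to the single step $\tau(x)\rew{\cR^\RMabbr,E}\tau(x')$ yields a position $u\in\Pos^\mu(\tau(x))$, a rule $\alpha':\lhsr'\to\rhsr'\IF\gencond'\in R^\RMabbr$ (renamed to be disjoint from $\alpha$), and a substitution $\sigma'$ such that $\tau(x)|_u\equequ{E}\sigma'(\lhsr')=:w$, $\sigma'(\gencond')$ holds, and $\tau(x')=\tau(x)[\sigma'(\rhsr')]_u$.

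Next, I would merge $\tau$ and $\sigma'$ into a single substitution $\sigma$ (which is well defined on $\Var(\alpha)\cup\Var(\alpha')$ since these are disjoint). Setting $\bar p:=q.u$, this position lies in $\Pos^\mu(\sigma(\lhsr))$ but not in $\Pos^\mu_\Symbols(\lhsr)$, because $q\in\Pos^\mu_x(\lhsr)$. Together with the verified conditions on $\gencond$ and $\gencond'$, this produces exactly a variable PS-r-up-peak
\[
\sigma(\lhsr)[\sigma(\rhsr')]_{\bar p}~\leftarrowWith{\cR^\RMabbr,E}~\sigma(\lhsr)[w]_{\bar p}~\rew{\cR^\RMabbr}~\sigma(\rhsr)
\]
as in (\ref{NestedPSRupPeak}). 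A direct computation shows $\sigma(\lhsr)[\sigma(\rhsr')]_{q.u}=\sigma(\lhsr)[\sigma(x)[\sigma(\rhsr')]_u]_q=\sigma(\lhsr)[\tau(x')]_q=\tau(\lhsr[x']_q)$ (using $\sigma(x)=\tau(x)$ and freshness of $x'$), while $\sigma(\rhsr)=\tau(\rhsr)$ since $x'\notin\Var(\rhsr)$.

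Finally, under the hypothesis that $\alpha$ is left-$\mu$-homogeneous and $\mu$-compatible, Proposition~\ref{PropJoinableVariablePSrUpPeaksForCriticalVariablesNotInCondition} declares the above peak $\lsjoinability{\cR^\RMabbr,E}$-joinable, and $\sjoinability{\cR^\RMabbr,E}$-joinable whenever $x\in\Var^\mu(\rhsr)$. Transporting this along the equalities established in the previous paragraph gives the corresponding joinability of $\tau(\lhsr[x']_q)$ and $\tau(\rhsr)$, which is exactly what is required of the CVP by Definition~\ref{DefJoinabilityOfConditionalPairs}. The only slightly delicate step is the bookkeeping that makes the merged substitution $\sigma$ reproduce $\tau$ on the components that matter, but there is no real obstacle: the whole argument is a routine transfer of the peak-level statement to the CVP-level statement, paralleling the analogous Corollary~\ref{CoroJoinableVariablePairsForCriticalVariablesNotInCond} in the $\rew{\cR^\RMabbr}$ setting.
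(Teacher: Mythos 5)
Your proof is correct and follows exactly the route the paper intends: the paper states this result as an immediate consequence of Proposition~\ref{PropJoinableVariablePSrUpPeaksForCriticalVariablesNotInCondition}, and your argument simply spells out the routine correspondence (via Proposition~\ref{PropUseOfRulesAndPositionsInRewritingSteps_EGTRSs} and the merged substitution) between satisfying instances of $\cvpPStickelOf{\alpha,x,p}$ and variable PS-r-up-peaks of the form (\ref{NestedPSRupPeak}). The bookkeeping identities $\sigma(\lhsr)[\sigma(\rhsr')]_{q.u}=\tau(\lhsr[x']_q)$ and $\sigma(\rhsr)=\tau(\rhsr)$ are exactly what is needed, so nothing is missing.
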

For $E$-variable peaks  (\ref{LblEVariablePeak}),
the proof of the following result is similar to that of Proposition \ref{PropJoinableVariableEupCPeaksForCriticalVariablesNotInCondition} for variable peaks (\ref{LblVariablePeak}).

\begin{proposition}\label{PropJoinabilityOfEVariablePeaksForEquationalRules}
Let $\cR$ 
be an \egtrs{}.
An $E$-variable peak (\ref{LblEVariablePeak}).
for
$\alpha\in\:\eqoriented{E}$ is $\ejoinability{\cR^\RMabbr,E}$-joinable if $\alpha$ is $\mu$-left-homogeneous and $\mu$-compatible. 
Furthermore,
\begin{enumerate}
\item If $x$ occurs more than once in $\lhsr$, then $(\ref{LblEVariablePeak})$ is $\lsjoinability{\cR^\RMabbr,E}$-joinable.
\item 
If $x\in\Var^\mu(\rhsr)$, then $(\ref{LblEVariablePeak})$ is $\rsjoinability{\cR^\RMabbr,E}$-joinable.
\end{enumerate}
\end{proposition}

\begin{corollary}
\label{PropJoinabilityOfPStickelCVOsForEquationalRules}
Let $\cR$ 
be an \egtrs{}
and
$\alpha\in\:\eqoriented{E}$.
Then, 
$\cvpPStickelOf{\alpha,x,p}$ 
is $\ejoinability{\cR^\RMabbr,E}$-joinable if $\alpha$ is $\mu$-left-homogeneous and $\mu$-compatible. 
Furthermore,
\begin{enumerate}
\item If $x$ occurs more than once in $\lhsr$, then $\cvpPStickelOf{\lhsr,x,p}$ is $\lsjoinability{\cR^\RMabbr,E}$-joinable.
\item 
If $x\in\Var^\mu(\rhsr)$, then $\cvpPStickelOf{\lhsr,x,p}$ is $\rsjoinability{\cR^\RMabbr,E}$-joinable.
\end{enumerate}
\end{corollary}

\subsection{$E$-down peaks}
\label{SecEDownPeaks}

Due to the use of $E$-matching in the $\rew{\alpha',E}$-step of a peak (\ref{LblPreEDownPeak}), i.e.,
\begin{IEEEeqnarray*}{+r'C'l+x*}
\underbrace{\sigma(\rhsr')}_{t}~\leftrewAtPos{\toppos}{\alpha',E}   & \underbrace{w[\sigma(\lhsr)]_p}_{w} & \rewAtPos{p}{\alpha}~\underbrace{w[\sigma(\rhsr)]_p}_{t'}
& \eqref{LblPreEDownPeak}
\end{IEEEeqnarray*}
it is not possible to qualify position $p>\toppos$, where the $\rew{\alpha}$-step  occurs, as witnessing an $E$-critical or $E$-variable peak. 
Furthermore, in view of Proposition \ref{PropJoinabilityOfEdownECohPeaks}(\ref{PropJoinabilityOfEdownECohPeaks_Edown}), only the use of rules $\alpha$ in $R^\RMabbr$ (in an $E$-confluence peak) can be harmful (regarding $\ejoinability{\cR^\RMabbr,E}$-joinability).
The following example illustrates the situation.

\begin{example}
\label{ExPeakNoCPs_EdownPeaks}
For the \etrs{} $\cR$ in Example \ref{ExPeakNoCPs}, i.e.,
\[
\begin{array}{@{\qquad\quad}rcl@{\hspace{2.8cm}}c@{\hspace{4cm}}rcl@{\hspace{2.35cm}}c}
\fS{b} & = &\fS{f}(\fS{a}) 
&
(\ref{ExPeakNoCPs_eq1})
&
\fS{c} & \to & \fS{d}
&
(\ref{ExPeakNoCPs_rule1})
\\
\fS{a} & = & \fS{c}
&
(\ref{ExPeakNoCPs_eq2})
&
\fS{b} & \to & \fS{d}
&
(\ref{ExPeakNoCPs_rule2})
\end{array}
\]
there is no confluence $E$-critical peak, as constants $\fS{c}$ and $\fS{b}$ in the left-hand sides of rules
(\ref{ExPeakNoCPs_rule1}) and (\ref{ExPeakNoCPs_rule2}) do \emph{not} $E$-unify.
There is no confluence $E$-variable peak, as the left-hand sides of the rules are \emph{ground}.
However, the following local $E$-confluence peak
\begin{IEEEeqnarray}{r'C'l}
\fS{d}~\leftrewAtPos{\toppos}{(\ref{ExPeakNoCPs_rule2}),E}   & \fS{f}(\fS{c})  & \rewAtPos{1}{(\ref{ExPeakNoCPs_rule1})}\fS{f}(\fS{d})
\label{ExPeakNoCPs_PSRDownPeak}
\end{IEEEeqnarray}
is not $\ejoinability{\cR/E}$-joinable, as $\fS{d}$ and $\fS{f}(\fS{d})$ are neither $\rew{\cR/E}$-reducible nor $E$-equivalent.
By Corollary \ref{CoroDisprovingEConcludenceByNonREJoinableLocalConfluencePeaks}, 
$\cR$ is \emph{not} $E$-confluent.
\end{example}
This example shows that $E$-critical and $E$-variable peaks do \emph{not} cover all causes of non-local confluence of $\rew{\cR^\RMabbr,E}$ modulo $E$ with $\rew{\cR^\RMabbr}$, and hence of non-$E$-confluence.

\begin{definition}[$E$-down peak]
A peak (\ref{LblPreEDownPeak}), i.e.,
\begin{IEEEeqnarray}{r'C'l}
\sigma(\rhsr')~\leftrewAtPos{\toppos}{\alpha',E}   & w[\sigma(\lhsr)]_p & \rewAtPos{p}{\alpha}~w[\sigma(\rhsr)]_p
\label{LblEDownPeak}
\end{IEEEeqnarray}
with $\alpha:\lhsr\to\rhsr\IF\gencond,\alpha':\lhsr'\to\rhsr'\IF\gencond'\in R^\RMabbr$, $w\equequ{E}\sigma(\lhsr')$ and $p>\toppos$
is an \emph{E-down peak}.
\end{definition}
We use \emph{E-down peak}, as the $\rew{\alpha'}$-step (using a rule from $R^\RMabbr$) 
is performed \emph{below} the root of $w$, where the root $\rew{\alpha,E}$-step is performed
(Figure \ref{FigPSdownPeak}).
\begin{figure}[t]
 \begin{tikzpicture}[node distance = 6cm, auto,scale=0.5,every node/.style={scale=1}]
 %
%
\node (ED_Ttext) {$\sigma(\rhsr')$};
\node [triangleW, below = 0cm of ED_Ttext, minimum width = 2cm, minimum height = 2.35cm,scale=0.9] (ED_Ttriangle) {};
%
\node [triangleG, below  = 0cm of ED_Ttext, minimum width = 0.85cm, minimum height = 1cm, scale=0.9] (ED_RHSTriangleT) {};
\node [below  = 0.3 cm of ED_Ttext] (ED_RHStext) {$\rhsr'$};
%
\node [above right =1cm and 2.8 cm of ED_Ttext] (ED_Stext) {$w[\sigma(\lhsr)]_p$};
\node [triangleW, below  = 0cm of ED_Stext, minimum width = 2cm, minimum height = 2.35cm, scale=0.9] (ED_Striangle) {};
\node [triangleG, below  = 0.34cm of ED_Stext, minimum width = 1.45cm, minimum height = 1.97cm, scale=0.9] (ED_LHSRprimeOut) {};
\node [triangleG, below left  = -0.43cm and -0.55cm of ED_LHSRprimeOut, minimum width = 0.8cm, minimum height = 1.05cm, scale=0.9] (ED_LHSRprimeIn) {};
\node [below  left = -0.05 cm and -0.8cm of ED_LHSRprimeOut] (ED_LHSRpInTxt) {\tiny $\sigma(\lhsr)$};
\node [above right = 1.4cm and 1 cm] at (ED_LHSRprimeOut) (PinLhsrOut) {\tiny $p?$};
\node [above right = 0.8cm and 1 cm] at (ED_LHSRprimeIn) (PinLhsrIn) {\tiny $p?$};
\draw[dotted] (ED_LHSRprimeOut.north) -- (ED_LHSRprimeIn.north);
%
%
\node [below right =1cm and 2.5 cm of ED_Stext] (ED_TpText) {$w[\sigma(\rhsr)]_p$};
\node [triangleW, below  = 0cm of ED_TpText, minimum width = 2cm, minimum height = 2.35cm, scale=0.9] (ED_TpTriangle) {};
%
\node [triangleG, below  = 0.34cm of ED_TpText, minimum width = 1.45cm, minimum height = 1.97cm, scale=0.9] (ED_RHSRprimeOut) {};
\node [triangleG, below left  = -0.43cm and -0.55cm of ED_RHSRprimeOut, minimum width = 0.8cm, minimum height = 1.05cm, scale=0.9] (ED_RHSRprimeIn) {};
\node [below  left = -0.05 cm and -0.8cm of ED_RHSRprimeOut] (ED_RHSRpInTxt) {\tiny $\sigma(\rhsr)$};
\node [above right = 1.4cm and 1 cm] at (ED_RHSRprimeOut) (PinRhsrOut) {\tiny $p?$};
\node [above right = 0.8cm and 1 cm] at (ED_RHSRprimeIn) (PinRhsrIn) {\tiny $p?$};
\draw[dotted] (ED_RHSRprimeOut.north) -- (ED_RHSRprimeIn.north);
%
%
\draw[-*] ([shift={(0.1,0.1)}]PinLhsrOut) -- ([shift={(-0.1,0)}]ED_LHSRprimeOut.north);
\draw[-*] ([shift={(0.2,0.2)}]PinLhsrIn) -- ([shift={(-0.1,0)}]ED_LHSRprimeIn.north);
\draw[-*] ([shift={(0.1,0.1)}]PinRhsrOut) -- ([shift={(-0.1,0)}]ED_RHSRprimeOut.north);
\draw[-*] ([shift={(0.2,0.2)}]PinRhsrIn) -- ([shift={(-0.1,0)}]ED_RHSRprimeIn.north);
\draw[->] ([shift={(-0.2,0.2)}]ED_Striangle.west) -- ([shift={(0.1,0.3)}]ED_Ttriangle.east);
\node[above right = -0.15cm and 0.2cm of ED_Ttriangle.east]{\tiny $\alpha',E$};
\draw[->] ([shift={(0.2,0.2)}]ED_Striangle.east) -- ([shift={(-0.2,0.2)}]ED_TpTriangle.west);
\node[above left = -0.1cm and 0.35cm of ED_TpTriangle.west]{\tiny $\alpha$};
 \end{tikzpicture}
 \caption{E-down peak (where $w\equequ{E}\sigma(\lhsr')$ and $\alpha\in R^\RMabbr$)}
 \label{FigPSdownPeak}
\end{figure}
Peak (\ref{ExPeakNoCPs_PSRDownPeak}) is an example of an $E$-down peak.
If $E=\emptyset$, E-down peaks are captured as critical or variable peaks.
This explains why they do not play any role in \gtrs{s}.
\begin{proposition}
\label{PropEDownPeaksInGTRSs}
Let $\cR$ be an \egtrs{} such that $E=\emptyset$.
Then, every E-down peak (\ref{LblEDownPeak}) is either a critical peak (\ref{LblCriticalPeak}) or a variable peak (\ref{LblVariablePeak}).
\end{proposition}
\begin{proof}
According to the discussion in Section \ref{SecEGTRSsWithoutEquationsAsGTRSs}, since $\equequ{E}$ is the equality of terms, $w$ in (\ref{LblEDownPeak}) is $w=\sigma(\lhsr')$, i.e., it becomes 
\begin{IEEEeqnarray}{r'C'l}
\sigma(\rhsr')~\leftrewAtPos{\toppos}{\alpha'}   & \sigma(\lhsr')[\sigma(\lhsr)]_p & \rewAtPos{p}{\alpha}~\sigma(\lhsr')[\sigma(\rhsr)]_p
\label{LblEDownPeakGTRS}
\end{IEEEeqnarray}
with $\alpha,\alpha'\in R^\RMabbr$, which is (symmetric to) a peak (\ref{NestedRMRPeakTermsRup}) which, as explained in Sections \ref{SecCriticalPeaksAndCCPs} and \ref{SecVariablePeaksAndCVPs}, can be investigated either as a critical peak (\ref{LblCriticalPeak})  or as a variable peak (\ref{LblVariablePeak}).
\end{proof}
Although $E$-down peaks combine possible \emph{rule $E$-overlaps} and 
the application of rules `below' a variable, 
these two sources of divergence do not admit a neat capture by means of LCCPs and CVPs.
We introduce \emph{down conditional pairs} to capture them at once.
\begin{definition}[Down conditional pair]
\label{DefDownConditionalPair}
Let $\cR=(\Symbols,\SPredicates,\mu,E,H,R)$ be an \egtrs{}.
Rule  $\alpha:\lhsr\to\rhsr\IF\gencond\in R^\RMabbr$ 
defines a \emph{Down Conditional Pair} (DCP for short) as follows:
\begin{IEEEeqnarray}{r'C'l}
\dcpOf{\alpha}: \langle \rhsr,x'\rangle\IF x\ceq\lhsr,x\innerrew{} x',\gencond\label{LblDownConditionalCriticalPair}
\end{IEEEeqnarray}
where $x$ and $x'$ are fresh variables.
The set of DCPs obtained in this way 
is
$\SetOf{DCP}(\cR)$. 
\end{definition}
In Definition \ref{DefDownConditionalPair}, predicate symbol $\innerrew{}$
refers to \emph{inner} (i.e., \emph{non-root}) $\rew{\cR^\RMabbr}$-rewriting steps (by Proposition \ref{PropJoinabilityOfEdownECohPeaks}.(\ref{PropJoinabilityOfEdownECohPeaks_Edown}), we do not need to consider $\innerrew{\eqoriented{E}}$-steps).
As usual, $\innerrew{}$ is defined by deduction on $\crtheoryOf{\cR}$ extended with the theory 
\begin{IEEEeqnarray}{r'C'l}
\{\RuleInnerRed_{f,i,\alpha}\mid f\in\Symbols,i\in\mu(f),\alpha\in R^\RMabbr\}
\cup
\{(\RulePropagation)^{\innerrew{}}_{f,i}\mid f\in\Symbols,i\in\mu(f)\}
\end{IEEEeqnarray}
where sentences 
$\RuleInnerRed_{f,i,\alpha}$ for rules 
$\alpha:\lhsr\to\rhsr\IF A_1,\ldots,A_n$
are as follows:
\[
\begin{array}{rcl}
 (\forall\vec{x}\vec{y})A_1\wedge\cdots \wedge A_n\Rightarrow f(x_1,\ldots,x_{i-1},\lhsr,x_{i+1},\ldots,x_k)\innerrew{}f(x_1,\ldots,x_{i-1},\rhsr,x_{i+1},\ldots,x_k)
\end{array}
\]
where 
$\vec{x}$, i.e., $x_1,\ldots,x_k$ are fresh, pairwise distinct variables,
which are not in $\vec{y}$, the variables occurring in $\alpha$.

\begin{proposition}[$E$-down peaks as DCPs]
\label{PropRdownPeaksAndDownConditionalCriticalPairs}
Let $\cR$ 
be an \egtrs{}. 
The $E$-down peak 
\[\begin{array}{rcl}
t =  \sigma(\rhsr')~\leftrewAtPos{\toppos}{\alpha',E}   & w[\sigma(\lhsr)]_p & \rewAtPos{p}{\alpha}~w[\sigma(\rhsr)]_p=t'
\end{array}
\]
where $p>\toppos$ and $\alpha,\alpha'\in R^\RMabbr$ is 
$\ejoinability{\cR^\RMabbr,E}$-joinable  
($\ejoinability{\cR/E}$-joinable) iff
$\dcpOf{\alpha'}$
is $\ejoinability{\cR^\RMabbr,E}$-joinable  ($\ejoinability{\cR/E}$-joinable).  
\end{proposition}
\begin{proof}
As for the \emph{only if} part, since $w=w[\sigma(\lhsr)]_p=_E\sigma(\lhsr')$, and $x,x'\notin\Var(\alpha,\alpha')$, we can let $\sigma(x)=w[\sigma(\lhsr)]_p$ and $\sigma(x')=w[\sigma(\rhsr)]_p$.
Hence, the conditional part of $\dcpOf{\alpha'}$ is satisfied by $\sigma$.
By $\ejoinability{\cR^\RMabbr,E}$-joinability of the $E$-down peak, 
$\dcpOf{\alpha'}$
is also $\ejoinability{\cR^\RMabbr,E}$-joinable (and hence $\ejoinability{\cR^\RMabbr/E}$-joinable).

As for the \emph{if} part, let $\sigma$ be a substitution satisfying the conditional part of 
$\dcpOf{\alpha'}$.
In particular, $\sigma(x)=_E\sigma(\lhsr')$ holds and, since $\sigma(\gencond')$ also  holds,  we have $\sigma(x)\rew{\cR^\RMabbr,E}\sigma(\rhsr')=t$.
We also have $\sigma(x)\innerrew{\cR^\RMabbr}\sigma(x')$, i.e.,
$\sigma(x)=w[\sigma(\lhsr)]_p$
and 
$\sigma(x')=w[\sigma(\rhsr)]_p=t'$
for some $p>\toppos$ and rule $\lhsr\to\rhsr\IF\gencond$ such that $\sigma(\gencond)$ holds.
By $\ejoinability{\cR^\RMabbr,E}$-joinability of $\dcpOf{\alpha'}$, the $E$-down peak 
is also $\ejoinability{\cR^\RMabbr,E}$-joinable (and hence $\ejoinability{\cR^\RMabbr/E}$-joinable).
\end{proof}

\section{$E$-Confluence of $\cR$ as the Church-Rosser property modulo $E$ of $\rew{\cR^\RMabbr,E}$}
\label{SecEConfluenceWithPSRandCpeaks}

By Corollary \ref{CoroSufficientCriteriaForEConfluenceOfEGRSs}(\ref{CoroSufficientCriteriaForEConfluenceOfEGRSs_JK86}) 
$E$-confluence  of $E$-terminating \egtrs{s} $\cR$ 
follows if 
$\rew{\cR^\RMabbr,E}$ is 
(i) locally confluent modulo $E$ with $\rew{\cR^\RMabbr}$ and 
(ii) locally coherent modulo $E$. 
Section \ref{SecConditionalPairsForNonDisjointPeaksJK} shows that 
(i) and (ii) 
are guaranteed if 
\begin{enumerate}
\item
 all $E$-critical peaks 
 and 
 $E$-variable peaks, 
 i.e.,
 peaks
\begin{IEEEeqnarray*}{r'C'l+}
\raisebox{0.6cm}{\resizebox{6.5cm}{!}{
\xymatrix{
&  \sigma(\lhsr)[w]_p 
\ar@{->}[dl]_{p}^>>{\alpha',E}
\ar@{->}[dr]^{\toppos}_>>{\alpha} 
\\
\sigma(\lhsr)[\sigma(\rhsr')]_p\qquad  & &  \quad\sigma(\rhsr) 
  \\
}}}
\end{IEEEeqnarray*}
where $\alpha:\lhsr\to\rhsr\IF\gencond$ is a rule in $\eqoriented{E}\cup\:R^\RMabbr$, $\alpha':\lhsr'\to\rhsr'\IF\gencond'$ belongs to $R^\RMabbr$, 
and $w\equequ{E}\sigma(\lhsr')$; and
$p\in\Pos^\mu_\Symbols(\lhsr)$ for $E$-critical peaks (\ref{LblECriticalPeak}) 
and  $p\notin\Pos^\mu_\Symbols(\lhsr)$ for $E$-variable peaks (\ref{LblEVariablePeak}),  are $\ejoinability{\cR^\RMabbr,E}$-joinable. 
Propositions \ref{PropCriticalPeaksAsCCPs} and \ref{PropVariablePeaksAsCVPs} establish that this can be done by means of appropriate (feasible) CCPs and CVPs.
\item Besides, we also need to consider
$E$-down peaks (\ref{LblEDownPeak}), i.e.,
\begin{IEEEeqnarray*}{r'C'l+}
\raisebox{0.6cm}{\resizebox{6.5cm}{!}{
\xymatrix{
&  w[\sigma(\lhsr)]_p 
\ar@{->}[dl]_{\toppos}^>>{\alpha',E}
\ar@{->}[dr]^{p}_>>{\alpha} 
\\
\sigma(\rhsr')  & &  \quad w[\sigma(\rhsr)]_p 
  \\
}}}
\end{IEEEeqnarray*}
where both $\alpha:\lhsr\to\rhsr\IF\gencond$ and $\alpha':\lhsr'\to\rhsr'\IF\gencond'$ belong to $R^\RMabbr$,
$w\equequ{E}\sigma(\lhsr')$, and
$p\in\Pos^\mu_\Symbols(\lhsr)-\{\toppos\}$.
Proposition \ref{PropRdownPeaksAndDownConditionalCriticalPairs} establishes that this can be done using DCPs.
\end{enumerate}
Accordingly, using the definition of $\GLCCP$ in (\ref{LblGLCCPs}), 
we introduce the following:
\begin{definition}
\label{DefLCCPsAndCVPsForEGTRSs}
Given an \egtrs{} $\cR=(\Symbols,\SPredicates,\mu,E,H,R)$, we let 
\begin{enumerate} 
\item Local $E$-confluence conditional pairs:
\begin{enumerate}
\item\label{DefLCCPsAndCVPsForEGTRSs_LCCPofR} 
$\LCCP(\cR)$ 
is the subset of \emph{feasible} LCCPs from $\GLCCP(R^\RMabbr,R^\RMabbr)$.
\item\label{DefLCCPsAndCVPsForEGTRSs_CVPpstickel}
 $\CVPofPStickel(\cR)$ 
is the subset of \emph{feasible} CVPs from 
$\SetOf{CVP}^{\rewpstickel{}}(R^\RMabbr)$.
\item\label{DefLCCPsAndCVPsForEGTRSs_DCP}
 $\DCP(\cR)$  be the subset 
of \emph{feasible} DCPs from 
$\SetOf{DCP}(R^\RMabbr)$.
\end{enumerate}
\item Local $E$-coherence conditional pairs:
\begin{enumerate}
\item\label{DefLCCPsAndCVPsForEGTRSs_LCCPofER}
 $\LCCP(E,\cR)$ is the subset of \emph{feasible} LCCPs  
$\lccpOf{\alpha,p,\alpha'}\in\GLCCP(\eqoriented{E},R^\RMabbr)$
such that $p\neq\toppos$ (Proposition \ref{PropJoinabilityOfEdownECohPeaks}(\ref{PropJoinabilityOfEdownECohPeaks_Eroot}) justifies this refinement).
\item\label{DefLCCPsAndCVPsForEGTRSs_CVPpstickel}
$\CVPofPStickel(E)$
is the subset of \emph{feasible} CVPs from 
$\SetOf{CVP}^{\rewpstickel{}}(\eqoriented{E})$.
\end{enumerate}
\end{enumerate}
\end{definition}
As done for $\CCP(\cR)$ (see Remark \ref{RemRefiningCCPR}), 
some refinements on $\LCCP(\cR)$ can be introduced:
\begin{enumerate}
\item 
We take only one of 
$\lccpOf{\alpha,\toppos,\alpha'}$ and $\lccpOf{\alpha',\toppos,\alpha}$ for rules $\alpha$ 
and $\alpha'$.
\item We dismiss from $\LCCP(\cR)$ all \emph{trivial} (proper or improper) 
LCCPs.
As shown in Example \ref{ExImproperLCCPsForETRSs}, though, 
improper LCCPs $\ilccpOf{\alpha}$ \emph{cannot}
be dismissed
unless $\alpha$ is right-ground (Proposition \ref{PropRightGroundRulesHaveTrivialfImproperLCCPs}).  In general, for each non-right-ground rule $\alpha$,  we need to check whether $\ilccpOf{\alpha}$ is trivial.
\end{enumerate}
Table \ref{TableConditionalPairsForEcriticalAndCoherencePeaks} summarizes their relationship with previously considered local $E$-confluence and local $E$-coherence peaks.

\begin{table}
\caption{Conditional pairs for local $E$-confluence and local $E$-coherence peaks}
\begin{center}
\begin{tabular}{lccc@{~}ccl}
Property & Peak & Ref.\ & Rule & Rule' & Cond.\ Pairs 
\\
\hline
\\[-0.3cm]
$\fS{LCON}_E(\rew{\cR^\RMabbr,E},\rew{\cR^\RMabbr})$ & Confluence $E$-critical peak 
 & (\ref{LblECriticalPeak}) 
 & $R^\RMabbr$ & $R^\RMabbr$ 
 & $\LCCP(\cR)$ 
 \\[0.1cm]
& Confluence $E$-variable peak 
 & (\ref{LblEVariablePeak})  
 & $R^\RMabbr$ & $R^\RMabbr$ 
 & $\CVP^{\rpstickel{}}(\cR)$ 
\\[0.2cm]
& Confluence $E$-down peak 
 & (\ref{LblEDownPeak}) 
& $R^\RMabbr$ & $R^\RMabbr$ 
  & $\DCP(\cR)$ 
\\[0.2cm]
\hline
$\fS{LCOH}_E(\rew{\cR^\RMabbr,E})$ 
& Coherence $E$-critical peak 
&  (\ref{LblECriticalPeak}) 
& $\eqoriented{E}$ & $R^\RMabbr$ 
& $\LCCP(E,\cR^\RMabbr)$ 
\\[0.1cm]
& Coherence $E$-variable peak 
&  (\ref{LblEVariablePeak}) 
& $\eqoriented{E}$ & $R^\RMabbr$ 
 & $\CVP^{\rpstickel{}}(E)$ 
\\
 \hline
\end{tabular}
\end{center}
\label{TableConditionalPairsForEcriticalAndCoherencePeaks}
\end{table}
\begin{theorem}
\label{TheoEGTRSsAndLocaConfluenceAndCoherenceModulo}
Let $\cR$ 
be an \egtrs{}.
Then, 
\begin{enumerate}
\item\label{TheoEGTRSsAndLocaConfluenceAndCoherenceModulo_LocalConfluence}
 $\rew{\cR,E}$ is locally confluent modulo $E$ with $\rew{\cR}$ iff 
every $\pi\in\LCCP(\cR)\cup\CVPofPStickel(\cR)\cup\DCP(\cR)$ is $\ejoinability{\cR^\RMabbr,E}$-joinable. 
\item\label{TheoEGTRSsAndLocaConfluenceAndCoherenceModulo_LocalCoherence} 
$\rew{\cR,E}$ is locally coherent modulo $E$ iff 
every $\pi\in\LCCP(E,\cR)\cup\CVPofPStickel(E)$ is $\rsejoinability{\cR^\RMabbr,E}$-joinable (or just $\ejoinability{\cR^\RMabbr,E}$-joinable, if $\cR$ is $E$-terminating).  
\end{enumerate}
\end{theorem}
\begin{proof}
\begin{enumerate}
\item  
Using the correspondence in Table \ref{TableARSsAndEGTRSs} 
with the diagram in Figure \ref{FigConfluenceAndCoherenceProperties} (left), and dismissing disjoint peaks, 
$\rew{\cR^\RMabbr,E}$ is \emph{locally confluent modulo $E$ with $\rew{\cR^\RMabbr}$} 
if and only if 
all  
E-critical peaks (\ref{LblECriticalPeak}),
E-variable peaks (\ref{LblEVariablePeak}), and
E-down peaks (\ref{LblEDownPeak}), 
involving rules from $R^\RMabbr$ only, 
are $\ejoinability{\cR,E}$-joinable.
By Proposition \ref{PropCriticalPSrUpPeaksAndConditionalCriticalPairs}, E-critical peaks (\ref{LblECriticalPeak}) are $\ejoinability{\cR,E}$-joinable if all
LCCPs in 
$\LCCP(\cR)$ are.
By Proposition \ref{PropEVariablePeaksAsPStickelCVPs}, 
$E$-variable peaks (\ref{LblEVariablePeak}) are $\ejoinability{\cR,E}$-joinable
if all CVPs in $\SetOf{CVP}^{\rewpstickel{}}(R^\RMabbr)$ are.
By Proposition \ref{PropRdownPeaksAndDownConditionalCriticalPairs}
$E$-down peaks  (\ref{LblEDownPeak}) are $\ejoinability{\cR,E}$-joinable if all DCPs in $\SetOf{DCP}(R^\RMabbr)$ are.
\item
The diagram in Figure \ref{FigConfluenceAndCoherenceProperties} (right) shows that, 
$\rew{\cR^\RMabbr,E}$ is locally coherent modulo $E$ 
if and only if, 
for all 
E-critical peaks (\ref{LblECriticalPeak}) and
E-variable peaks (\ref{LblEVariablePeak}) 
where $\alpha\in\:\eqoriented{E}$  and $\alpha'\in R^\RMabbr$ 
are $\rsejoinability{\cR^\RMabbr,E}$-joinable.
By Proposition \ref{PropCriticalPSrUpPeaksAndConditionalCriticalPairs}, $E$-critical peaks (\ref{LblECriticalPeak}) are $\ejoinability{\cR,E}$-joinable if all
LCCPs in $\GLCCP(\eqoriented{E},R^\RMabbr)$ are.
By Proposition \ref{PropEVariablePeaksAsPStickelCVPs}, 
$E$-variable peaks (\ref{LblEVariablePeak}) are $\rsejoinability{\cR,E}$-joinable
if all CVPs in $\SetOf{CVP}^{\:\rewpstickel{}}(E)$ are.
 If $E$-termination is assumed, then,  
 by Proposition \ref{PropEquivalenceLocalCoherenceModuloEandProperty5prime_JK86}, 
$\ejoinability{\cR^\RMabbr,E}$-joinability can be used instead.
\end{enumerate}
\end{proof}

As a consequence of Theorem 
\ref{TheoEGTRSsAndLocaConfluenceAndCoherenceModulo},
and Corollaries \ref{CoroJoinableVariablePSrUpPeaksForCriticalVariablesNotInCondition}
and 
\ref{PropJoinabilityOfPStickelCVOsForEquationalRules}, we have the following.

\begin{corollary}
\label{CoroEGTRSswithRleftHomogeneousRules}
Let $\cR$ 
be an  \egtrs{}.
\begin{enumerate}
\item\label{CoroEGTRSswithRleftHomogeneousRules_LocalConfluenceModuloE}
 $\rew{\cR^\RMabbr,E}$ is locally confluent modulo $E$ 
with $\rew{\cR^\RMabbr}$ if 
all rules in $R^\showRMabbr$ are $\mu$-left-homogeneous and $\mu$-compatible,
and
every $\pi\in\LCCP(\cR)
\cup\DCP(\cR)$ is $\ejoinability{\cR^\RMabbr,E}$-joinable. 
\item\label{CoroEGTRSswithRleftHomogeneousRules_LocalCoherenceModuloE}
 $\rew{\cR^\RMabbr,E}$ is locally coherent modulo $E$ if 
$\cR$ is $E$-terminating, 
all rules in $\eqoriented{E}$ are $\mu$-left-homogeneous and $\mu$-compatible, 
and
every $\pi\in\LCCP(E,\cR)$ 
is $\ejoinability{\cR^\RMabbr,E}$-joinable.
\end{enumerate}
\end{corollary}
\subsection{Proving $E$-confluence}

As done by Jouannaud and Kirchner in the proof of their \cite[Theorem 16]{JouKir_CompletionOfASetOfRulesModuloASetOfEquations_SIAMJC86} for \etrs{s}, if local coherence modulo $E$ of $\rew{\cR,E}$ is guaranteed, 
then we can dismiss $\DCP(\cR)$ in proofs of $E$-confluence of $E$-terminating \egtrs{s}.
In the proof of the following result, as in  Jouannaud and Kirchner's proofs \cite{JouKir_CompletionOfASetOfRulesModuloASetOfEquations_SIAMJC86}, we use \emph{multiset induction} based on a well-founded set $(A,>)$  from which the (possibly repeated) components of the multiset $M$ are obtained. 

\begin{theorem}
\label{TheoEConfluenceWithoutDCPs}
Let $\cR$  
be an $E$-terminating \egtrs{}.
If every $\pi\in\LCCP(\cR)
\cup
\CVPofPStickel(\cR)\cup\LCCP(E,\cR)\cup\CVPofPStickel(E)
$ is $\ejoinability{\cR^\RMabbr,E}$-joinable, 
then $\rew{\cR^\RMabbr}$  has the $\rew{\cR^\RMabbr,E}$-Church-Rosser property modulo $E$. Thus, $\cR$ is $E$-confluent.
\end{theorem}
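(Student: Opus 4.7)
My plan invokes Corollary \ref{CoroSufficientCriteriaForEConfluenceOfEGRSs}(\ref{CoroSufficientCriteriaForEConfluenceOfEGRSs_JK86}): since $\cR$ is $E$-terminating, it suffices to show that $\rew{\cR,E}$ is locally confluent modulo $E$ with $\rew{\cR}$ and locally coherent modulo $E$. The coherence property follows directly from Proposition \ref{PropEGTRSsAndLocalCoherenceOfRupEwithR} together with the hypothesis that $\LCCP(E,\cR)\cup\CVPofPStickel(E)$ is $\ejoinability{\cR,E}$-joinable (recall that, under $E$-termination, $\rsejoinability{\cR,E}$-joinability can be relaxed to $\ejoinability{\cR,E}$-joinability, as in Proposition \ref{PropEquivalenceLocalCoherenceModuloEandProperty5prime_JK86}). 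What remains is local confluence, and in the absence of $\DCP(\cR)$ the only delicate case is the PS-r-down peak: disjoint peaks are trivially joinable, and nested PS-r-up peaks are handled by the hypothesis on $\LCCP(\cR)\cup\CVPofPStickel(\cR)$ via Proposition \ref{PropCriticalPSrUpPeaksAndConditionalCriticalPairs} and its variable-peak counterpart.

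For a PS-r-down peak $t=\sigma(\rhsr)\leftarrow_{\cR,E}s\to_{\cR}t'$ with $s=w[\sigma(\lhsr')]_p$, $p>\toppos$, and $s\equequ{E}\sigma(\lhsr)$, the idea is to transport the $\to_{\cR}$ step from $s$ back to the ``canonical'' witness $\sigma(\lhsr)$ along the $E$-conversion, using local coherence (already established). Fixing a conversion $s=u_0\equone{E}u_1\equone{E}\cdots\equone{E}u_n=\sigma(\lhsr)$, and starting from $s\to_{\cR,E}t'$ (valid because $\rew{\cR}\subseteq\rew{\cR,E}$ by Proposition \ref{PropFundamentalAssumptionForEGRSs}), iterated application of local coherence along this conversion yields $t'\rews{\cR,E}v\equequ{E}v'\leftrews{\cR,E}\sigma(\lhsr)$. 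The residual configuration $t\leftrew{\cR}\sigma(\lhsr)\rews{\cR,E}v'$ is now a PS-r-peak \emph{at} $\sigma(\lhsr)$: its first step produces a local divergence whose left leg is the rule $\lhsr\to\rhsr\IF\gencond$ applied at root, and whose right leg is a $\rew{\cR,E}$-step at some active position of $\sigma(\lhsr)$. This is either a critical or a variable PS-r-up peak (now with $\sigma(\lhsr)$ in place of the original $s$), and is closed by the joinability of $\LCCP(\cR)\cup\CVPofPStickel(\cR)$.

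The iteration is stitched together by a Newman-style induction on the multiset of $E$-equivalence classes $\{[t]_E,[t']_E\}$ under the multiset extension of $\rew{\cR/E}$, which is well-founded by $E$-termination and Dershowitz--Manna. Since every $\rew{\cR,E}$-step is contained in $\rew{\cR/E}$ (Proposition \ref{PropFundamentalAssumptionForEGRSs}), each nontrivial application of local coherence or of LCCP/CVP joinability strictly reduces the multiset, and the induction closes; trivial joins degenerate to the base case. The main obstacle is this bookkeeping: one must certify that every intermediate divergence spawned inside the iterated use of local coherence and PS-r-up joinability is genuinely smaller modulo $E$. This is precisely where $E$-termination (as opposed to plain termination of $\rew{\cR}$) is indispensable, and also where the $\DCP$-free formulation diverges in flavour from the general JK86 completion setup: rather than closing PS-r-down peaks by a dedicated class of critical pairs, one absorbs them into combinations of coherence joins and PS-r-up joins that must eventually terminate by the $E$-terminating assumption.
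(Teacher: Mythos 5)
Your overall strategy is the paper's: establish local coherence of $\rew{\cR,E}$ modulo $E$ from Proposition \ref{PropEGTRSsAndLocalCoherenceOfRupEwithR}, dispose of disjoint and PS-r-up peaks directly from the hypotheses on $\LCCP(\cR)\cup\CVPofPStickel(\cR)$, and then treat PS-r-down peaks --- the only place where $\DCP(\cR)$ would otherwise be needed --- by unfolding the $E$-matching into a chain of $\equone{E}$-steps, transporting reducibility along that chain with local coherence, and closing the result by a well-founded induction resting on $E$-termination. The paper does exactly this (it propagates from the end $\sigma(\lhsr')$ where the root rule applies syntactically rather than from $s$, which is immaterial), and its base case $n=0$ closes the peak with an LCCP just as you close your residual peak at $\sigma(\lhsr)$.

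The gap is the induction measure, which is precisely the point you flag as ``bookkeeping''. First, iterated local coherence does not directly yield the valley $t'\rews{\cR,E}v\equequ{E}v'\leftrews{\cR,E}\sigma(\lhsr)$: each application produces a join modulo $E$ between successive one-step reducts of the intermediate terms $u_i$, so what you actually obtain is a zigzag $t'\oneCRs e_1\oneCRs e_2\oneCRs\cdots$; collapsing it into a valley (and, likewise, combining the join of your residual PS-r-up peak with the remaining reduction out of $\sigma(\lhsr)$) already requires the Church-Rosser property for smaller configurations. Second, the measure you propose, the two-element multiset $\{[t]_E,[t']_E\}$ of the peak's \emph{targets}, does not decrease on these subproblems: the new divergences have as targets reducts of terms in $[s]_E$, and such a reduct need not be $\rew{\cR/E}$-below either $[t]_E$ or $[t']_E$ (both are merely one step below $[s]_E$, hence mutually incomparable in general). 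What makes the argument go through --- and what the paper does, following Jouannaud and Kirchner --- is a simultaneous induction on the Church-Rosser property itself, measured by the multiset $\{w_n,\ldots,w_1\}$ of the terms \emph{along the $E$-conversion} (all in $[s]_E$): the conversion $t\oneCRs t_{n-1}\oneCRs\cdots\oneCRs t_1\oneCRs t'$ produced by the coherence steps has every element strictly $\rew{\cR/E}$-below some $w_i$, so the induction hypothesis applies to it as a whole and yields $t\ejoinability{\cR,E}t'$. Replacing your measure by this one (with the Church-Rosser statement, not mere joinability of the two targets, as the inductive invariant) repairs the argument and brings it in line with the paper's proof.
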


\begin{proof}
By Theorem \ref{TheoEGTRSsAndLocaConfluenceAndCoherenceModulo}(\ref{TheoEGTRSsAndLocaConfluenceAndCoherenceModulo_LocalCoherence}), $\rew{\cR^\RMabbr,E}$ is locally coherent modulo $E$.
According to Theorem \ref{TheoTheorem5_JK86}, 
since $\rew{\cR^\RMabbr,E}$ is locally coherent modulo $E$, 
we prove that $\rew{\cR^\RMabbr,E}$ is locally confluent modulo $E$ with $\rew{\cR^\RMabbr}$.
The
$\ejoinability{\cR^\RMabbr,E}$-joinability of E-critical and E-variable peaks (\ref{LblECriticalPeak}) and (\ref{LblEVariablePeak}) 
follows from $\ejoinability{\cR^\RMabbr,E}$-joinability of conditional pairs in 
$\LCCP(\cR)
\cup
\CVPofPStickel(\cR)$.
Regarding E-down-peaks (\ref{LblEDownPeak}), we write them as follows:
\begin{IEEEeqnarray}{r'C'l}
t =  \sigma(\rhsr')~\leftrewAtPos{\toppos}{\alpha'}\sigma(\lhsr')=w_n\equone{E}
\cdots\equone{E} 
w_2\equone{E}w_1=   w[\sigma(\lhsr)]_p \rewAtPos{p}{\alpha}~w[\sigma(\rhsr)]_p=t'
\label{NestedPSRdownPeakDetail}
\end{IEEEeqnarray}
Note that we have $t\oneCRs{} w_{n}\oneCRs{}\cdots\oneCRs{} t_1\oneCRs{} t'$.
The proof becomes simpler if the righmost $\rewAtPos{p}{\alpha}$-step is weakened into a $\rewAtPos{p}{\alpha,E}$-step so that we can use local coherence modulo $E$ with the peak $w_2\equone{E}w_1\rewAtPos{p}{\alpha,E}$.\footnote{This key suggestion is due to an anonymous referee.} 
If $n=0$, then (\ref{NestedPSRdownPeakDetail}) can be written as a single  $E$-critical (resp.\ $E$-variable) peak 
\begin{eqnarray}
t\leftrewAtPos{\toppos}{\alpha'}\sigma(\lhsr')\rewAtPos{p}{\alpha,E}t'\label{TheoEConfluenceWithoutDCPs_baseInduction}
\end{eqnarray}
 Note that, since rewriting with $\alpha'$ occurs at the root position, (\ref{TheoEConfluenceWithoutDCPs_baseInduction}) is \emph{not} an $E$-down critical peak.
Thus, $\ejoinability{\cR^\RMabbr,E}$-joinability of $t$ and $t'$ follows from $\ejoinability{\cR^\RMabbr,E}$-joinability 
conditional pairs in 
$\LCCP(\cR)
\cup
\CVPofPStickel(\cR)$.

If $n>0$, then 
we prove 
$\ejoinability{\cR^\RMabbr,E}$-joinability of (\ref{NestedPSRdownPeakDetail})  by multiset induction.
Since $\rew{\cR^\RMabbr,E}$ is locally coherent modulo $E$, from (\ref{NestedPSRdownPeakDetail}), we obtain the following diagram:
\begin{center}
\xymatrix{
w_n\ar@{->}[dd]_{\toppos}^>>{\alpha'}\ar@{|-|}[rr]_{E}  
\ar@{->}[ddrr]^{}_{\cR^\RMabbr,E}
& & w_{n-1}\ar@{->}[dr]^{}_{\cR^\RMabbr,E}
\ar@{|-|}[r]_{E} 
&  \cdots \ar@{|-|}[r]_{E} 
& w_2 \ar@{->}[ddrr]^{}_{\cR^\RMabbr,E} \ar@{|-|}[rr]_{E} 
& & w_1 
\ar@{->}[ddrr]^{p}_>>{\alpha,E}
\\
&&&\cdots \ar@{->}[dr]^{}_{\cR^\RMabbr,E} 
&&&&&&&
\\
t \ar@{->}[ddrrr]^{*}_>>{\cR^\RMabbr,E} & \ejoinability{\cR^\RMabbr,E} & t_{n-1} &  \cdots & t_2 & \ejoinability{\cR^\RMabbr,E} & t_1 &\ejoinability{\cR^\RMabbr,E} & t'\ar@{->}[ddlll]_{*}^>>{\cR^\RMabbr,E}
\\
\\
&& & \nfOf{t}\ar@{|-|}[rr]^{*}_{E} && \nfOf{t'}
}
\end{center}
where, $E$-coherence peaks $w_{i+1}\equone{E} w_{i}\rew{\cR^\RMabbr,E}t_{i-1}$ are $\ejoinability{\cR^\RMabbr,E}$-joinable for all $1\leq i\leq n-1$ by local coherence modulo $E$ (and we assume $t_0=t'$).
The last peak to be considered in the upper row of the diagram above, i.e., 
the leftmost uppermost peak is \emph{not} an $E$-coherence peak, but an
$E$-confluence peak
\begin{eqnarray*}
t\leftrewAtPos{\toppos}{\alpha'}w_n\rewAtPos{}{\alpha,E}t_{n-1}
\end{eqnarray*}
As happens with (\ref{TheoEConfluenceWithoutDCPs_baseInduction}), it cannot be an $E$-down critical peak and it is $\ejoinability{\cR^\RMabbr,E}$-joinable by 
$\ejoinability{\cR^\RMabbr,E}$-joinability of conditional pairs in 
$\LCCP(\cR)
\cup
\CVPofPStickel(\cR)$.
Since the multiset $\{t,t_{n-1},\ldots,t_1,t'\}$ which  
satisfies $t\oneCRs{} t_{n-1}\oneCRs{}t_{n-2}\oneCRs{} \cdots\oneCRs{} t_1\oneCRs{} t'$ 
is strictly smaller than $\{t,w_n,\ldots,w_1,t'\}$
with respect to the well-founded relation $\rew{\cR/E}$ (due to $E$-termination of $\cR$) and $\rew{\cR}\subseteq\rew{\cR,E}\subseteq\rew{\cR/E}$, by the induction hypothesis, we conclude that $t$ and $t'$ are $\ejoinability{\cR^\RMabbr,E}$-joinable. 
Thus, the $E$-down peak (\ref{NestedPSRdownPeakDetail}) is $\ejoinability{\cR^\RMabbr,E}$-joinable. 
\end{proof}
As a consequence of Theorem \ref{TheoEConfluenceWithoutDCPs} and 
Corollary \ref{CoroJoinableVariablePSrUpPeaksForCriticalVariablesNotInCondition}, 
we have the following.
\begin{corollary}
\label{CoroEConfluenceWithoutDCPs}
Let $\cR$ be an $E$-terminating \egtrs{}
such that $R\:\:\cup\eqoriented{E}$ is left $\mu$-homogeneous and $\mu$-compatible.
If every $\pi\in\LCCP(\cR)\cup\LCCP(E,\cR)$
is $\ejoinability{\cR^\RMabbr,E}$-joinable, 
then $\cR$ is $E$-confluent.
\end{corollary}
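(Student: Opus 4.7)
The plan is to reduce the claim to Theorem \ref{TheoEConfluenceWithoutDCPs} by showing that, under the hypotheses of left-$\mu$-homogeneity and $\mu$-compatibility on $R \cup \eqoriented{E}$, the two families of conditional variable pairs $\CVPofPStickel(\cR)$ and $\CVPofPStickel(E)$ are automatically $\ejoinability{\cR,E}$-joinable. Once this is in hand, together with the assumed $\ejoinability{\cR,E}$-joinability of the LCCPs, all four families mentioned in Theorem \ref{TheoEConfluenceWithoutDCPs} are $\ejoinability{\cR,E}$-joinable, and $E$-confluence follows directly.

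First, I would dispatch the pairs in $\CVPofPStickel(\cR)$. Pick any $\cvpPStickelOf{\alpha,x,p} \in \CVPofPStickel(\cR)$ with $\alpha:\lhsr\to\rhsr\IF\gencond \in R^\RMabbr$. By the standing assumption, $\alpha$ is left-$\mu$-homogeneous and $\mu$-compatible, so Corollary \ref{CoroJoinableVariablePSrUpPeaksForCriticalVariablesNotInCondition} applies and yields $\lsjoinability{\cR^\RMabbr,E}$-joinability; in particular, $\ejoinability{\cR^\RMabbr,E}$-joinability. Second, I would dispatch $\CVPofPStickel(E)$ analogously. For any $\cvpPStickelOf{\beta,x,p} \in \CVPofPStickel(E)$ with $\beta:\elhsr\to\erhsr\IF\egencond \in \eqoriented{E}$, the hypothesis gives that $\beta$ is left-$\mu$-homogeneous and $\mu$-compatible, so Proposition \ref{PropJoinabilityOfCVPsUnconditionalRules} applies (recall that, in that statement, $\elhsr$ is allowed to be a variable, as can happen for oriented equations), delivering $\ejoinability{\cR^\RMabbr,E}$-joinability of $\cvpPStickelOf{\beta,x,p}$.

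Finally, combining the two observations above with the hypothesized $\ejoinability{\cR,E}$-joinability of every pair in $\LCCP(\cR) \cup \LCCP(E,\cR)$, every conditional pair in
\[
\LCCP(\cR) \cup \CVPofPStickel(\cR) \cup \LCCP(E,\cR) \cup \CVPofPStickel(E)
\]
is $\ejoinability{\cR^\RMabbr,E}$-joinable. Since $\cR$ is $E$-terminating by hypothesis, Theorem \ref{TheoEConfluenceWithoutDCPs} applies, giving that $\rew{\cR^\RMabbr}$ enjoys the $\rew{\cR^\RMabbr,E}$-Church-Rosser property modulo $E$, and hence that $\cR$ is $E$-confluent.

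There is no real obstacle here: all the technical work is absorbed by Theorem \ref{TheoEConfluenceWithoutDCPs} (which already handles the absence of DCPs via the multiset-induction argument relying on $E$-termination) and by the CVP-joinability lemmas (Corollary \ref{CoroJoinableVariablePSrUpPeaksForCriticalVariablesNotInCondition} and Proposition \ref{PropJoinabilityOfCVPsUnconditionalRules}). The only point worth a line of care is that the strict joinability $\lsjoinability{\cR^\RMabbr,E}$ obtained from Corollary \ref{CoroJoinableVariablePSrUpPeaksForCriticalVariablesNotInCondition} is indeed a strengthening of $\ejoinability{\cR^\RMabbr,E}$, so the hypothesis of Theorem \ref{TheoEConfluenceWithoutDCPs} is met for $\CVPofPStickel(\cR)$ as well.
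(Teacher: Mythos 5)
Your proof is correct and follows exactly the route the paper intends: the corollary is obtained from Theorem \ref{TheoEConfluenceWithoutDCPs} by observing that, under left-$\mu$-homogeneity and $\mu$-compatibility of $R\cup\eqoriented{E}$, Corollary \ref{CoroJoinableVariablePSrUpPeaksForCriticalVariablesNotInCondition} and Proposition \ref{PropJoinabilityOfCVPsUnconditionalRules} make the CVPs in $\CVPofPStickel(\cR)$ and $\CVPofPStickel(E)$ automatically $\ejoinability{\cR,E}$-joinable, so only the LCCPs remain as hypotheses. Your closing remark that $\lsjoinability{\cR^\RMabbr,E}$-joinability strengthens $\ejoinability{\cR^\RMabbr,E}$-joinability is the right point of care, and the argument is complete.
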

\begin{example}
\label{ExHuet80_RemarkPage818_EConfluence}
The \etrs{} $\cR$ in Example \ref{ExHuet80_RemarkPage818}, 
with $E=\{\fS{a}=\fS{b}\}$ and $R=\{\fS{f}(x,x)\to\fS{g}(x)\}$
satisfies the assumptions of Corollary \ref{CoroEConfluenceWithoutDCPs} (using $\mu=\muTop$).
We have that
$\LCCP(\cR)$ consists of the improper LCCP 
\begin{eqnarray}
\iccpOf{(\ref{ExHuet80_RemarkPage818_rule1})} && \langle\fS{g}(x),\fS{g}(x')\rangle\IF\fS{f}(x,x)=\fS{f}(x',x')
\end{eqnarray}
which is $\ejoinability{\cR^\RMabbr}$-joinable as every substitution $\sigma$ satisfying the (equational) conditional part $\fS{f}(x,x)=\fS{f}(x',x')$ must instantiate $x$ and $x'$ to terms $t$ and $t'$ such that $t\equequ{E}t'$.
Then, $\sigma(\fS{g}(x))=\fS{g}(t)\equequ{E}\fS{g}(t')=\sigma(\fS{g}(x'))$, i.e., 
$\iccpOf{(\ref{ExHuet80_RemarkPage818_rule1})}$ is $\ejoinability{\cR^\RMabbr}$-joinable.
Finally, $\LCCP(E,\cR)=\emptyset$ as all LCCPs in $\GLCCP(\eqoriented{E},R^\RMabbr)$ are \emph{root} LCCPs.
$E$-termination of $\cR$ is proved in Example \ref{ExHuet80_RemarkPage818_ETermination}.
By Corollary \ref{CoroEConfluenceWithoutDCPs}, $\cR$ is $E$-confluent.
\end{example}
\begin{example}
\label{ExRModuloAndRrelativeE_EConfluence}
The $E$-confluence of $\cR$ in Example \ref{ExRModuloAndRrelativeE}, i.e.,
\[
\begin{array}{@{\qquad\quad}rcl@{\hspace{2cm}}c@{\hspace{3cm}}rcl@{\hspace{2.35cm}}c}
\fS{a} & = & \fS{b}
&
(\ref{ExRModuloAndRrelativeE_eq1})
&
\fS{a}  & \to &  \fS{d}\IF \fS{b} \rews{} \fS{c}
&
(\ref{ExRModuloAndRrelativeE_rule2})
\\
\fS{a} & \to & \fS{c}
&
(\ref{ExRModuloAndRrelativeE_rule1})
&
\fS{c} & \to & \fS{d}
&
(\ref{ExRModuloAndRrelativeE_rule3})
\end{array}
\]
cannot be proved by using the results in Section \ref{SecEConfluenceWithRandCpeaks}.
For instance, the coherence critical pair
\begin{eqnarray*}
\ccpOf{(\ref{ExRModuloAndRrelativeE_rule1}),\toppos,\overrightarrow{(\ref{ExRModuloAndRrelativeE_eq1})}}:
&& \langle\fS{b},\fS{c}\rangle
\end{eqnarray*}
is not $\ejoinability{\cR^\RMabbr}$-joinable as 
$\fS{b}$ is $\cR^\RMabbr$-irreducible and not $E$-equivalent to $\fS{c}$
or any reduct of $\fS{c}$.
However, now we can prove $E$-confluence of $\cR$.
By Proposition \ref{PropRightGroundRulesHaveTrivialfImproperLCCPs}, all improper LCCPs are trivial.
By Propositions \ref{PropInfeasibilityOfLCCPsWithEqCondWithNoEqSymbol}
and \ref{PropInfeasibilityOfELCCPsWithEqCondWithNoEqSymbol},
the LCCPs 
$\lccpOf{(\ref{ExRModuloAndRrelativeE_rule1}),\toppos,(\ref{ExRModuloAndRrelativeE_rule3})}$,
and
$\lccpOf{(\ref{ExRModuloAndRrelativeE_rule2}),\toppos,(\ref{ExRModuloAndRrelativeE_rule3})}$,
are infeasible.
We have a feasible LCCP
\begin{IEEEeqnarray}{r'C'l}
\lccpOf{(\ref{ExRModuloAndRrelativeE_rule1}),\toppos,(\ref{ExRModuloAndRrelativeE_rule2})}:
&& \langle\fS{d},\fS{c}\rangle\IF
\fS{b}\rewmodulos{}\fS{c}
\label{ExRModuloAndRrelativeE_LCCP1}
\end{IEEEeqnarray}
Therefore, $\LCCP(\cR)=\{(\ref{ExRModuloAndRrelativeE_LCCP1})
\}$
and
$\LCCP(E,\cR)=\emptyset$ (as $\GLCCP(\eqoriented{E},R^\RMabbr)$ consists of \emph{root} LCCPs only).
Since $\ul{\fS{c}}\rew{(\ref{ExRModuloAndRrelativeE_rule3})}\fS{d}$,
(\ref{ExRModuloAndRrelativeE_LCCP1}) is 
$\ejoinability{\cR^\RMabbr,E}$-joinable.
$E$-termination of $\cR$ is proved with
the following interpretation $\SInterpretation$ with domain $\SemDomain=\{-1,0,1\}$
(computed with \AGES), 
with 
\[\begin{array}{r@{~}c@{~}l@{\hspace{0.65cm}}r@{~}c@{~}l@{\hspace{0.65cm}}r@{~}c@{~}l@{\hspace{0.65cm}}r@{~}c@{~}l}
\fS{a}^\SInterpretation & = & -1
& 
\fS{b}^\SInterpretation & = & -1
&
\fS{c}^\SInterpretation & = & 0
\\
\fS{d}^\SInterpretation & = & 1
&
x\equequ{}^\SInterpretation y & \Leftrightarrow& x=_\SemDomain y
&
x\rew{}^\SInterpretation y & \Leftrightarrow & 9y\geq_\naturals 10x+9
\\
&& &
x(\rewmodulo){\!\!}^\SInterpretation y & \Leftrightarrow & 18y>_\naturals 18x+12
&
x(\rewsmodulo){\!}^\SInterpretation y & \Leftrightarrow & \emph{true}
\end{array}
\]
which  
models  
$\GLtheory_{\cR/E}$,
with $(\rmodulo)^\SInterpretation=\{(0,1),(-1,1),(-1,0)\}$ 
well-founded (on $\SemDomain$).
By Corollary \ref{CoroEConfluenceWithoutDCPs}, $\cR$ is $E$-confluent.
\end{example}

\subsection{Disproving $E$-confluence}

The following result shows how to disprove $E$-confluence.

\begin{theorem}
\label{TheoremNonEConfluenceOfEGRSs}
Let $\cR$ 
be an $\egtrs{}$.
If there is $\pi\in\LCCP(\cR)\cup\CVPofPStickel(\cR)\cup\DCP(\cR)$ which is \emph{not}
$\ejoinability{\cR^\RMabbr/E}$-joinable, then $\cR$ is not $E$-confluent.
\end{theorem}

\begin{proof}
Since conditional pairs in $\LCCP(\cR)\cup\CVPofPStickel(\cR)\cup\DCP(\cR)$ represent $\peakOf{\cR^\RMabbr,E}{\cR^\RMabbr}$-peaks, the claim follows by Corollary \ref{CoroDisprovingEConcludenceByNonREJoinableLocalConfluencePeaks}.
\end{proof}
The following result shows that Theorem \ref{TheoremNonEConfluenceOfEGRSs} subsumes Theorem \ref{TheoEConfluenceOfETerminatingEGTRSsHuet}(\ref{TheoEConfluenceOfETerminatingEGTRSsHuet_NonEConfluent})
regarding proofs of non-$E$-confluence of \egtrs{s}.

\begin{proposition}
\label{PropDisprovingEConfluenceAlaHuetOrAlaJK}
If $E$-confluence of an \egtrs{} $\cR$ is disproved using 
Theorem \ref{TheoEConfluenceOfETerminatingEGTRSsHuet}(\ref{TheoEConfluenceOfETerminatingEGTRSsHuet_NonEConfluent}), 
then Theorem \ref{TheoremNonEConfluenceOfEGRSs} also disproves it.
\end{proposition}

\begin{proof}
Theorem \ref{TheoEConfluenceOfETerminatingEGTRSsHuet}(\ref{TheoEConfluenceOfETerminatingEGTRSsHuet_NonEConfluent}) disproves $E$-confluence of $\cR$ as the 
non-$\ejoinability{\cR/E}$-joinability of 
some $\pi\in\CCP(\cR)\cup\CVPofR(\cR)$.
We have the following:
\begin{enumerate}
\item If $\ccpOf{\alpha,p,\alpha'}\in\CCP(\cR)$ for variable disjoint rules $\alpha:\lhsr\to\rhsr\IF\gencond$ and $\alpha':\lhsr'\to\rhsr'\IF\gencond'$ 
and $p\in\Pos^\mu_\Symbols(\lhsr)$, then $\lhsr|_p$ and $\lhsr'$ unify with \emph{mgu} $\theta$.
If $\ccpOf{\alpha,p,\alpha'}:\langle \theta(\lhsr[\rhsr']_p),\theta(\rhsr)\rangle\IF\theta(\gencond),\theta(\gencond')$ 
is \emph{not} $\ejoinability{\cR/E}$-joinable, then there is a substitution $\sigma$ satisfying both $\theta(\gencond)$ and $\theta(\gencond')$ such that $\sigma(\theta(\lhsr[\rhsr']_p))$ and $\sigma(\theta(\rhsr))$ are not 
$\ejoinability{\cR/E}$-joinable.
Then, $\lccpOf{\alpha,p,\alpha'}\in\LCCP(\cR)$ is not $\ejoinability{\cR/E}$-joinable either, as $\sigma\circ\theta$ satisfies the condition $\lhsr|_p=\lhsr'$ in the conditional part of $\lccpOf{\alpha,p,\alpha'}$ and also $\gencond$ and $\gencond'$. Since $\sigma(\theta(\lhsr[\rhsr']_p))$ and $\sigma(\theta(\rhsr))$ are not $\ejoinability{\cR/E}$-joinable, $\lccpOf{\alpha,p,\alpha'}$ is not $\ejoinability{\cR/E}$-joinable.
\item If $\cvpROf{\alpha,x,p}\in\CVPofR(\cR)$ for some rule $\alpha:\lhsr\to\rhsr\IF\gencond$, $x\in\Var^\mu(\lhsr)$, and $p\in\Pos^\mu_x(\lhsr)$, i.e., $\langle\lhsr[x']_p,\rhsr\rangle\IF x\rew{}x',\gencond$, 
is not $\ejoinability{\cR/E}$-joinable, then there is a substitution $\sigma$ such that $\sigma(x)\rew{\cR^\RMabbr}\sigma(x')$ and $\sigma(\gencond)$ holds,
but $\sigma(\lhsr[x']_p)$ and $\sigma(\rhsr)$ are not $\ejoinability{\cR/E}$-joinable.
Since $\rew{\cR^\RMabbr}\subseteq\rew{\cR^\RMabbr,E}$, $\sigma$ also satisfies the conditional part of 
$\cvpPStickelOf{\alpha,x,p}\in\CVPofPStickel(\cR)$, i.e., $\langle\lhsr[x']_p,\rhsr\rangle\IF x\rewpstickel{}x',\gencond$. Thus, $\cvpPStickelOf{\alpha,x,p}$ is not $\ejoinability{\cR/E}$-joinable.
\end{enumerate}
Therefore, Theorem \ref{TheoremNonEConfluenceOfEGRSs} also disproves 
$E$-confluence of $\cR$.
\end{proof}
\begin{remark}[Theorem \ref{TheoremNonEConfluenceOfEGRSs} vs.\ Theorem \ref{TheoEConfluenceOfETerminatingEGTRSsHuet}(\ref{TheoEConfluenceOfETerminatingEGTRSsHuet_NonEConfluent})]
Despite Proposition \ref{PropDisprovingEConfluenceAlaHuetOrAlaJK},  \emph{in practice} it is often easier to prove non-$\ejoinability{\cR/E}$-joinability of pairs in $\CCP(\cR)\cup\CVPofR(\cR)$
than non-$\ejoinability{\cR/E}$-joinability of pairs in $\LCCP(\cR)\cup\CVPofPStickel(\cR)$.
\end{remark}
The following example illustrates the use of DCPs when disproving $E$-confluence.
\begin{example}\label{ExPeakNoCPs_notEconfluent}
We know that the \etrs{} $\cR$ in Example \ref{ExPeakNoCPs}, i.e.,
\[
\begin{array}{@{\qquad\quad}rcl@{\hspace{2.8cm}}c@{\hspace{4cm}}rcl@{\hspace{2.35cm}}c}
\fS{b} & = &\fS{f}(\fS{a}) 
&
(\ref{ExPeakNoCPs_eq1})
&
\fS{c} & \to & \fS{d}
&
(\ref{ExPeakNoCPs_rule1})
\\
\fS{a} & = & \fS{c}
&
(\ref{ExPeakNoCPs_eq2})
&
\fS{b} & \to & \fS{d}
&
(\ref{ExPeakNoCPs_rule2})
\end{array}
\]
is \emph{not} $E$-confluent (see Example \ref{ExPeakNoCPs_EdownPeaks}).
Note that $\LCCP(\cR)$ is empty, as the only proper LCCP $\lccpOf{(\ref{ExPeakNoCPs_rule1}),\toppos,(\ref{ExPeakNoCPs_rule2})}$ is trivial
and all rules are right-ground (all improper LCCPs are trivial as well).
Also $\CVPofPStickel(\cR)$ is empty as all rules in $R$ are ground. 
However,
\begin{IEEEeqnarray}{r'C'l}
\dcpOf{(\ref{ExPeakNoCPs_rule2})} && \langle \fS{d},x'\rangle \IF x \ceq \fS{b}, x\innerrew{} x'
\label{ExPeakNoCPs_DCP2}
\end{IEEEeqnarray}
is \emph{not} 
$\ejoinability{\cR/E}$-joinable:
$\sigma=\{x\mapsto\fS{f}(\fS{c}), x'\mapsto\fS{f}(\fS{d})\}$
satisfies its conditional part, 
as 
(i) $\sigma(x)=\fS{f}(\fS{c})\equequ{E}\fS{f}(\fS{a})\equequ{E}\fS{b}$
and 
(ii) $\sigma(x)=\fS{f}(\ul{\fS{c}})\innerrew{(\ref{ExPeakNoCPs_rule1})}\fS{f}(\fS{d})=\sigma(x')$.
However, $\fS{d}$ and $\fS{f}(\fS{d})$ are \emph{not} 
$\cR^\RMabbr/E$-reducible (hence not $\cR^\RMabbr,E$-reducible) nor $E$-equivalent. 
Hence they are not
 $\ejoinability{\cR^\RMabbr/E}$-joinable.
Thus, $\DCP(\cR)$, including $\dcpOf{(\ref{ExPeakNoCPs_rule2})}$, is the only 
set of conditional pairs that can be used to disprove 
$E$-confluence of $\cR$.
\end{example}

\subsection{Using $R^\RMabbr$ is necessary, even with $E$-confluent \egtrs{s}}

The \egtrs{} in 
Example \ref{ExRModuloAndRrelativeE} was used to show the need of considering $\rewmodulos{}$ instead of $\rews{}$ to interpret reachability conditions in rules of \egtrs{s}. Since Example \ref{ExRModuloAndRrelativeE_EConfluence} shows that $\cR$ is $E$-confluent, it also shows that, in general, we cannot ``simplify'' the treatment of such reachability conditions to use $\rews{}$ back.
The following example shows that this is also true for $\rewspstickel{}$.

\begin{example}
\label{ExLimitsPSjoinabilityBis}
The following \egtrs{} $\cR$:

\noindent
\begin{tabular}{cc}
\begin{minipage}{0.485\textwidth}
\begin{IEEEeqnarray}{r'C'l}
\fS{a} & = & \fS{f}(\fS{b})
\label{ExLimitsPSjoinabilityBis_eq1}
\\
\fS{b} & \to & \fS{d}\IF\fS{a}\rews{}\fS{f}(\fS{e})
\label{ExLimitsPSjoinabilityBis_rule1}
\\
\fS{b} & \to & \fS{e}
\label{ExLimitsPSjoinabilityBis_rule2}
\end{IEEEeqnarray}
\end{minipage}
&
\begin{minipage}{0.485\textwidth}
\begin{IEEEeqnarray}{r'C'l}
\fS{c} & \to & \fS{a}
\label{ExLimitsPSjoinabilityBis_rule3}
\\
\fS{c} & \to & \fS{f}(\fS{d})
\label{ExLimitsPSjoinabilityBis_rule4}
\\
\fS{e} & \to & \fS{d}
\label{ExLimitsPSjoinabilityBis_rule5}
\end{IEEEeqnarray}
\end{minipage}
\end{tabular}

\medskip
\noindent
is $E$-confluent if $\rews{}$ in rule (\ref{ExLimitsPSjoinabilityBis_rule1}) 
is interpreted as $\rewsmodulo{}$ to define $\rew{\cR/E}$, i.e., $R^\RMabbr$ is used.
However, 
if $\rewspstickel{}$ (i.e., $R^\PSabbr$) is used, then $E$-confluence is lost, as the 
peak
\begin{IEEEeqnarray}{r'C'l}
\fS{f}(\fS{d})\leftrewAtPos{\toppos}{(\ref{ExLimitsPSjoinabilityBis_rule4})} & \fS{c} & \rewAtPos{\toppos}{(\ref{ExLimitsPSjoinabilityBis_rule3})}\fS{a}
\end{IEEEeqnarray}
is \emph{not} $\ejoinability{\cR^\PSabbr/E}$-joinable. Note that $\fS{f}(\fS{d})$ is $\rew{\cR^\PSabbr/E}$-irreducible (also with $R^\RMabbr$). 
As for $\fS{a}$, we have 
$\fS{a}\equequ{(\ref{ExLimitsPSjoinabilityBis_eq1})}\fS{f}(\fS{b})$, but 
$\fS{b}\not\rew{(\ref{ExLimitsPSjoinabilityBis_rule1})}\fS{d}$ because 
$\fS{a}$ in the conditional part of (\ref{ExLimitsPSjoinabilityBis_rule1}) 
is $\rewpstickel{\cR^\PSabbr}$-irreducible.
\end{example}

\section{Application to Equational Term Rewriting Systems}
\label{SecApplicationToETRSs}

An \etrs{} $\cR=(\Symbols,E,R)$ can be seen as an \egtrs{} $\cR=(\Symbols,\{\rew,\rews{}\},\muTop,\emptyset,E,R)$.
Thus, we expect that existing results on $E$-confluence of \etrs{s} can be obtained from our results for \egtrs{s}.
Since equations and rules in \etrs{s} are unconditional, all rules in $\eqoriented{E}$ and $R$ are
trivially left $\muTop$-homogeneous and $\muTop$-compatible (for the sake of readability, we drop $\muTop$ in the following); 
and, being $R$ a set of unconditional rules, we have $R=R^\PSabbr=R^\RMabbr$. 
Since unconditional critical pairs coincide with our CCPs when unconditional rules are used, we write $\CP$ instead of $\CCP$ to refer to the corresponding sets.
Thus, from 
Theorem \ref{TheoEConfluenceOfETerminatingEGTRSsHuet}  
(using Corollaries \ref{CoroJoinableVariablePairsForCriticalVariablesNotInCond}, 
\ref{CoroJoinableVariablePairsII}, and \ref{CoroJoinableVariablePairsIII}), we obtain. 

\begin{corollary}
\label{CoroEConfluenceOfETRSs_Huet}
Let $\cR$ be an \etrs.
\begin{enumerate}
\item\label{CoroEConfluenceOfETRSs_Huet_ETRSEConfluence}
 If $\cR$ is $E$-terminating and every
$\pi\in\CP(\cR)\cup\CP(\cR,E)\cup\CP(E,\cR)\cup\CVPofEqOne(\cR)
$
is $\ejoinability{\cR}$-joinable, then $\cR$ is $E$-confluent.
\item\label{CoroEConfluenceOfETRSs_Huet_EConfluence}
 If $\cR$ is  left-linear and $E$-terminating and every
$\pi\in\CP(\cR) \cup\CP(\cR,E)\cup\CP(E,\cR)
$
is $\ejoinability{\cR}$-joinable, then $\cR$ is $E$-confluent.
\item\label{CoroEConfluenceOfETRSs_Huet_NonEConfluence} 
If there is a \emph{non}-$\ejoinability{\cR/E}$-joinable $\pi\in\CP(\cR)$, then $\cR$ is not $E$-confluent.
\end{enumerate}
\end{corollary}
Similarly, from Corollary \ref{CoroEConfluenceWithoutDCPs}
and Theorem \ref{TheoremNonEConfluenceOfEGRSs}
(using Corollary \ref{CoroJoinableVariablePSrUpPeaksForCriticalVariablesNotInCondition}),
we obtain:

\begin{corollary}
\label{CoroEConfluenceOfETRSs_JK}
Let $\cR$ be an \etrs{}.
\begin{enumerate}
\item\label{CoroEConfluenceOfETRSs_JK_EConfluence} If $\cR$ is $E$-terminating and every $\pi\in\LCCP(\cR)\cup\LCCP(E,\cR)$
is $\ejoinability{\cR,E}$-joinable, 
then $\cR$ is $E$-confluent.
\item\label{CoroEConfluenceOfETRSs_JK_NonEConfluence} If there is a \emph{non}-$\ejoinability{\cR/E}$-joinable $\pi\in\LCCP(\cR)\cup\DCP(\cR)$, then $\cR$ is not $E$-confluent.
\end{enumerate}
\end{corollary}
The next two sections compare these results 
with \cite{Huet_ConfluentReductionsAbstractPropertiesAndApplicationsToTermRewritingSystems_JACM80} 
and 
\cite{JouKir_CompletionOfASetOfRulesModuloASetOfEquations_SIAMJC86}.

\subsection{Huet \cite{Huet_ConfluentReductionsAbstractPropertiesAndApplicationsToTermRewritingSystems_JACM80}}
\label{SecHuet}
In his results about $E$-confluence of \emph{Equational First-Order Theories} $\cR=(\Symbols,E,R)$ where $R$ is a set of rewrite rules
and $E$ is a set of equations $s=t$ such that $\Var(s)=\Var(t)$,
 Huet uses critical pairs previously introduced  in the definition of pages 809 (for $\cR$) and 817 (for $\cR$ with $E$)
and which coincide with 
\[\CP(\cR)\cup\CP(E,\cR)\cup\CP(\cR,E)\]
Then, \cite[Section 3.4]{Huet_ConfluentReductionsAbstractPropertiesAndApplicationsToTermRewritingSystems_JACM80} develops a number of results.

\subsubsection{Property $\alpha$, see Figure \ref{FigHuetConfluencePropertiesAlphaBetaGamma} (middle)}
\label{SecHuetPropertyAlpha}
The $\ejoinability{\cR}$-joinability of pairs in $\CP(\cR)$ proves \emph{property $\alpha$} \cite[Lemma 3.4]{Huet_ConfluentReductionsAbstractPropertiesAndApplicationsToTermRewritingSystems_JACM80}.
When applied to \etrs{s}, our Proposition \ref{PropEGTRSsAndPropertiesAlphaAndGamma}(\ref{PropEGTRSsAndPropertiesAlphaAndGamma_Alpha}) is equivalent to this result, as all CVPs in $\CVPofR(\cR)$ are $\ejoinability{\cR}$-joinable due to Corollary \ref{CoroJoinableVariablePairsForCriticalVariablesNotInCond}.
\begin{example}\label{ExHuet80_RemarkPage818variant}
Consider the following variant $\cR$ of the \etrs{} in Example \ref{ExHuet80_RemarkPage818}:

\noindent
\begin{tabular}{cc}
\begin{minipage}{0.483\textwidth}
\begin{IEEEeqnarray}{r'C'l}
\fS{a} & = & \fS{b}\label{ExHuet80_RemarkPage818variant_eq1}
\\
\fS{f}(x,x) & \to & \fS{g}(x)\label{ExHuet80_RemarkPage818variant_rule1}
\\
\fS{f}(x,x) & \to & \fS{h}(x)\label{ExHuet80_RemarkPage818variant_rule1b}
\\
\fS{a} & \to & \fS{c}\label{ExHuet80_RemarkPage818variant_rule2}
\end{IEEEeqnarray}
\end{minipage}
&
\begin{minipage}{0.483\textwidth}
\begin{IEEEeqnarray}{r'C'l}
\fS{b} & \to & \fS{c}\label{ExHuet80_RemarkPage818variant_rule3}
\\
\fS{g}(x) & \to & x\label{ExHuet80_RemarkPage818variant_rule4}
\\
\fS{h}(x) & \to & x\label{ExHuet80_RemarkPage818variant_rule5}
\\
\nonumber
\end{IEEEeqnarray}
\end{minipage}
\end{tabular}

\medskip
\noindent
We have
$\CCP(\cR)=\{\ccpOf{(\ref{ExHuet80_RemarkPage818variant_rule1}),\toppos,(\ref{ExHuet80_RemarkPage818variant_rule1b})}\}$,
for
$\ccpOf{(\ref{ExHuet80_RemarkPage818variant_rule1}),\toppos,(\ref{ExHuet80_RemarkPage818variant_rule1b})}: 
\langle\fS{h}(x),\fS{g}(x)\rangle$, 
which is $\joinability{\cR}$-joinable as 
$\fS{h}(x)\rew{(\ref{ExHuet80_RemarkPage818variant_rule5})}x\leftrew{(\ref{ExHuet80_RemarkPage818variant_rule4})}\fS{g}(x)$.
By Corollary \ref{CoroJoinableVariablePairsForCriticalVariablesNotInCond},
all CVPs in $\CVPofR(\cR)$ are $\ejoinability{\cR}$-joinable.
Thus, by Proposition \ref{PropEGTRSsAndPropertiesAlphaAndGamma}, 
property $\alpha$ holds for $\cR$.
\end{example}

\subsubsection{Property $\gamma$, see Figure \ref{FigHuetConfluencePropertiesAlphaBetaGamma} (right)}
\label{SecHuetPropertyGamma}

If $R$ is a set of \emph{left-linear} 
rules, then $\ejoinability{\cR}$-joinability of pairs in $\CP(E,\cR)\cup\CP(\cR,E)$ guarantees Property $\gamma$ \cite[Lemma 3.5]{Huet_ConfluentReductionsAbstractPropertiesAndApplicationsToTermRewritingSystems_JACM80}.
Our Proposition \ref{PropEGTRSsAndPropertiesAlphaAndGamma}.(\ref{PropEGTRSsAndPropertiesAlphaAndGamma_Gamma}) 
subsumes  this result, as 
all CVPs in $\CVPofR(E)$ are 
$\ejoinability{\cR}$-joinable due to 
Corollary \ref{CoroJoinableVariablePairsII}
 and
all CVPs in $\CVPofEqOne(R)$ are 
$\ejoinability{\cR}$-joinable due to 
Corollary \ref{CoroJoinableVariablePairsIII}.
However, Huet's Lemma 3.5 does \emph{not} apply to  $\cR$ and $E$ in Example \ref{ExHuet80_RemarkPage818variant}, with non-left-linear rules.

\begin{example}
\label{ExHuet80_RemarkPage818variant_EqCCPsAndEqCVPs}
For $\cR$ and $E$ in Example \ref{ExHuet80_RemarkPage818variant}, 
$\CCP(\cR,E)=\{(\ref{ExHuet80_RemarkPage818variant_CCP_RE_1}),
(\ref{ExHuet80_RemarkPage818variant_CCP_RE_2})\}$,
$\CVPofR(E)=
\CCP(E,\cR)=\emptyset$ and $\CVPofEqOne(\cR)=\{
(\ref{ExHuet80_RemarkPage818variant_ECVP1}),(\ref{ExHuet80_RemarkPage818variant_ECVP2}),(\ref{ExHuet80_RemarkPage818variant_ECVP3}),(\ref{ExHuet80_RemarkPage818variant_ECVP4}),(\ref{ExHuet80_RemarkPage818variant_ECVP5}),(\ref{ExHuet80_RemarkPage818variant_ECVP6})
\}$ with
\begin{IEEEeqnarray}{r'C'l}
\pi_{(\ref{ExHuet80_RemarkPage818variant_rule2}),\toppos,\overrightarrow{(\ref{ExHuet80_RemarkPage818variant_eq1})}}: && \langle\fS{b},\fS{c}\rangle\label{ExHuet80_RemarkPage818variant_CCP_RE_1}
\\
\pi_{(\ref{ExHuet80_RemarkPage818variant_rule3}),\toppos,\overleftarrow{(\ref{ExHuet80_RemarkPage818variant_eq1})}}: 
&&\langle\fS{a},\fS{c}\rangle\label{ExHuet80_RemarkPage818variant_CCP_RE_2}
\\
\cvpEqOneOf{(\ref{ExHuet80_RemarkPage818variant_rule1}),x,1}:
&& \langle\fS{f}(x',x),\fS{g}(x)\rangle \IF  x\equone{}x'\label{ExHuet80_RemarkPage818variant_ECVP1}
\\
\cvpEqOneOf{(\ref{ExHuet80_RemarkPage818variant_rule1}),x,2}:
&& \langle\fS{f}(x,x'),\fS{g}(x)\rangle \IF  x\equone{}x'\label{ExHuet80_RemarkPage818variant_ECVP2}
\\
\cvpEqOneOf{(\ref{ExHuet80_RemarkPage818variant_rule1b}),x,1}:
&& \langle\fS{f}(x',x),\fS{h}(x)\rangle \IF  x\equone{}x'\label{ExHuet80_RemarkPage818variant_ECVP3}
\\
\cvpEqOneOf{(\ref{ExHuet80_RemarkPage818variant_rule1b}),x,2}:
&& \langle\fS{f}(x,x'),\fS{h}(x)\rangle \IF  x\equone{}x'\label{ExHuet80_RemarkPage818variant_ECVP4}
\\
\cvpEqOneOf{(\ref{ExHuet80_RemarkPage818variant_rule4}),x,1}:
&& \langle\fS{g}(x'),x\rangle \IF  x\equone{}x'\label{ExHuet80_RemarkPage818variant_ECVP5}
\\
\cvpEqOneOf{(\ref{ExHuet80_RemarkPage818variant_rule5}),x,1}:
&& \langle\fS{h}(x'),x\rangle \IF  x\equone{}x'\label{ExHuet80_RemarkPage818variant_ECVP6}
\end{IEEEeqnarray}
Note that (\ref{ExHuet80_RemarkPage818variant_CCP_RE_1}) 
and (\ref{ExHuet80_RemarkPage818variant_CCP_RE_2}), from $\CCP(\cR,E)$, 
are 
$\joinability{\cR^\RMabbr}$-joinable.
The CVPs (\ref{ExHuet80_RemarkPage818variant_ECVP1})--
(\ref{ExHuet80_RemarkPage818variant_ECVP4}), from $\CVPofEqOne(E)$, are
all $\joinability{\cR}$-joinable:
if $\sigma$ satisfies their common conditional part, 
$x\equone{}x'$ 
then $\sigma=\{ x\mapsto C[\fS{a}], x'\mapsto C[\fS{b}]\}$
or $\sigma=\{ x\mapsto C[\fS{b}], x'\mapsto C[\fS{a}]\}$
for some context $C[\:]$.
Then (in particular for (\ref{ExHuet80_RemarkPage818variant_ECVP1}); for the others it is similar), 
\[\begin{array}{rcl}
\sigma(\fS{f}(x',x)) & = & \fS{f}(C[\fS{b}],C[\fS{a}])\rewp{}\fS{f}(C[\fS{c}],C[\fS{c}])\rew{}\fS{g}(C[\fS{c}])
\\ 
\sigma(\fS{g}(x)) & = & \fS{g}(C[\fS{a}])\rew{}\fS{g}(C[\fS{c}])
\end{array}
\]
and similarly for the other $\sigma$. 
By Corollary \ref{CoroJoinableVariablePairsIII}, 
(\ref{ExHuet80_RemarkPage818variant_ECVP5}) and
(\ref{ExHuet80_RemarkPage818variant_ECVP6}) are
$\ejoinability{\cR}$-joinable.
\end{example}

\subsubsection{$E$-confluence}

In \cite{Huet_ConfluentReductionsAbstractPropertiesAndApplicationsToTermRewritingSystems_JACM80}, Huet investigates how to prove \emph{confluence of $\rew{\cR}$ modulo $E$} (see Definition \ref{DefAbstractConfluenceModuloHuet80}) for \etrs{s} $\cR$.
His main result is as follows:

\begin{theorem}
\label{TheoTheorem3_3_Huet80}
\emph{\cite[Theorem 3.3]{Huet_ConfluentReductionsAbstractPropertiesAndApplicationsToTermRewritingSystems_JACM80}
}
Let $\cR=(\Symbols,E,R)$ be an \etrs{} such that
\begin{enumerate}
\item $R$ is a set of left-linear rules.
\item $E$ is a set of equations $s=t$ such that $\Var(s)=\Var(t)$.
\item $\rew{\cR}\circ\equequ{E}$ is terminating.
\end{enumerate}
Then, \emph{$\rew{\cR}$ is confluent modulo $E$}  
if and only if for all its critical pairs $\langle s,t\rangle$, we have
$\nfOf{s}\equequ{E}\nfOf{t}$, where  $\nfOf{s}$ and $\nfOf{t}$ 
are the $E$-normal forms of $s$ and $t$, respectively.
\end{theorem}
By ``all its critical pairs'', Huet means those in 
$\CP(\cR)\cup\CP(E,\cR)\cup\CP(\cR,E)$.
Thus, 
Corollary \ref{CoroEConfluenceOfETRSs_Huet}(\ref{CoroEConfluenceOfETRSs_Huet_EConfluence})
provides all $E$-confluence proofs which can be obtained with Theorem \ref{TheoTheorem3_3_Huet80}.
However, 
Corollary \ref{CoroEConfluenceOfETRSs_Huet} 
can often be used  
when Theorem \ref{TheoTheorem3_3_Huet80} fails.

\begin{example}
\label{ExHuet80_RemarkPage818variant_EConfluence}
All conditional pairs in $\CCP(\cR)$ for $\cR$ in Example \ref{ExHuet80_RemarkPage818variant} are 
$\ejoinability{\cR^\RMabbr}$-joinable.
All conditional pairs in $\CCP(\cR,E)$, $\CCP(E,\cR)$, and
$\CVPofEqOne(E)$ are 
$\ejoinability{\cR^\RMabbr}$-joinable (Example \ref{ExHuet80_RemarkPage818variant_EqCCPsAndEqCVPs}).
$E$-termination of $\cR$ is proved as done 
in Example \ref{ExHuet80_RemarkPage818_ETermination} for $\cR$ 
in Example \ref{ExHuet80_RemarkPage818}. 
By Corollary \ref{CoroEConfluenceOfETRSs_Huet}(\ref{CoroEConfluenceOfETRSs_Huet_ETRSEConfluence}), $\cR$ is $E$-confluent.
Theorem \ref{TheoTheorem3_3_Huet80} does not apply as $\cR$ is not left-linear.
\end{example}
Since confluence modulo $E$ is just a sufficient condition for $E$-confluence, disproving confluence modulo $E$ does \emph{not} imply non-$E$-confluence.

\subsection{Jouannaud and Kirchner \cite{JouKir_CompletionOfASetOfRulesModuloASetOfEquations_SIAMJC86}}
\label{SecJouannaudAndKirchner}

The main result for ETRSs 
 is \cite[Theorem 16]{JouKir_CompletionOfASetOfRulesModuloASetOfEquations_SIAMJC86} which makes use of the following sets of pairs, for $E$ a set of equations and $S$, $S'$ sets of rules \cite[page 1167, first paragraph]{JouKir_CompletionOfASetOfRulesModuloASetOfEquations_SIAMJC86} (we use our notation to ease readability):\footnote{Given sets $S$ and $S'$ of unconditional rules, Jouannaud and Kirchner write $\SetOf{CP}(S,S')$ to denote the set of critical pairs for rules $\alpha:\lhsr\to\rhsr\in S$ \emph{on} all rules $\alpha':\lhsr'\to\rhsr'\in S'$, i.e., 
$\alpha$ overlaps $\alpha'$ at position $p\in\Pos_\Symbols(\lhsr')$ to produce a critical pair $\langle\theta(\lhsr')[\rhsr]_p,\theta(\rhsr')\rangle$, see the last paragraph of \cite[page 1166]{JouKir_CompletionOfASetOfRulesModuloASetOfEquations_SIAMJC86}. In our notation, such a pair is denoted $\ccpOf{\alpha',p,\alpha}$.
Therefore, their sets $\SetOf{CP}(S,S')$ correspond to $\CP(S',S)$ in the notation of this paper. Here, we consistently use our notation and just swap the sets of rules as originally written in \cite{JouKir_CompletionOfASetOfRulesModuloASetOfEquations_SIAMJC86}.}
 \begin{itemize}
 \item $\CP(S,S')$ is a set of nontrivial critical pairs $\ccpOf{\alpha,p,\alpha'}$ with $\alpha\in S$ and $\alpha'\in S'$
 \item $\CP(S,E)$ is a set of nontrivial critical pairs $\ccpOf{\alpha,p,\alpha'}$ with $\alpha\in S$ and $\alpha'\in\eqoriented{E}$. 
\item $\CP(E,S)$ is a set of nontrivial critical pairs  $\ccpOf{\alpha,p,\alpha'}$ with $\alpha\in\eqoriented{E}$, $p\neq\toppos$, and $\alpha'\in S$. 
 \item $\ECP(S,S')$ is a set of nontrivial $E$-critical pairs $\ecpOf{\alpha,p,\alpha',\theta}$ with $\alpha\in S$ and $\alpha'\in S'$.
 \item $\ECP(E,S)$ is a set of nontrivial $E$-critical pairs $\ecpOf{\alpha,p,\alpha',\theta}$ with $\alpha\in\eqoriented{E}$, $p\neq\toppos$, and $\alpha'\in S$.
 \end{itemize}
 \subsubsection{Proving $E$-confluence}
 \label{SecJouannaudAndKirchner_Econfluence}
 
 In  \cite[Theorem 16]{JouKir_CompletionOfASetOfRulesModuloASetOfEquations_SIAMJC86}, 
a characterization of the $R^E$-Church-Rosser modulo $E$ property is obtained for $E$-terminating sets of rules $R=L\cup N$ such that all rules in $L$ are left-linear and $R^E$ is defined as $\rew{L}\cup\rew{N,E}$.
 \begin{theorem}
 \label{Theorem16_JK86}
\emph{\cite[Theorem 16]{JouKir_CompletionOfASetOfRulesModuloASetOfEquations_SIAMJC86}}
Assume $E$ to be a set of equations such that a complete and finite unification algorithm exists and $E$-congruence classes are finite. 
Let $R=L\cup N$ be an $E$-terminating set of rules such that all rules in $L$ are left-linear. 
Let $R^E$ be defined as $\rew{L}\cup\rew{N,E}$. 
Then, $R$ is $R^E$-Church-Rosser modulo $E$ iff
\begin{enumerate}
\item\label{Theorem16_JK86_CriticalPairs}
 every confluence pair in
\begin{IEEEeqnarray*}{r'C'l}
\CP(L,L)\cup\CP(N,L)\cup\ECP(N,N)\cup\ECP(L,N)
\end{IEEEeqnarray*}
is $\ejoinability{R/E}$-joinable.\footnote{Jouannaud and Kirchner use ``$R/E$-confluent modulo $E$''. According to \cite[Definition 2]{JouKir_CompletionOfASetOfRulesModuloASetOfEquations_SIAMJC86}, this is equivalent to $\ejoinability{R/E}$-joinability.
}
\item\label{Theorem16_JK86_CoherencePairs}
 for all coherence pairs
\begin{IEEEeqnarray*}{r'C'l}
\langle s,t\rangle\in\CP(E,L)\cup\ECP(E,N)
\end{IEEEeqnarray*}
there is $t'$ such that $t\rew{R^E}t'$ and $t$ and $t'$ are $\ejoinability{R/E}$-joinable.
\item\label{Theorem16_JK86_CoherencePairsBis}
 for all coherence pairs
\begin{IEEEeqnarray*}{r'C'l}
\langle s,t\rangle\in\CP(L,E)
\end{IEEEeqnarray*}
there is $s'$ such that $s\rew{R^E}s'$ and $s$ and $s'$ are $\ejoinability{R/E}$-joinable.
\end{enumerate}
\end{theorem}
Since the $\genrelationUpE$-Church-Rosser property modulo $\genequivalence$ of 
 $\genrelation$ 
 implies $\genequivalence$-confluence of $\genrelation$ (Proposition \ref{PropSufficientConditionForEConfluence_JK86}), Theorem \ref{Theorem16_JK86} provides a sufficient condition for $E$-confluence of \etrs{s}.
 
If
we let $L=\emptyset$ and $N=R$ in Theorem \ref{Theorem16_JK86}, then
$\CP(L,L)=\CP(N,L)=\ECP(L,N)=\CP(E,L)=\CP(L,E)=\emptyset$, i.e., only 
$\ECP(N,N)$ and $\ECP(E,N)$ must be considered in Theorem \ref{Theorem16_JK86}.
Then, 
\cite[Theorem 16]{JouKir_CompletionOfASetOfRulesModuloASetOfEquations_SIAMJC86}
can be used to prove $E$-confluence essentially as our 
Corollary \ref{CoroEConfluenceOfETRSs_JK}(\ref{CoroEConfluenceOfETRSs_JK_EConfluence}) for \etrs{s}, i.e.,
as the $\ejoinability{\cR,E}$-joinability of 
every $\pi\in\LCCP(\cR)\cup\LCCP(E,\cR)$.

 \subsubsection{Disproving $E$-confluence?}

Unfortunately, the attempt to use  the ``\emph{iff}'' formulation of 
Theorem \ref{Theorem16_JK86} to disprove $E$-confluence 
by showing that an \emph{arbitrarily chosen} critical pair in 
\begin{eqnarray*}
\CP(L,L)\cup\CP(N,L)\cup\ECP(N,N)\cup\ECP(L,N)\cup\CP(E,L)\cup\ECP(E,N)\cup\CP(L,E)\label{LblCPsForCRpropertyOfETRSs}
\end{eqnarray*}
\emph{fails} to fulfill its corresponding joinability condition
may lead to wrong conclusions.

\begin{example}
\label{ExLimitsPSjoinability_CPs}

Consider again the $E$-confluent \etrs{} $\cR$ in Example \ref{ExLimitsPSjoinability}, i.e.,
\[
\begin{array}{@{\qquad\quad}rcl@{\hspace{2.15cm}}c@{\hspace{4cm}}rcl@{\hspace{2.15cm}}c}
\fS{a} & = & \fS{f}(\fS{b}) 
&
(\ref{ExLimitsPSjoinability_eq1})
&
\fS{c} & \to & \fS{a}
&
(\ref{ExLimitsPSjoinability_rule2})
\\
\fS{b} & \to & \fS{d}
&
(\ref{ExLimitsPSjoinability_rule1})
&
\fS{c} & \to & \fS{f}(\fS{d})
&
(\ref{ExLimitsPSjoinability_rule3})
\end{array}
\]
With $L=\emptyset$ and $N=R$, we have $R^E=\rew{R,E}$.
The coherence $E$-critical pair $\ccpOf{\overleftarrow{(\ref{ExLimitsPSjoinability_eq1})},1,(\ref{ExLimitsPSjoinability_rule1}),\epsilon}$: 
\begin{IEEEeqnarray}{r'C'l}
 \langle\fS{f}(\fS{d}),\fS{a}\rangle\label{ExLimitsPSjoinability_CP2}
\end{IEEEeqnarray}
belongs to $ECP(E,N)$. 
In Theorem \ref{Theorem16_JK86}, the joinability condition for  
(\ref{ExLimitsPSjoinability_CP2}) requires a first $\rew{R,E}$-step on $\fS{a}$,  
which is impossible, as $\fS{a}$ is $\rew{R,E}$-irreducible.
By Theorem \ref{Theorem16_JK86}, $\cR$ is \emph{not} $\rew{\cR,E}$-Church-Rosser modulo $E$
(see also Proposition \ref{PropLocalConfluenceAndLocalCoherenceNotNecessaryForEConfluence}).
Misinterpreting these facts as disproving $E$-confluence of $\cR$ would contradict Proposition \ref{PropExLimitsPSjoinability_EConfluent}.
\end{example}
Indeed, Theorem \ref{Theorem16_JK86}, can be used to disprove the $\rew{\cR,E}$-Church-Rosser  modulo $E$ property, but not to disprove $E$-confluence of $\cR$.

\section{Application to Conditional Term Rewriting Systems}
\label{SecApplicationToConditionalETRSs}

A \gtrs{} $\cR=(\Symbols,\SPredicates,\mu,H,R)$ can be seen as an \egtrs{} $\cR=(\Symbols,\SPredicates,\mu,\emptyset,H,R)$ where no equality predicate is used
and the set of equations $E$ is empty.
Thus, we expect that our results for $E$-confluence (in particular $\emptyset$-confluence) 
of \egtrs{s} \emph{collapse} into the existing results on (local) confluence of \gtrs{s} 
\cite{Lucas_LocalConfluenceOfConditionalAndGeneralizedTermRewritingSystems_JLAMP24}.

On the other hand, existing research about confluence and 
$E$-confluence of 
\emph{membership equational specifications} and
\emph{rewrite theories}, also extending \ctrs{s} with \emph{membership} and
\emph{equational} components, has been conducted in the realm of 
Membership Equational Logic \cite{Meseguer_MembershipAlgebraAsALogicalFrameworkForEquationalSpecification_WADT97}
and 
Rewriting Logic \cite{Meseguer_ConditionalRewritingLogicAsAUnifiedModelOfConcurrency_TCS92,Meseguer_TwentyYearsOfRewritingLogic_JLAP12} and, in particular, of Generalized Rewrite Theories \cite{BruMes_GeneralizedRewriteTheories_ICALP03,BruMes_SemanticFoundationsForGeneralizedRewriteTheories_TCS06}.

The following sections discuss the application of our results in these settings.

\subsection{Confluence of \gtrs{s} as $\emptyset$-confluence of \egtrs{s}}
\label{SecConfluenceOfGTRSsAsConfluenceOfEGTRSsWithEempty}

As discussed in Section \ref{SecEGTRSsWithoutEquationsAsGTRSs},
if the set of equations $E$ is empty, then a number of consequences follow:
\begin{itemize}
\item The relation $\equone{E}$ is empty and its reflexive and transitive closure
$\equequ{E}$ becomes the identity of terms.
\item $\rew{\cR,E}$ and $\rew{\cR/E}$ collapse into $\rew{\cR}$.
\item We can assume $R^\PSabbr=R^\RMabbr=R$ and similarly for $H$.
\item $\joinability{\cR}$ is equivalent to $\ejoinability{\cR}$, $\ejoinability{\cR,E}$ and $\ejoinability{\cR/E}$.
\end{itemize}
We use these facts to show how the results in Sections \ref{SecEConfluenceWithRandCpeaks} and \ref{SecEConfluenceWithPSRandCpeaks} boil down into the characterizations of (local) confluence of \gtrs{s}.

\subsubsection{Regarding the results in Section \ref{SecEConfluenceWithRandCpeaks}}

\begin{enumerate}
\item $\CCP(\cR)$ and $\CVPofR(\cR)$  
contain exactly the CCPs and CVPs in \cite[Definition 59]{Lucas_LocalConfluenceOfConditionalAndGeneralizedTermRewritingSystems_JLAMP24}.
\item Property $\alpha$ of $\rew{\cR}$
boils down into the usual definition of \emph{local confluence} of $\rew{\cR}$. 
Thus, Proposition \ref{PropEGTRSsAndPropertiesAlphaAndGamma}(\ref{PropEGTRSsAndPropertiesAlphaAndGamma_Alpha}) provides the characterization of local confluence of $\rew{\cR}$ given by \cite[Theorem 62]{Lucas_LocalConfluenceOfConditionalAndGeneralizedTermRewritingSystems_JLAMP24}.
\item\label{SecConfluenceOfGTRSsAsConfluenceOfEGTRSsWithEempty_Huet_CoherencePairsEmpty} Since $E$ is empty, $\CCP(E,\cR)=\CCP(\cR,E)=\CVPofR(E)=\CVPofEqOne(\cR)=\emptyset$. Regarding $\CVPofEqOne(\cR)$, all its CVPs 
$\cvpEqOneOf{\alpha,x,p}:\langle\lhsr[x']_p,\rhsr\rangle\IF x\equone{}x',\gencond$ for rules $\alpha:\lhsr\to\rhsr\IF\gencond\in R$ are infeasible as $\equone{E}$, i.e., $\rew{\eqoriented{E}}$, is empty.
\item Since (i) $\rew{\cR}$ and $\rew{\cR/E}$ coincide, and also 
(ii) $\ejoinability{\cR}$, $\ejoinability{\cR/E}$ boil down into $\joinability{\cR}$, 
and due to item (\ref{SecConfluenceOfGTRSsAsConfluenceOfEGTRSsWithEempty_Huet_CoherencePairsEmpty}) above, 
Theorem \ref{TheoEConfluenceOfETerminatingEGTRSsHuet} can be formulated as a \emph{characterization} of confluence of $\rew{\cR}$.
\item Since $\emptyset$-termination of an \egtrs{} is just termination of $\cR$ viewed as a \gtrs, 
Theorem \ref{TheoEConfluenceOfETerminatingEGTRSsHuet} therefore coincides with \cite[Theorem 68]{Lucas_LocalConfluenceOfConditionalAndGeneralizedTermRewritingSystems_JLAMP24}, characterizing confluence of terminating \gtrs{s}. 
\end{enumerate} 

\subsubsection{Regarding the results in Section \ref{SecEConfluenceWithPSRandCpeaks}}

As a consequence of Proposition \ref{PropEDownPeaksInGTRSs}, 
we have the following.

\begin{corollary}
\label{CoroDCPsInGTRSs}
Let $\cR$ be an \egtrs{} with $E=\emptyset$.
If all pairs in $\LCCP(\cR)\cup\CVPofPStickel(\cR)$ are $\ejoinability{\cR,E}$-joinable (equivalently $\joinability{\cR}$-joinable), then every pair in $\DCP(\cR)$ is $\joinability{\cR}$-joinable.
\end{corollary}

\begin{enumerate}
\item Each \emph{feasible} LCCP $\lccpOf{\alpha,p,\alpha'}:\langle\lhsr[\rhsr']_p,\rhsr\rangle\IF \lhsr|_p=\lhsr',\gencond,\gencond'$ is ``equivalent'' to the CCP
$\ccpOf{\alpha,p,\alpha'}:\langle\theta(\lhsr[\rhsr']_p),\theta(\rhsr)\rangle\IF\theta(\gencond),\theta(\gencond')$ as every substitution $\sigma$ satisfying $\lhsr|_p=\lhsr'$ (which now represents a \emph{syntactic} unification problem, often written $\lhsr|_p=^?\lhsr'$) 
is a unifier of both terms and hence $\sigma=\tau\circ\theta$ for some substitution $\tau$ and mgu $\theta$ of $\lhsr|_p$ and $\lhsr'$.
Therefore, we can say that $\LCCP(\cR)$ and $\CCP(\cR)$ coincide.
\item Since $\rew{\cR}$ and $\rew{\cR,E}$ coincide, and also 
$\ejoinability{\cR,E}$ and $\joinability{\cR}$ coincide, each $\cvpPStickelOf{\alpha,x,p}\in\CVPofPStickel(\cR)$ is $\joinability{\cR}$-joinable if and only if $\cvpOf{}{\alpha,x,p}$ in \cite[Definition 59]{Lucas_LocalConfluenceOfConditionalAndGeneralizedTermRewritingSystems_JLAMP24} (structurally identical to $\cvpPStickelOf{\alpha,x,p}$) is.
\item By Corollary \ref{CoroDCPsInGTRSs}, $\joinability{\cR}$-joinability of all pairs in $\LCCP(\cR)\cup\CVPofPStickel(\cR)$ implies $\joinability{\cR}$-joinability of all pairs in $\DCP(\cR)$.
\item 
Since local confluence of 
$\rew{\cR,E}$ modulo $E$ with $\rew{\cR}$ 
boils
down into local confluence of $\rew{\cR}$, Theorem \ref{TheoEGTRSsAndLocaConfluenceAndCoherenceModulo}(\ref{TheoEGTRSsAndLocaConfluenceAndCoherenceModulo_LocalConfluence}) characterizes local confluence of $\cR$ as done by \cite[Theorem 62]{Lucas_LocalConfluenceOfConditionalAndGeneralizedTermRewritingSystems_JLAMP24}.
\item\label{SecConfluenceOfGTRSsAsConfluenceOfEGTRSsWithEempty_JK86_CoherencePairsEmpty} Since $E$ is empty, $\LCCP(E,\cR)=\CVPofPStickel(E)=\emptyset$. 
\item Since $\emptyset$-termination of an \egtrs{} is just termination of $\cR$ viewed as a \gtrs, using  Corollary \ref{CoroDCPsInGTRSs} again, we conclude that
Theorems \ref{TheoEConfluenceWithoutDCPs} (where $\LCCP(E,\cR)$ and $\CVPofPStickel(E)$ are empty)
and \ref{TheoremNonEConfluenceOfEGRSs} (where $\DCP(\cR)$ is subsumed by $\LCCP(\cR)$ and $\CVPofPStickel(\cR)$)
can be merged to obtain a characterization of confluence which
actually coincide with \cite[Theorem 68]{Lucas_LocalConfluenceOfConditionalAndGeneralizedTermRewritingSystems_JLAMP24}, characterizing confluence of terminating \gtrs{s}. 
\end{enumerate}

\subsection{Bouhoula, Jouannaud, and Meseguer \cite{BouJouMes_SpecificationAndProofInMembershipEquationalLogic_TAPSOFT97,BouJouMes_SpecificationAndProofInMembershipEquationalLogic_TCS00}}

According to \cite[Section 2.1]{BouJouMes_SpecificationAndProofInMembershipEquationalLogic_TCS00} (where missing details can be found),
a many-kinded signature is a pair $(\SKinds,\SSymbols)$,
where $\SSymbols$ is a $\SKinds^*\times\SKinds$-indexed family of sets $\{\SSymbols_{\vec{K}\to K}\}_{\vec{K}\in\SKinds^*,K\in\SKinds}$.
A signature in Membership Equational Logic (MEL, \cite{Meseguer_MembershipAlgebraAsALogicalFrameworkForEquationalSpecification_WADT97}) is a pair $\SPSignature$ consisting of a many-kinded signature $(\SKinds,\SSymbols)$ and a disjoint $\SKinds$-kinded family of sets of sorts $\SSorts=\{\SSorts_K\}_{K\in\SKinds}$ \cite[Definition 1]{BouJouMes_SpecificationAndProofInMembershipEquationalLogic_TCS00}.

Roughly speaking, a \mel{} specification consists of a set $M$ of
\emph{membership} axioms $t:s\IF\gencond$ for some term $t$ and sort $s$,  where the conditional part $\gencond$ may consist of (i) equations $u=v$ for terms $u$ and $v$ and (ii) other membership $t':s'$.
Conditional equations $u=v\IF\gencond$ are also allowed with $\gencond$ as above to obtain a set $E$ of such conditional equations.

\subsubsection{\mel{} specifications as \egtrs{s}}

As remarked in \cite[Section 3.4]{BouJouMes_SpecificationAndProofInMembershipEquationalLogic_TCS00}, \mel{} essentially is many-sorted Horn logic with equality. 
This could also be said of \gtrs{s}, actually (sorts $s\in\SSorts$
can be encoded in \gtrs{s} as monadic predicates $\_:s$ for each $s\in S$).
\emph{Deductions} with \mel{} specifications are defined by the \emph{rules of deduction} in \cite[Figure 4]{BouJouMes_SpecificationAndProofInMembershipEquationalLogic_TCS00}.
Roughly speaking, after removing the explicit universal quantifications in there, 
such rules of deduction can be seen as an \emph{Elementary Inference System} (see Section \ref{SecHornTheoriesAsEIS}).
Thus, a \mel{} specification $\cE$ can be seen as 
an \egtrs{} $\cR_\cE=(\Symbols,\SPredicates,\muTop,E,H,\emptyset)$, without replacement restrictions and with an empty set of rewrite rules, where the ``deductive behavior'' of $\cE$ is encoded in $E$ and $H$ as follows (for simplicity, we assume that there is \emph{no overloading}, i.e., $f$ can only belong to a \emph{single} set 
$\SSymbols_{\SKinds^*\times\SKinds}$).
\begin{enumerate}
\item $\Symbols=\bigcup_{\vec{K}\in\SKinds^*,K\in\SKinds}\SSymbols_{\vec{K}\to K}$.
\item $\SPredicates=\{\_=\_\}\cup\{\_:K\mid K\in\SKinds\}\cup\{\_:s\mid s\in\SSorts_K, K\in\SKinds\}$.
\item $E$ coincides with
the conditional equations in $\cE$, which become \emph{Replacement} rules in \cite[Figure 4]{BouJouMes_SpecificationAndProofInMembershipEquationalLogic_TCS00} as part of
$\GLtheory_E$
in Definition \ref{DefTheoriesOfAnEGTRS} 
(or the corresponding \eis{} $\GLinferenceOf{\GLtheory_E}$).
Also \emph{Reflexivity}, \emph{Symmetry}, and \emph{Transitivity} rules are available as part of 
$\GLtheory_E$.
And \emph{Congruence} rule is equivalent to propagation rules $(\RulePropagation)^=_{f,i}$  for all $f\in\Symbols$ and $1\leq i\leq ar(f)$, which are also part of $\GLtheory_E$.
\item $H$ consists of the membership axioms of $\cE$, which become \emph{Membership} rules as part of $\GLtheory_E$, together with Horn clauses $x:s\IF y:s, x=y$ for each sort $s$, which correspond to the \emph{Subject reduction} rules.
\end{enumerate}

\subsubsection{Computing with \mel{} specifications as rewriting with \gtrs{s}}

In order to \emph{compute} with \mel{} specifications $\cE$, the so-called \emph{Conditional Rewriting/Membership Systems (\crms, \cite[Definition 12]{BouJouMes_SpecificationAndProofInMembershipEquationalLogic_TCS00})} are used. 
Given a \mel{} specification, a \crms{} is obtained as follows:
(i) membership axioms $t:s\IF\gencond$ are translated into $t:s\IF\gencond'$, 
where equational components $u'=v'$ in $\gencond$ above are translated as joinability goals $u'\downarrow v'$ in $\gencond'$; 
then
(ii) conditional equations $u=v\IF\gencond$ are transformed into \emph{conditional rewrite rules} $u\to v\IF\gencond'$ where $\gencond'$ is obtained from $\gencond$ as in (i)  \cite[Section 4.1]{BouJouMes_SpecificationAndProofInMembershipEquationalLogic_TCS00}.
Note that equational components \emph{disappear} to become conditional rules or joinability goals.
Computations are defined by the \emph{rules of deduction} in \cite[Figure 6]{BouJouMes_SpecificationAndProofInMembershipEquationalLogic_TCS00}.
Thus, a \crms{} $\cR$ for a \mel{} specification $\cE$ can be seen as 
a \gtrs{} $\cR_\cE=(\Symbols,\SPredicates,\muTop,H,R)$, with $\Symbols$ as before and again without replacement restrictions, where
\begin{enumerate}
\item $\SPredicates=\{\_\downarrow\_,\_\to\_,\_\tos{}\_\}\cup\{\_:K\mid K\in\SKinds\}\cup\{\_:s\mid s\in\SSorts_K, K\in\SKinds\}$.
\item $H$ consists of (parenthesized names refer to corresponding rules in \cite[Figure 6]{BouJouMes_SpecificationAndProofInMembershipEquationalLogic_TCS00}):
\begin{enumerate}
\item atomic clauses $x:s$ for all variables $x$ of sort $s$ defined in $\cE$ (\emph{Variable} rules).
\item clauses $x:s\IF x:s'$ for all sorts $s,s'\in\SSorts$ satisfying $s\leq s'$ (\emph{Subsort} rules).
\item the membership axioms of $\cR$ (\emph{Membership} rules)
\item clauses $M:s\IF N:s, M=N$ for each sort $s$ (\emph{Subject reduction} rules).
\item A clause $x\downarrow y\IF x\rews{}z, y\rews{}z$ (not explicit in \cite[Figure 6]{BouJouMes_SpecificationAndProofInMembershipEquationalLogic_TCS00}) gives meaning to joinability $\downarrow$ in terms of reachability, as usual.
\end{enumerate}
\item $R$ consists of rules in $\cR$ (\emph{Replacement} rules).
\end{enumerate}
Rather than \emph{Reflexivity} and \emph{Transitivity} rules for $\to$, 
we use  $(\RuleReflexivity)^{\tos{}}$ and $(\RuleCompatibility)^\to$.
Also, instead of \emph{Congruence} rule we use $(\RulePropagation)^\to_{f,i}$ for all $f\in\Symbols$ and $1\leq i\leq ar(f)$.

Thus, $\rew{\cR}$ and $\rew{\cR_\cE}$ coincide.
Confluence properties of \crms{} computations are investigated in \cite[Section 6]{BouJouMes_SpecificationAndProofInMembershipEquationalLogic_TCS00}.
However, the focus is on a specific Church-Rosser property 
together
with a sort-decreasingess condition 
\cite[Definition 45]{BouJouMes_SpecificationAndProofInMembershipEquationalLogic_TCS00}.
Furthermore, the involved techniques are not directly connected with that of this paper; they rather correspond to the research about confluence of \gtrs{s} developed in \cite{Lucas_LocalConfluenceOfConditionalAndGeneralizedTermRewritingSystems_JLAMP24}, although the sort-decreasingness aspect is not investigated there.

\subsection{Dur\'an and Meseguer \cite{DurMes_OnTheChurchRosserAndCoherencePropertiesOfConditionalOrderSortedRewriteTheories_JLAP12}}

The $E$-confluence of 
\emph{conditional rewrite theories} $\cR=(\Symbols,E,R)$ has been investigated in \cite{DurMes_OnTheChurchRosserAndCoherencePropertiesOfConditionalOrderSortedRewriteTheories_JLAP12}.
Conditional equations $s=t\IF\gencond$
can be specified, but they are treated as conditional
rewrite rules (in $R$) by imposing some specific orientation 
(e.g., $s\to t\IF\gencond$). Only unconditional equations $s=t$ 
(called \emph{axioms})  which are
\emph{linear} and \emph{regular} (i.e., $\Var(s)=\Var(t)$
\cite[page 819]{DurMes_OnTheChurchRosserAndCoherencePropertiesOfConditionalOrderSortedRewriteTheories_JLAP12})
are used in $E$ (denoted $A$ in \cite{DurMes_OnTheChurchRosserAndCoherencePropertiesOfConditionalOrderSortedRewriteTheories_JLAP12}).

\begin{remark}[Moving equations from $E$ to $R$]
In the realm of \Maude, for which this research is intended to be used, it is a usual practice to move to $R$ all equations in $E$ (as rules) except those (denoted $A$) for which appropriate $E$-matching algorithms exist.
Typically, $A$ would consist of associative, commutative, unit axioms, etc., see \cite[page 818]{DurMes_OnTheChurchRosserAndCoherencePropertiesOfConditionalOrderSortedRewriteTheories_JLAP12}.
\end{remark}
The main result about $E$-confluence is \cite[Theorem 2]{DurMes_OnTheChurchRosserAndCoherencePropertiesOfConditionalOrderSortedRewriteTheories_JLAP12}, characterizing $E$-confluence 
as the $\joinability{\cR,A}$-joinability (\cite[Definition 7]{DurMes_OnTheChurchRosserAndCoherencePropertiesOfConditionalOrderSortedRewriteTheories_JLAP12}) of the set of 
\emph{Equational Conditional Critical Pairs} 
with respect to the equational theory 
$A$ (see Section \ref{SecECCPsAndLCTRS}) 
obtained (if an appropriate $A$-unification algorithm exists) from rules 
in $R$ (which may include oriented conditional equations). 
However, a number of restrictions are imposed \cite[Section 3.1]{DurMes_OnTheChurchRosserAndCoherencePropertiesOfConditionalOrderSortedRewriteTheories_JLAP12}:
\begin{enumerate}
\item  
$A$ is a set of \emph{linear and regular unconditional} equations;
\item  
$R$ is \emph{strongly $A$-coherent}, i.e., the following diagram commutes
\cite[page 819]{DurMes_OnTheChurchRosserAndCoherencePropertiesOfConditionalOrderSortedRewriteTheories_JLAP12}:
\begin{xy}
\xymatrix{
 u  \ar@{->}[rr]_{\cR/A}  \ar@{|-|}[d]_{*}^{A} & & v\ar@{|.|}[d]^{*}_{A} 
\\
u' \ar@{.>}[rr]_{\cR,A} & & v'
}
\end{xy}

\item  
the rules in $R$ are \emph{strongly deterministic} \cite[Definition 1]{DurMes_OnTheChurchRosserAndCoherencePropertiesOfConditionalOrderSortedRewriteTheories_JLAP12};
and
\item  
$\cR$ is \emph{quasi-decreasing} \cite[Definition 2]{DurMes_OnTheChurchRosserAndCoherencePropertiesOfConditionalOrderSortedRewriteTheories_JLAP12}, which is quite a strong termination condition. Together with strong $A$-coherence above, it entails $E$-termination.
\end{enumerate}
Such criteria, in particular quasi-decreasingness, can be used to decide the appropriate orientation of equations in $E$ into rules to be added to $R$.
By lack of space, we cannot provide a detailed comparison with our approach. However, sometimes our results can be advantageously used to improve on \cite{DurMes_OnTheChurchRosserAndCoherencePropertiesOfConditionalOrderSortedRewriteTheories_JLAP12}:
\begin{itemize}
\item $E$-confluence of $\cR$ in Example \ref{ExRModuloAndRrelativeE}, with $E=\{\fS{a} = \fS{b}\}$ and 
$R^\RMabbr=\{
\fS{a} \to \fS{c},
\fS{a}  \to  \fS{d}\IF \fS{b} \rewmodulos{} \fS{c},
\fS{c} \to \fS{d}\}$,
cannot be proved with \cite[Theorem 2]{DurMes_OnTheChurchRosserAndCoherencePropertiesOfConditionalOrderSortedRewriteTheories_JLAP12} as $\cR$ is \emph{not} strongly deterministic \cite[Definition 1]{DurMes_OnTheChurchRosserAndCoherencePropertiesOfConditionalOrderSortedRewriteTheories_JLAP12}: condition $\fS{b}\rews{}\fS{c}$ in rule (\ref{ExRModuloAndRrelativeE_rule2})
does not fulfill the requirement of $\rew{\cR,E}$-irreducibility of $\fS{c}$. 
We prove it $E$-confluent in Example \ref{ExRModuloAndRrelativeE_EConfluence}.
\item
The $E$-confluence of 
$\cR$ in Example \ref{ExPeakNoCPs}, with $E=\{\fS{b}=\fS{f}(\fS{a}), \fS{a}=\fS{c}\}$ and $R=\{\fS{c}\to\fS{d},\fS{b}\to\fS{d}\}$ cannot be \emph{disproved} by using
\cite[Theorem 2]{DurMes_OnTheChurchRosserAndCoherencePropertiesOfConditionalOrderSortedRewriteTheories_JLAP12}.
Note that $E$ satisfies the requirements for axioms $A$ in \cite{DurMes_OnTheChurchRosserAndCoherencePropertiesOfConditionalOrderSortedRewriteTheories_JLAP12}.
We have:
$\fS{b}\equequ{E}\fS{f}(\ul{\fS{a}})\rew{\cR}\fS{f}(\fS{d})$, i.e., $\fS{b}\rew{\cR/E}\fS{f}(\fS{d})$,
but the only $\rew{\cR,E}$-step on $\fS{b}$ is $\fS{b}\rew{\cR,E}\fS{d}$, and
$\fS{f}(\fS{d})\neq_E\fS{d}$. Thus, $\cR$ is not strongly $E$-coherent
and \cite[Theorem 2]{DurMes_OnTheChurchRosserAndCoherencePropertiesOfConditionalOrderSortedRewriteTheories_JLAP12} does not apply.
However, in Example \ref{ExPeakNoCPs_notEconfluent} we prove $\cR$ non-$E$-confluent.
\end{itemize}
In this setting, they notice a number of important facts (without proof), see \cite[page 820, end of Section 2]{DurMes_OnTheChurchRosserAndCoherencePropertiesOfConditionalOrderSortedRewriteTheories_JLAP12}:
\begin{enumerate}
\item The definitions of $\rew{\cR,A}$ and $\rew{\cR/A}$ in \cite[page 819]{DurMes_OnTheChurchRosserAndCoherencePropertiesOfConditionalOrderSortedRewriteTheories_JLAP12} are equivalent to our Definition \ref{DefComputationalRelationsOfAnEGTRS}(\ref{DefComputationalRelationsOfAnEGTRS_rewPStickel}) 
and
\ref{DefComputationalRelationsOfAnEGTRS}(\ref{DefComputationalRelationsOfAnEGTRS_RewritingModulo}), respectively.
\item If $\cR$ is $A$-coherent, then $A$-confluence is equivalent to confluence of $\rew{\cR,A}$ modulo $A$,  
i.e., the commutation of the rightmost diagram in Figure \ref{FigChurchRosserAndEConfluenceJK86} when $\rew{\cR,A}$ and $\equequ{A}$ are used instead of $\rew{\genrelation/\genequivalence}$ and $\equone{\genequivalence}^*$, see the central diagram in \cite[page 820]{DurMes_OnTheChurchRosserAndCoherencePropertiesOfConditionalOrderSortedRewriteTheories_JLAP12}.
\item $A$-termination is equivalent to termination of $\rew{\cR,A}$. 
\end{enumerate}
Unfortunately, Dur\'an and Meseguer discuss no method for proving  $A$-coherence, which is always assumed.
In \cite[Section 4]{DurMes_OnTheChurchRosserAndCoherencePropertiesOfConditionalOrderSortedRewriteTheories_JLAP12} they focus on a notion of coherence which is more appropriate for \Maude{} computations.
\section{Related work}
\label{SecRelatedWork}
Figure \ref{FigEarlyResearchOnConfluenceModulo}
provides a summary of the evolution of the research leading to methods for proving confluence modulo of \etrs{s}. 
\begin{figure}[t]
\scalebox{0.7}{
\begin{xy}
\xymatrix{
& \text{CR 
1936 \cite{ChuRos_SomePropertiesOfConversion_TAMS36}}
\ar@{->}[d]
\\
\!\textsc{Abstract reduction} & \text{Newman, 1942 \cite{Newman_OnTheoriesWithACombinatorialDefinitionOfEquivalence_AM42}}
\ar@{->}[ld]
\ar@{->}[rd]
&
\textsc{Equational reasoning}
\\
\text{Hindley, 1964 \cite{Hindley_TheChurchRosserPropertyAndAResultInCombinatoryLogic_PhD64}} 
\ar@{->}[d]
& & 
\text{KB 
1970 \cite{KnuBen_SimpleWordProblemsInUniversalAlgebra_CPAA70}}
\ar@{->}[d]
\\
\text{Rosen, 1970 \cite{Rosen_TreeManipulatingSystemsAndChurchRosserTheorems_STOC70}} 
\ar@{->}[d]
& & 
\text{Lankford, 1975 \cite{Lankford_CanonicalAlgebraicSimplificationInComputationalLogic_TR75}}
\ar@{->}[d] 
\\
\text{Sethi, 1974 \cite{Sethi_TestingForTheChurchRosserProperty_JACM74}} 
\ar@{->}[dr]
&
& 
\text{Lankford \& Ballantine, 1977 \cite{LanBal_DecisionProceduresForSimpleEquationalTheoriesWithPermutativeEquations_TR77}}
\ar@{->}[dl]
\\
& 
\text{Huet, 1977 \cite{Huet_ConfluentReductionsAbstractPropertiesAndApplicationsToTermRewritingSystems_FOCS77}}
\ar@{->}[d]
\\
& 
\text{Peterson \& Stickel, 1981 \cite{PetSti_CompleteSetsOfReductionsForSomeEquationalTheories_JACM81}}
\ar@{->}[d]
\\
& \text{Jouannaud, 1983 \cite{Jouannaud_ConfluentAndCoherentEquationalTermRewritingSystemsApplicationToProofsInAbstractDataTypes_CAAP83}}
\ar@{->}[d]
\\
& 
\text{Jouannaud \& Kirchner, 1986 \cite{JouKir_CompletionOfASetOfRulesModuloASetOfEquations_SIAMJC86}}
}
\end{xy}
}
\caption{Early research on confluence modulo}
\label{FigEarlyResearchOnConfluenceModulo}
\end{figure}
Interestingly, after the seminal papers by Church and Rosser \cite{ChuRos_SomePropertiesOfConversion_TAMS36} 
and Newman \cite{Newman_OnTheoriesWithACombinatorialDefinitionOfEquivalence_AM42},
 the research follows two main branches:
 \begin{enumerate}
 \item Hindley's PhD thesis \cite{Hindley_TheChurchRosserPropertyAndAResultInCombinatoryLogic_PhD64},
 and also \cite{Hindley_AnAbstractFormOfTheChurchRosserTheoremI_JSL69,Hindley_AnAbstractFormOfTheChurchRosserTheoremApplications_JSL74}, explicitly 
 investigated \emph{abstract relations}, and different ``confluence properties'' of these abstract relations  considered for the purpose of proving a Church-Rosser property.
 This research was continued by
 Rosen \cite{Rosen_TreeManipulatingSystemsAndChurchRosserTheorems_STOC70,%
 Rosen_TreeManipulatingSystemsAndChurchRosserTheorems_JACM73} 
 and then
 Sethi \cite{Sethi_TestingForTheChurchRosserProperty_JACM74}
 who considered the joint use of reduction and equivalence relations.
 \item Knuth \& Bendix' paper \cite{KnuBen_SimpleWordProblemsInUniversalAlgebra_CPAA70} explored efficient techniques for equational reasoning, leading to a \emph{completion algorithm} aiming at transforming a set of equations $E$ into a \trs{} $\cR_E$ so that equality goals for $E$ can be solved using $\cR_E$.
Afterwards, in order to improve Knuth \& Bendix' completion,
Lankford \cite{Lankford_CanonicalAlgebraicSimplificationInComputationalLogic_TR75}, and then 
Lankford and Ballantine \cite{LanBal_DecisionProceduresForSimpleEquationalTheoriesWithPermutativeEquations_TR77} 
investigated the use of equations $E$ together with rules in a \trs{} $\cR$ in equational reasoning. 
In particular, Lankford and Ballantine introduced \emph{rewriting of equivalence classes} which corresponds to the current notion of rewriting modulo.
 \end{enumerate}
Remarkably, neither \cite{Rosen_TreeManipulatingSystemsAndChurchRosserTheorems_STOC70,%
 Rosen_TreeManipulatingSystemsAndChurchRosserTheorems_JACM73} nor 
  \cite{Sethi_TestingForTheChurchRosserProperty_JACM74} cite Knuth \& Bendix.
Similarly, \cite{Lankford_CanonicalAlgebraicSimplificationInComputationalLogic_TR75} does not cite Hindley, Rosen, or Sethi's work.
A kind of convergence arrived with Huet's work \cite{Huet_ConfluentReductionsAbstractPropertiesAndApplicationsToTermRewritingSystems_FOCS77}, who cites all the aforementioned authors and improves their work in several aspects.
However, he followed Sethi's approach without mixing equivalence and reduction steps, not providing a notion of \emph{rewriting modulo} as is currently understood.
Peterson \& Stickel provided a term-based definition of Lankford and Ballantine's equivalence class rewriting, which is, essentially, what is used today.
Jouannaud \cite{Jouannaud_ConfluentAndCoherentEquationalTermRewritingSystemsApplicationToProofsInAbstractDataTypes_CAAP83}, 
and then Jouannaud and Kirchner 
\cite{JouKir_CompletionOfASetOfRulesModuloASetOfEquations_SIAMJC86}
gave stable definitions and notations for \etrs{s}.

In the following, we further develop this evolution and also discuss subsequent developments.

\subsection{Hindley (1964)}
\label{SecRelatedWork_Hindley}

As pioneered by Church and Rosser for \emph{conversion} of $\lambda$-expressions \cite{ChuRos_SomePropertiesOfConversion_TAMS36}, in his PhD thesis \cite{Hindley_TheChurchRosserPropertyAndAResultInCombinatoryLogic_PhD64}, 
Hindley extensively uses diagrams to represent properties of reductions with (abstract) binary relations 
(he uses $r$ instead of $\rew{\genrelation}$, $\geq_r$ instead of $\rews{\genrelation}$, and $\sim_r$ instead of $\conversion{\genrelation}$).
No additional equivalence relation is considered beyond $\sim_r$, derived from $r$ as \emph{conversion}.
To introduce abstract properties of binary relations, 
Hindley uses the diagrammatic notation. 
In particular, the Church-Rosser property (CR) in \cite[page 2]{Hindley_TheChurchRosserPropertyAndAResultInCombinatoryLogic_PhD64} is exemplified in a diagram in page 3 which essentially is as follows:
\begin{IEEEeqnarray}{r'C'l}
\raisebox{0.6cm}{\xymatrix{
t  \ar@{.>}[dr]_{*}\ar@{<->}[rr]^{*} & & t'\ar@{.>}[dl]^{*}
\\
& u 
}}
\label{LblCRinHindley64}
\end{IEEEeqnarray}
And Property (D) in page 3 is depicted in a diagram which is similar to the usual \emph{local confluence} diagram.
Confluence is called (B) in page 16, where it is proved equivalent to CR (Lemma 1.1).
In page 62, in a section on ``other people's results'', he reproduces Newman's main results in \cite{Newman_OnTheoriesWithACombinatorialDefinitionOfEquivalence_AM42} and, in particular, \cite[page 62, item (1)]{Newman_OnTheoriesWithACombinatorialDefinitionOfEquivalence_AM42} corresponds to what we often call ``Newman's Lemma'' today: termination plus local confluence implies confluence (provided that the two relations $r$ and $s$ used in \cite[page 62, item (1)]{Newman_OnTheoriesWithACombinatorialDefinitionOfEquivalence_AM42} are identified).

\subsection{Knuth and Bendix (1970)} 

In contrast to Hindley, who focused on abstract relations $r$ 
from which an equivalence $\sim_r$ is derived as conversion, in 
\cite{KnuBen_SimpleWordProblemsInUniversalAlgebra_CPAA70}, Knuth and Bendix investigated the use of a set $E$ of \emph{identities} on terms (i.e., equations) as a set 
 $R$ of \emph{reductions}, i.e., ``the right-hand side of the identity represents a word [term] smaller (\ldots) than the left-hand side''
\cite[page 263, Summary]{KnuBen_SimpleWordProblemsInUniversalAlgebra_CPAA70}
to prove equational goals $s\equequ{E}t$ by first obtaining $R$-normal forms $\theNfOf{s}$ and $\theNfOf{t}$ and then checking them for equality. 
Note that,
\begin{enumerate}
\item\label{KnuthBendix_Reductions_SmallerCond} 
Being (the left-hand side of an identity) \emph{smaller} (than the right-hand side) is relative to an \emph{order} $>$ 
defined by associating \emph{weights} to the symbols occurring in the identities.

\item\label{KnuthBendix_Reductions_CompletenessCond}
$R$ must be a \emph{complete} set of reductions , i.e.,
\emph{no two distinct irreducible words [i.e., terms] are equivalent with respect to $R$}
\cite[Section 4]{KnuBen_SimpleWordProblemsInUniversalAlgebra_CPAA70}.
\end{enumerate}
Condition (\ref{KnuthBendix_Reductions_SmallerCond}) corresponds to requiring \emph{termination} of $R$.
Condition (\ref{KnuthBendix_Reductions_CompletenessCond})
is the
\emph{unique normal form} property  ($\pUN$) 
\cite[Definition 1.2(iii)]{KloVri_UniqueNormalFormsForLambdaCalculusWithSurjectivePairing_IC89}.
Since $R$ is terminating,
$\pUN$ and confluence coincide \cite[Figure 2.4]{Ohlebusch_AdvancedTopicsInTermRewriting_2002}.
Thus,  
complete set of reductions $R$ are \emph{confluent} and \emph{terminating} \trs{s}.
Knuth and Bendix then show how to prove a set of reductions complete by analyzing \emph{superpositions} among rules \cite[Theorem 5 \& Corollary]{KnuBen_SimpleWordProblemsInUniversalAlgebra_CPAA70}. 

In \cite[Section 6]{KnuBen_SimpleWordProblemsInUniversalAlgebra_CPAA70} a method to obtain the \emph{extension} of an initial set of reductions obtained from a set of equations to a \emph{complete set} of reductions is described. 
Later on, this was called a \emph{completion procedure}
\cite[Chapter 7]{BaaNip_TermRewritingAndAllThat_1998}
or \emph{completion algorithm} \cite[Chapter 7]{Terese_TermRewritingSystems_2003}.
The procedure starts as an initial attempt to obtain a terminating \trs{} $R$ by appropriately orienting the equations $s=t$ in $E$ as $s\to t$ (if $s>t$) or $t\to s$ (if $t>s$).
However, this may fail for some equational theories. 
For instance, for the equation $f(x,y)=f(y,x)$ defining commutativity of $f$, 
neither $f(x,y)\to f(y,x)$ nor $f(y,x)\to f(x,y)$ lead to a terminating \trs.
This led, in particular, to the development of the idea of doing completion with respect to
\emph{rewriting modulo a set of equations}, see Sections \ref{SecLankford} and \ref{SecLankfordBallantine}.

\subsection{Rosen (1970, 1973). Aho, Sethi, and Ullman (1972). Sethi (1974)}
\label{SecRelatedWork_Sethi}

Continuing Hindley's approach,
Rosen's \emph{General Replacement Systems} (GRSs \cite{Rosen_TreeManipulatingSystemsAndChurchRosserTheorems_STOC70,Rosen_TreeManipulatingSystemsAndChurchRosserTheorems_JACM73}) are what we call \emph{Abstract Reduction Systems} (ARSs) today \cite[Chapter 1]{Terese_TermRewritingSystems_2003}.
Rosen investigated confluence of GRSs and its application to prove confluence of computations with 
\emph{Rule Schemata} \cite[Definition 6.1]{Rosen_TreeManipulatingSystemsAndChurchRosserTheorems_JACM73}, a predecessor of Term Rewriting Systems.

In \cite{AhoSetUll_CodeOptimizationAndFiniteChurchRosserTheorems_DOC72} Aho, Sethi, and Ullman considered \emph{Finite Church-Rosser (FCR) pairs} consisting of a
\emph{terminating} reduction relation $\to$ 
and an equivalence $\equiv$ (with no particular relation with $\to$) on a set ``of programs'' $S$\footnote{The shape of such programs remains unspecified, so that it can be considered as an abstract set.} such that the diagram in Figure \ref{FigFCRproperty_ASU72} (left) commutes \cite[page 98]{AhoSetUll_CodeOptimizationAndFiniteChurchRosserTheorems_DOC72}.
\begin{figure}[t]
\begin{center}
\begin{tabular}{c@{\hspace{2cm}}c}
\xymatrix{
t  \ar@{->}[dd]_{!}\ar@{=}[rr] & & t'\ar@{->}[dd]^{!}
\\
\\
u\ar@{::}[rr] & &  u'
}
&
\xymatrix{
s \ar@{->}[d]  \ar@{=}[rr] & & s' \ar@{->}[d]^{\#}
\\
t \ar@{.>}[d]_{*} & &  t'\ar@{.>}[d]^{*}
\\ 
u  \ar@{::}[rr] & & u' 
}
\end{tabular}
\end{center}
\caption{Conditions for Finite Church-Rosser Pairs \cite{AhoSetUll_CodeOptimizationAndFiniteChurchRosserTheorems_DOC72}, where \\ $=$ is used instead of $\equiv$ and $\stackrel{\#}{\to}$ is the union of $\to$ and equality}
\label{FigFCRproperty_ASU72}
\end{figure}
They prove that this is equivalent to requiring commutation of the rightmost diagram of Figure \ref{FigFCRproperty_ASU72} \cite[Theorem 3 \& Figure 3]{AhoSetUll_CodeOptimizationAndFiniteChurchRosserTheorems_DOC72}.

\begin{remark}
Note that diagram (\ref{LblCRinHindley64}) and the leftmost diagram of Figure \ref{FigFCRproperty_ASU72} are both intended to represent a ``\emph{Church-Rosser property}'', but they are meaningfully different.
In (\ref{LblCRinHindley64}) only one relation (and its conversion)
is used.
In the leftmost diagram of Figure \ref{FigFCRproperty_ASU72} two independent relations $\equiv$ and $\to$ are considered.
\end{remark}
Sethi \cite{Sethi_TestingForTheChurchRosserProperty_JACM74} continued this research and introduced new criteria for guaranteeing the Church-Rosser property
of \emph{Finite Replacement System} (FRS) $(S,\to,\equiv)$, where 
$\to$ is \emph{terminating}. Sethi used FCR to designate the leftmost property in Figure \ref{FigFCRproperty_ASU72}. 
Most properties are described using diagrams.
In particular,
$P1$ \cite[Figure 3(b)]{Sethi_TestingForTheChurchRosserProperty_JACM74} corresponds to Property $\beta$ of \cite[Figure 9 (right)]{Huet_ConfluentReductionsAbstractPropertiesAndApplicationsToTermRewritingSystems_JACM80} and
$P3$ \cite[Figure 4(left)]{Sethi_TestingForTheChurchRosserProperty_JACM74} to Property $\alpha$.
His Theorem 2.2 (\emph{an FRS satisfying $P1$ and $P3$ is FCR}) corresponds to  \cite[Lemma 2.7]{Huet_ConfluentReductionsAbstractPropertiesAndApplicationsToTermRewritingSystems_JACM80} (if $\to$ is terminating and satisfies $\alpha$ and $\beta$, then it is confluent modulo $\equiv$).
In \cite{AhoSetUll_CodeOptimizationAndFiniteChurchRosserTheorems_DOC72,Sethi_TestingForTheChurchRosserProperty_JACM74}, though, $\rew{}$ and $\equiv$ are \emph{not} mixed in their use: $\rew{}$ is used to \emph{simplify} (programs) and then $\equiv$ is used to test whether simplified programs are equivalent.
No ``reduction modulo'' is considered.

\subsection{Plotkin (1972). Slagle (1974)}

Plotkin investigated the efficient use of resolution-based theorem-provers in the presence of equational theories \cite{Plotkin_BuildingInEquationalTheories_MI72}. He considered the use of $E$-unification instead of unification in a modified resolution mechanism.
He noticed that, in this case, \emph{rather than a single most general unifier, there will be an infinite set of maximally general ones} \cite[page 74]{Plotkin_BuildingInEquationalTheories_MI72}. 
Important properties 
like \emph{associativity},  which can be represented by appropriate equational theories, \emph{fail} to 
have a complete and finite $E$-unification algorithm which is able to produce all $E$-unifiers for all $E$-unification problems $s=^?_Et$.
In general, \emph{for each equational theory one must invent a special unification algorithm with these equations built in}.

Slagle 
\cite{Slagle_AutomatedTheoremProvingForTheoriesWithSimplifiersCommutativityAndAssociativity_JACM74}
used \emph{oriented equations} as \emph{simplifiers} of clauses in resolution-based theorem proving, so that
\cite[page 626]{Slagle_AutomatedTheoremProvingForTheoriesWithSimplifiersCommutativityAndAssociativity_JACM74}:
\begin{enumerate}
\item\label{Slagle_Simplifiers_cond1} for any expression, \emph{the (simplification) process terminates}, and
\item\label{Slagle_Simplifiers_cond2}  \emph{terminates with the same expression no matter in what order the immediate collapses \emph{[i.e., one-step rewritings]} are made}.
\end{enumerate}
Item (\ref{Slagle_Simplifiers_cond1}) corresponds to \emph{termination} and item (\ref{Slagle_Simplifiers_cond2}) corresponds to the 
\emph{unique normal form with respect to reduction} property  
($\pUNred$ 
\cite[Definition 1.2(ii)]{KloVri_UniqueNormalFormsForLambdaCalculusWithSurjectivePairing_IC89}).
In the presence of termination, 
$\pUNred$ and confluence coincide.
In Sections 5 to 8, Slagle also considered the use of $E$-unification in resolution-based theorem proving and developed algorithms for associative and/or commutative functions  which also performed 
simplification/normalization steps.

\subsection{Lankford (1975)} 
\label{SecLankford}

In \cite{Lankford_CanonicalInference_TR75}, Lankford revisited Knuth and Bendix' work and identified complete sets of reductions as \emph{sets of rewrite rules} 
satisfying 
\cite[page 2]{Lankford_CanonicalInference_TR75}:
\begin{enumerate}
\item a \emph{finite termination property} (FTP): the set of rewrite rules \emph{always leads to a finite sequence of simplifications}, and 
\item a \emph{unique termination property} (UTP), i.e., $\pUN$ as in \cite{KnuBen_SimpleWordProblemsInUniversalAlgebra_CPAA70}.
\end{enumerate}
In 
\cite[pages 2-3]{Lankford_CanonicalInference_TR75},
Lankford shows that the aforementioned Knuth and Bendix' method to prove UTP for a (terminating) set of rewrite rules $R$ based on the notion of superposition can be adapted to use \emph{unification}.
For this purpose, 
Lankford introduced the notion of a \emph{special equality inference} $s=t$ of a set $\cR$ of rewrite rules, which is obtained from two rules in $\cR$ exactly as Huet's critical pairs \cite[Section II.2]{Huet_ConfluentReductionsAbstractPropertiesAndApplicationsToTermRewritingSystems_FOCS77}, except for two slight differences: (i) he does \emph{not} mention that rules involved in the obtention of special equality inferences cannot share variables (but his examples in the second half of page 3 and page 4 show that this is implicit for him, as rules are renamed if necessary) and (ii) he considers special equality inferences obtained from the same rule at the \emph{top position} $\toppos$, which is explicitly excluded by Huet, as only trivial critical pairs $\langle t,t\rangle$ are obtained.
Then, in \cite[page 3]{Lankford_CanonicalInference_TR75}, he presents
\begin{quote}
\emph{1.4 The Unique Termination Algorithm.} If $\cR$ is a set of rewrite rules such that each sequence of simplifications by $\cR$ is finite, then $\cR$ has the unique termination property iff each special equality inference $s=t$ of $\cR$ has the property that $s$ and $t$ simplify to identical terms.
\end{quote}
which (since UTP and confluence coincide due to FTP) is just the usual algorithm to check confluence as the joinability of critical pairs $\langle s,t\rangle$ of a terminating \trs{} $\cR$ (compare with \cite[Theorem 2]{Huet_ConfluentReductionsAbstractPropertiesAndApplicationsToTermRewritingSystems_FOCS77}).

\subsection{Lankford and Ballantine (1977)}
\label{SecLankfordBallantine}
In 
\cite[page 4]{Lankford_SomeApproachesToEqualityForComputationalLogicASurveyAndAssessment_TR77}, and already in \cite{Lankford_CanonicalAlgebraicSimplificationInComputationalLogic_TR75},
Lankford observes that the presence of a \emph{commutativity} equation (e.g., $x+y=y+x$) in an equational theory $E$ makes Knuth and Bendix' approach unable to obtain a complete set of reductions to solve equalities as no orientation for such an equation ``makes it terminating''.
Then, Lankford and Ballantyne proposed the use of reduction of \emph{equivalence classes of terms} and the corresponding 
\begin{quote}
extension of \emph{complete set of reductions concepts, properties, and techniques to equivalence classes of terms}.
\cite[page 3]{LanBal_DecisionProceduresForSimpleEquationalTheoriesWithPermutativeEquations_TR77}.
\end{quote}
In particular,
\begin{quote}
\emph{a mathematical characterization of the unique termination property for finite sets of certain equivalence class rewrite rules}.
\end{quote}
Furthermore, they \emph{generalize Knuth and Bendix' completion technique to equivalence classes of rewrite rules}.
In \cite{LanBal_DecisionProceduresForSimpleEquationalTheoriesWithPermutativeEquations_TR77}, this is restricted, though, to sets of \emph{permutative equations} $s=t$ which are those where both $s$ and $t$ have the \emph{same number} of each (function or variable) symbol occurring in them
(e.g.,
commutativity and associativity axioms, but \emph{not} distributivity). 

\subsubsection{Equivalence rewriting relative to $E$}

In this setting, a \trs{} $\cR$ with rules $\lhsr\to\rhsr$ and
finite sets of expressions $[\lhsr]\to[\rhsr]$ are considered, where $[\_]$ denotes an equivalence class (of a given considered equational theory, which we call $E$ in the following)  \cite[page 160]{BalLan_NewDecisionAlgorithmsForFinitelyPresentedCommutativeSemigroups_CMA81},
see also \cite[page 4]{LanBal_DecisionProceduresForSimpleEquationalTheoriesWithPermutativeEquations_TR77}, using a different notation.
Then, \emph{equivalence rewriting relative to $E$} is defined as follows
\begin{definition}[Equivalence rewriting relative to $E$] 
\emph{\cite[page 4]{LanBal_DecisionProceduresForSimpleEquationalTheoriesWithPermutativeEquations_TR77}} 
\label{DefEquivalenceRewriting_LB77}
$[s]$ rewrites into $[t]$, written $[s]\to[t]$, 
if there is a substitution 
$\sigma$, 
$s'\in[s]$,
$p\in\Pos(s')$,
$t'\in[t]$,  
$\lhsr'\in[\lhsr]$,
and
$\rhsr'\in[\rhsr]$
such that $s'=s'[\sigma(\lhsr')]_p$
and $t'=s'[\sigma(\rhsr')]_p$.
\end{definition}
This definition can be found in
\cite[Section 11.1]{BaaNip_TermRewritingAndAllThat_1998} as follows:
\begin{IEEEeqnarray}{r'C'l}
[s]_E\rew{\cR/E}[t]_E & :\Leftrightarrow & (\exists s',t')\: s\equequ{E}s'\rew{\cR}t'\equequ{E}t\label{DefRewritingModulo_BN98}
\end{IEEEeqnarray}
but the usual notation $\rew{\cR/E}$ and denomination  ``\emph{rewriting modulo}'' are
used.
\begin{remark}[Class rewriting]
Recently, Jouannaud calls \emph{class rewriting} to $\equequ{E}\circ\rew{\cR}$ (also denoted $\rew{\cR/E}$ \cite[page 277]{Jouannaud_ConfluenceOfTerminatingRewritingComputations_TFSP24}).
Although equivalence rewriting relative to $E$ and class rewriting are not formally identical,  in the following we often use the shorter term ``class rewriting'' instead of the original ``equivalence rewriting relative to $E$''.
\end{remark}

\subsubsection{Critical pairs for class rewriting}

As promised, FTP and UTP are generalized to  equivalence rewriting relative to an equational theory in \cite[page 4]{LanBal_DecisionProceduresForSimpleEquationalTheoriesWithPermutativeEquations_TR77}:
\begin{enumerate}
\item\label{LblFTPinLB77}
 the \emph{finite termination property} holds if there are no infinite sequences $[t_1]\rew{}[t_2]\rew{}\cdots$, and
\item\label{LblUTPinLB77}
 the \emph{unique termination property} holds if for each equivalence class $[t]$, if $[t]$ 
equationally rewrites into irreducible classes $[u]$ and $[v]$, then $[u]$ and $[v]$ \emph{coincide}.
\end{enumerate}
Hence, a set $\cR$ of rewrite rules relative to an equational theory $E$ is \emph{a complete set of reductions relative to $E$} if the two properties above hold.
Note that (\ref{LblFTPinLB77}) is $E$-termination of $R$.
Lankford and Ballantine provide two results to prove UTP. For the second result, on the basis of the notion of \emph{paramodulation} 
\cite{RobWos_ParamodulationAndTheoremProvingInFirstOrderTheoriesWithEquality_TR68,RobWos_ParamodulationAndTheoremProvingInFirstOrderTheoriesWithEquality_MI69,RobWos_ParamodulationAndTheoremProvingInFirstOrderTheoriesWithEquality_AR83} and also
\cite[Section 4]{Slagle_AutomatedTheoremProvingForTheoriesWithSimplifiersCommutativityAndAssociativity_JACM74} 
they introduce a notion of \emph{critical pair}.

\begin{definition}[Critical pair for class rewriting]
\label{CriticalPair_LB77}
\cite[page 8, last paragraph]{LanBal_DecisionProceduresForSimpleEquationalTheoriesWithPermutativeEquations_TR77}
A critical pair is a pair $\langle [s],[t]\rangle$ such that $s=t$ is a \emph{paramodulant} obtained as described in the paragraph above the definition.
\end{definition}
Then, if $\cR$ and $E$ satisfy FTP, then UTP is satisfied  iff for all critical pairs $\langle[s],[t]\rangle$, the normalization of $[s]$ and $[t]$ relative to $E$ leads to the same equivalence class $[u]$.
Clearly, this is an $E$-confluence criterion in terms of class rewriting.

\subsection{Huet (1977)}

Remarkably, the background of \cite{Huet_ConfluentReductionsAbstractPropertiesAndApplicationsToTermRewritingSystems_FOCS77} is both \cite{Hindley_TheChurchRosserPropertyAndAResultInCombinatoryLogic_PhD64,AhoSetUll_CodeOptimizationAndFiniteChurchRosserTheorems_DOC72,Rosen_TreeManipulatingSystemsAndChurchRosserTheorems_JACM73,Sethi_TestingForTheChurchRosserProperty_JACM74} and   \cite{KnuBen_SimpleWordProblemsInUniversalAlgebra_CPAA70}.
As in \cite{Sethi_TestingForTheChurchRosserProperty_JACM74}, in \cite[Section 3]{Huet_ConfluentReductionsAbstractPropertiesAndApplicationsToTermRewritingSystems_FOCS77} Huet considers independent equivalences $\sim$ and reductions $\to$, so that equivalent objects $s\sim t$ are then transformed by reduction $s\rews{}s'$ and $t\rews{}t'$ and then tested for equivalence: $s'\sim t'$.
As part of his analysis of abstract relations, Huet mentions $\rew{}/\sim$, implicitly defined as allowing ``$\sim$ along the $\to$-derivations'' \cite[after Definition D10]{Huet_ConfluentReductionsAbstractPropertiesAndApplicationsToTermRewritingSystems_FOCS77}, but he does not explicitly investigate it.
\begin{remark}[The notation $\rew{}/\sim$]
\label{RemTheNotationRewSim}
It is unclear, then, which is the origin of the notation $\rew{}/\sim$ referred to 
what we call \emph{rewriting modulo}.
Actually, in personal communication, Nachum Dershowitz said the following about that question:
\begin{quote}
\emph{My guess is and was that it was a natural use of the virgule notation for quotient relations, which dates back to the nineteenth century, applied to rewrite relations} \cite{Dershowitz_PersonalCommunication_2026_01_04}.
\end{quote}
\end{remark}
In \cite[Section II.2]{Huet_ConfluentReductionsAbstractPropertiesAndApplicationsToTermRewritingSystems_FOCS77}, 
Huet introduces \emph{critical pairs} (as we know them today) 
to deal with Knuth and Bendix' \emph{superpositions of rules} (i.e., overlaps).
Besides the well-known use of critical pairs for (dis)proving confluence of \trs{s} $\cR$, 
Huet uses such critical pairs to provide a sufficient condition for his confluence of $\rew{\cR}$ modulo $E$, which he explicitly differentiates from 
$E$-confluence: in \cite[page 819]{Huet_ConfluentReductionsAbstractPropertiesAndApplicationsToTermRewritingSystems_JACM80}, referring the completion methods developed in 
\cite{LanBal_DecisionProceduresForSimpleEquationalTheoriesWithCommutativeAxioms_TR1977} and \cite{LanBal_DecisionProceduresForSimpleEquationalTheoriesWithPermutativeEquations_TR77}, he says that
\begin{quote}
the condition checked in these papers is the confluence of $\to/\sim$, rather than the confluence of $\to$ modulo $\sim$. 
\end{quote}
In this case, however, critical pairs are obtained not only from the rules in $\cR$ but also from rules in $\cR$ and rules obtained by orienting equations in $E$, as explained in Section \ref{SecHuet} above.

\subsection{Peterson and Stickel (1981)} 
\label{SecRelatedWork_PS81}
In \cite{PetSti_CompleteSetsOfReductionsForSomeEquationalTheories_JACM81}, a \emph{reduction} 
is just a set of \emph{pairs of terms}, actually a rewrite rule $\lhsr\to\rhsr$  \cite[Definition 8.1]{PetSti_CompleteSetsOfReductionsForSomeEquationalTheories_JACM81}, although the usual restrictions $\lhsr\notin\Variables$ and $\Var(\rhsr)\subseteq\Var(\lhsr)$ are not required. No requirement of termination as in Knuth \& Bendix, or Lankford above is imposed.

 \subsubsection{Rewriting modulo in \cite{PetSti_CompleteSetsOfReductionsForSomeEquationalTheories_JACM81}}
Ballantyne and Lankford's notion of reducing \emph{equivalence classes of terms}  (Definition \ref{DefEquivalenceRewriting_LB77}) 
is similar to \cite[Definition 8.3]{PetSti_CompleteSetsOfReductionsForSomeEquationalTheories_JACM81}.\footnote{Lankford and Ballantine \cite{LanBal_DecisionProceduresForSimpleEquationalTheoriesWithPermutativeEquations_TR77,LanBal_DecisionProceduresForSimpleEquationalTheoriesWithCommutativeAssociativeAxioms_TR1977,BalLan_NewDecisionAlgorithmsForFinitelyPresentedCommutativeSemigroups_CMA81,LanButBal_AProgressReportOnNewDecisionAlgorithmsForFinitelyPresentedAbelianGroups_CADE84}, though, restricted the attention to specific classes of equational theories, see
\cite[Section 2]{PetSti_CompleteSetsOfReductionsForSomeEquationalTheories_JACM81} for a detailed discussion.}
 However, Peterson and Stickel first define the notion on the basis of a relation on terms. 

\begin{definition}[``Rewriting modulo'']
\label{DefRewritingModulo_Definition8_2_PS81}
\emph{\cite[Definition 8.2]{PetSti_CompleteSetsOfReductionsForSomeEquationalTheories_JACM81}}
Let $\cR$ be a \trs{} and $E$ be a set of equations.
If $s$ and $t$ are terms, we say that $s\spcrel{\to\!\!(\cR,E)}t$ if there is a term $s'\in[s]_E$,
$p\in\Pos(s')$, $\lhsr\to\rhsr\in\cR$, and a substitution $\sigma$ such that $s'|_p=\sigma(\lhsr)$ and $s'[\sigma(\rhsr)]_p\in[t]_E$.
\end{definition}
Note that, despite the notation $s\spcrel{\to\!\!(\cR,E)}t$, this is what we currently call \emph{rewriting modulo} of \emph{terms} (and would denote $s\rew{\cR/E}t$).

 \subsubsection{Implicit definition of $\rew{\cR,E}$}
Peterson and Stickel's definition of a \emph{complete set of reductions} \cite[Definition 8.11]{PetSti_CompleteSetsOfReductionsForSomeEquationalTheories_JACM81}
is based on equivalence rewriting.
Then, they investigate how to prove completeness of \emph{$E$-compatible reductions}.

\begin{definition}
\label{TCompatibility_PS81}
\emph{\cite[Definition 9.1]{PetSti_CompleteSetsOfReductionsForSomeEquationalTheories_JACM81}
}
We say that $\cR$ is \emph{$E$-compatible} if whenever $s\spcrel{\to\!\!(\cR,E)}t$ [i.e., $s\rew{\cR/E}t$], there exist $p\in\Pos(s)$, 
a substitution $\sigma$, 
and $\lhsr\to\rhsr\in\cR$ such that
\begin{IEEEeqnarray}{r'C'l}
s|_p & \equequ{E} & \sigma(\lhsr)\label{TCompatibility_PS81_PSrewriting_Ematching}
\\
{}[t] & \rews{} & [s[\sigma(\rhsr)]_p]
\label{TCompatibility_PS81_EJoinability}
\end{IEEEeqnarray}
\end{definition}
Note that what we usually call \emph{Peterson \& Stickel reduction} (denoted $\rew{\cR,E}$) is ``piece-wise'' introduced in Definition \ref{TCompatibility_PS81}, \emph{without giving any explicit definition}:
(i) the $E$-matching condition is (\ref{TCompatibility_PS81_PSrewriting_Ematching}) and
(ii) the obtained $\rew{\cR,E}$-reduct is inside the $E$-equivalence class on the rightmost part of (\ref{TCompatibility_PS81_EJoinability}), i.e., $s'=s[\sigma(\rhsr)]_p$.
Using the notation in this paper, $E$-compatibility holds if the diagram in Figure \ref{FigECompatibility_Definition9_1_PS81} commutes. 
\begin{figure}[t]
\[\xymatrix{
s\ar@{->}[rr]_{\cR/E} 
\ar@{.>}[dd]^{\cR,E} & & 
t\ar@{.>}[dd]^{*}_{\cR/E}
\\
\\
s' \ar@{|.|}[rr]^{*}_{E} & & t'
}
\]
\caption{Diagram for $E$-compatibility in \cite[Definition 9.1]{PetSti_CompleteSetsOfReductionsForSomeEquationalTheories_JACM81}}
\label{FigECompatibility_Definition9_1_PS81}
\end{figure}

\subsubsection{Extensions of rules}
\label{SecExtensionsOfRules_PS81}

In \cite[Theorem 8.12]{PetSti_CompleteSetsOfReductionsForSomeEquationalTheories_JACM81}, where they proposed the use of 
\emph{extensions 
of rules} \cite[Definition 8.8]{PetSti_CompleteSetsOfReductionsForSomeEquationalTheories_JACM81} to 
prove completeness of reductions as the joinability (using class rewriting) of 
\emph{root $E$-overlaps} among extensions of rules of $\cR$.
However, \cite[last paragraph of Section 8]{PetSti_CompleteSetsOfReductionsForSomeEquationalTheories_JACM81}, Peterson and Stickel remark
\begin{quote}
The difficulty with this theorem is that every reduction has infinitely many variable extensions, so the hypothesis cannot be checked.
\end{quote}
Although they used extensions in their completion algorithm for associative and commutative theories \cite[Sections 10--11]{PetSti_CompleteSetsOfReductionsForSomeEquationalTheories_JACM81}, 
their formal results to guarantee $E$-completeness are based on the notion of ($E$-)critical pair.

\subsubsection{($E$-)Critical pairs}
In \cite[Definition 9.2]{PetSti_CompleteSetsOfReductionsForSomeEquationalTheories_JACM81}, Peterson and Stickel introduce the following notion of \emph{critical pair}.

\begin{definition}
\label{DefCriticalPair_Definition9_2_PS81}
\cite[Definition 9.2]{PetSti_CompleteSetsOfReductionsForSomeEquationalTheories_JACM81}
Suppose $\lhsr\to\rhsr,\lhsr'\to\rhsr'\in\cR$, $p\in\Pos_\Symbols(\lhsr)$ and $\sigma$ is a unifier in a complete set of $E$-unifiers of $\lhsr|_p$ and $\lhsr'$.
Then, we say that $\langle[\sigma(\lhsr[\rhsr']_p),[\sigma(\rhsr)]\rangle$ is a \emph{critical pair} for $\cR$ and $E$.
\end{definition}
Note that $E$-unification is used instead of syntactic unification.
Except for the notation of critical pairs $\langle[s],[t]\rangle$ involving $E$-equivalence classes $[s]$ and $[t]$ rather than just terms (`a la Lankford and Ballantine's Definition \ref{CriticalPair_LB77}),
Definition \ref{DefCriticalPair_Definition9_2_PS81} is, essentially, the definition of $E$-critical pairs $\langle s,t\rangle$.
Peterson and Stickel prove that joinability of  
$[s]$ and $[t]$ above 
proves completeness of \emph{$E$-compatible} sets of reductions which are \emph{terminating} on classes of terms \cite[Theorem 9.3]{PetSti_CompleteSetsOfReductionsForSomeEquationalTheories_JACM81}.
Then, \cite[Theorem 9.5]{PetSti_CompleteSetsOfReductionsForSomeEquationalTheories_JACM81} provides a sufficient condition for $E$-compatibility provided that $E$ consists of \emph{linear equations} $s=t$ such that $\Var(s)=\Var(t)$.

\subsection{Jouannaud (1983). Jouannaud and Kirchner (1984, 1986)}
\label{SecRelatedWork_JouannaudAndKirchner}

Although Huet extensively used ``modulo'' to refer to properties of rewriting computations with some participation of equivalence relations, he did not use the term (or defined) ``rewriting modulo'' in \cite{Huet_ConfluentReductionsAbstractPropertiesAndApplicationsToTermRewritingSystems_FOCS77,
Huet_ConfluentReductionsAbstractPropertiesAndApplicationsToTermRewritingSystems_JACM80,HueOpp_EquationsAndRewriteRulesASurvey_FLT80}.
As mentioned in Remark \ref{RemTheNotationRewSim}, perhaps he was just borrowing the notion of quotient of relations from algebra to refer to the work of Lankford and Ballantine.
On the other hand, although Peterson and Stickel (improving on Lankford and Ballantine) introduced term rewriting modulo (Definition \ref{DefRewritingModulo_Definition8_2_PS81}), they do not use ``modulo'' or the notation $\rew{\cR/E}$ in their paper.
Furthermore, they do not provide an explicit definition of what is often called ``Peterson \& Stickel reduction'' and denoted $\rew{\cR,E}$.

It seems that the current definitions and notations for rewriting modulo come from 
\cite{Jouannaud_ConfluentAndCoherentEquationalTermRewritingSystemsApplicationToProofsInAbstractDataTypes_CAAP83,JouKir_CompletionOfASetOfRulesModuloASetOfEquations_POPL84}, where we find
\begin{itemize}
\item The use and definition of $\rew{\genrelation/\genequivalence}$ 
(or $\genrelation/\genequivalence$) to denote the relation $\equequ{\genequivalence}\circ\rew{\genrelation}\circ\equequ{\genequivalence}$, which ``simulates the induced relation in $E$-equivalence classes'' \cite[Definition 7]{Jouannaud_ConfluentAndCoherentEquationalTermRewritingSystemsApplicationToProofsInAbstractDataTypes_CAAP83}.
\item 
The use of $s\rew{\cR,E}t$ to denote a reduction step where $s|_p\equequ{E}\sigma(\lhsr)$ and $t=s[\sigma(\rhsr)]_p$ can be found in \cite[Definition 9]{Jouannaud_ConfluentAndCoherentEquationalTermRewritingSystemsApplicationToProofsInAbstractDataTypes_CAAP83} together with
the attribution of this notion to \cite{PetSti_CompleteSetsOfReductionsForSomeEquationalTheories_JACM81}.
\item The use of ``modulo'' to refer to properties (confluence, termination, convergence, etc.) of relations 
$\rew{\genrelation}$, 
$\rew{\genrelation,\genequivalence}$, 
$\rew{\genrelation/\genequivalence}$, etc., with respect to an equivalence $\genequivalence$ \cite[Definition 7]{JouKir_CompletionOfASetOfRulesModuloASetOfEquations_POPL84}
as already done by Huet.
\item The notion of $E$-critical pair \cite[Definition 10]{Jouannaud_ConfluentAndCoherentEquationalTermRewritingSystemsApplicationToProofsInAbstractDataTypes_CAAP83} and
\cite[Definition 12]{JouKir_CompletionOfASetOfRulesModuloASetOfEquations_SIAMJC86}.
\end{itemize}
However, no explicit definition of ``\emph{rewriting modulo}'' is found in 
\cite{Jouannaud_ConfluentAndCoherentEquationalTermRewritingSystemsApplicationToProofsInAbstractDataTypes_CAAP83,JouKir_CompletionOfASetOfRulesModuloASetOfEquations_POPL84,JouKir_CompletionOfASetOfRulesModuloASetOfEquations_SIAMJC86}.
In \cite[Example 4]{JouMun_TerminationOfASetOfRulesModuloASetOfEquations_CADE84}, ``rewriting modulo associativity'' is mentioned, but without any definition.
\begin{remark}[Different Church-Rosser properties]
Note that three different Church-Rosser properties are considered in the aforementioned works:
\begin{enumerate}
\item Hindley's (\ref{LblCRinHindley64}) involving a single relation 
$\rew{\genrelation}$ and its conversion $\conversion{\genrelation}$ (Section \ref{SecRelatedWork_Hindley}).
\item Sethi's Figure \ref{FigFCRproperty_ASU72}(left) considers a relation $\rew{\genrelation}$ and an equivalence  $\equone{\genequivalence}^*$ (i.e., $\sim$) 
without mixing their use (Section \ref{SecRelatedWork_Sethi}).
\item Jouannaud and Kirchner's Figure  
\ref{FigChurchRosserPropertyETerminatingR} considers several relations
$\rew{\genrelation}$,
$\rew{\genrelationUpE}$,
$\oneCRs$, and
 $\equone{\genequivalence}^*$. 
 In particular, $\oneCRs$ mixes $\rew{\genrelation}$, $\leftrew{\genrelation}$, and  $\equone{\genequivalence}$ steps (Section \ref{SecAbstractFrameworkJK86}).
\end{enumerate}
\end{remark}
\subsection{Bachmair  and  Dershowitz (1986, 1987). Dershowitz and Jouannaud (1990)}
After 
\cite[page 6]{BacDer_CommutationTransformationAndTermination_CADE86}
(where $\rew{\cR/E}$ is also defined as $\equequ{E}\circ\rew{E}\circ\equequ{E}$) 
and
\cite[pages 193-194] {BacDer_CompletionForRewritingModuloACongruence_RTA87}
(where ``rewriting modulo'' is used for $\rew{\cR/E}$),
in \cite{DerJou_RewriteSystems_HTCS90}, 
the now standard definition, denomination, and notation 
of rewriting modulo is given:

\begin{definition}[Rewriting modulo]
\emph{\cite[Section 2.5, page 257]{DerJou_RewriteSystems_HTCS90}}
Let $\cR$ be a \trs{} and $E$ be a set of equations.\footnote{In \cite{DerJou_RewriteSystems_HTCS90}, $S$ is used instead of $E$.} We say that 
\emph{$s$ rewrites to $t$ modulo E}, denoted $s\rew{\cR/E}t$,
if $s\equequ{E}u[\sigma(\lhsr)]$ and $u[\sigma(\lhsr)]\equequ{E}t$, 
for some term $u$, 
position $p\in\Pos(u)$, 
rule $\lhsr\to\rhsr\in\cR$ and substitution $\sigma$. 
\end{definition}

\subsection{Ohlebusch (1998)}
In \cite{Ohlebusch_ChurchRosserTheoremsForAbstractReductionModuloAnEquivalenceRelation_RTA98}, Ohlebusch considers additional abstract relations among relations and equivalences which can be used to prove $E$-confluence.
He also introduces short names to refer to such new (and existing) confluence (modulo) properties.
For instance, Property $\alpha$ is denoted as $\fS{LCON}\!\sim$.
An interesting aspect is that he provides some (abstract) 
$E$-confluence results which do \emph{not} require $E$-termination (e.g., \cite[Proposition 6 \& Corollary 10]{Ohlebusch_ChurchRosserTheoremsForAbstractReductionModuloAnEquivalenceRelation_RTA98}; however, some of the involved abstract properties 
(in particular, \emph{Strong Coherence with $\sim$, $\fS{SCOH}\!\sim$} \cite[Figure 1(v)]{Ohlebusch_ChurchRosserTheoremsForAbstractReductionModuloAnEquivalenceRelation_RTA98}) are \emph{not} local: peaks $t\leftrews{} s\equone{}^*t'$ must be considered with $\fS{SCOH}\!\sim$.
This makes the treatment of such peaks with conditional pairs, as discussed in this paper, unclear.

Another contribution of \cite{Ohlebusch_ChurchRosserTheoremsForAbstractReductionModuloAnEquivalenceRelation_RTA98} is the development of new techniques for proving $E$-confluence of \etrs{s} based on the use of van Oostrom's \emph{decreasing diagrams} \cite{Oostrom_ConfluenceByDecreasingDiagrams_TCS94}.
This technique has also been explored by other researchers for this purpose
\cite{Felgenhauer_ConfluenceOfTermRewritingTheoryAndAutomation_PhD15,%
JouLiu_FromDiagrammaticConfluenceToModularity_TCS12,%
LiuJouOga_ConfluenceOfLayeredRewriteSystems_CSL15}.
Ohlebusch's approach could also be easily used in our setting, as his main theorem \cite[Theorem 14]{Ohlebusch_ChurchRosserTheoremsForAbstractReductionModuloAnEquivalenceRelation_RTA98} identifies peaks
(\ref{GenericRPeakTerms}) which could be particularized for rules $\alpha_1,\alpha_2\in\cR^\RMabbr$ 
(for \cite[Figure 7(left)]{Ohlebusch_ChurchRosserTheoremsForAbstractReductionModuloAnEquivalenceRelation_RTA98}) 
and $\alpha_1\in\cR^\RMabbr$ and $\alpha_2\in\eqoriented{E}$ 
(for \cite[Figure 7(right)]{Ohlebusch_ChurchRosserTheoremsForAbstractReductionModuloAnEquivalenceRelation_RTA98}), respectively.
Then, the treatment of peaks would proceed by using appropriate CCPs and CVPs as explained in Section \ref{SecConditionalPairsAlphaAndGamma}, although the joinability conditions should be adapted to those in \cite{Ohlebusch_ChurchRosserTheoremsForAbstractReductionModuloAnEquivalenceRelation_RTA98}. This is an interesting subject for future work.

\subsection{Meseguer (2017)}

Coherence of conditional rewriting modulo axioms is investigated in
\cite{Meseguer_StrictCoherenceOfConditionalRewritingModuloAxioms_TCS17}.
Working on Jouannaud and Kirchner's abstract approach, in \cite[Section 4.3]{Meseguer_StrictCoherenceOfConditionalRewritingModuloAxioms_TCS17}
the problem discussed in Example \ref{ExRModuloAndRrelativeE} is noticed.
That is: in conditional rewriting,
the \emph{natural choice} of 
interpreting 
$\genrelation$ as $\rew{\cR}$,
$\genequivalence$ as $\equequ{E}$, 
and $\genrelation/\genequivalence$ as $\rew{\cR/E}$ does \emph{not} fulfill the requirement (\ref{LblReductionModuloFromReductionAndEquivalence}) of Jouannaud and Kirchner's abstract framework, as  $\rew{\cR/E}$ and $\composeRel{\equequ{E}}{\composeRel{\rew{\cR}}{\equequ{E}}}$ does \emph{not} coincide.
Meseguer's solution is interpreting
\emph{both} $\genrelation$ and $\genrelationUpE$ as 
$\rew{\cR,E}$, see
\cite[paragraph above Proposition 1]{Meseguer_StrictCoherenceOfConditionalRewritingModuloAxioms_TCS17}.
This solution, though, is not appropriate for our purposes, as it implies that only peaks (\ref{GenericPSPeakTerms}), i.e.,
$\leftrew{\cR,E}\circ\rew{\cR,E}$ and $\leftrew{\cR,E}\circ\rew{\eqoriented{E}}$, can be considered,
thus making the use of $E$-unification and ECCPs mandatory.
In contrast, our approach also permits $E$-confluence proofs based on considering ``classical'' conditional critical pairs, not requiring $E$-unifiers,
as in Sections \ref{SecConditionalPairsAlphaAndGamma}
and \ref{SecEConfluenceWithRandCpeaks}.

\subsection{Jouannaud (2024)}
\label{SecJouannaud2024}

In \cite[Section 11.3]{Jouannaud_ConfluenceOfTerminatingRewritingComputations_TFSP24}, Jouannaud introduces a revision of the framework in \cite{JouKir_CompletionOfASetOfRulesModuloASetOfEquations_SIAMJC86}.
Some of these changes are not minor.
\begin{enumerate}
\item The \emph{fundamental assumption} (\ref{LblFundamentalAssumptionJK86}) is replaced by a 
\emph{soundness property} \cite[page 275]{Jouannaud_ConfluenceOfTerminatingRewritingComputations_TFSP24} which is, using the notation in this paper, as follows:
\begin{IEEEeqnarray}{r'C'l}
\genrelation\:\subseteq\:\genrelationUpE\:\subseteq\:(\genequivalence\circ\genrelation)^*\label{LblSoundnessJou24}
\end{IEEEeqnarray}
Note that this property and (\ref{LblFundamentalAssumptionJK86}) are \emph{not comparable}.
For instance, for an \etrs{} $\cR=(\Symbols,E,R)$, 
with $\genrelation$ equal to $\rew{\cR}$ and
$\genequivalence$ equal to $\equequ{E}$, in general, 
\begin{itemize}
\item If
$\genrelationUpE$ is $\rews{\cR,E}$, then (\ref{LblSoundnessJou24}) is satisfied, 
but (\ref{LblFundamentalAssumptionJK86}) is not.
\item If $\genrelationUpE$ is $\genrelation/\genequivalence$ (i.e., $\genequivalence\circ\genrelation\circ\genequivalence$), then (\ref{LblFundamentalAssumptionJK86}) is satisfied, but (\ref{LblSoundnessJou24}) is not.
\end{itemize}
\item Instead of $\fS{LCON}_\genequivalence(\genrelationUpE,\genrelation)$ in Definition \ref{DefLocalConfluenceAndCoherenceProperties_JK86}(\ref{DefLocalConfluenceAndCoherenceProperties_JK86_LocalConfluence}), see also Figure \ref{FigConfluenceAndCoherenceProperties} (left), 
the notion of \emph{local confluence modulo $\genequivalence$} in \cite[Definition 11.7(3)]{Jouannaud_ConfluenceOfTerminatingRewritingComputations_TFSP24} is used.
It coincides with \emph{local $\genequivalence$-confluence} of \cite[Definition 5(3)]{JouMun_TerminationOfASetOfRulesModuloASetOfEquations_CADE84}
and corresponds to $\fS{LCON}_\genequivalence(\genrelationUpE,\genrelationUpE)$, see Section \ref{SecDiscussionJK86}.
Still, as discussed in Section \ref{SecAboutLocalPeaksModulo}, if $\genrelationUpE$ is $\rew{\cR,E}$, as a consequence of Proposition \ref{PropLocalPeaksModulo}, both notions coincide in practice.
\item The notion of \emph{local coherence modulo $\genequivalence$} in \cite[Definition 11.7(4)]{Jouannaud_ConfluenceOfTerminatingRewritingComputations_TFSP24} uses $\ejoinability{\genrelationUpE}$-joinability instead of $\rsejoinability{\genrelationUpE}$-joinability as in $\fS{LCOH}_\genequivalence(\genrelationUpE)$ in Definition \ref{DefLocalConfluenceAndCoherenceProperties_JK86}(\ref{DefLocalConfluenceAndCoherenceProperties_JK86_LocalCoherence}), see also Figure \ref{FigConfluenceAndCoherenceProperties}(right).
However, for $\genequivalence$-terminating relations $\genrelation$, both notions coincide, see Proposition \ref{PropEquivalenceLocalCoherenceModuloEandProperty5prime_JK86}.
\item Regarding denominations and notations, we have the following:
\begin{enumerate}
\item The relation $\equequ{E}\circ\rew{\cR}$ is called \emph{class rewriting} and denoted $\rew{\cR/E}$
(but $\rew{\cR/E}$ is usually defined as $\equequ{E}\circ\rew{\cR}\circ\equequ{E}$ \cite{Jouannaud_ConfluentAndCoherentEquationalTermRewritingSystemsApplicationToProofsInAbstractDataTypes_CAAP83,JouMun_TerminationOfASetOfRulesModuloASetOfEquations_CADE84,JouKir_CompletionOfASetOfRulesModuloASetOfEquations_POPL84,JouKir_CompletionOfASetOfRulesModuloASetOfEquations_SIAMJC86,DerJou_RewriteSystems_HTCS90,BaaNip_TermRewritingAndAllThat_1998}).
\item As in \cite{Jouannaud_ConfluentAndCoherentEquationalTermRewritingSystemsApplicationToProofsInAbstractDataTypes_CAAP83,JouKir_CompletionOfASetOfRulesModuloASetOfEquations_SIAMJC86}, 
in \cite[Definition 11.8]{Jouannaud_ConfluenceOfTerminatingRewritingComputations_TFSP24} the relation denoted $\rew{\cR,E}$ in this paper (following \cite{Jouannaud_ConfluentAndCoherentEquationalTermRewritingSystemsApplicationToProofsInAbstractDataTypes_CAAP83,JouMun_TerminationOfASetOfRulesModuloASetOfEquations_CADE84,JouKir_CompletionOfASetOfRulesModuloASetOfEquations_SIAMJC86,BaaNip_TermRewritingAndAllThat_1998}) is 
attributed to Peterson and Stickel \cite{PetSti_CompleteSetsOfReductionsForSomeEquationalTheories_JACM81}, but it is 
called \emph{rewriting modulo} and denoted $\rew{{\cR_E}}$, in contrast to
\cite{DerJou_RewriteSystems_HTCS90,BaaNip_TermRewritingAndAllThat_1998} which use $\rew{\cR/E}$.
\end{enumerate}
\end{enumerate}
The notion of \emph{extension} of a rule $\lhsr\to\rhsr\in R$ by an equation $s=t\in E$ \cite[Definition 11.10]{Jouannaud_ConfluenceOfTerminatingRewritingComputations_TFSP24} and \emph{closure under extensions} 
\cite[Definition 11.11]{Jouannaud_ConfluenceOfTerminatingRewritingComputations_TFSP24} are used 
to provide a new characterization of the Church-Rosser property as follows:
\begin{theorem}
\label{Theorem11_5_Jou24}
\cite[Theorem 11.5]{Jouannaud_ConfluenceOfTerminatingRewritingComputations_TFSP24}
Assume that $\cR$ is $E$-terminating and closed under $E$-extensions. Then, $\cR$ is Church-Rosser modulo $E$ iff all its $E$-critical pairs are joinable.
\end{theorem}
However, it is unclear how to check  closedness under $E$-extensions, as
\begin{quote}
\ldots there could be infinitely many [$E$-extensions of a system $\cR$]\ldots 
We have seen this does not happen with associativity, or with associativity and commutativity\ldots 
But does sometimes happen. \cite[page 282, paragraph between Examples 11.5 and 11.6]{Jouannaud_ConfluenceOfTerminatingRewritingComputations_TFSP24}
\end{quote}
which was already observed by Peterson and Stickel, see Section \ref{SecExtensionsOfRules_PS81}.
In \cite[Theorem 11.5]{Jouannaud_ConfluenceOfTerminatingRewritingComputations_TFSP24}, the requirement
of finite $E$-congruence classes in \cite[Theorem 16]{JouKir_CompletionOfASetOfRulesModuloASetOfEquations_SIAMJC86} has been dropped.\footnote{In
page 282, Jouannaud also mentions that the assumption of $E$ being variable preserving, i.e., every equation $s=t\in E$ satisfies $\Var(s)=\Var(t)$
(which is called \emph{nonerasingness} in \cite[page 1169]{{JouKir_CompletionOfASetOfRulesModuloASetOfEquations_SIAMJC86}}) has been dropped. 
However, as noticed in \cite{JouKir_CompletionOfASetOfRulesModuloASetOfEquations_SIAMJC86} (see Remark \ref{RemANecessaryConditionForETermination}), if $E$-termination of $\cR$ is required (as done in Theorem \ref{Theorem11_5_Jou24}), then every equation $s=t\in E$ must be variable preserving.}
However, in contrast to \cite[Theorem 16]{JouKir_CompletionOfASetOfRulesModuloASetOfEquations_SIAMJC86}, the decomposition $R=L\uplus N$, where $L$ consists of left-linear rules and $N$ is $R-L$ is missing.

\section{Conclusion and future work}
\label{SecConclusionAndFutureWork}

We have investigated how to prove confluence of conditional rewriting modulo  by using Huet and Jouannaud and Kirchner's abstract frameworks.
In particular, for abstract relations $\genequivalence$ (an equivalence),
$\genrelation$ (a reduction relation), and $\genrelationUpE$ (a reduction relation satisfying $\genrelation\subseteq\genrelationUpE\subseteq\genrelation/\genequivalence$), 
we relied on Jouannaud and Kirchner's notions of
(i) local confluence of $\genrelationUpE$ modulo $\genequivalence$  with $\genrelation$
and
(ii) local coherence of $\genrelationUpE$ modulo $\genequivalence$, which characterize 
the $\genrelationUpE$-Church-Rosser property modulo $\genequivalence$ of $\genrelation$, which is a sufficient condition for $\genequivalence$-confluence of $\genrelation$.
We have proved that (i) and (ii) are \emph{not} necessary for $\genequivalence$-confluence, though, which means that disproving them does \emph{not} disprove $\genequivalence$-confluence of $\genrelation$.
However, we have proved that the non-$\genrelation/\genequivalence$-joinability of  the peaks considered  in the first property (i) provides a method for \emph{disproving} $\genequivalence$-confluence of $\genrelation$.

After this analysis, we have introduced \emph{Equational Generalized Term Rewriting Systems} (\egtrs{s})
$\cR=(\Symbols,\SPredicates,\mu,E,H,R)$
over signatures $\Symbols$ and $\SPredicates$ of function and predicate symbols,
consisting of \emph{conditional equations} (in $E$)
and \emph{conditional rules} (in $R$)
whose conditions are sequences of \emph{atoms}, possibly defined by
additional Horn clauses (in $H$).
Replacement restrictions $\mu$ specifying which arguments of function symbols can be rewritten are also allowed.
Rewriting computations with \egtrs{s} $\cR$ 
are described by \emph{deduction} in 
appropriate First-Order theories (equivalently by proof in the corresponding Elementary Inference Systems).
We have shown that an appropriate definition of a rewrite relation 
$\rew{\cR^\RMabbr}$ based on a modification of the treatment of the conditional part of clauses in $H$ and rules in $R$ permits the use of Jouannaud and Kirchner's framework \cite{JouKir_CompletionOfASetOfRulesModuloASetOfEquations_SIAMJC86}
to prove $E$-confluence of \egtrs{s}. 

Then, we have shown how to prove 
$E$-confluence of $\cR$ as the joinability of different families of conditional critical pairs (CCPs)
and conditional variable pairs (CVPs) obtained from the equations $E$ and rules $R$ of $\cR$.
Although CCPs and CVPs were already used in proofs of confluence of \gtrs{s}, here CCPs are used not only to deal with overlaps between rules of $R$ but also to handle overlaps between rules from $R$ and $\eqoriented{E}$.
Furthermore, we consider \emph{parametric} CVPs, which are used in several novel ways specific for proving $E$-confluence, as in the definitions of 
$\CVPofR(E)$, $\CVPofEqOne(\cR)$, $\CVPofPStickel(\cR)$, and $\CVPofPStickel(E)$.
We have also introduced \emph{Logic-Based Conditional Critical Pairs} (LCCPs) 
as an alternative to $E$-critical CCPs to 
avoid the computation of $E$-unifiers in proofs of $E$-confluence, thus leading to proofs of $E$-confluence which always rely on \emph{finite} sets of conditional pairs.
Besides, \emph{Down Conditional Pairs} (DCPs) have been introduced here and showed useful to \emph{disprove} $E$-confluence.
As far as we know, CVPs, DCPs, and LCCPs 
have not been used yet  in proofs of $E$-confluence.

We have shown that $E$-confluence of $E$-terminating 
\egtrs{s} can be 
\emph{proved} 
in two different ways, as:
\begin{enumerate}
\item $\ejoinability{\cR^\RMabbr}$-joinability of all conditional pairs in
$\CCP(\cR)\cup\CVPofR(\cR)\cup\CCP(\cR,E)\cup\CCP(E,\cR)\cup\CVPofR(E)\cup\CVPofEqOne(\cR)$ 
(Theorem \ref{TheoEConfluenceOfETerminatingEGTRSsHuet}(\ref{TheoEConfluenceOfETerminatingEGTRSsHuet_EConfluence})). Or else, as
\item $\ejoinability{\cR^\RMabbr,E}$-joinability of 
all conditional pairs in $\LCCP(\cR)\cup\CVPofPStickel(\cR)\cup\LCCP(E,\cR)\cup\CVPofPStickel(E)$ 
(Theorem \ref{TheoEConfluenceWithoutDCPs}). 
\end{enumerate}
From a theoretical point of view, the first method is subsumed by the second. In practice, though, the first method is computationally simpler, as $E$-unifiers are not used to compute CCPs and 
$E$-matchers are not necessary for joinability tests.
Similarly, we provide two criteria for \emph{disproving} $E$-confluence of \egtrs{s} as 
\begin{enumerate}
\item non-$\ejoinability{\cR/E}$-joinability of a conditional pair in $\CCP(\cR)\cup\CVPofR(\cR)$ (Theorem \ref{TheoEConfluenceOfETerminatingEGTRSsHuet}(\ref{TheoEConfluenceOfETerminatingEGTRSsHuet_NonEConfluent})), or
\item non-$\ejoinability{\cR/E}$-joinability of a conditional pair in $\LCCP(\cR)\cup\CVPofPStickel(\cR)\cup\DCP(\cR)$ (Theorem \ref{TheoremNonEConfluenceOfEGRSs}).
\end{enumerate}
Again, the first method is subsumed by the second, but the first is simpler.
The discussed examples (see also Sections \ref{SecApplicationToETRSs} and \ref{SecApplicationToConditionalETRSs})
show that, beyond their use with \egtrs{s}, which is the main focus and contribution of this paper, the
new techniques are also useful with  \etrs{s} (Examples \ref{ExPeakNoCPs}, \ref{ExHuet80_RemarkPage818}, and \ref{ExHuet80_RemarkPage818variant}) 
or conditional rewrite theories (e.g., $\cR$ in Example \ref{ExRModuloAndRrelativeE}) as existing results cannot  appropriately handle them.

\medskip
\noindent
\textbf{Future work.} 
From a \emph{theoretical point of view}, 
\begin{enumerate}
\item 
Proposition \ref{PropLocalConfluenceAndLocalCoherenceNotNecessaryForEConfluence} shows that $E$-confluence is still \emph{not} appropriately captured by means of suitable local confluence/coherence properties (possibly requiring $E$-termination).
Relying on local properties is important as it opens the way to envisage appropriate conditional pairs which can be used to capture the corresponding peaks, hopefully leading to a \emph{characterization} of $E$-confluence as the joinability of such pairs using an appropriate joinability notion.
Such a characterization should be further explored.
\item Although sorts can be handled as predicates in the signature $\SPredicates$ of predicate symbols of an \egtrs{} and the subsort relation can be defined using Horn clauses in $H$, making the order-sorted discipline explicit in \egtrs{s} would be an interesting improvement, possibly leading to a more applicable framework, although with important theoretical challenges.
\item The use of our techniques for proving and disproving confluence modulo of \Maude{} programs should be further investigated.
The use of \Maude{} programs to simulate computations with \egtrs{s} would also be  interesting as a prototypic implementation of \egtrs{s}.
\item The decomposition of the set $R$ of rules of an \egtrs{} into a subset $L$ of left-linear rules and a subset $N$ containing the remaining ones, so that only rules in $N$ require the computation of $E$-critical pairs, as done in \cite[Theorem 16]{JouKir_CompletionOfASetOfRulesModuloASetOfEquations_SIAMJC86} is also an interesting subject of further research.
\item Exploring the impact of our techniques in \emph{first-order deduction modulo} see, e.g., \cite{Dowek_AutomatedTheoremProvingInFirstOrderLogicModuloOnTheDifferenceBetweenTypeTheoryAndSetTheory_arXiv23} and the references therein, 
is another interesting subject of future work.
\end{enumerate}
From a \emph{practical} point of view, much work remains to be done 
to use these new techniques.
\begin{enumerate}
\item $E$-termination of \egtrs{s} $\cR$, which is required in our main results, is underexplored to date.
As discussed in Section  \ref{SecETerminationOfEGTRSs}, the 
(automatic) computation of models of $\GLtheory_{\cR/E}$ 
making $\rewmodulo{}$ well-founded is possible by using existing tools (e.g., \AGES{} or \MaceFour{}) is an interesting starting point, but
the development of more specific techniques would be necessary.
\item 
Also, from a practical point of view,  
it is interesting to investigate the possibility 
of a \emph{mixed} use of ECCPs
together 
with LCCPs.
For instance, if there is an $E$-unification algorithm and the set of (complete) 
$E$-unifiers is finite, then the corresponding (computable and finite set of) 
ECCPs could be used instead 
of LCCPs.
\item Our results
heavily rely on checking 
$\ejoinability{\cR^\RMabbr}$-joinability,
$\ejoinability{\cR^\RMabbr,E}$-joinability and
non-$\ejoinability{\cR^\RMabbr/E}$-joinability of  
CCPs, (different kinds of) CVPs, DCPs, and  LCCPs to obtain proofs of (non)
$E$-confluence.
We plan to improve our tool \CONFident{} \cite{GutLucVit_ProvingConfluenceInTheConfluenceFrameworkWithCONFident_FI24}, 
which 
implements methods for the  analysis of similar conditional pairs. 
Such results heavily relies on the (in)feasibility results developed in \cite{GutLuc_AutomaticallyProvingAndDisprovingFeasibilityConditions_IJCAR20} and 
implemented in the tool \infChecker{}, which currently deals with \csctrs{s} and
should also be adapted to deal with more general theories as the ones used with \egtrs{s}.
\end{enumerate}
\medskip
\noindent
\textbf{Acknoledgements.} 
I thank the anonymous referees for their thorough analysis and helpful examples and remarks leading to many improvements on the first version of the paper.
I thank  Nachum Dershowitz and G\'erard Huet for kindly attending my requests about terminological and historical aspects of confluence modulo.
I thank Paco Dur\'an, Santiago Escobar, Jos\'e Meseguer, and Julia Sapi\~na for interesting discussions about the topics of this paper. 
Special thanks are due to Martin Zimmermann, for his patience during the development of this paper.

\bibliographystyle{alphaurl}
\bibliography{myBibliography}

\end{document}